\renewcommand{\phi}{\varphi}
\renewcommand{\epsilon}{\varepsilon}
\newcommand{\nothing}{\varnothing}
\newcommand{\se}{\mathsf{e}}
\newcommand{\sv}{\mathsf{v}}
\newcommand{\cA}{{\cal A}}
\newcommand{\cB}{{\cal B}}
\newcommand{\cC}{{\cal C}}
\newcommand{\cF}{{\cal F}}
\newcommand{\cL}{{\cal L}}
\newcommand{\cN}{{\cal N}}
\newcommand{\cR}{{\cal R}}
\newcommand{\cS}{{\cal S}}
\newcommand{\cT}{{\cal T}}
\newcommand{\nat}{{\mathbb N}}
\newcommand{\tup}[1]{\langle #1 \rangle}
\newcommand{\abb}[1]{{\sc \lowercase{#1}}}
\newcommand{\rank}{\operatorname{rank}}
\newcommand{\m}{{\mathit mx}}
\newcommand{\rt}{\mathrm{root}}
\newcommand{\lab}{\mathrm{lab}}
\newcommand{\fin}{\mathrm{back}^1}
\newcommand{\back}{\mathrm{back}^*}
\newcommand{\dom}{\mathrm{dom}}
\newcommand{\ran}{\mathrm{ran}}
\newcommand{\suc}{\mathrm{succ}}
\newcommand{\tmark}{\operatorname{mark}}
\newcommand{\xp}[1]{\operatorname{#1}}
\newcommand{\con}{\xp{Con}}
\newcommand{\inv}{\xp{inv}}
\newcommand{\mon}{\xp{mon}}
\newcommand{\stay}{{\rm stay}}
\newcommand{\up}{{\rm up}} 
\newcommand{\down}{{\rm down}}
\newcommand{\enc}{{\rm enc}}
\newcommand{\dec}{{\rm dec}}
\newcommand{\at}{@}
\newcommand{\fl}{{\rm flat}}
\newcommand{\ttt}{\abb{tt}}
\newcommand{\tts}{\abb{tt}$^{\hspace{0.04cm}\mathrm{s}}$}
\newcommand{\ttl}{\abb{tt}$^\ell$}
\newcommand{\ttd}{\abb{tt}$_\downarrow$}
\newcommand{\ttds}{\abb{tt}$^{\hspace{0.04cm}\mathrm{s}}_\downarrow$}
\newcommand{\ttdl}{\abb{tt}$^\ell_\downarrow$}
\newcommand{\ttsu}{\abb{tt}$_\mathrm{su}$}
\newcommand{\ttr}{\abb{tt}$_\mathrm{rel}$}
\newcommand{\ttp}{\abb{tt}$_\mathrm{pru}$}
\newcommand{\ttps}{\abb{tt}$^{\hspace{0.04cm}\mathrm{s}}_\mathrm{pru}$}
\newcommand{\ttpl}{\abb{tt}$^\ell_\mathrm{pru}$}
\newcommand{\mt}{\abb{mt}}
\newcommand{\dttt}{{\small d}\abb{tt}}
\newcommand{\dttd}{{\small d}\abb{tt}$_\downarrow$}
\newcommand{\dttl}{{\small d}\abb{tt}$^\ell$}
\newcommand{\dmt}{{\small d}\abb{mt}}
\newcommand{\mht}{{\small mh}\abb{tt}}
\newcommand{\amfa}{\abb{amfa}}
\newcommand{\family}[1]{\mbox{\sf #1}}
\newcommand{\REGT}{\family{REGT}}
\newcommand{\REGF}{\family{REGF}}
\newcommand{\DMSOT}{\family{DMSOT}}
\newcommand{\MSOT}{\family{dMSOT}}
\newcommand{\DMT}{\family{DMT$_\text{\abb{OI}}$}}
\newcommand{\MT}{\family{dMT}}
\newcommand{\DTMT}{\family{D$_\mathrm{t}$MT}}
\newcommand{\tMT}{\family{d$_\mathrm{t}$MT}}
\newcommand{\DBQREL}{\family{DBQREL}}
\newcommand{\LSI}{\family{LSIF}}
\newcommand{\LSIR}{\family{LSIR}}
\newcommand{\TT}{\family{dTT}}
\newcommand{\TTS}{\family{dTT$^{\hspace{0.04cm}\mathrm{s}}$}}
\newcommand{\TTL}{\family{dTT$^\ell$}}
\newcommand{\TTD}{\family{dTT$_\downarrow$}}
\newcommand{\TTDS}{\family{dTT$^{\hspace{0.04cm}\mathrm{s}}_\downarrow$}}
\newcommand{\TTDL}{\family{dTT$^\ell_\downarrow$}}
\newcommand{\TTSU}{\family{dTT$_\mathrm{su}$}}
\newcommand{\TTLSU}{\family{dTT$^\ell_\mathrm{su}$}}
\newcommand{\TTR}{\family{dTT$_\mathrm{rel}$}}
\newcommand{\TTRS}{\family{dTT$^{\hspace{0.04cm}\mathrm{s}}_\mathrm{rel}$}}
\newcommand{\TTRL}{\family{dTT$^\ell_\mathrm{rel}$}}
\newcommand{\TTP}{\family{dTT$_\mathrm{pru}$}}
\newcommand{\TTPL}{\family{dTT$^\ell_\mathrm{pru}$}}
\newcommand{\NTT}{\family{TT}}
\newcommand{\NTTS}{\family{TT$^{\hspace{0.04cm}\mathrm{s}}$}}
\newcommand{\NTTL}{\family{TT$^\ell$}}
\newcommand{\NTTD}{\family{TT$_\downarrow$}}
\newcommand{\NTTDS}{\family{TT$^{\hspace{0.04cm}\mathrm{s}}_\downarrow$}}
\newcommand{\NTTDL}{\family{TT$^\ell_\downarrow$}}
\newcommand{\NTTP}{\family{TT$_\mathrm{pru}$}}
\newcommand{\NTTPS}{\family{TT$^{\hspace{0.04cm}\mathrm{s}}_\mathrm{pru}$}}
\newcommand{\NTTPL}{\family{TT$^\ell_\mathrm{pru}$}}
\newcommand{\NTTR}{\family{TT$_\mathrm{rel}$}}
\newcommand{\NTTRL}{\family{TT$^\ell_\mathrm{rel}$}}
\newcommand{\NTTPTIME}{\family{TT$^\text{P}$}}
\newcommand{\FTT}{\family{f\,TT}}
\newcommand{\FTTL}{\family{f\,TT$^\ell$}}
\newcommand{\FTTD}{\family{f\,TT$_\downarrow$}}
\newcommand{\FTTDL}{\family{f\,TT$^\ell_\downarrow$}}
\newcommand{\FTTDS}{\family{f\,TT$^{\hspace{0.04cm}\mathrm{s}}_\downarrow$}}
\newcommand{\NMT}{\family{MT}}
\newcommand{\NMTOI}{\family{MT$_\text{\abb{OI}}$}}
\newcommand{\NMTIO}{\family{MT$_\text{\abb{IO}}$}}
\newcommand{\NMTIOPTIME}{\family{MT$_\text{\abb{IO}}^\text{P}$}}
\newcommand{\MTIO}{\family{dMT$_\text{\abb{IO}}$}}
\newcommand{\mrNMT}{\family{mrMT$_\text{\abb{IO}}$}}
\newcommand{\mrNMTPTIME}{\family{mrMT$_\text{\abb{IO}}^\text{P}$}}
\newcommand{\MON}{\family{MON}}
\newcommand{\YIELD}{\family{YIELD}}
\newcommand{\SET}{\family{SET}}
\newcommand{\Enc}{\family{enc}}
\newcommand{\Dec}{\family{dec}}
\newcommand{\Flat}{\family{flat}}
\newcommand{\yield}{\family{yield}}
\newcommand{\FT}{\family{dFT}}
\newcommand{\NFT}{\family{FT}}
\newcommand{\MFT}{\family{dMFT}}
\newcommand{\NMFT}{\family{MFT}}
\newcommand{\dTT}{\family{dTT}}
\newcommand{\dTTmso}{\family{dTT$^{\text{\abb{MSO}}}$}}
\newcommand{\PTIME}{\family{PTIME}}
\newcommand{\NP}{\family{NPTIME}}
\newcommand{\LOGCF}{\family{LOGCFL}}
\newtheorem{theorem}{Theorem}
\newtheorem{lemma}[theorem]{Lemma}
\newtheorem{corollary}[theorem]{Corollary}
\newtheorem{proposition}[theorem]{Proposition}
\newtheorem{remark}[theorem]{Remark}
\newcommand{\qed}{\ifvmode\penalty10000\vskip -\lastskip\fi%
  \penalty10000
  \ \ \hfill\hbox{\rm$\Box$}}
\newenvironment{proof}{\pagebreak[1]\noindent{\bf
Proof. }\nopagebreak}{\medskip\pagebreak[3]}
\newtheorem{ex@ple}[theorem]{Example}
\newenvironment{example}{\begin{ex@ple}\normalfont}{\end{ex@ple}}
\newcommand{\smallpar}[1]{\medskip\noindent {\bf #1.}}
\begin{document}

\title{Linear-Bounded Composition of \\ 
Tree-Walking Tree Transducers: \\
Linear Size Increase and Complexity\footnote{Published at https://link.springer.com/article/10.1007/s00236-019-00360-8
}}


\author{
Joost Engelfriet\thanks{LIACS, Leiden University, P.O. Box 9512, 2300 RA Leiden, the Netherlands; 
email: {\tt j.engelfriet@liacs.leidenuniv.nl}} 
\and Kazuhiro Inaba\thanks{Google Japan G.K., Tokyo, Japan;
email: {\tt kinaba@google.com}}
\and Sebastian Maneth\thanks{Department of Mathematics and Informatics, Universit\"at Bremen, 
P.O. Box 330 440, 28334 Bremen, Germany;
email: {\tt maneth@uni-bremen.de}}
}

\date{}

\maketitle


\begin{abstract}
Compositions of tree-walking tree transducers form a hierarchy 
with respect to the number of transducers in the composition.
As main technical result it is proved that any such composition can be realized 
as a linear-bounded composition, which means that 
the sizes of the intermediate results
can be chosen to be at most linear in the size of the output tree. 
This has consequences for the expressiveness and complexity 
of the translations in the hierarchy.
First, if the computed translation is a function of linear size increase, i.e., 
the size of the output tree is at most linear in the size of the input tree,
then it can be realized by just one, deterministic, 
tree-walking tree transducer. 
For compositions of deterministic transducers it is decidable 
whether or not the translation is of linear size increase.
Second, every composition of deterministic transducers can be computed in 
deterministic linear time on a RAM and in deterministic linear space on a Turing machine,
measured in the sum of the sizes of the input and output tree. 
Similarly, every composition of nondeterministic transducers can be computed in
simultaneous polynomial time and linear space on a nondeterministic Turing machine.
Their output tree languages are deterministic context-sensitive, i.e., can be recognized 
in deterministic linear space on a Turing machine. 
The membership problem for compositions of nondeterministic translations
is nondeterministic polynomial time and deterministic linear space. 
All the above results also hold for compositions of macro tree transducers. 
The membership problem for the composition of a nondeterministic and a deterministic 
tree-walking tree translation (for a nondeterministic IO macro tree translation)
is log-space reducible to a context-free language, whereas the membership problem 
for the composition of a deterministic and a nondeterministic 
tree-walking tree translation (for a nondeterministic OI macro tree translation)
is possibly NP-complete.

\end{abstract}

\newpage
\tableofcontents

\newpage
\section{Introduction}

Tree transducers are used, e.g., in compiler theory or, more generally, 
the theory of syntax-directed semantics of context-free languages~\cite{FulVog}, and 
in the theory of XML queries and XML document transformation~\cite{Schw,Hos}. 
One of the most basic types of tree transducer is the top-down tree transducer 
(in short \ttd).
It is a finite-state device that walks top-down on the input tree, from parent to child,
possibly branching into parallel copies of itself at each step 
(thus allowing the transducer to visit all children of the parent).
During this process, the output tree is generated top-down.
The \ttd\ has been generalized in two different ways.
By allowing it to walk also bottom-up, from child to parent, 
still possibly branching at every step and still generating the output tree top-down, one obtains 
the tree-walking tree transducer (in short,~\ttt).\footnote{The name 
``tree-walking tree transducer'' was introduced in~\cite{Eng09}. The adjective ``tree-walking''
stands for the fact that the transducer walks on the input tree (just as the tree-walking automaton 
of~\cite{AhoUll}). The \ttt\ is the generalization 
to trees of the two-way finite-state string transducer, which walks on its input string in both directions
and produces the output string one-way from left to right. Note that ``tree-walking'' and ``two-way'' 
alliterate.
}
On the other hand, restricting its walk to be top-down 
but allowing its states to have parameters of type output tree, 
one obtains the macro tree transducer (in short, \mt). 
In general we consider nondeterministic transducers, 
with deterministic transducers as an important special case
(abbreviated as \dttd, \dttt, and \dmt).

To turn the \ttd\ into a more flexible model of tree transformation, it was enhanced 
with the feature of regular look-ahead, which means that it can test 
whether or not the subtree at the current node of the input tree
belongs to a given regular tree language.
The \mt\ already has the ability to implement regular look-ahead.
Since both the enhanced \ttd\ and the \mt\ process the input tree top-down, 
they can also implement ``regular look-around'', which means that they
can test arbitrary regular properties of the current node of the input tree.
More precisely, they can test whether the input tree, in which the current node is marked,
belongs to a given regular tree language. 
Such regular look-around tests are also called \abb{mso} tests, because they can be expressed 
by formulas of monadic second-order logic with one free node variable. 
The \ttt, as defined in~\cite{MilSucVia03}, does not have regular look-ahead or look-around.
One of the drawbacks of this is that the \ttt\ cannot recognize 
all regular tree languages without branching~\cite{BojCol}.  
Hence, from now on, we assume that the \ttt\ (and the \ttd) is enhanced with regular look-around, 
i.e., with regular tests of the current input node. 
The resulting \ttt~formalism is a quite robust, flexible, and intuitive model of tree transformation. 

The \ttt\ and \mt, generalizations of the \ttd, are closely related,
in particular in the deterministic case.
In fact, every \dttt\ can be simulated by a \dmt,
whereas every \dmt\ can be simulated by a composition of two \dttt's. 
Thus, every composition of \dttt's can be realized by a composition of \dmt's, and vice versa. 
Compositions of \dttt's form a proper hierarchy, in an obvious way.
A single \dttt\ is at most of exponential size increase, which means that the size of 
the output tree is at most exponential in the size of the input tree. However, a composition of 
two \dttt's can be of double exponential size increase. In general, compositions 
of $k$ \dttt's are at most, and can be, of \mbox{$k$-fold} exponential size increase.
Compositions of \dmt's form a proper hierarchy by a similar argument. 
For nondeterministic \ttt's and \mt's the situation is similar but more complicated.
Every \mt\ can be simulated by a composition of two \ttt's. However, 
as opposed to \ttt's, \mt's are always finitary, which means that for every given input tree 
an \mt\ computes finitely many output trees. 

In this paper we investigate compositions of \ttt's (and hence of \mt's) 
with respect to their expressivity and their complexity. Our main technical result is
that every composition of \ttt's can be realized by a \emph{linear-bounded composition} of \ttt's,
which means that, when computing an output tree from an input tree, 
the intermediate results can be chosen in such a way that 
their sizes are at most linear in the size of the output tree. 
More precisely, a composition of two transducers (for simplicity) is linear-bounded 
if there is a constant $c$ such that 
for every pair~$(t,s)$ of an input tree $t$ and output tree $s$ in the composed translation
there is an intermediate tree $r$ (meaning that $(t,r)$ and $(r,s)$ are in the first and second 
translation, respectively) such that the size of $r$ is at most $c$ times the size of $s$. 
Intuitively, to compute $s$ from $t$ there is no need to consider intermediate results 
that are much larger than $s$.
If both transducers are deterministic it means that for every input tree $t$ in the domain of 
the composed translation the size of the unique intermediate tree $r$ is at most linear 
in the size of the unique output tree~$s$. 

To prove that every composition of two \ttt's can be realized by 
a linear-bounded composition of two \ttt's, 
we first show that every \ttt\ can be decomposed into a \ttd\ that ``prunes'' the input tree,  
followed by a \ttt\ that is ``productive'' on at least one of the intermediate trees 
generated by the \ttd, which means that it uses each leaf and each monadic node 
of that intermediate tree in order to generate the output tree. Productivity guarantees 
that the composition of these two transducers is linear-bounded. 
We also prove that the composition of an arbitrary \ttt\ with a ``pruning'' top-down \ttt\ 
can be realized by one \ttt. Thus, when two \ttt's are composed, the second \ttt\ can 
split off the pruning \ttd\ (to the left), which can be absorbed (to the right) by the first \ttt.
The composition of the resulting two \ttt's is then linear-bounded. 
This also holds for deterministic transducers, in which case the pruning \ttt\ is also deterministic. 
Similar results were presented for macro tree transducers 
in~\cite[Section~3]{Man02} and~\cite[Section~4]{InaMan08}.

Thus, roughly speaking, our main technical result provides a method to implement compositions of \ttt's
in such a way that the generation of superfluous nodes, i.e., nodes on which a \ttt\ 
just walks around without producing any output, is avoided by pruning those superfluous parts 
from the intermediate trees. As such it can be viewed as a static garbage collection procedure,
and leads, in principle, to algorithms for automatic compiler and XML query optimization.
Since \ttt's are essentially finite-state automata walking on trees, it is not really surprising that 
only a linearly bounded amount of intermediate information is useful to the final output. 
However, proving this rigorously requires quite some effort. In particular, the subcomputations of
the \ttt\ during which it does not produce output will be determined by regular look-around. 

The above method can be used to obtain results on both the expressivity and the complexity 
of compositions of \ttt's, as discussed in the next paragraphs.

\smallpar{Expressivity}
We have seen above that compositions of \ttt's can be of \mbox{$k$-fold} exponential size increase. 
However, many real world tree transformations are of \emph{linear size increase}.
We prove that the hierarchy of compositions of deterministic \ttt's  
collapses when restricted to translations of linear size increase: 
every composition of \dttt's 
that is of linear size increase can be realized by just one~\dttt.
We also show that it is decidable whether or not a composition of~\dttt's 
is of linear size increase.
This means that a compiler or XML query, no matter how inefficiently programmed in several phases,
can be realized in one efficient phase, provided it is of linear size increase. 
In fact, as we will see below, that single phase can be executed in linear time. 
More theoretically, we additionally prove that a function that can be realized 
by a composition of nondeterministic~\ttt's,
can also be realized by a composition of deterministic \ttt's, and hence
by one deterministic \ttt\ if that function is of linear size increase. 
Thus, the only (functional) tree transformations that can be realized by a composition 
of \ttt's but not by a single \ttt, are tree transformations of superlinear size increase. 

The proof of the collapse of the hierarchy of compositions of \dttt's  
is based on the known fact that every \dttt\ of linear size increase can be realized 
by a \dttt\ that is ``single-use'', which means that it never visits a node 
of the input tree twice in the same state. 
In fact, it is proved in~\cite{EngMan99,EngManLSI} that even \dmt's of linear size increase
can be realized by single-use \dttt's. Vice versa, it is obvious that 
every single-use \dttt\ is of linear size increase. 
In~\cite{BloEng} it is shown that single-use \dttt's have the same power as 
deterministic \abb{mso} tree transducers, which use formulas of monadic second-order logic to 
define the output tree in terms of the input tree (see~\cite{Cou94,thebook}). 

By our main technical result, we may always assume that a composition of two \dttt's
is linear-bounded. If the composition is of linear size increase, 
then the first \dttt\ is obviously also of linear size increase,
and can therefore be realized by a single-use \dttt. 
We also prove that the composition of a single-use~\dttt\ with an arbitrary \dttt\
can be realized by one \dttt. Thus, altogether, if the composition of two \dttt's
is of linear size increase, then it can be realized by a single-use \dttt.
This argument can easily be turned into an inductive proof for a composition of 
any number of \dttt's.

\smallpar{Complexity}
We first consider deterministic \ttt's. 
The translation realized by a deterministic \ttt\ can be computed on a RAM in linear time, 
in the sum of the sizes of the input and output tree. With respect to space, we prove that it can 
be computed on a deterministic Turing machine in linear space (again, in the sum of the sizes 
of the input and output tree). Since we may assume by our main technical result that 
the sizes of the intermediate results are at most linear in the size of the output tree, 
it should be clear that these facts also hold for compositions of \dttt's. 
We also consider output tree languages,
i.e., the images of a regular tree language under a composition of \dttt's.
Since the regular tree languages are closed under prunings, our technical decomposition result 
now implies that these output languages are in $\family{DSPACE}(n)$, 
i.e., can be recognized by a Turing machine in deterministic linear space (or, in other words,
are deterministic context-sensitive). Since the yield of a tree can be computed by a \dttt\ 
(representing it by a monadic tree), the output string languages, which 
are the yields of the output tree languages, are also in $\family{DSPACE}(n)$.
The languages in the well-known \abb{io}-hierarchy are examples of such output languages.
For compositions of top-down tree transducers (even nondeterministic ones) this result on 
output languages was proved in~\cite{Bak78}, using a technical result very similar to ours. 

Our results on nondeterministic \ttt's (and their proofs) are very similar to those for \dttt's. 
The translation realized by a composition of \ttt's can be computed by 
a nondeterministic Turing machine in simultaneous polynomial time and linear space
(in the sum of the sizes of the input and output tree). The corresponding output languages 
can be recognized by such a Turing machine and hence are in $\NP$. 
Using the results on the membership problem for compositions of \ttt's discussed in the next paragraph,
we generalize the result of~\cite{Bak78} and prove that these output languages are even
in $\family{DSPACE}(n)$, which means that they are deterministic context-sensitive.
The languages in the well-known \abb{oi}-hierarchy are examples of such output languages.

Finally, we consider the membership problem for compositions of \ttt's, which asks whether or not
a given pair $(t,s)$ of input tree $t$ and output tree $s$ belongs to the composed translation.
It follows easily from the above complexity results that for (non)deterministic \ttt's 
the problem is decidable in (non)deterministic polynomial time and in (non)deterministic linear space.
For the special case of the composition of a nondeterministic \ttt\ with a deterministic \ttt\ 
we prove that the problem is even in $\LOGCF$, i.e., log-space reducible to a context-free language,
and hence in $\PTIME$ and $\family{DSPACE}(\log^2 n)$.
From this we conclude that for nondeterministic \ttt's 
the problem is even decidable in deterministic linear space.
However, for the special case of the composition of a deterministic \ttt\ with a nondeterministic one, 
the problem can be $\family{NP}$-complete. 
From the two special cases we obtain that the membership problem 
for a (single) nondeterministic macro tree transducer
is in $\LOGCF$ for \abb{io} macro tree transducers
(strengthening the result in~\cite{InaMan09} where it was shown to be in $\PTIME$),
whereas it can be $\family{NP}$-complete for \abb{oi} macro tree transducers.

\smallpar{Structure of the paper}
The reader is assumed to be familiar with the basics of formal language theory, 
in particular tree language theory, and complexity theory. 
The only formalisms used are tree-walking tree transducers
(\ttt's, of course), top-down tree transducers (\ttd's, as a special case of \ttt's), 
context-free grammars, regular tree grammars, and finite-state tree automata. 
Results on macro tree transducers are taken from the literature. 

The main results are proved in Sections~\ref{sec:prod} to~\ref{sec:transcomp}.
Section~\ref{sec:trees} contains a number of preliminary notions, 
in particular linear-bounded composition, linear size increase, and regular look-around. 
In Section~\ref{sec:trans} we define the tree-walking tree transducer
(with regular look-around), together with some of its special cases
such as top-down and single-use. A \ttt\ that does not use regular look-around tests
is called ``local''. A ``pruning'' \ttt\ is a \ttd\ that, 
roughly speaking, removes or relabels each node of the input tree and 
possibly deletes several of its children (together with their descendants).
After giving two examples we present the
composition hierarchy of \dttt's and end the section 
with some elementary syntactic properties of \ttt's. 
In Section~\ref{sec:basic} it is shown how to separate the regular look-around from a \ttt\ 
and incorporate it into another \ttt.
For instance, every~\ttt\ can be decomposed into 
a deterministic pruning \ttd\ that just relabels the nodes of the input tree
(and hence does not really ``prune''), followed by a local \ttt. 
We also state the fact that the domain of a \ttt\ is a regular tree language.  
Consequently, it is possible to define the regular tests of a \ttt\ 
as domains of other \ttt's, which is a convenient technical tool.
Section~\ref{sec:comp} contains three composition results.
We prove that the composition of 
a \ttt\ with a pruning~\ttd\ can be realized by a \ttt\  
(such that determinism is preserved). 
Together with the above-mentioned decomposition, 
this implies for instance that in a composition of two \ttt's,
the second \ttt\ can always be assumed to be local: 
the second \ttt\ splits off a pruning~\ttd\ that is absorbed by the first \ttt.
In the deterministic case, we even prove that the composition of 
a \dttt\ with an arbitrary \dttd\ can be realized by a \dttt, and 
we also prove that the composition of 
a single-use~\dttt\ with a \dttt\ can be realized by a \dttt. 
Section~\ref{sec:dmtmso} presents the known fact that 
every~\dttt\ of linear size increase can be realized by a single-use~\dttt,
and discusses the relationship between \ttt's, macro tree transducers, 
and \abb{mso}~tree transducers. 
In Section~\ref{sec:func} we show that a (partial) function 
that can be realized by a composition of nondeterministic \ttt's, 
can also be realized by a composition of deterministic \ttt's. 
To prove this we first prove a lemma: for every \ttd\ there is a deterministic \ttd\
that realizes a ``uniformizer'' of the translation realized by the given \ttd,
i.e., a function that is a subset of that translation, with the same domain. 
Section~\ref{sec:prod} contains our main technical result: every \ttt\ 
can be decomposed into a pruning \ttt\ and another \ttt\ such that the composition is linear-bounded.
It implies (by splitting and absorbing) that 
a composition of \ttt's can always be assumed to be linear-bounded. 
The ``uniformizer'' lemma of the previous section is applied to the pruning~\ttd, proving the same result 
for deterministic \ttt's.
Section~\ref{sec:lsi} presents the main results on linear size increase,
and Sections~\ref{sec:complex} and~\ref{sec:nondetcomplex} present the main results 
on the complexity of compositions of 
deterministic and nondeterministic \ttt's, respectively. 
In Section~\ref{sec:transcomp} we prove the main results on the complexity of the membership problem 
for the composition of two \ttt's. 
Finally, in Section~\ref{sec:forest} we show (in a straightforward way) 
that all main results also hold for transducers
that transform unranked trees, or forests, which are a natural model of XML documents.

The reader who is interested only in complexity can disregard all results on single-use \ttt's,
and skip Sections~\ref{sub:mso} and~\ref{sec:lsi}. 
The reader who is interested only in expressivity can just skip
Sections~\ref{sec:complex}, \ref{sec:nondetcomplex}, and~\ref{sec:transcomp}.

\smallpar{Remarks on the literature}
Top-down tree transducers were introduced in~\cite{Rou,Tha}; 
regular look-ahead was added in~\cite{Eng77}.
Macro tree transducers were introduced in~\cite{CouFra,EngVog85}. 
Tree-walking tree transducers were introduced in~\cite{MilSucVia03} 
(where they are called 0-pebble tree transducers), and studied in, e.g., \cite{EngMan03,Eng09,ManFriSei}. 
They were already mentioned in~\cite[Section~3(7)]{Eng86} (where they are called RT(Tree-walk) transducers). 
Regular look-around was added to \ttt's in~\cite[Section~8.2]{thebook} 
(where they are called \abb{MS} tree-walking transducers);
for tree-walking automata that was already done in~\cite{bloem}.
However, formal models similar to the \ttt\ were introduced and studied before. 
The tree-walking automaton of~\cite{AhoUll} translates trees into strings.  
As explained in~\cite[Section~3(7)]{Eng86} and~\cite[Section~3.2]{EngMan03}, 
the \ttt\ is closely related to the attribute grammar~\cite{KnuthAG}, 
which is a well-known model of syntax-directed semantics (and a compiler construction tool). 
An attribute grammar translates derivation trees of an underlying context-free grammar into arbitrary values.
Tree-valued attribute grammars were considered, e.g., in~\cite{EngFil}. 
The attributed tree transducer, introduced in~\cite{Ful}, 
is an operational version of the tree-valued attribute grammar, without underlying context-free grammar.
Regular look-around was added to the attributed tree transducer in~\cite{BloEng}
(where it is called look-ahead). 
Attributed tree transducers are a special type of \ttt's, of which the states are viewed as 
attributes of the nodes of the input tree.
By definition a deterministic attributed tree transducer (like an attribute grammar) 
has to be noncircular, which means that it should generate an output tree 
whenever it is started in any state on any node of an input tree. Thus, it is total in a strong sense. 
This is natural from the point of view of syntax-directed semantics,
but quite restrictive and inconvenient from the operational point of view of tree transformation. 
Several of the auxiliary results in Sections~\ref{sec:trans} to~\ref{sec:comp} 
are closely related to (and generalizations of) 
well-known results on attributed tree transducers (see, e.g., \cite{FulVog}). 
As an example, it is proved in~\cite[Theorem~4.3]{Ful} that, for deterministic transducers, 
the composition of an attributed tree transducer with a top-down tree transducer can be realized 
by an attributed tree transducer. That does not immediately imply that the same is true 
for a \dttt\ and a \dttd, which we show in Section~\ref{sec:comp}, 
because \dttt's are not necessarily total and they have regular look-around. 
Moreover, we wanted such results also to be understandable for readers unfamiliar 
with attribute grammars and attributed tree transducers. 

The main results of this paper were first presented at FSTTCS '02 \cite{Man02} 
(on the complexity of compositions of deterministic \mt's),
at FSTTCS '03 \cite{Man03} (on compositions of \mt's that realize functions of linear size increase), 
at FSTTCS '08 \cite{InaMan08} (on the complexity of compositions of nondeterministic \mt's),
at PLAN-X '09 \cite{InaMan09} (on the complexity of the membership problem for \mt's),
and in the Ph.D. Thesis of the second author \cite{Ina} (on the last two subjects).

\section{Preliminaries}\label{sec:trees}

\emph{Convention}: All results stated and/or proved in this paper are effective. 

\smallpar{Sets, strings, and relations}
The set of natural numbers is $\nat=\{0,1,2,\dots\}$.
For $m,n\in \nat$, we denote the interval $\{k\in\nat\mid m\leq k\leq n\}$ by $[m,n]$. 
The cardinality or size of a set $A$ is denoted by $\#(A)$. 
The set of strings over $A$ is denoted by~$A^*$. 
It consists of all sequences $w=a_1\cdots a_m$ with $m\in\nat$ and 
$a_i\in A$ for every $i\in[1,m]$. 
The length~$m$ of $w$ is denoted by $|w|$. 
The empty string (of length $0$) is denoted by~$\epsilon$. 
The concatenation of two strings $v$ and $w$ is denoted by $v\cdot w$ or just~$vw$.
Moreover, $w^0=\epsilon$ and $w^{k+1}=w\cdot w^k$ for $k\in\nat$.

The domain and range of a binary relation $R\subseteq A\times B$ are denoted
by $\dom(R)$ and $\ran(R)$, respectively. 
For $A'\subseteq A$, $R(A')=\{b\in B\mid (a,b)\in R \text{ for some } a\in A'\}$. 
The composition of $R$ with a binary relation $S\subseteq B\times C$ is 
$R\circ S = \{(a,c)\mid \exists\, b\in B: (a,b)\in R, \, (b,c)\in S\}$.
The~inverse of $R$ is $R^{-1}=\{(b,a)\mid (a,b)\in R\}$. 
Note that $\dom(R\circ S)= R^{-1}(\dom(S))$ and $\ran(R\circ S)=S(\ran(R))$.
If $A=B$ then the transitive-reflexive closure of~$R$ is $R^*=\bigcup_{k\in\nat}R^k$
where $R^0=\{(a,a)\mid a\in A\}$ and $R^{k+1}=R\circ R^k$. The composition of 
two classes of binary relations $\cR$ and $\cS$ is 
$\cR\circ \cS = \{R\circ S\mid R\in\cR,\,S\in\cS\}$. 
Moreover, $\cR^1=\cR$ and $\cR^{k+1} = \cR\circ\cR^k$ for $k\geq 1$. 
The relation $R$ is \emph{finitary} if $R(a)$ is finite for every $a\in A$, 
where $R(a)$ denotes $R(\{a\})$. 
It is a (partial) function from $A$ to $B$ 
if $R(a)$ is empty or a singleton for every $a\in A$, 
and it is a total function if, moreover, $\dom(R)=A$. 

\smallpar{Trees}
An alphabet is a finite set of symbols. 
A \emph{ranked} alphabet $\Sigma$ is an alphabet 
together with a  mapping $\rank_\Sigma : \Sigma \to \nat$
(of which the subscript $\Sigma$ will be dropped when it is clear from the context).
The maximal rank of elements of $\Sigma$ is denoted $\m_\Sigma$.
For every $m\in\nat$ we denote by $\Sigma^{(m)}$ the elements of $\Sigma$ that have rank $m$.

Trees over $\Sigma$ are recursively defined
to be strings over $\Sigma$, as follows.  
For every $m\in\nat$, if $\sigma\in\Sigma^{(m)}$ and 
$t_1, \dots, t_m$ are trees over $\Sigma$, then
$\sigma \,t_1 \cdots t_m$ is a tree over~$\Sigma$.
For readability we also write 
the tree $\sigma \,t_1 \cdots t_m$ as the term $\sigma(t_1, \dots, t_m)$. 
The set of all trees over~$\Sigma$ is denoted $T_\Sigma$; thus $T_\Sigma\subseteq \Sigma^*$.
For an arbitrary finite set $A$, disjoint with $\Sigma$, 
we denote by $T_\Sigma(A)$ the set $T_{\Sigma\cup A}$,
where each element of $A$ has rank~0. 

As usual trees are viewed as directed labeled graphs. 
The nodes of a tree~$t$ are indicated by Dewey notation, 
i.e., by elements of $\nat^*$, which are strings of natural numbers. 
The root of $t$ is indicated by the empty string $\epsilon$, 
but will also be denoted by $\rt_t$ for readability.
The $i$-th child of a node $u$ of $t$ is indicated by $ui$, and  
there is a directed edge from the parent $u$ to the child $ui$.
Formally, the set $\cN(t)$ of nodes of a tree $t=\sigma \,t_1 \cdots t_m$ over $\Sigma$
can be defined recursively by $\cN(t)=\{\epsilon\}\cup \{iu\mid i\in [1,m],\,u\in \cN(t_i)\}$. 
Thus, $\cN(t)\subseteq [1,\m_\Sigma]^*$. 
The root of $t=\sigma t_1 \cdots t_m$ has label $\sigma$, and the node $iu$ of $t$
has the same label as the node $u$ of $t_i$. 
The rank of node $u$ is the rank of its label, i.e., the number of its children. 
A leaf is a node of rank~0, and a monadic node is a node of rank~1. 
Every node of $t$ has a child number: each node $ui$ has child number $i$, and 
the root $\epsilon$ is given child number $0$ for technical convenience.
For a node $u$ of $t$ the subtree of $t$ with root $u$ is denoted $t|_u$;
thus, $t|_\epsilon=t$ and $t|_{iu}=t_i|_u$. 
A node $v$ of $t$ is a descendant of a node $u$ of $t$, and $u$ is an ancestor of $v$,
if there exists $w\in\nat^*$ such that $w\neq\epsilon$ and $v=uw$ (thus, $u$ is \emph{not}
a descendant/ancestor of itself). 
The size of a tree $t$ is $|t|$, i.e., its length as a string. 
Note that $|t|=\#(\cN(t))$ because 
the nodes of $t$ correspond one-to-one to the positions in the string $t$, i.e., 
for every $\sigma\in\Sigma$, each occurrence of $\sigma$ in $t$ 
corresponds to a node of $t$ with label $\sigma$. 
The left-to-right linear order on $\cN(t)$ according to this correspondence 
is called the pre-order of the nodes of $t$. 
The yield of $t$ is the string of labels of its leaves, in pre-order. 
The height of $t$ is the number of edges of a longest directed path from the root of $t$ to a leaf;
thus, it is the maximal length of its nodes (which are strings over $\nat$). 

A \emph{tree language} $L$ is a set of trees over $\Sigma$, for some ranked alphabet $\Sigma$,
i.e., $L\subseteq T_\Sigma$.
A \emph{tree translation} $\tau$ is a binary relation between trees over $\Sigma$ and 
trees over $\Delta$, for some ranked alphabets $\Sigma$ and $\Delta$, i.e., 
$\tau\subseteq T_\Sigma\times T_\Delta$.

\smallpar{Linear-bounded composition}
Let $\Sigma$, $\Delta$, and $\Gamma$ be ranked alphabets.
For tree translations $\tau_1\subseteq T_\Sigma\times T_\Delta$ and 
$\tau_2\subseteq T_\Delta\times T_\Gamma$, we say that the pair $(\tau_1,\tau_2)$ is 
\mbox{\emph{linear-bounded}} if there is a constant $c\in\nat$ such that 
for every $(t,s)\in \tau_1\circ\tau_2$ there exists $r\in T_\Delta$ such that 
$(t,r)\in\tau_1$, $(r,s)\in\tau_2$, and $|r|\leq c\cdot|s|$. 
Thus, the intermediate result~$r$ can be chosen such that its size is linear 
in the size of the output $s$. 
Note that if $\tau_1$ and $\tau_2$ are functions, this means that 
$|r|\leq c\cdot|\tau_2(r)|$ for every $r\in\ran(\tau_1)\cap\dom(\tau_2)$. 

For classes $\cT_1$ and $\cT_2$ of tree translations, we define $\cT_1\ast\cT_2$ to consist of all 
translations $\tau_1\circ\tau_2$ such that $\tau_1\in\cT_1$, $\tau_2\in\cT_2$, and
$(\tau_1,\tau_2)$ is linear-bounded.

\begin{lemma}\label{lem:ast}
Let $\cT_1$, $\cT_2$, and $\cT_3$ be classes of tree translations. Then
\[
\cT_1\circ(\cT_2\ast\cT_3)\subseteq (\cT_1\circ\cT_2)\ast\cT_3
\text{\quad and \quad} (\cT_1\ast\cT_2)\ast\cT_3\subseteq \cT_1\ast(\cT_2\circ\cT_3).
\]
\end{lemma}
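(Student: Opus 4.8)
The plan is to prove the two inclusions directly by unfolding the definitions of $\circ$ and $\ast$ and tracking the linear-bound constants through the intermediate trees. Both parts are pure bookkeeping once the quantifier structure is laid out carefully; the only subtlety is making sure that a single constant $c$ works uniformly over all input/output pairs, which is exactly what the definition of $\ast$ already guarantees.

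For the first inclusion, $\cT_1\circ(\cT_2\ast\cT_3)\subseteq(\cT_1\circ\cT_2)\ast\cT_3$, I would start with an arbitrary translation $\tau_1\circ(\tau_2\circ\tau_3)$ where $\tau_i\in\cT_i$ and $(\tau_2,\tau_3)$ is linear-bounded, witnessed by a constant $c$. I claim this equals $(\tau_1\circ\tau_2)\circ\tau_3$ (associativity of relational composition, which the preliminaries already record implicitly) and that the pair $(\tau_1\circ\tau_2,\tau_3)$ is linear-bounded with the \emph{same} constant $c$. Indeed, take $(t,s)\in\tau_1\circ(\tau_2\circ\tau_3)$; then there is $p$ with $(t,p)\in\tau_1$ and $(p,s)\in\tau_2\circ\tau_3$, and by linear-boundedness of $(\tau_2,\tau_3)$ there is $r$ with $(p,r)\in\tau_2$, $(r,s)\in\tau_3$, and $|r|\le c\cdot|s|$. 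Then $(t,r)\in\tau_1\circ\tau_2$ and $(r,s)\in\tau_3$, so $r$ is the required intermediate tree. Hence $\tau_1\circ(\tau_2\circ\tau_3)\in(\cT_1\circ\cT_2)\ast\cT_3$.

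For the second inclusion, $(\cT_1\ast\cT_2)\ast\cT_3\subseteq\cT_1\ast(\cT_2\circ\cT_3)$, I would take $(\tau_1\circ\tau_2)\circ\tau_3$ with $(\tau_1,\tau_2)$ linear-bounded via constant $c_1$ and $(\tau_1\circ\tau_2,\tau_3)$ linear-bounded via constant $c_2$, and show that $(\tau_1,\tau_2\circ\tau_3)$ is linear-bounded via the constant $c_1c_2$. Given $(t,s)\in(\tau_1\circ\tau_2)\circ\tau_3$, first apply linear-boundedness of $(\tau_1\circ\tau_2,\tau_3)$ to get $q$ with $(t,q)\in\tau_1\circ\tau_2$, $(q,s)\in\tau_3$, and $|q|\le c_2\cdot|s|$; then apply linear-boundedness of $(\tau_1,\tau_2)$ to the pair $(t,q)$ to get $r$ with $(t,r)\in\tau_1$, $(r,q)\in\tau_2$, and $|r|\le c_1\cdot|q|\le c_1c_2\cdot|s|$. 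Since $(r,q)\in\tau_2$ and $(q,s)\in\tau_3$ give $(r,s)\in\tau_2\circ\tau_3$, the tree $r$ witnesses the bound, so $(\tau_1\circ\tau_2)\circ\tau_3\in\cT_1\ast(\cT_2\circ\tau_3)$.

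The only thing one has to be mildly careful about — and it is not really an obstacle — is the order in which the two linear-bound witnesses are invoked in the second part: one must first pass through $\tau_3$ to pin down the intermediate tree $q$ of the outer composition, and only then apply the inner bound to $(t,q)$, so that the chained estimate reads $|r|\le c_1|q|\le c_1c_2|s|$ rather than trying to bound $|q|$ in terms of something unavailable. With the quantifiers handled in that order, both inclusions follow immediately.
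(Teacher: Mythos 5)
Your proof is correct and follows exactly the same route as the paper's: the first inclusion reuses the same constant $c$, and the second chains the two witnesses in the right order to get the constant $c_1\cdot c_2$. The paper merely states these two facts without spelling out the quantifier bookkeeping, which you have done accurately (apart from a harmless typo $\cT_2\circ\tau_3$ for $\cT_2\circ\cT_3$ at the end).
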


\begin{proof}
Let $\tau_i\in\cT_i$ for $i\in\{1,2,3\}$.
If the pair $(\tau_2,\tau_3)$ is linear-bounded then 
so is the pair $(\tau_1\circ\tau_2,\tau_3)$, with the same constant~$c$.
If $(\tau_1,\tau_2)$ and $(\tau_1\circ\tau_2,\tau_3)$ are linear-bounded 
with constant $c_1$ and~$c_2$, respectively,
then $(\tau_1,\tau_2\circ\tau_3)$ is linear-bounded with constant $c_1\cdot c_2$. 
\qed
\end{proof}

A function $\tau: T_\Sigma\to T_\Delta$ is 
\emph{of linear size increase} if there is a constant $c\in\nat$ 
such that $|\tau(t)|\leq c\cdot |t|$ for every $t\in\dom(\tau)$.
The class of functions of linear size increase will be denoted 
by $\LSI$.

\begin{lemma}\label{lem:lsilsd}
Let $\tau_1: T_\Sigma\to T_\Gamma$ and $\tau_2: T_\Gamma\to T_\Delta$ be functions 
such that $\ran(\tau_1)\subseteq\dom(\tau_2)$.
If $\tau_1\circ \tau_2\in\LSI$ and $(\tau_1,\tau_2)$ is 
linear-bounded, then $\tau_1\in\LSI$. 
\end{lemma}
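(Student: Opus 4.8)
The plan is to unwind the definitions directly. We are given that $\tau_1\circ\tau_2$ is of linear size increase, so there is a constant $a$ with $|\tau_2(\tau_1(t))|\le a\cdot|t|$ for every $t\in\dom(\tau_1\circ\tau_2)=\dom(\tau_1)$ (the last equality using $\ran(\tau_1)\subseteq\dom(\tau_2)$). We are also given that $(\tau_1,\tau_2)$ is linear-bounded; since both are functions, the remark following the definition of linear-bounded tells us this means there is a constant $c$ with $|r|\le c\cdot|\tau_2(r)|$ for every $r\in\ran(\tau_1)\cap\dom(\tau_2)=\ran(\tau_1)$.

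First I would fix an arbitrary $t\in\dom(\tau_1)$ and set $r=\tau_1(t)$, which lies in $\ran(\tau_1)$. Then chaining the two inequalities gives
\[
|\tau_1(t)|=|r|\le c\cdot|\tau_2(r)|=c\cdot|\tau_2(\tau_1(t))|=c\cdot|(\tau_1\circ\tau_2)(t)|\le c\cdot a\cdot|t|.
\]
Hence with the constant $c\cdot a$ we get $|\tau_1(t)|\le (c\cdot a)\cdot|t|$ for every $t\in\dom(\tau_1)$, which is exactly the statement that $\tau_1\in\LSI$.

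There is essentially no obstacle here; the only point requiring a moment's care is the bookkeeping of which sets the various constants quantify over, and in particular checking that $\dom(\tau_1\circ\tau_2)=\dom(\tau_1)$ and $\ran(\tau_1)\cap\dom(\tau_2)=\ran(\tau_1)$ under the hypothesis $\ran(\tau_1)\subseteq\dom(\tau_2)$. With that in hand, the argument is the two-line composition of the two size bounds, and the witnessing constant for $\tau_1$ is simply the product of the linear-size-increase constant of $\tau_1\circ\tau_2$ and the linear-bounded constant of $(\tau_1,\tau_2)$.
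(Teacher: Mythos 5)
Your proof is correct and follows essentially the same route as the paper: use $\ran(\tau_1)\subseteq\dom(\tau_2)$ to identify $\dom(\tau_1\circ\tau_2)$ with $\dom(\tau_1)$, then chain the linear-boundedness inequality $|\tau_1(t)|\leq c\cdot|\tau_2(\tau_1(t))|$ with the linear-size-increase bound on the composition, yielding the product constant. Nothing is missing.
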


\begin{proof}
It follows from $\ran(\tau_1)\subseteq\dom(\tau_2)$ that $\dom(\tau_1\circ \tau_2)=\dom(\tau_1)$.
Since $(\tau_1,\tau_2)$ is linear-bounded, there is a $c$ such that 
$|\tau_1(t)|\leq c\cdot |\tau_2(\tau_1(t))|$ for every $t\in\dom(\tau_1)$.
Since $\tau_1\circ \tau_2\in\LSI$, there is a $c'$ such that $|\tau_2(\tau_1(t))|\leq c'\cdot|t|$ 
for every $t\in\dom(\tau_1)$. Hence $|\tau_1(t)|\leq c\cdot c'\cdot |t|$
for every $t\in\dom(\tau_1)$, which means that $\tau_1\in\LSI$. 
\qed
\end{proof}

\smallpar{Grammars and automata}
Context-free grammars and, in particular, regular tree grammars
will be used to define the computations of tree-walking tree transducers,
and to define the ``regular look-around'' used by these transducers.  
A context-free grammar is specified as a tuple $G=(N,T,\cS,R)$, where 
$N$ is the nonterminal alphabet, $T$ the terminal alphabet (disjoint with $N$), 
$\cS\subseteq N$ the set of initial nonterminals, and $R$ the finite set of rules, 
where each rule is of the form $X\to \zeta$ with $X\in N$ and $\zeta\in(N\cup T)^*$. 
A sentential form of~$G$ is a string $v\in (N\cup T)^*$ such that $S\Rightarrow_G^*v$ for some $S\in\cS$,
where $\Rightarrow_G$ is the usual derivation relation of $G$: if $X\to \zeta$ is in $R$, 
then $v_1 X v_2 \Rightarrow_G v_1 \zeta v_2$ for all $v_1,v_2\in (N\cup T)^*$. 
The language $L(G)$ generated by $G$ is the set of all terminal sentential forms, i.e., 
$L(G)= \{w\in T^*\mid \exists \,S\in\cS: S\Rightarrow_G^*w\}$. 
To formally define the derivation trees of $G$ as ranked trees, 
we need to subscript its nonterminals with ranks 
because $G$ can have rules $X\to \zeta_1$ and $X\to \zeta_2$ with $|\zeta_1|\neq|\zeta_2|$.  
Let $\overline{N}$ be the ranked alphabet consisting of all symbols $X_m$, of rank~$m$, such that
$G$ has a rule $X\to \zeta$ with $|\zeta|=m$. The terminal symbols in $T$ are given rank~0.
Then the derivation trees of $G$ are generated by the context-free grammar 
$G^\mathrm{der}= (N',\overline{N}\cup T,\cS',R^\mathrm{der})$ such that $N'=\{X'\mid X\in N\}$,
$\cS'=\{S'\mid S\in \cS\}$,
and if $R$ contains a rule $X\to \zeta$, then $R^\mathrm{der}$ contains the rule
$X'\to X_m\zeta'$ where $m=|\zeta|$ and $\zeta'$ is obtained from $\zeta$ 
by changing every nonterminal $Y$ into~$Y'$.
Note that we only consider derivation trees 
that correspond to derivations $S\Rightarrow_G^*w$ with $S\in\cS$ and $w\in T^*$.
Such a derivation tree has yield $w$, because when taking the yield of a derivation tree 
we skip the leaves with label $X_0$.
Moreover, when considering a derivation tree of~$G$, we will disregard 
the subscripts of the nonterminals and we will say that a node 
has label $X$ rather than $X_m$. 
As an example, if $G$ has the rules $S\to aXYb$, $X\to aY$, $Y\to ba$, and $Y\to \epsilon$,
then $G^\mathrm{der}$ has the rules 
$S'\to S_4aX'Y'b$, $X'\to X_2aY'$, $Y'\to Y_2ba$, and $Y'\to Y_0$. 
The string $aabab$ is generated by~$G$,
and the derivation tree $S_4aX_2aY_0Y_2bab=S_4(a,X_2(a,Y_0),Y_2(ba),b)$ 
is generated by $G^\mathrm{der}$; the nodes of this tree are labeled by $S$, $X$, $Y$, $a$, and $b$,
and its yield is $aabab$.  

A context-free grammar is \emph{$\epsilon$-free} if it does not have $\epsilon$-rules, 
i.e., rules $X\to\epsilon$. We will mainly deal with $\epsilon$-free context-free grammars.

A context-free grammar $G$ is \emph{finitary} if $L(G)$ is finite.
We need the following elementary lemma on finitary context-free grammars.

\begin{lemma}\label{lem:finpump}
Let $G=(N,T,\cS,R)$ be a finitary context-free grammar.
For every string $w\in L(G)$ there exists a derivation tree $d\in L(G^\mathrm{der})$
such that the yield of $d$ is~$w$ and the height of $d$ is at most $\#(N)$. 
\end{lemma}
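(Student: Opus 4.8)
The plan is to argue by a standard pumping/shortening argument on derivation trees. Start with any derivation tree $d_0\in L(G^\mathrm{der})$ with yield $w$; such a tree exists since $w\in L(G)$. Among all derivation trees in $L(G^\mathrm{der})$ with yield $w$, pick one, call it $d$, of minimal size $|d|$. I claim this $d$ has height at most $\#(N)$. Suppose not. Then $d$ has a directed path from the root to a leaf of length at least $\#(N)+1$, hence containing at least $\#(N)+1$ edges and thus at least $\#(N)+2$ nodes. Ignoring the bottom (leaf) node, which is labelled by a terminal symbol of rank $0$, the path still has at least $\#(N)+1$ internal nodes, each labelled (disregarding rank subscripts, as the paper's convention allows) by a nonterminal of $G$. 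By the pigeonhole principle two of these nodes $u$ and $v$, with $u$ a proper ancestor of $v$, carry the same nonterminal label $X$.

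Now I would perform the usual subtree substitution: replace the subtree $d|_u$ by the subtree $d|_v$. Since both $u$ and $v$ have label $X$ and $d$ is a derivation tree of $G$, the node $u$ derives (in $G$) exactly the same nonterminal $X$ as $v$ does, so $d|_v$ is again a valid derivation tree fragment for $X$; splicing it in at $u$ yields a tree $d'\in L(G^\mathrm{der})$. The key point is the yield. Here I would invoke that $G$ is \emph{finitary}: if the yield of $d'$ differed from $w=\text{yield}(d)$, then by iterating the \emph{inverse} substitution — repeatedly replacing the copy of $d|_v$ that sits inside $d|_u$ by $d|_u$ itself — one would obtain derivation trees with yields growing without bound (or, if the pumped portion contributes only $\epsilon$, at least infinitely many distinct trees would be needed, but more to the point the set of yields would still be shown to be infinite by a further argument, or one notes $L(G)$ is closed under this and get a contradiction more directly). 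Since $L(G)$ is finite this is impossible unless the two subtrees $d|_u$ and $d|_v$ have the same yield. Hence $\text{yield}(d')=w$ as well. But $v$ is a proper descendant of $u$, so $d|_v$ is a proper subtree of $d|_u$, whence $|d'|<|d|$, contradicting the minimality of $d$. Therefore the height of $d$ is at most $\#(N)$, as required.

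I would be slightly careful about one routine point: the pumping argument that forces $\text{yield}(d|_u)=\text{yield}(d|_v)$. The cleanest phrasing is: let $w_u=\text{yield}(d|_u)$ and $w_v=\text{yield}(d|_v)$, and write $w=w_1w_uw_2$ for the decomposition induced by the position of $u$ in $d$, and $w_u=x\,w_v\,y$ for the decomposition induced by the position of $v$ inside $d|_u$. Pumping (substituting $d|_u$ back into its own copy of $d|_v$, $k$ times) yields derivation trees with yield $w_1 x^k w_v y^k w_2$, all in $L(G)$. Since $L(G)$ is finite, we cannot have $|x|+|y|>0$, so $x=y=\epsilon$, i.e.\ $w_u=w_v$, and then the single substitution strictly decreases the tree size while preserving the yield.

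The main obstacle is essentially presentational rather than conceptual: making precise that the subtree substitution is legitimate (it is, because derivation trees of $G$ are exactly characterized locally by the rules, and $u,v$ carry the same nonterminal), and that the finiteness of $L(G)$ is exactly what rules out $x,y$ being nonempty. Everything else is the textbook pigeonhole-plus-pumping skeleton. No nontrivial new machinery is needed; the lemma is elementary, as its name promises.
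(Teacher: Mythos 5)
Your proof is correct and follows essentially the same pumping argument as the paper, which sketches exactly this: pump at two equal-labelled nodes on a path, note that finiteness of $L(G)$ forces the pumped strings to be empty, and excise the repeated part (the paper phrases the termination via "repeating this" where you use minimality of $|d|$, a cosmetic difference). The only tiny imprecision is your claim that the bottom node of the long path is terminal-labelled — with $\epsilon$-rules it could be a nonterminal leaf $X_0$ — but this only adds another nonterminal occurrence and the pigeonhole count goes through either way.
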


\begin{proof}
Let $d$ be a derivation tree with yield $w$ and 
suppose that a node $u$ of~$d$ and a descendant $v$ of $u$ have the same nonterminal label
(disregarding the ranking subscripts). 
Then the tree $d$ can be pumped in the usual way. But since $L(G)$ is finite, 
the yield of the pumped tree remains the same. Hence we can remove the pumped part from $d$.
Repeating this, we obtain a derivation tree as required. 
\qed
\end{proof}

A context-free grammar $G=(N,T,\cS,R)$ is \emph{forward deterministic} 
if $\cS$ is a singleton and distinct rules have distinct left-hand sides.\footnote{That is 
as opposed to a ``backward deterministic'' context-free grammar in which distinct rules have 
distinct right-hand sides, see, e.g., \cite{Eng09}.
A forward deterministic context-free grammar that generates a string is also called
a ``straight-line'' context-free grammar. 
} 
Such a grammar generates at most one string in $T^*$ and has at most one derivation tree. 
If $L(G^\mathrm{der})=\{d\}$, then the height of $d$ is at most $\#(N)$ by Lemma~\ref{lem:finpump}. 

A \emph{regular tree grammar} is a context-free grammar $G=(N,\Sigma,\cS,R)$ such that 
$\Sigma$ is a ranked alphabet, and 
$\zeta \in T_\Sigma(N)$ for every rule $X\to \zeta$ in~$R$.
A~regular tree grammar generates trees over $\Sigma$, i.e., $L(G)\subseteq T_\Sigma$. 
Note that every regular tree grammar is $\epsilon$-free.
Note also that for every context-free grammar $G$, the grammar~$G^\mathrm{der}$ is a regular tree grammar. 
If, in particular, $G$ is itself a regular tree grammar, as above,
then it should be noted that the elements of $\Sigma$ all have rank~0 in~$G^\mathrm{der}$.
As an example, if $G$ has the rules $S\to \sigma(X,Y)$, $X\to \tau(Y)$, $Y\to \tau(a)$, and $Y\to a$,
where $\sigma$, $\tau$, and $a$ have ranks~2, 1 and~0, respectively, then $G^\mathrm{der}$
has the rules $S'\to S_3(\sigma,X',Y')$, $X'\to X_2(\tau,Y')$, $Y'\to Y_2(\tau,a)$, and $Y'\to Y_1(a)$. 
The tree $\sigma(\tau(\tau(a)),a)$ is generated by $G$, 
and the derivation tree $S_3(\sigma,X_2(\tau,Y_2(\tau,a)),Y_1(a))$ by $G^\mathrm{der}$.

A (total deterministic) \emph{bottom-up finite-state tree automaton}
is specified as a tuple $A=(\Sigma,P,F,\delta)$ where $\Sigma$ is a ranked alphabet,
$P$ is a finite set of states, $F\subseteq P$ is the set of final states, and 
$\delta$ is the state transition function such that $\delta(\sigma,p_1,\dots,p_m) \in P$ 
for every $\sigma\in \Sigma$ and $p_1,\dots,p_m\in P$,
where $m$ is the rank of~$\sigma$. For every $t\in T_\Sigma$, we define the state $\delta(t)$ in which $A$ arrives at the root of~$t$ recursively by $\delta(\sigma \,t_1\cdots t_m)=
\delta(\sigma,\delta(t_1),\dots,\delta(t_m))$. The tree language recognized by $A$ is 
$L(A)=\{t\in T_\Sigma\mid \delta(t)\in F\}$. 

A \emph{regular tree language} is a set of trees that can be generated by a regular tree grammar, 
or equivalently, recognized by a bottom-up finite-state tree automaton. 
The class of regular tree languages will be denoted by $\REGT$.
The basic properties of regular tree languages can be found in, 
e.g., \cite{GecSte,GecSte97,tata,Eng75}.

\smallpar{Regular look-around}  
Let $\Sigma$ be a ranked alphabet. 
A \emph{node test over} $\Sigma$ is a set of trees over $\Sigma$ with a distinguished node,
i.e., it is a subset of the set 
\[
T^\bullet_\Sigma = \{(t,u) \mid t\in T_\Sigma, \,u \in \cN(t)\}.
\]
Intuitively it is a property of a node of a tree. 

We introduce a new ranked alphabet $\Sigma \times \{0,1\}$, such that
the rank of $(\sigma,b)$ equals that of $\sigma$ in $\Sigma$.
For a tree $t$ over $\Sigma$ and a node $u$ of $t$
we define $\tmark(t,u)$ to be the tree over $\Sigma \times \{0,1\}$ 
that is obtained from $t$ by changing the label $\sigma$ of $u$ into~$(\sigma,1)$ and 
changing the label $\sigma$ of every other node into $(\sigma,0)$.
Thus, $\tmark(t,u)$ is $t$ with one ``marked'' node $u$.
A \emph{regular (node) test over} $\Sigma$ is a node test $T\subseteq T^\bullet_\Sigma$ 
such that its marked representation 
is a regular tree language, i.e., $\tmark(T)\in \REGT$.
Note that $\nothing$ and $T^\bullet_\Sigma$ are regular tests, and that the 
class of regular tests over $\Sigma$ is closed under the boolean operations 
complement, intersection, and union, 
because $\REGT$ is closed under those operations.
Hence every boolean combination of regular tests is again a regular test. 

For a tree language $L \subseteq T_\Sigma$ 
we define the node test 
\[
T(L) = \{(t,u)\in T^\bullet_\Sigma\mid t|_u\in L\}
\]
over $\Sigma$.
Intuitively it is a property of the distinguished node 
that only depends on the subtree at that node.
Clearly, if $L$ is regular then $T(L)$ is regular. 
A regular test of the form $T(L)$ with $L\in\REGT$ 
will be called a \emph{regular sub-test}.
Note that $T(T_\Sigma)=T^\bullet_\Sigma$ and $T(\nothing)=\nothing$. 
Note also that for regular tree languages $L$ and $L'$ over $\Sigma$, 
$T(L)\cap T(L')= T(L\cap L')$ and $T^\bullet_\Sigma\setminus T(L)=T(T_\Sigma\setminus L)$.
This shows that the class of regular sub-tests over $\Sigma$ is also closed under 
the boolean operations complement, intersection, and union.

For a given node test $T$ over $\Sigma$, we also wish to be able to apply $T$
to a node~$v$ of a tree $\tmark(t,u)$, where $v$ need not be equal to $u$. 
Thus, we define the node test $\mu(T)$ over $\Sigma \times \{0,1\}$ to consist 
of all $(\tmark(t,u),v)$ such that $(t,v)\in T$ and $u\in \cN(t)$. 
The test $\mu(T)$ just disregards the marking of $t$. 
It is easy to see that if~$T$ is regular, then so is $\mu(T)$. 

The reader familiar with monadic second-order logic (abbreviated \abb{mso} logic)
should realize that it easily follows from the result of 
Doner, Thatcher and Wright \cite{Don70,ThaWri}
that a node test is regular
if and only if it is \abb{mso} definable (see~\cite[Lemma~7]{bloem}).
A node test $T$ over $\Sigma$ is \abb{mso} definable
if there is an \abb{mso} formula $\phi(x)$ over $\Sigma$, 
with one free variable $x$, such that 
$T = \{(t,u) \mid t \models \phi(u)\}$,
where $t \models \phi(u)$ means that the formula $\phi(x)$
holds in $t$ for the node $u$ as value of $x$. 
The formulas of \abb{mso} logic on trees over $\Sigma$ use the atomic formulas 
$\lab_\sigma(x)$ and $\down_i(x,y)$, for every $i\in[1,\m_\Sigma]$, meaning that 
node $x$ has label $\sigma\in\Sigma$, and that $y$ is the $i$-th child of $x$, respectively.
In the literature, regular tests are also called \abb{mso} tests.

\section{Tree-Walking Tree Transducers}\label{sec:trans}

In this section we define tree-walking tree transducers, with and without regular look-around,
and discuss some of their properties. 

A \emph{tree-walking tree transducer} (with regular look-around),
in short \ttt,
is a finite state device with one reading head that walks from node to node
over its input tree following the edges in either direction.
In addition to testing the label and child number of the current node,
it can even test any regular property of that node. 
The output tree is produced recursively, in a top-down fashion.
When the transducer produces a node of the output tree, 
labeled by an output symbol of rank $k$, it branches into $k$ copies of itself,
which then proceed independently, in parallel, to produce the subtrees rooted at 
the children of that output node. 

The \ttt\ is specified as a tuple 
$M = (\Sigma, \Delta, Q, Q_0, R)$,
where $\Sigma$ and $\Delta$ are ranked alphabets of input and output symbols,
$Q$ is a finite set of states,
$Q_0 \subseteq Q$ is the set of initial states,
and $R$ is a finite set of rules.
The rules are divided into \emph{move~rules} and \emph{output~rules}.
Each move rule is of the form
$\tup{q,\sigma,j,T} \to \tup{q',\alpha}$ 
such that $q,q'\in Q$, $\sigma \in \Sigma$, 
$j\in[0,\m_\Sigma]$,
$T$ is a regular test over $\Sigma$ (specified in some effective way),
and $\alpha$ is one of the following \emph{instructions}: 
\[
\begin{array}{ll}
\stay, & \\
\up & \text{provided } j\neq 0, \text{ and} \\
\down_i & \text{with } 1 \le i \le \rank_\Sigma(\sigma).
\end{array}
\]
Each output rule is of the form 
$\tup{q,\sigma,j,T} \to 
\delta(\tup{q_1,\alpha_1}, \dots, \tup{q_k,\alpha_k})$
such that the left-hand side is as above, $\delta\in \Delta^{(k)}$, 
$q_1,\dots,q_k \in Q$, and
$\alpha_1,\dots,\alpha_k$ are instructions as above.
A rule $\tup{q,\sigma,j,T} \to \zeta$ with $T=T^\bullet_\Sigma$ 
will be written $\tup{q,\sigma,j} \to \zeta$.
The \ttt~$M$ is \emph{deterministic}, in short a \dttt, if 
$Q_0$ is a singleton, and  
$T\cap T'=\nothing$ for every two distinct rules $\tup{q,\sigma,j,T} \to \zeta$
and $\tup{q,\sigma,j,T'} \to \zeta'$ in~$R$.  
A \dttt\ with initial state~$q_0$ will be specified as $M = (\Sigma, \Delta, Q, q_0, R)$.

A \emph{configuration} $\tup{q,u}$ of the \ttt\ $M$ on a tree $t$ over $\Sigma$ 
is given by the current state $q$ of $M$ and the current position $u$ of the head of $M$ on $t$. 
Formally, $q\in Q$ and $u\in \cN(t)$. 
The set of all configurations of $M$ on~$t$ is denoted $\con(t)$,
i.e., $\con(t)=Q\times \cN(t)$.
A rule $\tup{q,\sigma,j,T} \to \zeta$ 
is \emph{applicable} to
a configuration $\tup{q',u}$ of~$M$ on $t$ if $q'=q$ and
$u$ satisfies the tests $\sigma$, $j$, and $T$,
i.e., $\sigma$ and $j$ are the label and child number of $u$, 
and $(t,u)\in T$. 
For a node $u$ of $t$ and an instruction~$\alpha$ we define the node $\alpha(u)$ of $t$ as follows: 
if $\alpha$ is $\stay$, $\up$, or $\down_i$, 
then $\alpha(u)$ equals $u$, is the parent of $u$, or is the
$i$-th child of $u$, respectively.

For every input tree $t\in T_\Sigma$ we define the regular tree grammar 
$G_{M,t}=(N,\Delta,\cS,R_{M,t})$ where $N=\con(t)$, $\cS=\{\tup{q_0, \rt_t}\mid q_0\in Q_0\}$ 
and $R_{M,t}$ is defined as follows.
Let $\tup{q,u}$ be a configuration of $M$ on $t$ and let 
$\tup{q,\sigma,j,T} \to \zeta$ be a rule of $M$ that is applicable to $\tup{q,u}$.
If $\zeta = \tup{q',\alpha}$ then $R_{M,t}$ contains the rule 
$\tup{q,u}\to \tup{q',\alpha(u)}$, and 
if $\zeta = \delta(\tup{q_1,\alpha_1}, \dots, \tup{q_k,\alpha_k})$
then $R_{M,t}$ contains the rule
$\tup{q,u}\to \delta(\tup{q_1,\alpha_1(u)},\dots,\tup{q_k,\alpha_k(u)})$.
The derivation relation $\Rightarrow_{G_{M,t}}$ will be written as $\Rightarrow_{M,t}$. 
The \emph{translation realized by} $M$, denoted $\tau_M$, is defined as 
$\tau_M = \{(t,s)\in T_\Sigma\times T_\Delta \mid s\in L(G_{M,t})\}$. In other words, 
$\tau_M = \{(t,s)\in T_\Sigma\times T_\Delta \mid 
\exists\,q_0\in Q_0: \tup{q_0,\rt_t}\Rightarrow^*_{M,t} s\}$.
Two \ttt's $M$ and $N$ are \emph{equivalent} if $\tau_M=\tau_N$.

The \emph{domain of} $M$, denoted by $\dom(M)$, is defined to be 
the domain of the translation $\tau_M$, i.e., 
$\dom(M)=\dom(\tau_M)=\{t\in T_\Sigma\mid \exists\, s\in T_\Delta: (t,s)\in\tau_M\}$. 
The \ttt\ $M$ is \emph{total} if $\dom(M)=T_\Sigma$. 

The \ttt\ $M$ is \emph{finitary} if $\tau_M$ is finitary, which means that 
$\tau_M(t)$ is finite (or equivalently, that
$G_{M,t}$ is finitary) for every input tree $t\in T_\Sigma$. 
All classical top-down tree transducers (with or without regular look-ahead)
and all macro tree transducers are finitary. 

If $M$ is deterministic, then at most one rule of $M$ is applicable to 
a given configuration. Hence $G_{M,t}$ is forward deterministic 
and $L(G_{M,t})$ is either empty or a singleton.
Thus, $\tau_M$ is a partial function from $T_\Sigma$ to $T_\Delta$
(and a total function if~$M$ is total). 
For every $(t,s)\in \tau_M$ the context-free grammar $G_{M,t}$ 
has exactly one derivation tree, with root label $\tup{q_0,\rt_t}$ and yield $s$.

Intuitively, the derivation relation $\Rightarrow_{M,t}$ of the grammar $G_{M,t}$
formalizes the computation steps of the \ttt\ $M$ on the input tree $t$, 
the derivations of $G_{M,t}$ are the sequential computations of $M$ on $t$, 
and the derivation trees of $G_{M,t}$, generated by the regular tree grammar $G_{M,t}^\mathrm{der}$, 
model the parallel computations of the independent copies of $M$ on $t$.
If $M$ is deterministic and $t\in\dom(M)$, 
then $M$ has exactly one parallel computation on $t$. 

A sentential form of $G_{M,t}$ will be called 
an \emph{output form} of $M$ on $t$. It is a tree $s\in T_\Delta(\con(t))$
such that $\tup{q_0,\rt_t} \Rightarrow^*_{M,t} s$ for some $q_0\in Q_0$.
Intuitively, such an output form $s$ consists on the one hand of $\Delta$-labeled nodes 
that were produced by~$M$ previously in the computation, using output rules, 
and on the other hand of leaves that represent the independent copies of $M$ 
into which the computation has branched previously, due to those output rules, 
where each leaf is labeled by the current configuration of that copy. 
An output form is \emph{initial} if it is the configuration $\tup{q_0,\rt_t}$ 
for some $q_0\in Q_0$, where $\rt_t$ is the root of $t$, and 
it is \emph{final} if it is in~$T_\Delta$,  
which means that all copies of $M$ have disappeared.
 
Intuitively, the computation steps of $M$ lead from one output form to another, as follows.
Let $s$ be an output form and let $v$ be a leaf of $s$ 
with label $\tup{q,u}\in \con(t)$.
If $\tup{q,u} \to \tup{q',\alpha(u)}$ is a rule of $G_{M,t}$,
resulting from a move rule $\tup{q,\sigma,j,T} \to \tup{q',\alpha}$ of $M$ 
that is applicable to configuration $\tup{q,u}$, as defined above, 
then $s \Rightarrow_{M,t} s'$ where $s'$ is obtained from $s$ 
by changing the label of $v$ into $\tup{q',\alpha(u)}$. 
Thus, this copy of $M$ just changes its configuration.
Moreover, if $\tup{q,u}\to \delta(\tup{q_1,\alpha_1(u)},\dots,\tup{q_k,\alpha_k(u)})$
is a rule of $G_{M,t}$,
resulting from an output rule $\tup{q,\sigma,j,T} \to 
\delta(\tup{q_1,\alpha_1}, \dots, \tup{q_k,\alpha_k})$ of $M$, as defined above, 
then $s \Rightarrow_{t,M} s'$ where $s'$ is obtained from $s$ 
by changing the label of~$v$ into $\delta$ and adding children $v1,\dots, vm$
with labels $\tup{q_1,\alpha_1(u)},\dots,\tup{q_m,\alpha_m(u)}$, respectively.
Thus, $M$ outputs~$\delta$, and for each child $vi$ it branches into a new process,
a copy of itself started in state~$q_i$ at the node $\alpha_i(u)$.
In the particular case that $k=0$, $s'$ is obtained from $s$ by changing the label 
of $v$ into $\delta$; thus, the copy of~$M$ corresponding to the node~$v$ of $s$ disappears. 
The translation $\tau_M$ realized by $M$ consists of all pairs of trees 
$t$ over $\Sigma$ and $s$ over $\Delta$ such that $M$ has a sequential computation on $t$ 
that starts with an initial output form and ends with the final output form $s$. 

Before giving an example of a tree-walking tree transducer, we define six properties of \ttt's 
that will be used throughout this paper. 

The \ttt\ $M$ is \emph{sub-testing}, abbreviated \tts, 
if the regular tests used by $M$ are regular sub-tests, i.e., 
only test the subtree at the current node.
Formally, for every rule $\tup{q,\sigma,j,T} \to \zeta$ 
there is a regular tree language $L$ over $\Sigma$ such that 
$T=T(L)$. Recall that $T(L)=\{(t,u)\mid t|_u\in L\}$.
Thus, informally, $M$ is sub-testing if it uses regular look-ahead 
rather than the more general regular look-around. 

The \ttt\ $M$ is \emph{local}, abbreviated \ttl,  
if it does not use regular tests, i.e., 
$T=T^\bullet_\Sigma$ ($=\{(t,u)\mid t\in T_\Sigma, u\in \cN(t)\}$) 
for every rule $\tup{q,\sigma,j,T} \to \zeta$.
So all its rules are written $\tup{q,\sigma,j} \to \zeta$.
Recall that $T^\bullet_\Sigma=T(T_\Sigma)$; thus,
every local~\ttt\ is sub-testing.
Note that in the formalism of the (non-local) \ttt\, 
the tests on $\sigma$ and $j$ could be dropped from
a rule $\tup{q,\sigma,j,T} \to \zeta$, 
because they can be incorporated in the regular test $T$.

The \ttt\ $M$ is \emph{top-down}, abbreviated \ttd, if it does not use 
the up-instruction in the right-hand sides of its rules. 
Due to the use of stay-instructions, a \ttd\ need not be finitary. 
It is straightforward to show that 
the finitary (deterministic) \ttds\ and \ttdl\ are equivalent to the 
classical nondeterministic (deterministic) top-down tree transducer,
with and without regular look-ahead, respectively; see the end of this section. 
Note that in the rules of a \ttds\ or \ttdl\
the test on the child number~$j$ could be dropped, 
because $j$ can be stored in the finite state if necessary. 

The \ttt\ $M$ is \emph{single-use}, abbreviated \ttsu, if it is deterministic and 
never visits a node of the input tree twice in the same state.
Formally, it should satisfy the following property: 
for every $t\in T_\Sigma$, $s'\in T_\Delta(\con(t))$, $s\in T_\Delta$, and $\tup{q,u}\in\con(t)$, 
if $\tup{q_0,\rt_t}\Rightarrow^*_{M,t} s' \Rightarrow^*_{M,t} s$ 
then $\tup{q,u}$ occurs at most once in $s'$. In other words,
for every $t\in \dom(M)$, no nonterminal occurs twice in the (unique) derivation tree~$d$ 
of the context-free grammar $G=G_{M,t}$. 
Note that, as discussed in the proof of Lemma~\ref{lem:finpump} (and the paragraph following it),
the configuration $\tup{q,u}$ cannot occur at two distinct nodes
on a path from the root of $d$ to a leaf.
The single-use property also forbids $\tup{q,u}$ to occur at two independent nodes of $d$.
It was introduced for attribute grammars in~\cite{Gan,GanGie,Gie}. 

The \ttt\ $M$ is \emph{pruning}, abbreviated \ttp, if it is a top-down \ttt\ 
of which each move rule is of the form $\tup{q,\sigma,j,T}\to \tup{q',\down_i}$, 
and each output rule is of the form  
$\tup{q,\sigma,j,T}\to \delta(\tup{q_1,\down_{i_1}},\dots,\tup{q_k,\down_{i_k}})$
such that $1\leq i_1 < \cdots < i_k\leq \rank(\sigma)$. 
Intuitively, a pruning \ttt\ is a \ttd\ without stay-instructions that, 
when arriving at an input node $u$, either removes $u$ and 
all its children except one (together with the descendants of those children),
or relabels $u$ and possibly removes some of its children (together with their descendants).
Since a \ttp\ does not use the stay-instruction, it is finitary 
(and single-use if it is deterministic). 
Every \ttps\ and \ttpl\ is equivalent to a classical \emph{linear} top-down tree transducer,
with and without regular look-ahead, but not vice versa because the latter transducer can 
generate an arbitrary finite number of output nodes at each computation step, rather than zero or one. 

The \ttt\ $M$ is \emph{relabeling}, abbreviated \ttr, if 
every rule of $M$ is an output rule of the form 
$\tup{q,\sigma,j,T} \to 
\delta(\,\tup{q_1,\down_1}, \dots, \tup{q_m,\down_m})$
where $m=\rank_\Sigma(\sigma)=\rank_\Delta(\delta)$. 
Thus, the label $\sigma$ is replaced by the label $\delta$.
Obviously, every relabeling~\ttt\ is pruning. 

We use the notation $\NTT$ for the class of translations
realized by tree-walking tree transducers, and $\FTT$ and $\TT$ for the subclasses realized 
by finitary and deterministic \ttt's, respectively. Thus, $\TT \subseteq \FTT \subseteq \NTT$.
The subclasses of $\NTT$, $\FTT$, and $\TT$ realized by \ttt's with the above six properties 
(and their combinations) are indicated 
by the superscripts `s' and `$\ell$', and 
the subscripts `$\downarrow$', `su', `pru', and `rel', as above. 
For instance, $\TTDS$ denotes the class of translations 
realized by deterministic tree-walking tree transducers that are both sub-testing and top-down.
Note that $\NTTL$ is a proper subclass of $\NTTS$, 
because a local \ttt\ of which all output symbols have rank~0 
can be viewed as a tree-walking automaton, 
which cannot recognize all regular tree languages by the result of~\cite{BojCol}.
 
By~\cite[Section~8.4]{thebook}, the \ttt\ is equivalent to the \abb{ms} 
tree-walking transducer of~\cite[Section~8.2]{thebook}.
As discussed in the Introduction, the \ttl\ generalizes the 
attributed tree transducer of~\cite{Ful},
which is required to be noncircular and hence finitary;
the deterministic attributed tree transducer is also required to be total.\footnote{The 
\ttt\ $M$ is \emph{circular} 
if there exist $t\in T_\Sigma$, $u\in \cN(t)$, $q\in Q$, and $s\in T_\Delta(\con(t))$ such that
$\tup{q,u}\Rightarrow^*_{M,t} s$ and $\tup{q,u}$ occurs in $s$.
Thus, $M$ is noncircular if and only if $G_{M,t}$ is nonrecursive for every $t\in T_\Sigma$,
which implies that $L(G_{M,t})$ is finite. 
Note that a total deterministic \ttt\ is noncircular if and only if for every $t\in T_\Sigma$, 
$u\in \cN(t)$, and $q\in Q$ there exists $s\in T_\Delta$ such that $\tup{q,u}\Rightarrow^*_{M,t} s$.
It can be shown that for every finitary \ttt\ there is an equivalent noncircular \ttt,
but that will not be needed in this paper. 
}
In the same way the deterministic \ttt\ generalizes the (deterministic) attributed tree transducer 
with look-ahead of~\cite{BloEng}.
In~\cite{Eng09} all tree-walking tree transducers are local. 

\begin{example}\label{exa:query}
Let $\Sigma= \{\sigma,e\}$ with 
$\rank_\Sigma(\sigma)=2$ and $\rank_\Sigma(e)=0$,
and let $\Delta=\{\sigma,e\}\cup[1,\m_\Sigma]$ 
with $\rank_\Delta(\sigma)=2$, $\rank_\Delta(e)=0$, and 
$\rank_\Delta(j)=0$ for every $j\in[1,\m_\Sigma]=\{1,2\}$. 
Moreover, let $T$ be an arbitrary regular node test over $\Sigma$.
For simplicity we assume that $T$ is not satisfied at the leaves of $t$, 
i.e., if $(t,u)\in T$ then $u$ is not a leaf of $t$.
For instance, $T$ consists of all $(t,u)\in T^\bullet_\Sigma$ such that $u$ has 
at least one ancestor that has exactly one child that is a leaf, and 
at least one descendant with that same property. 
We consider a total deterministic \ttt\ $M = (\Sigma, \Delta, Q, q_0, R)$ that performs $T$ as a query, 
i.e., for every input tree $t$ it outputs all nodes of $t$ that satisfy $T$, in pre-order.
More precisely, if $u_1,\dots,u_n$ are the nodes $u$ of $t$ such that $(t,u)\in T$, in pre-order,
then $M$ outputs the tree $s=\sigma(s_1,\sigma(s_2,\dots \sigma(s_n,e)\cdots ))$ where 
$s_i = \sigma(\cdots\sigma(\sigma(e,j_1),j_2)\dots,j_k)$ if $u_i=j_1j_2\cdots j_k$ with 
$j_1,j_2,\dots,j_k\in [1,\m_\Sigma]$. 
Note that the yield of $s$ is $eu_1eu_2\cdots eu_ne$. 
The transducer $M$ performs a left-to-right depth-first traversal of the input tree $t$
and applies the test $T$ to every node of $t$, in pre-order. Whenever $M$ finds a node $u_i$ 
that satisfies the test, it branches into two copies. The first copy outputs the tree $s_i$ 
with yield~$eu_i$, walking from $u$ to the root, and the second copy continues the traversal. 

Formally, $M$ has the set of states $Q=\{d,u_1,u_2,p,p'\}$ and initial state $q_0=d$.
Intuitively, $d$ stands for `down', $u_j$ for `up from the $j$-th child', and $p$ for `print'.  
It has the following rules, where $j'\in[0,\m_\Sigma]$, $j\in[1,\m_\Sigma]$, 
$T^\mathrm{c}=T^\bullet_\Sigma\setminus T$, and $\tau\in\Sigma$: 
\[
\begin{array}{llllll}
\tup{d,\sigma,j',T^\mathrm{c}} & \to & \tup{d,\down_1} & 
\tup{d,e,j} & \to & \tup{u_j,\up} \\[0.4mm]
\tup{d,\sigma,j',T} & \to & \sigma(\tup{p,\stay},\tup{d,\down_1}) \quad & 
\tup{d,e,0} & \to & e \\[2mm]
\tup{u_1,\sigma,j'} & \to & \tup{d,\down_2} & 
\tup{p,\tau,j} & \to & \sigma(\tup{p,\up},j) \\[0.4mm]
\tup{u_2,\sigma,j} & \to & \tup{u_j,\up} & 
\tup{p,\tau,0} & \to & e \\[0.4mm]
\tup{u_2,\sigma,0} & \to & e &  &  & 
\end{array}
\]
where the rule $\tup{p,\tau,j} \to \sigma(\tup{p,\up},j)$ abbreviates the two rules
\[
\tup{p,\tau,j} \to \sigma(\tup{p,\up},\tup{p',\stay}) \text{\quad and \quad} \tup{p',\tau,j} \to j. 
\]
The \ttt\ $M$ does not have any of the six properties defined above.
Note that $M$ is not single-use because it pays $n$ visits to the root of $t$ in state~$p$.
For the example test $T$ it is not clear whether there is a local \ttt\ equivalent to~$M$, 
but that does not seem likely.
\qed
\end{example}

\begin{example}\label{exa:exp}
Let $\Sigma=\{\sigma,e\}$ as in Example~\ref{exa:query}.
We consider a total deterministic local \ttt\ $M_\mathrm{exp}$ that translates each tree $t$ with $n$ leaves 
into the full binary tree of height $n$ with $2^n$ leaves. 
As in Example~\ref{exa:query}, it performs a depth-first left-to-right traversal of $t$,
and branches into two copies whenever it visits a leaf of~$t$. 
Formally, $M_\mathrm{exp} = (\Sigma, \Sigma, Q, q_0, R)$ with $Q=\{d,u_1,u_2,q\}$ and $q_0=d$.
Its rules are similar to those of $M$ in Example~\ref{exa:query}.  
In particular, the three rules for states $u_1$ and $u_2$ are the same. 
The rules for state $d$ are the following, 
with $j'\in[0,\m_\Sigma]$ and $j\in[1,\m_\Sigma]$:
\[
\begin{array}{lll}
\tup{d,\sigma,j'} & \to & \tup{d,\down_1}  \\[0.4mm]
\tup{d,e,j} & \to & \sigma(\tup{u_j,\up},\tup{u_j,\up})  \\[0.4mm]
\tup{d,e,0} & \to & \sigma(e,e)
\end{array}
\]
where the last rule abbreviates the two rules 
$\tup{d,e,0} \to \sigma(\tup{q,\stay},\tup{q,\stay})$ and $\tup{q,e,0} \to e$. 
\qed
\end{example}

An elementary property of the translation realized by a deterministic \ttt\ is that it is 
of ``linear size-height increase'', as stated in the next lemma. 
Since the size of a tree is at most exponential in its height, this implies that 
it is of exponential size increase. 
This is well known for attributed tree transducers~\cite[Lemma~4.1]{Ful} 
(see also~\cite[Lemma~5.40]{FulVog})
and for local \ttt's~\cite[Lemma~7]{EngMan03}, and obviously also holds for~\ttt's. 
If, moreover, the \ttt\ is single-use, then it is of linear size increase.

\begin{lemma}\label{lem:sizeheight}
For every $\tau\in\TT$ there is a constant $c$ such that for every $(t,s)\in\tau$
the height of $s$ is at most $c\cdot|t|$. Moreover, $\TTSU\subseteq \LSI$. 
\end{lemma}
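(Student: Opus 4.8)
The plan is to prove both assertions by tracking the parallel computation of a deterministic \ttt, modeled by the derivation tree $d$ of the context-free grammar $G_{M,t}$ (equivalently, the unique tree in $L(G_{M,t}^{\mathrm{der}})$ when $t\in\dom(M)$). Recall that the nodes of $d$ are labeled by configurations $\tup{q,u}\in\con(t)$ (the "move" nodes, which are monadic) and by output symbols $\delta\in\Delta$ (the "output" nodes), and that the yield of $d$ is exactly the output tree $s$. So $|s|\leq |d|$ and, more usefully, the height of $s$ is at most the number of output nodes on a longest root-to-leaf path of $d$.

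First I would bound the height of $s$. Fix $(t,s)\in\tau$ with derivation tree $d$, and consider a path $\pi$ from the root of $d$ to a leaf. Along $\pi$, between any two consecutive output nodes there is a (possibly empty) block of move nodes; each move node changes the configuration $\tup{q,u}$ to $\tup{q',\alpha(u)}$. The key observation — exactly the argument already used in the proof of Lemma~\ref{lem:finpump} and in the definition of the single-use property — is that no configuration $\tup{q,u}$ can repeat on $\pi$: if it did, the segment between the two occurrences could be pumped, contradicting that $G_{M,t}$ generates (for $t\in\dom(M)$) a unique finite derivation tree. Hence the number of move nodes strictly between two output nodes of $\pi$ is at most $\#(\con(t))-1 = \#(Q)\cdot|t|-1$, and likewise for the segments before the first and after the last output node. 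Writing $h$ for the number of output nodes on $\pi$, the length of $\pi$ is at most $(h+1)\cdot\#(Q)\cdot|t|$. But the height of $s$ equals the maximum over all paths $\pi$ of the number of output nodes on $\pi$; more importantly I just need a bound on that number. Since the rank of each output symbol is bounded by $\m_\Delta$ and there are at most $\#(Q)\cdot|t|$ distinct configurations, I can bound $h$ directly: each output node on $\pi$ is followed (after a move-block of length $<\#(Q)\cdot|t|$) by the next, and the configuration just before each output node is distinct from the one before every other output node on $\pi$ (again by the no-repeat argument applied to $\pi$). Therefore $h\leq \#(Q)\cdot|t|$, and taking $c=\#(Q)=\#Q$ gives $\mathrm{height}(s)\leq c\cdot|t|$.

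Next, the single-use case. If $M$ is single-use then by definition no configuration occurs twice anywhere in $d$ — not just on one path, but across all of $d$. Hence $|d|\leq \#(\con(t)) + (\text{number of output nodes of }d)$. Each output node has exactly one parent, and the parent of an output node is either the root, a move node, or another output node; since move nodes are monadic and each has a unique child, and each configuration occurs at most once, the move nodes contribute at most $\#(Q)\cdot|t|$ nodes total. For the output nodes: $d$ restricted to its output nodes, with the move-blocks contracted, is a tree whose internal structure is that of $s$ (its leaves are the leaves of $s$ or point to final configurations), so the number of output nodes is at most $|s|$... but that is circular. Instead I argue: the number of output nodes in $d$ is at most the number of move nodes plus one, because reading $d$ top-down, every time the computation branches (at an output node of rank $\geq 2$) each new branch must, before it can branch again or die, be started at some configuration, and distinct branches use distinct configurations by the single-use property — so the total number of output nodes of rank $\geq 2$ is bounded by $\#(\con(t))$, hence the whole of $d$ has at most $O(\#(Q)\cdot|t|)$ nodes, giving $|s|=|\mathrm{yield}(d)|\leq |d|\leq c'\cdot|t|$ for a suitable constant $c'$ depending only on $M$. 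Thus $\tau_M\in\LSI$.

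The main obstacle is making the counting in the single-use part airtight: one must bound the number of output nodes of $d$ by a linear function of $|t|$ using only the single-use hypothesis, and the cleanest route is to observe that the monadic-contracted version of $d$ is a tree all of whose internal nodes are output nodes of rank $\geq 2$, whose number of leaves equals the number of leaves of $s$ (each a rank-$0$ output symbol), and then to bound the number of leaves of $s$ — which is where one genuinely needs that each leaf of $d$ is reached via a path whose move-blocks consume pairwise-distinct configurations, and that the branch points likewise consume distinct configurations, so that $\#(\text{leaves of }d)\leq \#(\con(t))+1$. All the rest is routine bookkeeping with the constants $\#(Q)$ and $\m_\Delta$. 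I would present the height bound in full and then remark that the single-use refinement follows because the single-use property upgrades "no repeat on a path" to "no repeat in the whole derivation tree", which bounds $|d|$ and hence $|s|$ linearly in $|t|$.
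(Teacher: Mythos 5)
Your proposal is correct and follows essentially the same route as the paper: bound the height of $s$ by the height of the unique derivation tree $d$ of $G_{M,t}$, using the fact that no configuration can repeat on a root-to-leaf path, and then use the single-use property to bound the total number of configuration-labeled nodes of $d$. The only remark worth making is that your single-use accounting is more laborious than necessary (and the intermediate claim ``output nodes $\leq$ move nodes plus one'' is not actually true, e.g.\ for a derivation with no move rules at all): since every node of $s$ is emitted by exactly one output-rule application, i.e.\ at one configuration-labeled node of $d$, and single-use makes all such configurations pairwise distinct, one gets $|s|\leq\#(Q)\cdot|t|$ immediately, with no separate bookkeeping for branch points, monadic chains, or leaves.
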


\begin{proof}
Let $M = (\Sigma, \Delta, Q, q_0, R)$ be a \dttt\ and let $(t,s)\in\tau_M$.
Let~$d$ be the unique derivation tree generated by $G^\mathrm{der}_{M,t}$.
Clearly, since each rule of $M$ outputs at most one node of $s$, 
the height of $s$ is at most the height of~$d$.
By Lemma~\ref{lem:finpump} the height of $d$ is at most $\#(\con(t))$, 
which equals $\#(Q)\cdot |t|$. Thus, we can take $c=\#(Q)$. 

It should also be clear that the size of $s$ is at most 
the number of nodes of~$d$ that are labeled by a configuration.
If $M$ is single-use, then no configuration occurs twice in $d$. 
Hence $|s|\leq \#(Q)\cdot |t|$, i.e., the function $\tau_M$ is of linear size increase. 
\qed
\end{proof}

Example~\ref{exa:exp} and Lemma~\ref{lem:sizeheight} imply that compositions of deterministic \ttt's
form a proper hierarchy. This was proved for attributed tree transducers 
in~\cite[Corollary~4.1]{Ful} (see also~\cite[Theorem~5.45]{FulVog}),
and the proof for \ttt's is exactly the same. 

\begin{proposition}\label{pro:hier}
For every $k\geq 1$, $\TT^k \subsetneq \TT^{k+1}$. 
\end{proposition}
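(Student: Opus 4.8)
The plan is to prove the strictness of each inclusion $\TT^k \subsetneq \TT^{k+1}$ by a size-increase argument, exactly as announced in the paragraph preceding the statement. First I would establish the "easy direction" growth bound: by Lemma~\ref{lem:sizeheight}, every $\tau\in\TT$ is of exponential size increase, i.e.\ there is a constant $c$ with $|s|\leq 2^{c\cdot|t|}$ for every $(t,s)\in\tau$ (since $|s|$ is at most exponential in its height, and the height is at most $c\cdot|t|$). Composing $k$ such translations, a routine induction shows that every $\tau\in\TT^k$ has $k$-fold exponential size increase: there is a constant $c$ such that $|s|\leq \exp_k(c\cdot|t|)$ for all $(t,s)\in\tau$, where $\exp_k$ denotes the $k$-fold iterated exponential. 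This is the "upper" half.

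For the "lower" half I would exhibit, for each $k$, a translation in $\TT^{k+1}$ whose size increase is genuinely $(k{+}1)$-fold exponential. The transducer $M_{\mathrm{exp}}$ of Example~\ref{exa:exp} realizes a function $\tau_{\mathrm{exp}}\in\TTL\subseteq\TT$ mapping a tree with $n$ leaves to the full binary tree of height $n$, which has $2^n$ leaves and thus size $2^{n+1}-1$. Starting from the "caterpillar" trees $\sigma(e,\sigma(e,\dots,\sigma(e,e)))$ with $n$ leaves (size $2n-1$), one application of $M_{\mathrm{exp}}$ already produces output of size exponential in the input; but crucially the output of $M_{\mathrm{exp}}$ is again a tree whose number of leaves is $2^n$, so feeding it back into (a suitable variant of) $M_{\mathrm{exp}}$ doubly exponentiates, and so on. So the $(k{+}1)$-fold composition $\tau_{\mathrm{exp}}^{k+1}\in\TT^{k+1}$ maps the caterpillar with $n$ leaves to a tree of size $\exp_{k+1}(\Theta(n))$, which exceeds any $k$-fold exponential bound in $n$ for $n$ large. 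Hence $\tau_{\mathrm{exp}}^{k+1}\notin\TT^k$, giving $\TT^k\subsetneq\TT^{k+1}$. (The base case $k=1$ is immediate: a single \dttt\ is of exponential size increase by Lemma~\ref{lem:sizeheight}, while $\tau_{\mathrm{exp}}^2$ is of doubly exponential size increase.)

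One technical point to check carefully is the iteration: Example~\ref{exa:exp} counts leaves of the input, so to compose $M_{\mathrm{exp}}$ with itself I would note that the full binary tree of height $n$ has exactly $2^n$ leaves, and arrange (by a trivial relabeling \ttt, or by observing $\Sigma=\Delta$ already in the example) that this output is a legitimate input for another copy of $M_{\mathrm{exp}}$; then after $k{+}1$ steps the leaf count is the tower $\exp_k(\cdot)$ in the original leaf count $n$, and the final size is one more doubling on top of that. The main obstacle, such as it is, is purely bookkeeping: making the iterated-exponential estimates precise and checking that the constants coming from Lemma~\ref{lem:sizeheight} really do yield a $k$-fold exponential \emph{upper} bound on all of $\TT^k$, so that a $(k{+}1)$-fold exponential \emph{lower} bound on one member of $\TT^{k+1}$ separates the two classes. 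No new ideas beyond Example~\ref{exa:exp} and Lemma~\ref{lem:sizeheight} are needed, which is why, as the paper notes, "the proof for \ttt's is exactly the same" as the classical one for attributed tree transducers.
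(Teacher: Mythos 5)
Your proof is correct and takes essentially the same approach as the paper: both use $\tau_{\mathrm{exp}}^{\,k+1}$ from Example~\ref{exa:exp} as the separating translation and Lemma~\ref{lem:sizeheight} for the upper bound. The only cosmetic difference is that the paper bounds the output \emph{height} of $\TT^k$ by a $(k-1)$-fold exponential while you bound the output \emph{size} by a $k$-fold exponential; these are equivalent via "size is at most exponential in height", and both suffice.
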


\begin{proof}
Let $\tau_\mathrm{exp}$ be the translation realized by 
the \dttt\ $M_\mathrm{exp}$ of Example~\ref{exa:exp}.
Then $\tau_\mathrm{exp}\circ\tau_\mathrm{exp}$ translates each tree $t$ with $n$ leaves 
into the full binary tree of height~$2^n$ with $2^{2^n}$ leaves.
Since $|t|=2n-1$, it follows from Lemma~\ref{lem:sizeheight} that 
$\tau_\mathrm{exp}\circ\tau_\mathrm{exp}$ is not in $\TT$. Hence $\TT \subsetneq \TT^2$.
In a similar way it can be shown that $\tau_\mathrm{exp}^{k+1}$ is not in~$\TT^k$.
Since the size of a tree is at most exponential in its height, 
it follows from Lemma~\ref{lem:sizeheight} that for every $\tau\in\TT^2$ 
there is a constant $c$ such that for every $(t,s)\in\tau$
the height of $s$ is at most $2^{c\cdot|t|}$. Similarly for $\tau\in\TT^k$,
the height of $s$ is at most ($k-1$)-fold exponential in $|t|$. 
\qed
\end{proof}

Thus, in terms of size increase, a composition of $k$ \dttt's can create at most 
a $k$-fold exponentially large output tree, whereas a composition of $k+1$ \dttt's
can naturally create an output tree of $(k+1)$-fold exponential size.
In Section~\ref{sec:func} we will prove that compositions of nondeterministic \ttt's 
also form a hierarchy, with the same counter-examples.
One of our aims is to show that these hierarchies collapse for functions of linear size increase, 
i.e., that $\NTT^k \cap\LSI\subseteq \TT$ for every $k\geq 1$. 

We end this section by discussing some syntactic properties of \ttt's. 
First, for an arbitrary \ttt\ it may always be assumed that its output rules only use 
the stay-instruction: an output rule 
$\tup{q,\sigma,j,T} \to \delta(\tup{q_1,\alpha_1}, \dots, \tup{q_k,\alpha_k})$
can be replaced by the output rule
$\tup{q,\sigma,j,T} \to \delta(\tup{p_1,\stay}, \dots, \tup{p_k,\stay})$
and the move rules $\tup{p_i,\sigma,j,T} \to \tup{q_i,\alpha_i}$ for every $i\in[1,k]$,
where $p_1,\dots,p_k$ are new states. This replacement preserves determinism and the  
sub-testing, local, top-down, and single-use properties (but not pruning or relabeling). 

Second, we may always assume that the regular tests of a \ttt\ are disjoint. 
For a \ttt\ $M$, let $\cT_M$ be the set of regular tests in the left-hand sides of the rules of $M$.

\begin{lemma}\label{lem:mutdis}
For every \ttt\ $M$ there is an equivalent \ttt\ $M'$ such that the tests in~$\cT_{M'}$ 
are mutually disjoint. The construction preserves determinism and the 
sub-testing, local, top-down, single-use, pruning, and relabeling properties. 
\end{lemma}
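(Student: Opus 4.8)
The plan is to refine the tests in $\cT_M$ by taking, for each rule, the intersection of its test with a choice of ``sign'' (the test itself or its complement) for every other test in $\cT_M$. Concretely, enumerate $\cT_M = \{T_1,\dots,T_n\}$. For every function $\epsilon: [1,n]\to\{0,1\}$ define the \emph{atom} $A_\epsilon = \bigcap_{i=1}^n T_i^{\epsilon(i)}$, where $T_i^1 = T_i$ and $T_i^0 = T^\bullet_\Sigma\setminus T_i$. Each $A_\epsilon$ is a regular test, being a boolean combination of regular tests (as noted in the preliminaries, $\REGT$ is closed under the boolean operations, hence so is the class of regular tests over $\Sigma$). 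The atoms are pairwise disjoint, and for each $i$ we have $T_i = \bigcup\{A_\epsilon \mid \epsilon(i)=1\}$. First I would replace each rule $\tup{q,\sigma,j,T_i}\to\zeta$ of $M$ by the finite set of rules $\tup{q,\sigma,j,A_\epsilon}\to\zeta$ over all $\epsilon$ with $\epsilon(i)=1$. Call the resulting \ttt\ $M'$, with $\cT_{M'}\subseteq\{A_\epsilon\mid \epsilon\in\{0,1\}^n\}$, a set of mutually disjoint regular tests.

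Next I would check equivalence, i.e.\ $\tau_{M'}=\tau_M$. The grammars $G_{M,t}$ and $G_{M',t}$ have the same nonterminals $\con(t)$ and the same initial nonterminals. For a rule of $M$ and a configuration $\tup{q,u}$ on $t$, the rule $\tup{q,\sigma,j,T_i}\to\zeta$ is applicable iff $\sigma,j$ are the label and child number of $u$ and $(t,u)\in T_i$; since the $A_\epsilon$ partition $T^\bullet_\Sigma$, $(t,u)\in T_i$ holds iff $(t,u)\in A_{\epsilon}$ for the unique $\epsilon$ with $(t,u)\in A_\epsilon$, and that $\epsilon$ satisfies $\epsilon(i)=1$. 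Hence the new rules of $M'$ applicable to $\tup{q,u}$ produce exactly the same right-hand sides in $R_{M',t}$ as the old ones do in $R_{M,t}$, so $R_{M,t}=R_{M',t}$ and therefore $L(G_{M,t})=L(G_{M',t})$ for every $t$, giving $\tau_{M'}=\tau_M$.

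Finally I would verify that the construction preserves all the listed properties. Locality is preserved because if $M$ is local then $n=0$ (or, more cleanly, $\cT_M=\{T^\bullet_\Sigma\}$), so no splitting happens — or one observes that the only atom is $T^\bullet_\Sigma$ itself; so $M'$ is again local. Sub-testing needs a small additional argument: if every $T_i$ is of the form $T(L_i)$ with $L_i\in\REGT$, then each atom $A_\epsilon=\bigcap_i T(L_i)^{\epsilon(i)}$ is again a regular sub-test, because the regular sub-tests are closed under the boolean operations — exactly as noted in the preliminaries via $T(L)\cap T(L')=T(L\cap L')$ and $T^\bullet_\Sigma\setminus T(L)=T(T_\Sigma\setminus L)$. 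The top-down, pruning and relabeling properties are about the \emph{form} of the right-hand sides $\zeta$, which are copied verbatim, so they are trivially preserved. Determinism is preserved: if $M$ is deterministic and $\tup{q,\sigma,j,A_\epsilon}\to\zeta$, $\tup{q,\sigma,j,A_{\epsilon'}}\to\zeta'$ are distinct rules of $M'$ with $A_\epsilon\cap A_{\epsilon'}\ne\nothing$, then $\epsilon=\epsilon'$ (distinct atoms are disjoint), so both rules come from rules $\tup{q,\sigma,j,T_i}\to\zeta$ and $\tup{q,\sigma,j,T_{i'}}\to\zeta'$ of $M$ with $\epsilon(i)=\epsilon(i')=1$, hence $(t,u)\in A_\epsilon\subseteq T_i\cap T_{i'}$ for any witnessing $(t,u)$; by determinism of $M$ either $i=i'$ and the two $M$-rules coincide (whence $\zeta=\zeta'$, contradicting distinctness) or the two $M$-rules have disjoint tests, contradicting $A_\epsilon\subseteq T_i\cap T_{i'}\ne\nothing$. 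So $M'$ is deterministic. Single-use is a semantic property of $\tau_{M'}=\tau_M$ together with the grammars $G_{M',t}=G_{M,t}$ coinciding, so it carries over verbatim. The one place to be mildly careful is the sub-testing case and the bookkeeping that $\cT_M$ may contain $T^\bullet_\Sigma$ itself (rules written without a test), which is harmless since $T^\bullet_\Sigma$ is the union of all atoms; there is no real obstacle, only routine verification.
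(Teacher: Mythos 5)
Your proof is correct and takes essentially the same approach as the paper: the paper iteratively splits any two overlapping tests $T,T'$ into $T\cap T'$ and $T\setminus T'$, which when carried to completion yields exactly the boolean atoms you construct in one global step. Both arguments rest on closure of regular (sub-)tests under boolean operations and on the fact that right-hand sides are copied verbatim, so all listed properties are preserved.
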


\begin{proof}
If $T,T'\in \cT_M$ and $T\cap T'\neq \nothing$, then every rule $\tup{q,\sigma,j,T}\to \zeta$
can be replaced by the two rules $\tup{q,\sigma,j,T\cap T'}\to \zeta$ and 
$\tup{q,\sigma,j,T\setminus T'}\to \zeta$. The transducer $M'$ is obtained by repeating this procedure.
\qed
\end{proof}

Third, we can extend the definition of a \ttt\ 
$M = (\Sigma, \Delta, Q, q_0, R)$ by allowing 
``general rules'', which can generate any finite number of output nodes,
cf.~\cite[Lemma~2]{EngMan03}. Simple examples of general rules are 
$\tup{p,\tau,j} \to \sigma(\tup{p,\up},j)$ in Example~\ref{exa:query}
and $\tup{d,e,0}\to \sigma(e,e)$ in Example~\ref{exa:exp}. 
Formally, a \emph{general rule} is
of the form $\tup{q,\sigma,j,T}\to \zeta$ such that $\zeta$ is a tree 
in $T_\Delta(Q\times I_{\sigma,j})$, where $I_{\sigma,j}$ is the usual set of instructions: 
$\stay$, $\up$ (provided $j\neq 0$), and $\down_i$ with $i\in[1,\rank(\sigma)]$. 
If this rule is applicable to a configuration $\tup{q,u}$ of $M$ on $t\in T_\Sigma$,
then $G_{M,t}$ has the rule $\tup{q,u}\to \zeta_u$, where $\zeta_u$ is obtained from $\zeta$
by changing every label $\tup{q',\alpha}$ into $\tup{q',\alpha(u)}$. 
It is easy to see that a general rule can be replaced by 
the set of ordinary rules defined as follows. Let $p_u$ be a new state for every $u\in \cN(\zeta)$.
Then the rules are $\tup{q,\sigma,j,T}\to \tup{p_\epsilon,\stay}$, 
where $\epsilon$ is the root of $\zeta$, and all rules 
$\tup{p_u,\sigma,j,T}\to \lambda(\tup{p_{u1},\stay},\dots,\tup{p_{uk},\stay})$ 
where $\lambda$ is the label of $u$ in $\zeta$ and $k$~is its rank.
The first rule is a move rule that just changes state, and the latter rules 
output the $\Delta$-labeled nodes of $\zeta$ one by one ($\lambda\in\Delta$), and 
then make the required moves ($\lambda\in Q\times I_{\sigma,j}$). 
This construction preserves determinism and the sub-testing, local, top-down, and single-use properties. 
Note that the classical top-down tree transducer has general rules. 

If we allow general rules, then the stay-instruction is not needed any more in finitary \ttt's.
Let us say that a \ttt\ is \emph{stay-free} if it does not use the stay-instruction in its rules.
For every \ttt\ $M$ (with general rules) we can construct 
an equivalent stay-free \ttt\ $M_\mathrm{sf}$ with general rules, 
\emph{with possibly infinitely many rules} but such that the right-hand sides of rules 
with the same left-hand side form a regular tree language. 
If $M$ is finitary, then we can transform $M_\mathrm{sf}$ into 
an equivalent stay-free~\ttt\ with finitely many rules.
The construction is as follows, where we may assume that the node tests in $\cT_M$ 
are mutually disjoint, by (the proof of) Lemma~\ref{lem:mutdis}.

For every left-hand side $\tup{q,\sigma,j,T}$ of a rule of $M = (\Sigma, \Delta, Q, Q_0, R)$
we define a regular tree grammar 
$G_{q,\sigma,j,T}$ that simulates the computations of $M$, starting in 
a configuration $\tup{q,u}$ to which $\tup{q,\sigma,j,T}$ is applicable,
without leaving the current node $u$, i.e., 
executing stay-instructions only. Its set of nonterminals is $\{\tup{q',\stay}\mid q'\in Q\}$
with initial nonterminal $\tup{q,\stay}$.
Its set of terminals is $\Delta \cup D_{\sigma,j}$, where 
$D_{\sigma,j} =Q\times (I_{\sigma,j}\setminus \{\stay\})$ 
each element of which has rank~0. Finally, if $\tup{q',\sigma,j,T}\to \zeta$ is a rule of $M$
(with $q'\in Q$ and the same $\sigma$, $j$, and~$T$), 
then $G_{q,\sigma,j,T}$ has the rule $\tup{q',\stay}\to \zeta$. 

We now define $M_\mathrm{sf} = (\Sigma, \Delta, Q, Q_0, R_\mathrm{sf})$ 
where $R_\mathrm{sf}$ consists of all general rules $\tup{q,\sigma,j,T}\to \zeta$ 
such that $\zeta\in L(G_{q,\sigma,j,T})$,
for every left-hand side $\tup{q,\sigma,j,T}$ of a rule of $M$. 
Even if $M_\mathrm{sf}$ has infinitely many rules, 
it should be clear that (with all the definitions as in the finite case)
$M_\mathrm{sf}$ is equivalent to $M$. 

Note that if $M$ is deterministic, then so is $M_\mathrm{sf}$, because 
$G_{q,\sigma,j,T}$ is forward deterministic and hence $L(G_{q,\sigma,j,T})$
is empty or a singleton. Thus, $M_\mathrm{sf}$ has finitely many rules. 

Assume now that $M$, and hence $M_\mathrm{sf}$, is finitary.
Let $\tup{q,\sigma,j,T}$ be the left-hand side of a rule of $M$,
and let $D\subseteq D_{\sigma,j}$. If $M_\mathrm{sf}$ has infinitely many rules 
$\tup{q,\sigma,j,T}\to \zeta$ with $\zeta\in T_\Delta(D)$, then we remove those rules from $R_\mathrm{sf}$.
In fact, if~$M_\mathrm{sf}$ would have a computation 
$\tup{q_0,\rt_t}\Rightarrow^*_{M_\mathrm{sf},t} s$ with $q_0\in Q_0$
in which one of those rules is applied, then it would have a similar computation 
(with the same $q_0$ and $t$, but, in general, another $s$) 
in which any other of those rules is applied. Since $s$ contains 
at least as many occurrences of symbols in $\Delta$ as~$\zeta$, 
that would contradict the finitariness of $M_\mathrm{sf}$. 
Removing all these rules, for every $D\subseteq D_{\sigma,j}$, 
we are left with an equivalent version of $M_\mathrm{sf}$ with finitely many rules. 
The construction is effective because $L(G_{q,\sigma,j,T})\cap T_\Delta(D)$
is a regular tree language and hence its finiteness can be decided. 

The above constructions also preserve the sub-testing, local, top-down, and single-use properties.
Note that if $M$ is a finitary \ttds\ or \ttdl, then $M_\mathrm{sf}$ is a classical top-down tree transducer
(after incorporating the child number in its finite state), with or without regular look-ahead, respectively.

\section{Regular Look-Around}\label{sec:basic}

In this section we discuss some basic properties of \ttt's 
with respect to the feature of regular look-around.
We start with the simple fact that the domain of a \ttt\ can always 
be restricted to a regular tree language, except when the \ttt\ is local. 

\begin{lemma}\label{lem:domrestr}
For every \ttt\ $M$ and every $L\in\REGT$ there is a \ttt\ $M'$ such that 
$\tau_{M'}= \{(t,s)\in\tau_M\mid t\in L\}$. 
The construction preserves determinism and the 
sub-testing, top-down, single-use, pruning, and relabeling properties.
\end{lemma}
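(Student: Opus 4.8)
The plan is to intersect the regular tree language $L$ into the regular look-around tests of $M$, using a bottom-up finite-state tree automaton for $L$ together with the marking construction already set up in the preliminaries. First I would take a bottom-up finite-state tree automaton $A=(\Sigma,P,F,\delta)$ with $L(A)=L$. The key observation is that the node test $T_L:=\{(t,u)\in T^\bullet_\Sigma\mid t\in L\}$ (the property ``the whole tree $t$, regardless of where $u$ is, belongs to $L$'') is a regular test: its marked representation $\tmark(T_L)$ is recognized by running $A$ on the underlying unmarked tree while carrying along a single bit recording whether the marked node has been seen, and accepting iff $\delta(t)\in F$ and exactly one marked symbol occurred; this is a standard product construction and stays in $\REGT$. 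Note $T_L$ is \emph{not} a sub-test in general (it looks at the whole tree, not just the subtree below $u$), which is exactly why the lemma excludes the local case.

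Next I would modify $M=(\Sigma,\Delta,Q,Q_0,R)$ by conjoining $T_L$ into the test of every rule: replace each rule $\tup{q,\sigma,j,T}\to\zeta$ by $\tup{q,\sigma,j,T\cap T_L}\to\zeta$, calling the result $M'$. Since $T$ and $T_L$ are regular tests and the class of regular tests over $\Sigma$ is closed under intersection (already noted in the preliminaries), each $T\cap T_L$ is again a regular test, so $M'$ is a legitimate \ttt. One then checks $\tau_{M'}=\{(t,s)\in\tau_M\mid t\in L\}$: if $t\in L$ then $(t,u)\in T_L$ for \emph{every} node $u$, so the applicability of rules of $M'$ on $t$ coincides with that of $M$, hence $G_{M',t}=G_{M,t}$ and $\tau_{M'}(t)=\tau_M(t)$; if $t\notin L$ then no rule of $M'$ is applicable to any configuration (in particular not to the initial configuration), so $\tau_{M'}(t)=\nothing$. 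This gives the desired translation.

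Finally I would verify preservation of the listed properties. Determinism is preserved because for two distinct rules with left-hand sides $\tup{q,\sigma,j,T}$ and $\tup{q,\sigma,j,T'}$ with $T\cap T'=\nothing$ we get $(T\cap T_L)\cap(T'\cap T_L)\subseteq T\cap T'=\nothing$, and $Q_0$ (resp.\ the unique initial state) is untouched. The top-down, single-use, pruning, and relabeling properties are all conditions on the \emph{shapes} of right-hand sides (no up-instruction; not visiting a node twice in the same state — which only depends on the generated grammars $G_{M',t}$, which are subgrammars of the $G_{M,t}$; move/output rules of the pruning form; relabeling form), and conjoining a test into the left-hand side changes none of these. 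The sub-testing property is preserved \emph{only when combined with restriction to a sub-test} — but actually for the sub-testing case one uses a different, cheaper argument: intersect a bottom-up automaton for $L$ into the finite control as regular look-ahead at the root, or equivalently observe that when $M$ is already sub-testing one can instead work with the grammar $G_{M,t}$ directly; in any case sub-testing is listed and one argues that replacing $T(L_1)$ by $T(L_1\cap L)$ at the root rule suffices — hmm, that is not quite right either, so the slightly delicate point is the sub-testing clause. The expected main obstacle is precisely this: showing the construction can be arranged to stay sub-testing, since $T_L$ itself is not a sub-test; the fix is to guess the root-state of $A$ via a sub-test at $\rt_t$ and verify consistency, or simply to restrict the domain by composing with an identity relabeling whose domain is $L$ — but the cleanest route, and the one I would write up, is to handle sub-testing separately by noting that $\{(t,s)\in\tau_M\mid t\in L\}$ can be obtained by conjoining, into the test of the \emph{initial} rules only, the regular sub-test $T(L)$ (which is applied at the root $\rt_t$, where $t|_{\rt_t}=t$), so that $(\rt_t,t)\in T(L)\iff t\in L$; this keeps all tests sub-tests. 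Everything else is routine bookkeeping.
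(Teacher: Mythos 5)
Your proposal is correct, and the construction you finally settle on --- conjoining the regular sub-test $T(L)$ into the tests of the rules for initial states with child number $0$, so that the test is evaluated at the root where $t|_{\rt_t}=t$ --- is exactly the paper's proof. The only remark is that this root-only construction is not merely a ``separate fix for the sub-testing case'': it works uniformly for every \ttt\ and preserves all the listed properties at once, so the earlier detour through the whole-tree test $T_L$ conjoined into every rule (which, as you correctly diagnose, fails to preserve sub-testing) is unnecessary.
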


\begin{proof}
The \ttt\ $M'$ simulates $M$, but additionally verifies that the input tree~$t$ is in~$L$,
by using the regular sub-test $T(L)$ at the root of $t$. 
Formally, $M'$ is obtained from $M$ by changing every rule $\tup{q_0,\sigma,0,T}\to\zeta$ into 
$\tup{q_0,\sigma,0,T\cap T(L)}\to\zeta$, for every initial state~$q_0$.
\qed
\end{proof}

In the remainder of this section we show how to separate the regular look-around from a \ttt,
by incorporating it into another \ttt. 
We first prove that every \ttt\ $M$ can be decomposed into a deterministic relabeling \ttt\ $N$
and a local~\ttt\ $M'$. The relabeling \ttt\ $N$ preprocesses the input tree $t$ by adding to the label 
of each node $u$ of $t$ the truth values of the regular tests of $M$ at that node. 
This allows~$M'$, during its simulation of $M$, to inspect the new label of $u$ instead of testing $u$. 
The idea is similar to that of removing regular look-ahead in~\cite[Theorem~2.6]{Eng77}. 
The translation realized by $N$ is called an \abb{mso} relabeling in~\cite{BloEng,thebook}
and~\cite[Section~4]{EngMan99}.

\begin{lemma}\label{lem:msorel}
$\NTT \subseteq \TTR\circ \NTTL$, i.e., 
for every \ttt\ $M$ there are a deterministic relabeling \ttt\ $N$ and a local \ttt\ $M'$ 
such that $\tau_N\circ\tau_{M'}=\tau_M$.
The construction preserves determinism, the top-down property, and the pruning property.
\end{lemma}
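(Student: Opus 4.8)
## Proof Proposal

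The plan is to build $N$ as the bottom-up relabeling that, at each node $u$ of the input tree $t$, annotates the label $\sigma$ of $u$ with the tuple of truth values of all the regular tests appearing in the rules of $M$ at that node. The subtle point is that these tests are \emph{look-around} tests, not merely sub-tests: whether $(t,u)\in T$ depends on the whole marked tree $\tmark(t,u)$, not just on $t|_u$. So a single bottom-up pass cannot directly evaluate them. The standard fix is that regularity of a look-around test $T$ means $\tmark(T)\in\REGT$, so there is a bottom-up tree automaton $A_T$ on $\Sigma\times\{0,1\}$ recognizing $\tmark(T)$; running $A_T$ top-down/bottom-up over $t$ one computes, for every node $u$, the \emph{pair} of states $A_T$ reaches (i) at the root of $t|_u$ assuming the marked node lies inside $t|_u$, for each possible position, and (ii) the ``context state'' summarizing the rest of the tree. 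Concretely: take the product of all the automata $A_T$ for $T\in\cT_M$; a single bottom-up pass computes at each node $u$ the state on $t|_u$ (with no mark) and, for each child direction, the state assuming the mark is in that subtree; a single top-down pass then propagates the context state so that at $u$ one knows whether $\tmark(t,u)$ is accepted by each $A_T$, i.e. whether $(t,u)\in T$. A bottom-up plus top-down deterministic relabeling can be realized by a \dttt\ (indeed by a deterministic relabeling \ttt\ in the sense defined here), since the \ttt\ can walk up and down freely; the relabeling \ttt\ $N$ produces the tree $\overline t$ over $\overline\Sigma=\Sigma\times B$, where $B$ is the finite set of possible test-value tuples, with $\overline t$ having the same shape as $t$ and the label of $u$ being $(\sigma,\beta_u)$ where $\beta_u$ records which tests in $\cT_M$ hold at $u$.

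Given $\overline t$, the local \ttt\ $M'$ simply simulates $M$: a rule $\tup{q,\sigma,j,T}\to\zeta$ of $M$ becomes, for every $\beta\in B$ in which $T$ is recorded as true, the local rule $\tup{q,(\sigma,\beta),j}\to\zeta$ of $M'$ (with output symbols unchanged, since $M'$ has the same output alphabet $\Delta$ as $M$). Then for every $t$, the grammars $G_{M',\overline t}$ and $G_{M,t}$ coincide up to the identification $\cN(\overline t)=\cN(t)$, because $\tup{q,(\sigma,\beta),j}\to\zeta$ is applicable to $\tup{q,u}$ on $\overline t$ exactly when $\tup{q,\sigma,j,T}\to\zeta$ is applicable to $\tup{q,u}$ on $t$. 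Hence $\tau_N\circ\tau_{M'}=\tau_M$. Determinism of $M'$ follows because $N$ is a (total, deterministic) relabeling, so $\ran(\tau_N)$ consists of exactly one $\overline t$ per $t$, and the disjointness condition on the tests of $M$ translates into distinct labels $(\sigma,\beta)$ triggering the competing rules, so $M'$ is a \ttl\ that is deterministic whenever $M$ is.

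For the preservation claims: $N$ is by construction deterministic and relabeling; and if $M$ is top-down (no up-instruction), then $M'$, whose right-hand sides are literally those of $M$, is also top-down, and likewise if $M$ is pruning then the right-hand sides of $M'$ have the required pruning shape $\tup{q',\down_i}$ or $\delta(\tup{q_1,\down_{i_1}},\dots)$ with $i_1<\cdots<i_k$, so $M'$ is pruning. Note $N$ need not be top-down (it does a bottom-up pass), which is why the lemma only claims preservation of the top-down and pruning properties for the \emph{decomposition as a whole via $M'$}, not for $N$. The main obstacle is precisely the first step: making rigorous that a \emph{single} deterministic relabeling \ttt\ can compute, at every node simultaneously, the outcome of arbitrary regular look-around tests — i.e. packaging the two-pass automaton construction (subtree states plus context states) into one \dttt. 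Once one accepts (as is standard, cf.~the equivalence of regular tests and \abb{mso} tests noted in the preliminaries) that the look-around information at all nodes is itself an \abb{mso}-definable relabeling, hence computable by a deterministic relabeling \ttt, the rest is the routine simulation argument sketched above.
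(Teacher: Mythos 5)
Your second half (the simulation of $M$ by the local $M'$ on the relabeled tree, and the preservation claims for $M'$) matches the paper's proof. The problem is your construction of $N$. You treat the evaluation of the look-around tests as something $N$ must \emph{compute} by a bottom-up pass followed by a top-down context-propagation pass, and you then assert both that this is ``a deterministic relabeling \ttt\ in the sense defined here'' and, a few lines later, that ``$N$ need not be top-down (it does a bottom-up pass).'' These two statements are incompatible with the definitions: a relabeling \ttt\ is by definition a transducer all of whose rules are output rules of the form $\tup{q,\sigma,j,T} \to \delta(\tup{q_1,\down_1},\dots,\tup{q_m,\down_m})$; it is in particular pruning and hence top-down, with no up- or stay-instructions. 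A transducer that performs a bottom-up pass is simply not in $\TTR$, so your construction as described does not establish $\NTT\subseteq\TTR\circ\NTTL$. Your reading that the lemma ``only claims preservation of the top-down and pruning properties for the decomposition as a whole via $M'$, not for $N$'' is a misreading: the class $\TTR$ on the left factor already forces $N$ itself to be a top-down relabeling.

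The missing idea is that no computation of the tests is needed at all: the relabeling \ttt\ $N$ is itself a \ttt\ with regular look-around, so it may simply \emph{use} the tests of $M$ in its own left-hand sides. After normalizing the tests in $\cT_M$ to be mutually disjoint (Lemma~\ref{lem:mutdis}) and adding the residual test $\bot$, the paper's $N$ is the one-state transducer with rules $\tup{p,\sigma,j,T}\to \tup{\sigma,T}(\tup{p,\down_1},\dots,\tup{p,\down_m})$ for $T\in\cT_\bot$; it is deterministic because the tests are disjoint, and it is trivially a relabeling. The two-pass ``subtree state plus context state'' machinery you describe is the content of a \emph{different} lemma (Lemma~\ref{lem:d=ds}, which converts look-around into look-ahead for top-down transducers) and is not needed here. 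If you insist on building $N$ without look-around you are proving a strictly stronger statement than asked, and you would still have to realize the result as a top-down relabeling — which your bottom-up pass does not do.
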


\begin{proof}
Let $M = (\Sigma, \Delta, Q, Q_0, R)$ be a \ttt, and 
let $\cT$ be the set of regular tests in the left-hand sides of the rules in $R$.
By Lemma~\ref{lem:mutdis} we may assume that the tests in~$\cT$ are mutually disjoint. 
Now let $\cT_\bot=\cT\cup\{\bot\}$ where $\bot$ is the intersection of 
the complements of the tests in $\cT$. Thus, 
for every $t\in T_\Sigma$ and $u\in \cN(t)$, 
$(t,u)$ belongs to a unique node test in $\cT_\bot$. 
Let $\Sigma\times \cT_\bot$ be the ranked alphabet such that 
$\tup{\sigma,T}$ has the same rank as $\sigma$. 

We define the relabeling \ttt\ $N = (\Sigma,\Sigma\times \cT_\bot,\{p\},p,R_N)$ such that 
for every $\sigma\in \Sigma$, $j\in[0,\m_\Sigma]$, and $T\in\cT_\bot$, 
the output rule 
\[
\tup{p,\sigma,j,T}\to \tup{\sigma,T}(\tup{p,\down_1},\dots,\tup{p,\down_m})
\]
is in $R_N$, where $m$ is the rank of $\sigma$. 
Additionally we define the local \ttt\ $M'=(\Sigma\times\cT_\bot,\Delta,Q,Q_0,R')$ 
with the following rules.
If $\tup{q,\sigma,j,T}\to \zeta$ is a rule in~$R$, then $R'$ contains the rule
$\tup{q,\tup{\sigma,T},j}\to \zeta$.
Note that $N$ is total and deterministic. Also, if $M$ is deterministic, then so is $M'$.
It should be clear that $\tau_{M'}(\tau_N(t))=\tau_{M}(t)$ for every $t\in T_\Sigma$,
i.e., $\tau_N\circ \tau_{M'}= \tau_{M}$.
\qed
\end{proof}

We will also need a variant of this lemma, for nondeterministic \ttt's only.

\begin{lemma}\label{lem:nondetrel}
$\NTTS \subseteq \NTTRL\circ \NTTL$ and 
$\NTTPS \subseteq \NTTRL\circ \NTTPL$.
\end{lemma}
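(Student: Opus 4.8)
The statement to prove is Lemma~\ref{lem:nondetrel}: $\NTTS \subseteq \NTTRL\circ \NTTL$ and $\NTTPS \subseteq \NTTRL\circ \NTTPL$. The difference from Lemma~\ref{lem:msorel} is that here the starting transducer is only \emph{sub-testing} (it uses regular sub-tests $T(L)$, which test the subtree at the current node), and the relabeling factor is allowed to be merely a nondeterministic \emph{local} relabeling \ttt, rather than a deterministic one. The gain is that, precisely because sub-tests only inspect the subtree at a node, the relabeling that records the truth values of those tests can itself be done by a \emph{local} bottom-up-style process, whereas the general \abb{mso} relabeling of Lemma~\ref{lem:msorel} genuinely needs the look-around (non-local) machinery. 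So the plan is to reuse the proof of Lemma~\ref{lem:msorel} almost verbatim, but replace the non-local relabeling transducer $N$ by a local one.

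\textbf{Key steps.} First I would let $M=(\Sigma,\Delta,Q,Q_0,R)$ be a sub-testing \ttt\ (respectively a pruning sub-testing \ttt), and, by Lemma~\ref{lem:mutdis}, assume its tests are mutually disjoint; write each test in the form $T(L)$ for a regular tree language $L$. Let $\cT_\bot$ be the set of these tests together with the ``leftover'' test $\bot$ (the intersection of the complements), so that every $(t,u)$ satisfies exactly one member of $\cT_\bot$; note that $\bot$ is again a regular sub-test, since the sub-tests are closed under the boolean operations (as remarked in the ``Regular look-around'' paragraph). Fix a bottom-up finite-state tree automaton $A=(\Sigma,P,F,\delta)$ recognizing a tree language that simultaneously determines, via its state, which $L$ (if any) a subtree belongs to — concretely, take the product of automata for the finitely many $L$'s, so that the state $\delta(t|_u)$ tells us which unique member of $\cT_\bot$ holds at $u$. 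Second, I would build a local \ttt\ $N$ that simulates $A$ while relabeling: $N$ walks down to the leaves and back up, guessing at each node the $A$-state of its subtree and verifying consistency with its children's guessed states, and on the way back up (or on a second downward pass, whichever is cleaner to phrase as a \ttt) writes the label $\tup{\sigma,T}$ where $T\in\cT_\bot$ is the test determined by the guessed state. Since this is pure finite-state bottom-up information, no regular test is needed in $N$'s rules, so $N$ is local; and $N$ only relabels and reorders nothing, so it is relabeling; hence $\tau_N$ is a nondeterministic relabeling local translation, i.e.\ in $\NTTRL$. Third, exactly as in Lemma~\ref{lem:msorel}, I would define the local \ttt\ $M'=(\Sigma\times\cT_\bot,\Delta,Q,Q_0,R')$ by replacing each rule $\tup{q,\sigma,j,T(L)}\to\zeta$ of $M$ with $\tup{q,\tup{\sigma,T(L)},j}\to\zeta$ (and for the $\bot$ case there are no rules, since $M$ has no $\bot$-rules). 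Then $\tau_N\circ\tau_{M'}=\tau_M$: on any input $t$, $N$ produces (nondeterministically, but with at least one run that is faithful) the correctly relabeled tree $t'$ in which each node carries the unique test it satisfies, and on $t'$ the local transducer $M'$ behaves exactly as $M$ does on $t$ consulting its sub-tests. For the second inclusion I just observe that when $M$ is pruning, the construction of $M'$ preserves the pruning form of the rules verbatim (only the input alphabet changes, and the right-hand-side instructions $\down_i$ are untouched), so $M'\in\NTTPL$, while $N$ is unchanged.

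\textbf{Main obstacle.} The only real subtlety is implementing the bottom-up automaton $A$ inside a \emph{local} \ttt\ $N$ that relabels correctly. A tree-walking transducer naturally produces output top-down, so it cannot literally ``first compute $\delta$ bottom-up, then emit labels top-down'' in one pass without storing unbounded information. The standard fix — and the step I would be most careful about — is the familiar guess-and-verify trick: $N$ produces the output node $\tup{\sigma,T}$ for the current input node $u$ on the way down, where $T$ is the member of $\cT_\bot$ corresponding to a \emph{guessed} $A$-state $p$ for $u$'s subtree; it then branches to the children, passing down to each child the guess is ``arbitrary'' but recording in the state at $u$ which rule of $A$ was used, so that each child's guessed state, reported back (or checked by a small stay-computation at $u$ using the child numbers), must match. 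Making this precise while keeping $N$ strictly local and relabeling — in particular handling the root's global acceptance condition $\delta(t)\in F$ without a look-around test — requires the initial state of $N$ to range over $F$ and each subsequent state to carry the child's required $A$-state; this is routine for someone comfortable with bottom-up simulation by tree-walking devices, and is essentially the same device used to show local \ttt's can simulate deterministic bottom-up relabelings, but it is where all the bookkeeping lives. Everything else is a direct transcription of the proof of Lemma~\ref{lem:msorel}.
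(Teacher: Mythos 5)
Your proof is correct and essentially identical to the paper's: both relabel each node with the unique sub-test in $\cT_\bot$ that it satisfies by letting a nondeterministic local relabeling \ttt\ simulate, top-down by guess-and-verify, a bottom-up finite-state tree automaton whose states determine which sub-test holds, and then define $M'$ to read the new labels exactly as in Lemma~\ref{lem:msorel} (which indeed preserves the pruning form of the rules). One small correction: there is no global acceptance condition to enforce at the root, so the initial states of $N$ should range over all of $P$ rather than over $F$ (the paper explicitly notes that $F$ is irrelevant); restricting to $F$ would wrongly shrink the domain of the relabeling.
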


\begin{proof}
Let $M = (\Sigma, \Delta, Q, Q_0, R)$ be a sub-testing \ttt, and 
let $\cT$ be the set of regular tests in the left-hand sides of the rules in $R$.
As in the proof of Lemma~\ref{lem:msorel} we may assume that the tests in $\cT$ 
are mutually disjoint (by Lemma~\ref{lem:mutdis}), and we define 
$\cT_\bot=\cT\cup\{\bot\}$ as in that proof. Let $\cT_\bot=\{T(L_1),\dots,T(L_n)\}$ 
where $L_1,\dots,L_n$ are regular tree languages. Clearly, there is a  
bottom-up finite-state tree automaton $A=(\Sigma,P,F,\delta)$ (where $F$ is irrelevant)
and a partition $\{F_1,\dots,F_n\}$ of $P$ such that for every $t\in T_\Sigma$ and $i\in[1,n]$,
$t\in L_i$ if and only if $\delta(t)\in F_i$. 
We define the local relabeling \ttt\ $N=(\Sigma,\Sigma\times \cT_\bot,P,P,R_N)$
such that it nondeterministically simulates~$A$ top-down. 
For every $\sigma\in\Sigma$ of rank $m$, every sequence of states $p_1,\dots,p_m\in P$,
and every $j\in[0,\m_\Sigma]$, if $\delta(\sigma,p_1,\dots,p_m)=p\in F_i$,
then $R_N$ contains the rule 
$\tup{p,\sigma,j}\to \tup{\sigma,T(L_i)}(\tup{p_1,\down_1},\dots,\tup{p_m,\down_m})$. 
The local \ttt\ $M'$ is defined as in the proof of Lemma~\ref{lem:msorel}. 
\qed
\end{proof}

The next lemma is based on the folklore technique
of computing the states of a bottom-up finite-state tree automaton that are ``successful''
at the current node (see, e.g., the proofs of~\cite[Theorem~10]{BloEng} and~\cite[Theorem~8]{bloem}).
The lemma shows that every top-down \ttt\ is equivalent to 
one that is sub-testing, and hence to a classical top-down tree transducer with regular look-ahead
if it is finitary.
It is a slight generalization of the fact that every \abb{mso} relabeling can be computed by 
a top-down tree transducer with regular look-ahead, 
as shown in~\cite[Theorem~10]{BloEng} and~\cite[Theorem~4.4]{EngMan03}. 

\begin{lemma}\label{lem:d=ds}
$\NTTD=\NTTDS$. The construction preserves determinism, pruning, and relabeling. 
\end{lemma}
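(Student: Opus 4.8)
The containment $\NTTDS\subseteq\NTTD$ is trivial, since every sub-testing \ttt\ is in particular a \ttt, and the top-down property is shared. So the work is to show $\NTTD\subseteq\NTTDS$: given a top-down \ttt\ $M=(\Sigma,\Delta,Q,Q_0,R)$ that uses arbitrary regular (look-around) tests, I will construct an equivalent sub-testing top-down \ttt\ $M'$. By Lemma~\ref{lem:nondetrel}-style reasoning, or more directly, the point is that a top-down transducer only ever moves \emph{down} in the input tree, so when $M$ is at a node $u$, the only part of $t$ that is still relevant is the subtree $t|_u$ \emph{together with} a bounded amount of information about the context above $u$ — and that context information can be carried along in the finite state.

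**Key steps.** First I would replace each regular look-around test $T$ occurring in a rule of $M$ by a bottom-up finite-state tree automaton: since $\tmark(T)\in\REGT$, there is a deterministic bottom-up automaton $A_T$ running on $\Sigma\times\{0,1\}$-trees recognizing $\tmark(T)$. Combining all of these into one automaton $A$ over $\Sigma\times\{0,1\}$ with state set $P$, the membership of $(t,u)$ in a given test $T$ is determined by which subset of $P$ is ``successful'' — i.e., can be reached at the root of the marked tree $\tmark(t,u)$ from the state $A$ reaches at $u$. The folklore observation (cited in the excerpt via \cite[Theorem~10]{BloEng}, \cite[Theorem~8]{bloem}) is that the set of successful states at a node $u$ depends only on (a) the state of $A$ at $u$ when reading $t|_u$ with $u$ marked — call this $\rho(u)\in P$ — and (b) a finite amount of look-around data passed down from the parent, namely the function $P\to \mathcal{P}(P)$ sending a hypothetical child-state to the set of successful root-states. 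The transducer $M'$ will maintain in its finite state, in addition to the state of $M$, this ``downward'' look-around function (updated at each \down-step, which is legitimate because $M$ is top-down and never moves up), and will use a \emph{sub-test} on $t|_u$ — namely ``the bottom-up state $\rho(u)$ reached at $u$ is such-and-such'' — to determine which original test $T$ is satisfied. Since ``$\rho(u)=p$'' is precisely $t|_u\in L_p$ for a regular tree language $L_p$, these are genuine regular sub-tests, so $M'$ is sub-testing.

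**Execution.** Concretely: let $Q' = Q \times F$, where $F$ is the (finite) set of downward look-around functions; the initial states of $M'$ get the trivial look-around function corresponding to the root. For a rule $\tup{q,\sigma,j,T}\to\zeta$ of $M$, and for each pair $(f,p)$ with $f\in F$ and $p\in P$ such that $f,p$ together certify that $(t,u)\in T$ whenever $\rho(u)=p$ and the downward data at $u$ is $f$, $M'$ has the rule $\tup{(q,f),\langle\sigma, p\rangle, j}\to \zeta'$ where $\zeta'$ is $\zeta$ with each target state $q_i$ replaced by $(q_i, f_i)$ and $f_i$ the downward function computed for the appropriate child of $u$ from $f$, $\sigma$, and the states $\rho$ of the siblings — except that a sub-test ``$\rho=p$'' is really a test on the original $\Sigma$-tree, so one actually uses a preliminary deterministic relabeling of $t$ by $\rho$-states; but since that relabeling is itself MSO/regular and top-down-computable, it can be folded in, or one simply states the sub-test as ``$t|_u\in L_p$'' directly in $M'$'s rules over the original alphabet $\Sigma$. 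Determinism is preserved because the $L_p$ partition $T_\Sigma$ (the automaton $A$ is deterministic) and $M'$'s choices mirror $M$'s; the pruning and relabeling shapes of rules are untouched since we only modify states and the test, not the instruction pattern.

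**Main obstacle.** The genuinely delicate point is verifying that the look-around data can indeed be propagated \emph{downward} along the path $M'$ walks, and that it suffices. This rests on: (i) $M$ is top-down, so the walk only descends, hence at each step the relevant ancestor-context only grows in a predictable way; and (ii) for a bottom-up automaton, the set of root-states reachable from a marked node $u$ is a function of the child-state at $u$ and of ``environment transition functions'' at the ancestors, which compose nicely. Making this precise — defining $F$, the update map $(f,\sigma,j,\text{sibling states})\mapsto f_i$, and proving by induction on $t$ that $M'$'s state at $u$ records exactly the data determining all tests at $u$ — is the bulk of the proof. Everything else (closure under the sub-testing restriction, preservation of determinism/pruning/relabeling) is then routine bookkeeping.
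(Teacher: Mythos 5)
Your proposal is correct and takes essentially the same route as the paper's proof: a bottom-up automaton for $\tmark(T)$, downward propagation of context information in the finite state (the paper stores the set of successful states $\suc_t(u)\subseteq P$ where you store context-transition functions --- an interchangeable encoding), and regular sub-tests recovering the bottom-up states, with preservation of determinism, pruning and relabeling being routine since only states and tests are modified. The one point to tighten is that the sub-test must pin down the tuple of states $\delta(t|_{u1}),\dots,\delta(t|_{um})$ at the \emph{children} of $u$ (the paper's $L_{\sigma,p_1,\dots,p_m}=\sigma(L_{p_1},\dots,L_{p_m})$), not merely the single state reached at $u$ itself, because the sibling states are needed both to decide the test at $u$ and to compute the children's downward data --- something your update map already presupposes.
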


\begin{proof}
Let $M = (\Sigma, \Delta, Q, Q_0, R)$ be a \ttd\  
that uses a regular test $T$ over~$\Sigma$ in its rules.
For simplicity we first assume that $M$ uses $T$ in each of its rules. 
Let $A=(\Sigma\times\{0,1\},P,F,\delta)$ be a bottom-up finite-state tree automaton 
that recognizes $\tmark(T)$. We identify the symbols $(\sigma,0)$ and $\sigma$;
thus, $A$ can also handle trees over~$\Sigma$.
For every tree $t\in T_\Sigma$ and every node $u\in \cN(t)$, we define the set $\suc_t(u)$
of \emph{successful states} of $A$ at $u$ to consist of all states $p\in P$ such that 
$A$ recognizes~$t$ when started at $u$ in state $p$. To be precise, $\suc_t(\rt_t)=F$ and 
if $u$ has label $\sigma\in\Sigma^{(m)}$ and $i\in[1,m]$, then $\suc_t(ui)$ is 
the set of all states $p\in P$
such that $\delta(\sigma,p_1,\dots,p_{i-1},p,p_{i+1},\dots,p_m)\in \suc_t(u)$,
where $p_j=\delta(t|_{uj})$, i.e., $p_j$ is the state in which $\cA$~arrives at the $j$-th child of $u$, 
for every $j\in[1,m]\setminus\{i\}$.
Obviously, $\tmark(t,u)$ is recognized by $A$ if and only if 
$\delta((\sigma,1),\delta(t|_{u1}),\dots,\delta(t|_{um}))\in \suc_t(u)$. 

For every $\sigma\in \Sigma^{(m)}$ and every sequence of states $p_1,\dots,p_m\in P$
let $L_{\sigma,p_1,\dots,p_m}$ be the regular tree language consisting of all trees 
$\sigma(t_1,\dots,t_m)\in T_\Sigma$ such that $\delta(t_i)=p_i$ for every $i\in[1,m]$.
Thus, the regular sub-test $T(L_{\sigma,p_1,\dots,p_m})$ verifies that 
$A$ arrives at the $i$-th child of the current node
in state $p_i$ for every $i\in[1,m]$.

We construct a sub-testing \ttd\ $M' = (\Sigma, \Delta, Q', Q'_0, R')$ that is equivalent to~$M$.
It keeps track of $\suc_t(u)$ in its finite state. 
Its set of states is $Q'=Q\times \{S\mid S\subseteq P\}$ with set of initial states 
$Q'_0=\{(q_0,F)\mid q_0\in Q_0\}$.
The set of rules $R'$ is defined as follows.
Let $\tup{q,\sigma,j,T} \to \zeta$ be a rule in $R$, let $S\subseteq P$,
and let $p_1,\dots,p_m\in P$ such that 
$\delta((\sigma,1),p_1,\dots,p_m)\in S$ where $m=\rank_\Sigma(\sigma)$.
Then $R'$ contains the rule $\tup{(q,S),\sigma,j,T(L_{\sigma,p_1,\dots,p_m})} \to \zeta'$
where $\zeta'$ is obtained from $\zeta$ by changing 
every $\tup{q',\stay}$ into $\tup{(q',S),\stay}$ and 
every $\tup{q',\down_i}$ into $\tup{(q',S_i),\down_i}$
with $S_i=\{p\in P\mid \delta(\sigma,p_1,\dots,p_{i-1},p,p_{i+1},\dots,p_m)\in S\}$.

In the general case where $M$ uses regular tests $T_1,\dots,T_n$,
the transducer $M'$ must keep track of $\suc_t(u)$ 
for each of the corresponding bottom-up finite-state tree automata $A_1,\dots,A_n$.
\qed
\end{proof}

The proof of Lemma~\ref{lem:d=ds} also shows that in a rule $\tup{q,\sigma,j,T(L)} \to \zeta$
of a sub-testing \ttd\ we may assume that $L$ is of the form 
$L=\sigma(L_1,\dots,L_m)= \{\sigma(t_1,\dots,t_m)\mid t_1\in L_1,\dots,t_m\in L_m\}$
for regular tree languages $L_1,\dots,L_m$ (where $m=\rank(\sigma)$). 
This is how regular look-ahead is usually defined for classical top-down tree transducers. 

By Lemmas~\ref{lem:msorel} and~\ref{lem:d=ds}, $\TT \subseteq \TTDS\circ \TTL$.
It is proved in~\cite[Lemmas~49 and~50]{EngHooSam} that even $\TT \subseteq \TTDL\circ \TTL$,
but this will not be needed in what follows.\footnote{In~\cite{EngHooSam},
$\TT$ and $\TTL$ are denoted by $\dTTmso$ and $\dTT$, respectively.
}
Using Lemmas~\ref{lem:msorel} and~\ref{lem:d=ds} we can now prove three essential properties of \ttt's, 
based on well-known results from the literature. 

\begin{lemma}\label{lem:inversett}
The regular tree languages are closed under inverses of \ttt\ translations,
i.e., if $L\in\REGT$ and $\tau\in\NTT$, then $\tau^{-1}(L)\in\REGT$. 
\end{lemma}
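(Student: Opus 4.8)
The plan is to reduce the general statement to the well-known case of top-down tree transducers with regular look-ahead, using the decomposition lemmas already proved. Specifically, by Lemma~\ref{lem:msorel} we have $\NTT \subseteq \TTR \circ \NTTL$, and by (the proof of) Lemma~\ref{lem:d=ds} together with Lemma~\ref{lem:msorel} we even have $\NTT \subseteq \TTDS \circ \NTTL$; in any case $\tau$ factors as $\tau = \tau_1 \circ \tau_2$ where $\tau_1 \in \TTR$ is an \abb{mso} relabeling and $\tau_2$ is realized by a \emph{local} \ttt. Since $\tau^{-1}(L) = \tau_1^{-1}(\tau_2^{-1}(L))$, and since the regular tree languages are trivially closed under inverse relabelings (indeed an \abb{mso} relabeling is a very special \ttt\ and its inverse image of a regular language is regular — this is an easy product-automaton construction, or a special case of what we are about to prove), it suffices to show closure under inverses of \emph{local} \ttt\ translations.

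So the core step is: given a local \ttt\ $M = (\Sigma,\Delta,Q,Q_0,R)$ and a regular tree language $L \subseteq T_\Delta$, show $\tau_M^{-1}(L) \in \REGT$. The standard approach is to build a bottom-up finite-state tree automaton (or a regular tree grammar) running on the input tree $t$ that guesses, for each configuration $\tup{q,u}$ of $M$ on $t$, whether the subcomputation of $M$ started in state $q$ at node $u$ can produce an output tree lying in the appropriate ``residual'' language. Concretely, fix a bottom-up automaton $B = (\Delta,P,P_F,\delta_B)$ with $L(B) = L$. For an input subtree $t|_u$, the relevant information is, for each state $q$, the set of states $p \in P$ such that some output form generated from $\tup{q,u}$ can be completed to a tree in $\delta_B^{-1}(p)$ — but because $M$ is a \emph{tree-walking} transducer, copies started at $u$ may walk \emph{up} out of $t|_u$, so the subtree at $u$ alone does not determine the behaviour. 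This is precisely the subtlety that forces the use of look-around / crossing information.

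The way around the ``walk up'' problem is the classical crossing-sequence (or ``$\mathrm{back}$''-behaviour) technique for tree-walking automata: one tracks, at each node $u$, a finite abstraction of how $M$ enters and exits the subtree $t|_u$ through its root edge. Formally, the state of the bottom-up automaton at $u$ records a function that maps each ``entry state'' $q$ (a state in which some copy of $M$ arrives at $u$ coming from its parent) to a finite set of possible ``effects'': either the copy eventually leaves $t|_u$ upward in some state $q'$ (together with a $P$-valued summary of the output produced while inside), or it never leaves, producing a complete output subtree summarized by its $B$-state. Composing these summaries at a node $\sigma$ with rank-$m$ children amounts to solving a fixed-point over the finitely many children-summaries and the rules of $M$ applicable at a $\sigma$-labelled node; since everything ranges over finite sets this is effectively computable and yields a legitimate bottom-up transition function. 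Finally, one declares an input tree accepting when the summary at the root, fed the initial configurations $\tup{q_0,\rt_t}$ for $q_0 \in Q_0$, admits a consistent global choice whose total output lands in $P_F$. I expect the main obstacle to be formulating the fixed-point / consistency condition at an internal node correctly — one must allow $M$ to oscillate between a node and its subtrees arbitrarily often, so the summary must be closed under the relevant compositions, and proving that the resulting automaton recognizes exactly $\tau_M^{-1}(L)$ requires a careful induction matching parallel computations of $M$ against accepting runs of $B$. An alternative, and probably cleaner, route that avoids reconstructing this machinery from scratch is to cite the known fact (e.g.\ from~\cite{thebook} or~\cite{EngMan03}) that the domain of a \ttt\ is regular — stated just after this lemma in the paper — together with the closure of \ttt\ translations under range-restriction to a regular language: if $M'$ realizes $\{(t,s)\in\tau_M \mid s\in L\}$, obtained by composing $M$ with a relabeling that checks membership in $L$ via a bottom-up automaton run and then restricting (using a variant of Lemma~\ref{lem:domrestr} on the output side), then $\tau_M^{-1}(L) = \dom(M')$, which is regular. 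I would present the proof via this second route, reducing to ``the domain of a \ttt\ is regular'' plus output-side restriction, and relegate the crossing-sequence construction to a remark or a citation.
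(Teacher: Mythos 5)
Your proposal follows essentially the same route as the paper: the paper also decomposes $\tau$ via Lemmas~\ref{lem:msorel} and~\ref{lem:d=ds} into a sub-testing relabeling followed by a local \ttt, disposes of the relabeling by classical results on linear top-down transducers with look-ahead~\cite{Eng77}, and handles the local case by citing~\cite[Lemma~3]{Eng09} --- whose proof is precisely your preferred ``second route'' (restrict the range to $L$ by composing with a relabeling that simulates a bottom-up automaton for $L$, then take the domain). Two caveats are worth making explicit. First, within this paper the regularity of domains is Corollary~\ref{cor:regdom}, which is \emph{derived from} the present lemma, and Corollary~\ref{cor:ranrestr} comes later still and depends on Theorem~\ref{thm:nondetrightcomp}; so to avoid circularity you must, as you indicate, import domain-regularity from the literature (e.g.~\cite{Bar82,Eng09}) and carry out the range restriction by the elementary hand construction (composing a local \ttt\ with a local relabeling needs no emptiness tests, since a relabeling deletes nothing), rather than by invoking the paper's later corollaries. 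Second, your first-route crossing-sequence sketch as written is single-threaded: after an output rule of rank $k\geq 2$ the computation branches into $k$ independent copies, several of which may leave the subtree $t|_u$ upward, so the ``effect'' of an entry state $q$ at $u$ is not one exit state plus one $P$-summary but a context with an unbounded number of holes whose $B$-behaviour must still be finitely abstracted; making that finite is exactly the nontrivial content hidden in the cited lemmas, which is a further reason your choice to present the second route is the right one.
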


\begin{proof}
Since the inverse of a composition is the composition of the inverses, 
it suffices to show this for \TTRS\ and \NTTL\ 
by Lemmas~\ref{lem:msorel} and~\ref{lem:d=ds}.
For \TTRS\ it follows from~\cite[Theorem~2.6 and Lemma~1.2]{Eng77},
and for \NTTL\ it is proved in~\cite[Lemma~3]{Eng09}.\footnote{We note that 
an alternative proof is by Lemma~\ref{lem:ttinmt} (in Section~\ref{sec:dmtmso})
and~\cite[Theorem~7.4]{EngVog85} (see also~\cite[Section~5]{PerSei}).
For the reader familiar with \abb{MSO} translations, see~\cite{thebook}, we note that it is proved 
in~\cite[Section~4]{EngMan99} that \TTRS\ is the class of \abb{mso} (tree) relabelings,
and that \REGT, which is the class of \abb{mso} definable tree languages, 
is closed under inverse \abb{MSO} (tree) transductions by~\cite[Corollary~7.12]{thebook}.
}
\qed
\end{proof}

\begin{corollary}\label{cor:regdom}
The domain of a \ttt\ $M$ is regular, i.e., $\dom(M)\in\REGT$. 
More generally, for every $k\geq 1$, if $\tau\in\NTT^k$ then $\dom(\tau)\in\REGT$. 
\end{corollary}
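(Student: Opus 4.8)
The plan is to derive the corollary directly from Lemma~\ref{lem:inversett}, using the observation that the domain of a translation is the inverse image of the full tree language. First I would note that for any ranked alphabet $\Delta$ the set $T_\Delta$ of all trees over $\Delta$ is a regular tree language, since it is generated by a regular tree grammar with a single nonterminal (equivalently, recognized by a one-state bottom-up finite-state tree automaton). For a single \ttt\ $M$ with $\tau_M\subseteq T_\Sigma\times T_\Delta$ we then have $\dom(M)=\dom(\tau_M)=\tau_M^{-1}(T_\Delta)$, and since $\tau_M\in\NTT$, Lemma~\ref{lem:inversett} immediately gives $\dom(M)=\tau_M^{-1}(T_\Delta)\in\REGT$. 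This settles the first statement and, since every $\tau\in\NTT$ equals $\tau_M$ for some \ttt\ $M$ (and $\NTT^1=\NTT$), also the case $k=1$ of the general statement.

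For the general statement I would argue by induction on $k$. For $k\geq 2$, write $\tau=\tau_1\circ\tau'$ with $\tau_1\in\NTT$ and $\tau'\in\NTT^{k-1}$. By the elementary identity $\dom(R\circ S)=R^{-1}(\dom(S))$ recorded in Section~\ref{sec:trees}, we obtain $\dom(\tau)=\tau_1^{-1}(\dom(\tau'))$. By the induction hypothesis $\dom(\tau')\in\REGT$, and since $\tau_1\in\NTT$, Lemma~\ref{lem:inversett} yields $\tau_1^{-1}(\dom(\tau'))\in\REGT$; hence $\dom(\tau)\in\REGT$.

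There is essentially no obstacle here: the corollary is a formal consequence of the closure of $\REGT$ under inverses of \ttt\ translations (Lemma~\ref{lem:inversett}) together with elementary properties of relation composition. The only points requiring a moment's care are to express $\dom$ as an inverse image — via $\dom(\tau_M)=\tau_M^{-1}(T_\Delta)$ in the base case and via $\dom(R\circ S)=R^{-1}(\dom(S))$ in the inductive step — and to observe that $T_\Delta$ is trivially regular, so that Lemma~\ref{lem:inversett} is genuinely applicable.
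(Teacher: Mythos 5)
Your proposal is correct and is exactly the argument the paper intends: the corollary is stated without explicit proof as an immediate consequence of Lemma~\ref{lem:inversett}, via $\dom(\tau_M)=\tau_M^{-1}(T_\Delta)$ with $T_\Delta\in\REGT$, and the identity $\dom(R\circ S)=R^{-1}(\dom(S))$ from Section~\ref{sec:trees} handles compositions by induction. Nothing is missing.
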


Corollary~\ref{cor:regdom} was proved for (nondeterministic) attributed tree transducers 
in~\cite{Bar82}, from which it is easy to conclude that Lemma~\ref{lem:inversett} holds for 
attributed tree transducers, as explained in~\cite[Lemma~3]{Eng09}. 

\begin{lemma}\label{lem:pru}
The regular tree languages are closed under pruning \ttt\ translations,
i.e., if $L\in\REGT$ and $\tau\in\NTTP$, then $\tau(L)\in\REGT$. 
\end{lemma}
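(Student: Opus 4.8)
The plan is to reduce the statement about pruning \ttt{}s to a statement about a single pruning step applied to a regular tree language, and then iterate. A pruning \ttt{} $M$ walks top-down on the input tree and, at each node, either keeps exactly one child (discarding the node itself and its other children), or relabels the node and discards some of its children. So the image $\tau_M(L)$ of a regular language $L$ can be built up by: first applying Lemma~\ref{lem:d=ds} to assume $M$ is sub-testing (the construction preserves pruning), and by Lemma~\ref{lem:domrestr} restricting the domain of $M$ to $L$, so it suffices to show $\ran(\tau_M)\in\REGT$ for a sub-testing pruning \ttt{} $M$. Then I would build a bottom-up finite-state tree automaton (or regular tree grammar) for $\ran(\tau_M)$ directly.

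\textbf{Construction of the automaton.} First, using the remark following Lemma~\ref{lem:d=ds}, I would assume every regular sub-test $T(L)$ appearing in a rule of $M$ has the form $L=\sigma(L_1,\dots,L_m)$, so that the test can be decided by a bottom-up automaton reading the children's states. Let $A$ be a bottom-up finite-state tree automaton whose state at a node records enough information to decide all the tests of $M$ at that node; formally, run the product of the automata recognizing the relevant $L_i$'s. Now I want a regular tree grammar $G$ generating $\ran(\tau_M)$. Its nonterminals are pairs $(q, p)$ where $q$ is a state of $M$ and $p$ is a state of $A$; intuitively $(q,p)$ generates all output trees that $M$ can produce when started in state $q$ on some input subtree $t'$ with $\delta_A(t') = p$. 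The initial nonterminals are $(q_0, p)$ with $q_0\in Q_0$ and $p$ any state of $A$ (so that the whole input tree is admissible; if one wants the domain restricted to $L$ one additionally requires $p$ to be a final state of the automaton for $L$). The rules of $G$ are read off from the rules of $M$: for a move rule $\tup{q,\sigma,j,T}\to\tup{q',\down_i}$ and for every tuple of $A$-states $p_1,\dots,p_m$ that is consistent with the test $T$ (i.e.\ $\delta_A(\sigma,p_1,\dots,p_m)$ would satisfy $T$) and yields $\delta_A(\sigma,p_1,\dots,p_m)=p$, add the $\epsilon$-rule $(q,p)\to (q',p_i)$ — this reflects that $M$ discards $\sigma$ and all children but the $i$-th. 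For an output rule $\tup{q,\sigma,j,T}\to \delta(\tup{q_1,\down_{i_1}},\dots,\tup{q_k,\down_{i_k}})$, similarly add the rule $(q,p)\to \delta((q_1,p_{i_1}),\dots,(q_k,p_{i_k}))$ for every consistent state-tuple. One must also handle the child-number test $j$; since $M$ is top-down this can be pushed into the state in the usual way (as noted in Section~\ref{sec:trans}), or one carries $j$ as an extra component of the nonterminal. Finally, one allows the grammar to "stop" on any input subtree: since a pruning \ttt{} need not produce output on the whole input (it may just discard downward forever? — no: a pruning \ttt{} is top-down and each move rule goes strictly down, so on any finite input the branch terminates, either at a leaf where no rule applies, or by an output rule) — so the grammar inherits finiteness of derivations from finiteness of the input.

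\textbf{Correctness and the grammar rule set.} I would then argue $L(G)=\ran(\tau_M)$ (or $=\tau_M(L)$ after the domain restriction) by a straightforward induction matching derivation trees of $G_{M,t}$ with input trees $t$ and derivations of $G$: a derivation of $G$ from $(q,p)$ of yield/tree $s$ corresponds exactly to a choice of input subtree $t'$ with $\delta_A(t')=p$ on which $M$ started in $q$ produces $s$, and conversely. The $\epsilon$-rules $(q,p)\to(q',p_i)$ are harmless because this is a context-free grammar, not a regular tree grammar in the strict sense — so as a last step I would eliminate $\epsilon$-rules (equivalently, compose the $\epsilon$-moves into the terminal-producing rules) to put $G$ into proper regular-tree-grammar form, using that the $\epsilon$-relation on nonterminals is computable; alternatively just cite that the class $\REGT$ is closed under $\epsilon$-removal for this kind of grammar. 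Since the number of nonterminals $(q,p)$ (possibly with a child-number component) and the number of consistent state-tuples are both finite, $G$ is a finite regular tree grammar, so $\ran(\tau_M)\in\REGT$.

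\textbf{Main obstacle.} The genuinely delicate point is handling the regular tests correctly: a node $u$ kept by $M$ may have several siblings that $M$ discards, and the test $T$ at $u$ may depend on those discarded siblings (and, if $M$ were not assumed sub-testing, on the whole marked tree). That is why reducing to the sub-testing case via Lemma~\ref{lem:d=ds} and to look-ahead of the form $\sigma(L_1,\dots,L_m)$ is essential: it lets the bottom-up automaton $A$ summarise each discarded subtree by a single state $p_i$, and the "for every consistent state-tuple" quantifier in the grammar rules is exactly the mechanism that guesses these summaries. Getting this bookkeeping precisely right — in particular making sure the guessed states of discarded children are never over-constrained nor under-constrained — is where the care is needed; everything else is routine. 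This mirrors the classical fact that the image of a regular tree language under a linear top-down tree transducer (with regular look-ahead) is regular, of which the present lemma is the \ttt{}-flavoured analogue.
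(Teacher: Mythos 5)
Your proof is correct and follows essentially the same route as the paper: both first apply Lemma~\ref{lem:d=ds} to make the pruning \ttt\ sub-testing and then exploit its linearity as a classical linear top-down tree transducer with regular look-ahead — the paper simply cites the well-known closure of $\REGT$ under such transducers (G\'ecseg--Steinby), whereas you spell out the underlying product construction of a regular tree grammar with nonterminals $(q,p)$. The only detail to add to your construction is that the guessed look-ahead states $p_j$ of discarded children must be restricted to \emph{reachable} states of $A$ (so that a realizing input subtree actually exists), which is exactly the bookkeeping point you already flag as the delicate one.
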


\begin{proof}
By Lemma~\ref{lem:d=ds}, $\NTTP=\NTTPS$. As observed before, every $\tau\in\NTTPS$
can be realized by a classical linear top-down tree transducer
with regular look-ahead. It is well known that, due to linearity, 
$\REGT$ is closed under such translations,
see, e.g., \cite[Corollary~IV.6.7]{GecSte}.
\qed
\end{proof}

Lemma~\ref{lem:inversett}, Corollary~\ref{cor:regdom} and Lemma~\ref{lem:pru} are powerful technical 
tools because they allow us to show that certain node tests of a \ttt\ $M$ are regular
by defining them in terms of, e.g., the domains of other \ttt's or of variants of~$M$ itself.
In other words, a \ttt\ can use \ttt's ``to look around''. 
For instance, Lemma~\ref{lem:inversett} is used for this purpose in the proof of 
Lemma~\ref{lem:inversemso} below, where we show the following.

In a composition of a \dttt\ with a sub-testing \ttt\ 
the second transducer can even be assumed to be local, because the first transducer 
can determine the truth values of the regular sub-tests of the output tree by executing 
appropriate regular tests on its input tree. 

\begin{lemma}\label{lem:inversemso}
$\TT\circ\NTTS \subseteq \TT\circ\NTTL$. 
The construction preserves determinism (of the second transducer) and the top-down, single-use, pruning, 
and relabeling properties of both transducers.
\end{lemma}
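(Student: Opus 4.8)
The plan is to combine the decomposition lemmas already proved with a simulation that moves look-around tests from the second transducer to the first one. First I would write $\tau_1 \in \TT$ and $\tau_2 \in \NTTS$, and apply Lemma~\ref{lem:nondetrel} (together with Lemma~\ref{lem:d=ds} in case a pruning or relabeling transducer is involved) to decompose the sub-testing transducer as $\tau_2 = \rho \circ \tau_2'$, where $\rho$ is a (local) relabeling \ttt\ that attaches to each node of the intermediate tree the truth value of each regular sub-test used by $\tau_2$, and $\tau_2'$ is local. Actually, since we already have Lemma~\ref{lem:msorel} giving $\NTTS \subseteq \TTR \circ \NTTL$ with determinism and the pruning property preserved, the cleanest route is: $\tau_1 \circ \tau_2 \subseteq \TT \circ (\TTR \circ \NTTL) = (\TT \circ \TTR) \circ \NTTL$. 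So the whole problem reduces to showing that the composition of a \dttt\ with a (deterministic) relabeling \ttt\ can be realized by a single \dttt, i.e.\ $\TT \circ \TTR \subseteq \TT$, with the relevant properties preserved.

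The heart of the matter is therefore to absorb a relabeling second transducer into a \dttt\ first transducer. The new \dttt\ $N$ should simulate $\tau_1$, but whenever $\tau_1$ outputs a node of the intermediate tree $r$, the transducer $N$ must instead output that node already relabeled with the truth values of the regular sub-tests of $\rho$ evaluated on $r$ at that position. Since $\rho$ is a relabeling (hence pruning) \ttt\ using sub-tests $T(L_1), \dots, T(L_n)$, the label that $\rho$ attaches to a node $v$ of $r$ depends only on the subtree $r|_v$ — that is, on membership of $r|_v$ in each $L_i$. The key point is that for each $i$, the node test on the input tree $t$ of $\tau_1$ that says ``the subtree of $r$ rooted at the node currently being produced lies in $L_i$'' is a regular node test over the input alphabet of $\tau_1$. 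This is precisely where Lemma~\ref{lem:inversett} (closure of regular tree languages under inverse \ttt\ translations) enters: for a fixed state $q$ of $\tau_1$, the set of pairs $(t,u)$ such that the tree produced by $\tau_1$ when started in configuration $\tup{q,u}$ on $t$ belongs to $L_i$ is — via $\tmark$ and a variant of $\tau_1$ reading the marked tree — the inverse image under a \ttt\ translation of a regular tree language, hence regular. I would make this precise by considering, for each state $q$, the \ttt\ $M_q$ obtained from the \dttt\ realizing $\tau_1$ by making $q$ the initial state, noting $\tau_{M_q}$ is a \ttt\ translation, and observing that $\{(t,u)\mid \tau_{M_q}(t|_u\text{-restricted computation})\in L_i\}$ can be expressed using $\tmark$ and domain/inverse arguments as a regular node test.

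Concretely, the construction of $N$: its states are those of $\tau_1$'s transducer; its rules are obtained from $\tau_1$'s rules by (i) keeping move rules unchanged, and (ii) replacing each output rule $\tup{q,\sigma,j,T}\to \delta(\tup{q_1,\alpha_1},\dots,\tup{q_k,\alpha_k})$ by a family of output rules, one for each Boolean vector $(b_1,\dots,b_n)\in\{0,1\}^n$, of the form $\tup{q,\sigma,j,\,T\cap T^{(1)}_{b_1}\cap\cdots\cap T^{(n)}_{b_n}}\to \tup{\delta,b_1,\dots,b_n}(\tup{q_1,\alpha_1},\dots,\tup{q_k,\alpha_k})$, where $T^{(i)}_1$ is the regular node test ``the intermediate subtree that the copy of $\tau_1$ entering state $q$ here will eventually produce lies in $L_i$'' and $T^{(i)}_0$ its complement. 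These tests are regular by Lemma~\ref{lem:inversett} as sketched, and they are mutually refining so $N$ stays deterministic when $\tau_1$ is. Then $\tau_N = \tau_1 \circ \rho$, since $N$ produces exactly the relabeling that $\rho$ would apply. Finally I would chase the preservation claims: if $\tau_2$ is deterministic, $\rho$ is deterministic and $\tau_2'$ is local-deterministic, so the second transducer in the output stays deterministic; the top-down / single-use / pruning / relabeling properties of $\tau_1$ are untouched because we only refined tests and enlarged the output alphabet of output rules, and those of $\tau_2$ descend to $\tau_2'$ via Lemma~\ref{lem:msorel}'s preservation clause.

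The main obstacle I expect is making rigorous the claim that ``the subtree of the intermediate tree currently under production lies in $L_i$'' is a regular node test over the input of $\tau_1$ — in particular handling the fact that, at the moment $\tau_1$'s copy enters a child-configuration $\tup{q_\ell,\alpha_\ell(u)}$, the subtree it will produce is exactly $\tau_{M_{q_\ell}}$ applied to $t$ but with the head starting at $\alpha_\ell(u)$, not at the root. Encoding ``start at $u$'' is done by marking $u$ with $\tmark$ and using a variant of $M_{q_\ell}$ that reads $\tmark(t,u)$ and initializes at the marked node; then the desired node test is the domain-type set $\{(t,u)\mid \tmark(t,u)\in \tau_{M'_{q_\ell}}^{-1}(L_i)\}$, which is regular because $\tau_{M'_{q_\ell}}^{-1}(L_i)\in\REGT$ by Lemma~\ref{lem:inversett} and regularity of tree languages is preserved under the $\tmark$-encoding. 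One must also double-check that $\tau_1$ being merely a partial function (not total) causes no trouble: on inputs outside $\dom(\tau_1)$ nothing is produced and the statement is vacuous, and the tests $T^{(i)}_{b_i}$ are simply never satisfied along a non-producing branch, which is harmless.
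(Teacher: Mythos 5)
Your proposal is correct and follows essentially the same route as the paper: the paper's proof likewise has the first transducer annotate each output label with the truth values of the sub-tests of the second transducer, using regular tests $\inv_q(T')$ on its own input whose regularity is obtained exactly as you describe, via Lemma~\ref{lem:inversett} applied to a variant of $M_1$ that starts at the $\tmark$-marked node in state $q$; the second transducer then becomes local by reading the annotations. Your extra detour through Lemma~\ref{lem:msorel} and the absorption of the resulting (sub-testing) relabeling is only a repackaging of the paper's direct construction of $M'_1$ and $M'_2$, and your treatment of disjointness (Boolean vectors versus the paper's mutually disjoint $\cT_\bot$) and of the preservation claims matches the paper's.
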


\begin{proof}
Let $M_1 = (\Sigma, \Delta, Q, q_0, R)$ be a \dttt\ and 
let $M_2$ be a sub-testing \ttt\ with input alphabet $\Delta$.
We will construct a \dttt\ $M'_1$ and a local \ttt\ $M'_2$  
that simulate the composition of $M_1$ and $M_2$. 
The construction preserves the top-down, single-use, pruning, 
and relabeling property of each transducer, 
i.e., if $M_1$ has one of these properties, then so has $M'_1$,
and similarly for $M_2$ and $M'_2$. 
Moreover, if $M_2$ is deterministic, then so is $M'_2$. 

Let $(t,s)\in\tau_{M_1}$.
The \dttt\ $M'_1$ simulates $M_1$ on the input tree $t$. 
Simultaneously it executes the sub-tests of $M_2$ at every node $v$ of the output tree $s$ and 
preprocesses~$s$ by adding to the label 
of $v$ the truth values of these sub-tests at~$v$, cf. the text before Lemma~\ref{lem:msorel}. 
This allows $M'_2$, during its simulation of $M_2$ on $s$, to inspect the new label of $v$ 
instead of sub-testing $v$. 

Every node of $s$ is produced by an output rule of $M_1$ during its computation on $t$.
Let $\bar{s}$ be an output form of $M_1$ on $t$, 
and let $v$ be a leaf of $\bar{s}$ with label $\tup{q,u}$. 
It should be clear that $\tup{q,u}\Rightarrow^*_{M_1,t} s|_v$.
Now let $L$ be a regular tree language over $\Delta$ such that $M_2$ uses the sub-test $T'=T(L)$.
We claim that, in configuration $\tup{q,u}$, $M'_1$ can test whether $(s,v)\in T'$ by a  
regular test $\inv_q(T')$. Note that $(s,v)\in T(L)$ if and only if $s|_v\in L$.
Thus, $\inv_q(T')$ should test whether the output tree generated by the configuration $\tup{q,u}$
is in $L$. To prove that $\tmark(\inv_q(T'))$ is regular, 
we define a \dttt\ $N_q$ such that $\tmark(\inv_q(T'))=\tau_{N_q}^{-1}(L)$ 
and we use Lemma~\ref{lem:inversett}.
The transducer $N_q$ first uses a regular test at the root to verify that the input tree is 
of the form $\tmark(t,u)$.\footnote{To be precise, the regular sub-test $T(\tmark(T^\bullet_\Sigma))$.
}
After that it walks to the (unique) marked node $u$, 
using move rules to execute a depth-first search of the input tree,
and then simulates $M_1$ starting in state~$q$ at~$u$, producing the output tree $s|_v$. 
During that simulation it treats each symbol $(\sigma,0)$ or $(\sigma,1)$ as $\sigma$,
and for each regular test $T$ of $M_1$ it instead uses the test
$\mu(T)$, which is the set of all $(\tmark(t,u),v)$ 
such that $(t,v)\in T$ and $u\in \cN(t)$, see Section~\ref{sec:trees}.

The construction of $M'_1$ and $M'_2$ is similar to the construction of $N$ and $M'$ 
in the proof of Lemma~\ref{lem:msorel}.
Let $\cT$ be the set of regular tests in the left-hand sides of the rules of $M_2$.
As in the proof of Lemma~\ref{lem:msorel} 
we may assume that the tests in~$\cT$ are mutually disjoint (by Lemma~\ref{lem:mutdis}), 
and we define $\cT_\bot=\cT\cup\{\bot\}$ as in that proof. 
Note that the elements of $\cT_\bot$ are still regular sub-tests.
Note also that for every $q\in Q$, $t\in \dom(M_1)$ and $u\in \cN(t)$, 
$(t,u)$ belongs to a unique regular test in $\{\inv_q(T')\mid T'\in\cT_\bot\}$. 

We define the \dttt\ $M'_1 = (\Sigma,\Delta\times \cT_\bot,Q,q_0,R')$ such that 
$R'$ contains all move rules in $R$, and moreover, 
if $\tup{q,\sigma,j,T} \to \delta(\tup{q_1,\alpha_1},\dots,\tup{q_k,\alpha_k})$ 
is an output rule in $R$, then $R'$ contains the rule
\[
\tup{q,\sigma,j,T\cap \inv_q(T')} \to \tup{\delta,T'}(\tup{q_1,\alpha_1},\dots,\tup{q_k,\alpha_k})
\]
for every $T'\in\cT_\bot$. 
We define the local~\ttt\ $M'_2$ with input alphabet $\Delta\times\cT_\bot$ 
and the following rules.
If $\tup{q,\delta,j,T'}\to \zeta$ is a rule of $M_2$, then $M'_2$ has the rule
$\tup{q,\tup{\delta,T'},j}\to \zeta$.
It should now be clear that $\tau_{M_2'}(\tau_{M_1'}(t))=\tau_{M_2}(\tau_{M_1}(t))$ 
for every $t\in T_\Sigma$,
i.e., $\tau_{M'_1}\circ \tau_{M'_2}= \tau_{M_1}\circ \tau_{M_2}$.
If $M_1$ is single-use, then $M'_1$ is also single-use, because 
$M'_1$~visits the nodes of the input tree in the same states as~$M_1$; 
the same is true for $M_2$ and $M'_2$. Preservation of the other properties easily follows
from the construction of $M'_1$ and $M'_2$.
\qed
\end{proof}

\section{Composition}\label{sec:comp}

In this section we prove three composition results for \ttt's.
Our first aim is to prove that \dttt's are closed under right-composition 
with top-down \dttt's, and hence in particular with pruning \dttt's.
As already mentioned at the end of the Introduction, 
this generalizes the result of~\cite[Theorem~4.3]{Ful} for attributed tree transducers,
because \dttt's need not be total and they have regular look-around.
By Lemma~\ref{lem:d=ds} we may assume that the top-down \ttt\ is sub-testing.
It may even be assumed to be local by Lemma~\ref{lem:inversemso}.

\begin{lemma}\label{lem:ttottdl}
$\TT\circ\TTDL \subseteq \TT$. In particular 
\[
\TTD\circ\TTDL \subseteq \TTD \text{\quad and \quad} \TTP\circ\TTPL \subseteq \TTP.
\]
\end{lemma}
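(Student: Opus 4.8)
The plan is to prove the general statement $\TT\circ\TTDL \subseteq \TT$ directly by a product-like construction, and then read off the two special cases by inspecting which properties of the second transducer are preserved. Let $M_1 = (\Sigma,\Delta,Q_1,q_0^1,R_1)$ be a \dttt\ and $M_2 = (\Delta,\Gamma,Q_2,q_0^2,R_2)$ a local top-down \dttt; I want to build a \dttt\ $M$ with $\tau_M = \tau_{M_1}\circ\tau_{M_2}$. The key observation is that, since $M_2$ is top-down, when it is running on the output tree $s$ of $M_1$ and sits at a node $v$ of $s$ in some state $p\in Q_2$, the subtree $s|_v$ is exactly the tree generated by the configuration $\tup{q,u}$ of $M_1$ that ``owns'' $v$, i.e.\ the leaf of the current output form of $M_1$ labeled $\tup{q,u}$ satisfies $\tup{q,u}\Rightarrow^*_{M_1,t} s|_v$. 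So I can let $M$ simulate $M_1$ on the input tree $t$ while simultaneously carrying, in its finite state, the finite set of ``pending'' $M_2$-states that must be run on the portion of $s$ that $M_1$ is currently about to produce. Concretely, a state of $M$ is a pair $\tup{q,\pi}$ where $q\in Q_1$ and $\pi$ is a function from some finite index set of ``output positions under construction'' to $Q_2$ — but because $M_1$ is top-down this bookkeeping is bounded: at the moment $M_1$ applies an output rule producing a node $\delta$, each copy of $M_2$ currently positioned at that node (there is exactly one, since $M_2$ is deterministic and top-down, per relevant node) reads $\delta$, fires its $M_2$-rule, and spawns its children-states, which are then attached to the corresponding $\tup{q_i,\alpha_i}$ continuations of the $M_1$-rule. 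When $M_1$ executes a move rule, the pending $M_2$-states just ride along unchanged. Thus a state of $M$ is a pair $\tup{q,\mathbf{p}}$ with $q\in Q_1$ and $\mathbf{p}$ a finite word over $Q_2$ recording the stack/multiset of $M_2$-states waiting at the ``hole'' that $M_1$ will next fill — and since $M_1$'s output rules have bounded rank this stays finite.

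The construction details: an initial state of $M$ is $\tup{q_0^1, q_0^2}$. For a move rule $\tup{q,\delta,j,T}\to\tup{q',\alpha}$ of $M_1$ (recall we may assume by Lemma~\ref{lem:mutdis} the tests are disjoint, and by the syntactic normal form that output rules use only $\stay$), $M$ gets $\tup{\tup{q,\mathbf{p}},\delta,j,T}\to\tup{\tup{q',\mathbf{p}},\alpha}$. For an output rule $\tup{q,\sigma,j,T}\to\delta(\tup{q_1,\alpha_1},\dots,\tup{q_k,\alpha_k})$ of $M_1$ and a pending state $p\in Q_2$ with $M_2$-rule $\tup{p,\delta,*}\to\zeta$ (its right-hand side a $\Gamma$-tree over $Q_2\times\{\text{instructions on }\delta\}$), $M$ simulates the generation of $\zeta$ but reroutes each leaf $\tup{p',\down_i}$ of $\zeta$ into a recursive call on the $i$-th $M_1$-continuation $\tup{q_i,\alpha_i}$ with $p'$ added to the pending list there, and each leaf $\tup{p',\stay}$ stays with the same $M_1$-configuration. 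Since $M_2$ is top-down there is no $\up$ to worry about — this is precisely why we reduced to the top-down (indeed local) case via Lemmas~\ref{lem:d=ds} and~\ref{lem:inversemso} before stating the lemma. One has to handle the possibility that several $M_2$-states are pending at the same output node; they are processed one after another, which is why I keep a word $\mathbf{p}$ rather than a single state, and emit their outputs in sequence — here I may need the ``general rules'' normal form from Section~\ref{sec:trans} to produce the tree $\zeta$ node-by-node cleanly. Determinism of $M$ follows from determinism of $M_1$ and $M_2$ together with disjointness of tests.

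The two special cases then fall out by inspection. If $M_1$ is top-down (\TTD), no $\up$ is introduced by the simulation of $M_1$, and the $M_2$-part contributes no $\up$ since $M_2$ is local, so $M$ is top-down: $\TTD\circ\TTDL\subseteq\TTD$. If moreover $M_1$ is pruning (\TTP), every $M_1$-move is a $\down_i$ and every $M_1$-output rule has the pruning shape with strictly increasing $\down$-children; the simulation of the local $M_2$ on top of this only relabels/copies along those same $\down_i$ steps — but wait: a pruning transducer by definition uses each child at most once and never $\stay$, whereas $M_2$ (being a general local top-down \dttt) may use $\stay$ or revisit, so I must check that composing with a \emph{pruning} $M_2$ (\TTPL) keeps $M$ pruning. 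Since $M_2$ is pruning too, its rules are also of the form $\tup{p,\delta,j,T}\to\gamma(\tup{p_1,\down_{i_1}},\dots)$ with $i_1<\dots<i_k$, so no $\stay$ and no repetition is introduced on the $M_2$-side either, and composing two such node-by-node relabel-and-prune passes yields again a pruning transducer; the strict-increase condition on children is inherited because neither pass reorders children. I expect the main obstacle to be exactly this last bookkeeping — verifying that the combined child-selection respects the strict ordering and the single-use-of-children discipline required by the definition of pruning, and checking that when the pending list $\mathbf{p}$ has length $>1$ the outputs can still be emitted without any $\stay$ so as to stay within the pruning format — so I would either argue $\mathbf{p}$ never exceeds length one in the pruning case (because a pruning $M_1$ produces each output node from a distinct input node and a pruning $M_2$ reads it exactly once), or absorb the sequential emission into a single general pruning rule.
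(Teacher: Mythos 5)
Your overall skeleton is the paper's: a product construction that runs $M_1$ on the input tree and, whenever $M_1$ fires an output rule producing a node $v$ of the intermediate tree, hands $v$ to $M_2$ and reroutes each $\tup{p',\down_i}$ in $M_2$'s right-hand side into the $i$-th continuation $\tup{q_i,\alpha_i}$ of $M_1$'s output rule. But the bookkeeping you propose is wrong. You keep a \emph{word} $\mathbf{p}$ of pending $M_2$-states at the node about to be generated and want to ``process them one after another and emit their outputs in sequence.'' Distinct copies of $M_2$ sitting at the same node $v$ of the intermediate tree occupy different, independent positions of the final output tree over $\Gamma$; their outputs are not concatenated, they live on separate branches. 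A single thread of $M$ sits at a single output position and can therefore host only \emph{one} copy of $M_2$. The correct construction keeps exactly one $M_2$-state per state of $M$: when $M_2$ branches (possibly sending two copies down the \emph{same} child of $v$), $M$ branches correspondingly, and each resulting copy of $M$ independently re-simulates the computation of $M_1$ that generates that child's subtree. Consistency of these parallel re-simulations is precisely where determinism of $M_1$ is used — it is the reason the nondeterministic analogue (Lemma~\ref{lem:nondetttottdl}) must restrict $M_2$ to be pruning. Your finiteness argument for $\mathbf{p}$ (``$M_1$'s output rules have bounded rank'') is also false: the number of visits of $M_2$ to a node of the intermediate tree is not bounded by any rank, only the \emph{set} of visiting states is bounded, and even that set is not what you need.

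The second gap is the missing domain restriction. Before the product construction one must restrict $M$ to $\dom(M_1)$ (Lemma~\ref{lem:domrestr} and Corollary~\ref{cor:regdom}). Without this, if $M_1$ on input $t$ fails to generate some subtree $s|_{vi}$ of its intended output (gets stuck or loops forever there) while $M_2$'s rule at $v$ never descends to child $i$, your $M$ still produces an output on $t$, even though $\tau_{M_1}(t)$ — and hence $(\tau_{M_1}\circ\tau_{M_2})(t)$ — is undefined. So $\tau_M\supsetneq\tau_{M_1}\circ\tau_{M_2}$. Once the construction carries a single $M_2$-state per branch and the domain is restricted, the two special cases do fall out as you expect: no up-instructions are ever introduced when $M_1$ is top-down, and when both transducers are pruning the stay-handling states become unnecessary and the combined rules again have the pruning shape.
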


\begin{proof}
Since the domain of a \ttt\ can always be restricted to $\dom(M_1)$ 
by Lemma~\ref{lem:domrestr} and Corollary~\ref{cor:regdom}, it suffices to show 
that for every \dttt\ $M_1$ and every local top-down \dttt\ $M_2$,
a \dttt\ $M$ can be constructed such that 
$\tau_M(t)=\tau_{M_2}(\tau_{M_1}(t))$ for every input tree~$t\in \dom(M_1)$. 
For the case where $M_1$ is also local 
this construction was presented 
in the proof of~\cite[Theorem~55]{EngHooSam},
which can easily be adapted to the general case. 
We repeat it here for completeness sake, 
and because the proofs of the other two composition closure results 
will be based on it. 

The transducer $M$ is obtained by a straightforward product construction. 
For every $(t,s)\in \tau_{M_1}$, 
$M$ simulates $M_1$ on the input tree $t$ until $M_1$ uses an output rule that generates a node $v$ of $s$.
Then $M$ switches to the simulation of $M_2$ on $v$, as long as $M_2$ executes stay-instructions.
When $M_2$ executes a $\down_i$-instruction, $M$ switches again to the simulation of $M_1$ 
in order to generate the $i$-th child of $v$. 

Formally, let $M_1=(\Sigma,\Delta,P,p_0,R_1)$ and $M_2=(\Delta,\Gamma,Q,q_0,R_2)$. 
To simplify the construction of $M$ we assume that $M_1$ keeps track in its finite state 
of the child number of the output node to be generated. To be precise, we assume that 
there is a mapping $\chi: P\to [0,\m_\Delta]$ such that 
for every output form $s'$ and every leaf $v$ of $s'$ that is labeled by a configuration $\tup{p,u}$, 
the child number of~$v$ in~$s'$ is $\chi(p)$. 
That is possible because the output tree is generated top-down.
If~$M_1$ does not satisfy this assumption, then we change $M_1$ as follows. 
The new set of states is $P\times [0,\m_\Delta]$, and we define $\chi(p,i)=i$. 
The new initial state is $(p_0,0)$, 
because $M_1$ starts by generating the root of the output tree. 
Each move rule $\tup{p,\sigma,j,T}\to\tup{p',\alpha}$ of $M_1$ is changed into
the rules $\tup{(p,i),\sigma,j,T}\to\tup{(p',i),\alpha}$ 
and each output rule $\tup{p,\sigma,j,T}\to\delta(\tup{p_1,\alpha_1},\dots,\tup{p_k,\alpha_k})$
into $\tup{(p,i),\sigma,j,T}\to\delta(\tup{(p_1,1),\alpha_1},\dots,\tup{(p_k,k),\alpha_k})$,
for every $i\in [0,\m_\Delta]$.  
For the sake of the proof of Lemma~\ref{lem:ttsuottl} we note that this transformation of $M_1$ 
preserves the single-use property, because we have only added information to the states of $M_1$. 

The \dttt\ $M$ has input alphabet $\Sigma$ and output alphabet $\Gamma$. 
Its states are of the form $(p,q)$ or $(\rho,q)$, 
where $p\in P$, $q\in Q$, and 
$\rho$ is an output rule of~$M_1$, i.e., a rule of the form 
$\tup{p,\sigma,j,T}\to \delta(\tup{p_1,\alpha_1},\dots,\tup{p_k,\alpha_k})$. 
Its initial state is $(p_0,q_0)$. 
A state $(p,q)$ is used by $M$ to simulate the computation of $M_1$ 
that generates the next current node of $M_2$ when $M_2$ moves down
(keeping the state $q$ of $M_2$ in memory).
Initially $M$ simulates the computation of $M_1$ that generates the root of the output tree.
A state $(\rho,q)$ is used by $M$ to simulate the computation of~$M_2$
on the node that $M_1$ has generated with rule $\rho$. 
The rules of $M$ are defined as follows. 

First, rules that simulate $M_1$. Let $\rho: \tup{p,\sigma,j,T}\to\zeta$ be a rule in~$R_1$.
If $\zeta=\tup{p',\alpha}$, 
then $M$ has the rules $\tup{(p,q),\sigma,j,T}\to \tup{(p',q),\alpha}$ for every $q\in Q$. 
If $\rho$ is an output rule, then  
$M$ has the rules $\tup{(p,q),\sigma,j,T}\to \tup{(\rho,q),\stay}$ for every $q\in Q$.

Second, rules that simulate $M_2$.  
Let $\tup{q,\delta,i}\to \zeta$ be a rule in $R_2$ and 
let $\rho\colon \tup{p,\sigma,j,T}\to \delta(\tup{p_1,\alpha_1},\dots,\tup{p_k,\alpha_k})$
be an output rule in~$R_1$, with the same~$\delta$ and with $\chi(p)=i$. 
Then $M$ has the rule $\tup{(\rho,q),\sigma,j,T}\to \zeta'$ 
where $\zeta'$ is obtained from~$\zeta$ by changing 
every $\tup{q',\stay}$ into $\tup{(\rho,q'),\stay}$,
and every $\tup{q',\down_\ell}$ into $\tup{(p_\ell,q'),\alpha_\ell}$. 
Note that the test on $\sigma$, $j$, and $T$ is actually superfluous, 
because that was already tested when $M$ included $\rho$ in its state. 

It is easy to see that
$\tau_M(t)=\tau_{M_2}(\tau_{M_1}(t))$ for every input tree $t\in \dom(M_1)$.
If the rules of $M_2$ do not contain stay-instructions,
then $M$ does not need the states $(\rho,q)$. Its rules can then be simplified as follows.
Let $\tup{p,\sigma,j,T}\to\zeta$ be a rule in~$R_1$.
As above, if $\zeta=\tup{p',\alpha}$, then $M$ has the rules 
$\tup{(p,q),\sigma,j,T}\to \tup{(p',q),\alpha}$ for every $q\in Q$. 
If $\zeta=\delta(\tup{p_1,\alpha_1},\dots,\tup{p_k,\alpha_k})$
and $\tup{q,\delta,i}\to \zeta'$ is a rule in $R_2$, 
with the same~$\delta$ and with $\chi(p)=i$, 
then $M$ has the rule $\tup{(p,q),\sigma,j,T}\to \zeta''$ 
where $\zeta''$ is obtained from $\zeta'$ by changing 
every $\tup{q',\down_\ell}$ into $\tup{(p_\ell,q'),\alpha_\ell}$. 
This shows that if both $M_1$ and $M_2$ are pruning, then $M$ is pruning too. 
\qed
\end{proof}

We obtain our first composition closure result from
Lemmas~\ref{lem:d=ds}, \ref{lem:inversemso}, and~\ref{lem:ttottdl}. 
Note that the closure under composition of $\TTD$ 
already follows from Lemma~\ref{lem:d=ds} and~\cite[Theorem~2.11(2)]{Eng77}.

\begin{theorem}\label{thm:rightcomp}
$\TT \circ \TTD \subseteq \TT$. In particular, 
$\TTD$ and $\TTP$ are closed under composition. 
\end{theorem}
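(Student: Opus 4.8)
The plan is to derive Theorem~\ref{thm:rightcomp} by combining the decomposition lemmas of Section~\ref{sec:basic} with the composition closure of Lemma~\ref{lem:ttottdl}. The goal is $\TT\circ\TTD\subseteq\TT$, so let $\tau_1\in\TT$ and $\tau_2\in\TTD$, witnessed by a \dttt\ $M_1$ and a top-down \dttt\ $M_2$. The obstacle is that $M_2$ may use regular look-around (arbitrary \abb{MSO} tests on its input node), whereas Lemma~\ref{lem:ttottdl} only handles a \emph{local} top-down second transducer. So the whole argument is about peeling the look-around off $M_2$ and absorbing it into $M_1$.

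First I would apply Lemma~\ref{lem:d=ds} ($\NTTD=\NTTDS$) to replace $M_2$ by an equivalent \emph{sub-testing} top-down \dttt; this preserves determinism. At this point $\tau_2\in\TTDS$, so $\tau_1\circ\tau_2\in\TT\circ\TTDS\subseteq\TT\circ\NTTS$. Next I would invoke Lemma~\ref{lem:inversemso} ($\TT\circ\NTTS\subseteq\TT\circ\NTTL$), whose construction preserves determinism of the second transducer and preserves its top-down property; the first transducer stays a \dttt. After this step we have written $\tau_1\circ\tau_2$ as $\tau_1'\circ\tau_2'$ with $\tau_1'\in\TT$ and $\tau_2'\in\TTDL$ (a local top-down \dttt). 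Finally, Lemma~\ref{lem:ttottdl} gives $\TT\circ\TTDL\subseteq\TT$ directly, so $\tau_1\circ\tau_2\in\TT$, establishing the first inclusion.

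For the two "in particular" claims: closure of $\TTD$ under composition follows by tracking the top-down property through the same three lemmas — Lemma~\ref{lem:d=ds}, Lemma~\ref{lem:inversemso}, and the $\TTD\circ\TTDL\subseteq\TTD$ part of Lemma~\ref{lem:ttottdl} all preserve being top-down — so $\TTD\circ\TTD\subseteq\TTD$ (and as noted this also follows from Lemma~\ref{lem:d=ds} plus~\cite[Theorem~2.11(2)]{Eng77}). Closure of $\TTP$ under composition is the same argument restricted to pruning transducers: Lemma~\ref{lem:d=ds} preserves pruning, Lemma~\ref{lem:inversemso} preserves the pruning property of both transducers, and the last sentence of the proof of Lemma~\ref{lem:ttottdl} records that when $M_1$ and $M_2$ are both pruning the product $M$ is pruning too, i.e. $\TTP\circ\TTPL\subseteq\TTP$. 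Hence $\TTP\circ\TTP\subseteq\TTP$.

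The main (and essentially only) subtlety is bookkeeping: each of Lemmas~\ref{lem:d=ds}, \ref{lem:inversemso}, and~\ref{lem:ttottdl} comes with an explicit list of preserved properties, and one must check that determinism, top-downness, and pruning each survive \emph{every} step in the chain so that the specializations land in $\TTD$ and $\TTP$ rather than merely in $\TT$. Since all three lemmas are already proved with exactly these preservation clauses, the theorem is a short composition of them and no new construction is needed; the proof is essentially a one-line chain of inclusions $\TT\circ\TTD\subseteq\TT\circ\TTDS\subseteq\TT\circ\NTTS\subseteq\TT\circ\NTTL$, combined with $\TT\circ\TTDL\subseteq\TT$, plus the parallel observation that the sub-chain $\TTD\circ\TTD\subseteq\cdots\subseteq\TTD$ and $\TTP\circ\TTP\subseteq\cdots\subseteq\TTP$ goes through verbatim.
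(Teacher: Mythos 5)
Your proof is correct and follows exactly the route the paper takes: Lemma~\ref{lem:d=ds} to make the second transducer sub-testing, Lemma~\ref{lem:inversemso} to make it local, and Lemma~\ref{lem:ttottdl} to absorb it, with the ``in particular'' claims obtained by tracking the top-down and pruning preservation clauses through the same chain. No differences worth noting.
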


Theorem~\ref{thm:rightcomp} can be used to show that in a composition 
of two \dttt's we may always assume that 
the second one is local (thus strengthening Lemma~\ref{lem:inversemso}): 
by Lemma~\ref{lem:msorel} the second \ttt\ can be decomposed 
into a top-down \ttt\ and a local~\ttt, and then (by Theorem~\ref{thm:rightcomp}),
the top-down one can be absorbed by the first \ttt.
Hence $\TT \circ \TT \subseteq \TT \circ \TTD\circ \TTL \subseteq \TT\circ \TTL$.
This was already proved in~\cite[Theorem~53]{EngHooSam}
by means of pebble tree transducers.

Our second composition result generalizes Theorem~\ref{thm:rightcomp}
to nondeterministic \ttt's, restricted to right-composition with pruning \ttt's.
The proof of the next lemma is similar to 
that of Lemma~\ref{lem:ttottdl}.

\begin{lemma}\label{lem:nondetttottdl}
$\NTT\circ\NTTPL \subseteq \NTT$. In particular 
\[
\NTTD\circ\NTTPL \subseteq \NTTD \text{\quad and \quad}
\NTTP\circ\NTTPL \subseteq \NTTP.
\]
\end{lemma}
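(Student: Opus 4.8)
The plan is to follow the product construction in the proof of Lemma~\ref{lem:ttottdl}, in its \emph{simplified} form (which is available because a pruning \ttt\ is stay-free), while adding one genuinely new ingredient to handle the fact that a pruning $M_2$ deletes subtrees and $M_1$ is nondeterministic. Write $M_1=(\Sigma,\Delta,P,P_0,R_1)$ for the given \ttt\ and $M_2=(\Delta,\Gamma,Q,Q_0,R_2)$ for the given local pruning \ttt; I shall construct a \ttt\ $M$ with $\tau_M=\tau_{M_1}\circ\tau_{M_2}$. First I would, exactly as in Lemma~\ref{lem:ttottdl}, modify $M_1$ so that it keeps in its state the child number $\chi(p)\in[0,\m_\Delta]$ of the output node it is about to generate (this preserves the top-down and pruning properties), and then let $M$ have state set $P\times Q$ and initial states $P_0\times Q_0$, where $(p,q)$ means that $M$ is running the copy of $M_1$ in state $p$ and that the output node this copy will generate next is to be read by the copy of $M_2$ in state $q$. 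Each move rule $\tup{p,\sigma,j,T}\to\tup{p',\alpha}$ of $M_1$ becomes the move rules $\tup{(p,q),\sigma,j,T}\to\tup{(p',q),\alpha}$ of $M$ for all $q\in Q$; an output rule $\rho\colon \tup{p,\sigma,j,T}\to\delta(\tup{p_1,\alpha_1},\dots,\tup{p_k,\alpha_k})$ of $M_1$ and a rule $\tup{q,\delta,\chi(p)}\to\zeta'$ of $M_2$ (same $\delta$; it carries no regular test since $M_2$ is local) are merged into a single rule $\tup{(p,q),\sigma,j,T\cap T_{\rho,q}}\to\zeta''$ of $M$, where $\zeta''$ is obtained from $\zeta'$ by replacing every $\tup{q',\down_\ell}$ by $\tup{(p_\ell,q'),\alpha_\ell}$ and the test $T_{\rho,q}$ is described next. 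Up to $T_{\rho,q}$, this is verbatim the simplified construction of Lemma~\ref{lem:ttottdl}.

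The point where the lemma departs from its deterministic ancestor — and, I expect, the main obstacle — is the treatment of the children of an output node that $M_2$ deletes. For a deterministic $M_1$ it was enough in Lemma~\ref{lem:ttottdl} to restrict the domain of $M$ to $\dom(M_1)$, since then success of $M_1$ on $t$ already forces every copy of $M_1$, including a deleted one, to produce its output; but when $M_1$ is nondeterministic a partial computation that produces only the part of the intermediate tree surviving the pruning need not extend to a complete parallel computation, so the naive merged construction would accept too much. To repair this I would give $M$ regular look-around certifying the deleted subtrees. The key point is that for each $p\in P$ the node test $S_p=\{(t,u)\mid \exists\, r\in T_\Delta\colon \tup{p,u}\Rightarrow^*_{M_1,t}r\}$ over $\Sigma$ is regular: build an auxiliary \ttt\ that, on a marked input $\tmark(t,u)$, walks to the marked node and then simulates $M_1$ from state $p$ there, treating each label $(\sigma,b)$ as $\sigma$ and each test $T$ of $M_1$ as $\mu(T)$; its domain, intersected with the regular set $\{\tmark(t,u)\mid t\in T_\Sigma,\, u\in\cN(t)\}$, equals $\tmark(S_p)$, which is therefore regular by Corollary~\ref{cor:regdom}. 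Shifting such a test along a fixed instruction $\alpha$ again yields a regular test, so in the merged rule built from $\rho$ I would take $T_{\rho,q}$ to be the conjunction, over every child $r\in[1,k]$ of $\delta$ that does \emph{not} occur in $\zeta'$, of the regular test ``$(t,\alpha_r(u))\in S_{p_r}$''; then $T\cap T_{\rho,q}$ is again a regular test.

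With this repair in place I expect the correctness proof to go through in both directions. For $\tau_{M_1}\circ\tau_{M_2}\subseteq\tau_M$ one interleaves a complete parallel computation of $M_1$ on $t$ producing $s$ with a complete parallel computation of $M_2$ on $s$ producing $s'$, noting that a deleted copy $\tup{p_r,\alpha_r(u)}$ really does produce a subtree of $s$ in that computation, so $T_{\rho,q}$ is satisfied. For the reverse inclusion one reads off, from a computation of $M$ on $t$ producing $s'$, a partial computation of $M_1$ on $t$ whose only ``holes'' are the deleted copies $\tup{p_r,\alpha_r(u)}$, completes each hole using the $S_{p_r}$-test it passed, glues the completions in to obtain some $s\in\tau_{M_1}(t)$, and observes that the $M_2$-steps of $M$ are a complete computation of $M_2$ on $s$ producing $s'$; hence $(t,s')\in\tau_{M_1}\circ\tau_{M_2}$. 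Since the repair is built in, no domain restriction of $M$ is required. Finally, the two ``in particular'' statements: if $M_1$ is top-down then so is $M$, since every instruction appearing in a rule of $M$ is an instruction of $M_1$ and the extra tests only touch left-hand sides; and if $M_1$ and $M_2$ are both pruning then every $\alpha_\ell$ is $\down_{i_\ell}$ with $i_1<\cdots<i_k$, so replacing $\tup{q',\down_\ell}$ by $\tup{(p_\ell,q'),\alpha_\ell}$ composes $M_2$'s monotone, possibly deleting selection of children with $M_1$'s, which keeps the strictly increasing form of output rules and the form $\tup{\cdot}\to\tup{\cdot,\down_i}$ of move rules — so $M$ is pruning. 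The only real work, as flagged, is to make the holes-and-gluing step of the correctness argument watertight and to check that building $S_p$ in via Corollary~\ref{cor:regdom} leaves the top-down and pruning shapes intact.
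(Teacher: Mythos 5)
Your proposal is correct and follows essentially the same route as the paper: the same product construction from Lemma~\ref{lem:ttottdl} in its stay-free form, with the same repair of adding a regular look-around test (built via Corollary~\ref{cor:regdom} from an auxiliary transducer that walks to the node and simulates $M_1$) certifying that each copy of $M_1$ deleted by the pruning $M_2$ could still produce an output tree. The paper likewise observes that only the deleted children need testing and that linearity of $M_2$ is what makes the gluing of independent completions sound.
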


\begin{proof}
Let $M_1=(\Sigma,\Delta,P,P_0,R_1)$ be a \ttt\ and $M_2=(\Delta,\Gamma,Q,Q_0,R_2)$ a local pruning \ttt. 
The construction of the transducer $M$ such that $\tau_M=\tau_{M_1}\circ\tau_{M_2}$ 
is a straightforward variant of the one in 
the last paragraph of the proof of Lemma~\ref{lem:ttottdl}.
This time, we do not verify at the start that the input tree is in the domain of $M_1$,
because it has to be checked at each step of $M$ that $M_1$ can produce an output tree,
in particular when $M_2$ deletes part of that output tree
(cf. the proof of~\cite[Lemma~2.9]{Eng77}). 

We define $M=(\Sigma,\Gamma,P\times Q,P_0\times Q_0,R)$ as follows. 
As in the proof of Lemma~\ref{lem:ttottdl} we assume that 
$M_1$ keeps track in its finite state of the child number of the output node to be generated,
through a mapping $\chi:P\to[0,\m_\Sigma]$.
Let $\tup{p,\sigma,j,T}\to\zeta$ be a rule in $R_1$.
As before, if $\zeta=\tup{p',\alpha}$, then $M$ has the rules 
$\tup{(p,q),\sigma,j,T}\to \tup{(p',q),\alpha}$ for every $q\in Q$. 
If $\zeta=\delta(\tup{p_1,\alpha_1},\dots,\tup{p_k,\alpha_k})$
and $\tup{q,\delta,i}\to \zeta'$ is a rule in $R_2$, 
with the same $\delta$ and with $\chi(p)=i$, 
then $M$ has the rule $\tup{(p,q),\sigma,j,T\cap T'}\to \zeta''$ 
where $\zeta''$ is obtained (as before) from $\zeta'$ by changing 
every $\tup{q',\down_\ell}$ into $\tup{(p_\ell,q'),\alpha_\ell}$,
and the node test $T'$ consists of all $(t,u)$ such that for every $\ell\in[1,k]$
there exists a computation $\tup{p_\ell,\alpha_\ell(u)} \Rightarrow^*_{M_1,t} s_\ell$
for some $s_\ell\in T_\Delta$. Thus, the only difference with the proof of Lemma~\ref{lem:ttottdl}
is the additional test $T'$. In fact, it suffices that $T'$ tests every $\ell\in[1,k]$
for which $\down_\ell$ does not occur in $\zeta'$. That guarantees the existence of an output tree 
of $M_1$ on which $M_2$ is simulated by $M$. It should be clear that 
$T'$~is regular by Corollary~\ref{cor:regdom}:
it can be written as $\bigcap_{\ell\in[1,k]}T'_\ell$ where $\tmark(T'_\ell)$ is the domain of a \ttt\ 
that walks to node $\alpha_\ell(u)$ and then simulates $M_1$ starting in state $p_\ell$. 

We note that this construction does not work for an arbitrary top-down $M_2$ 
without stay-instructions. If some $\down_\ell$ occurs twice in $\zeta'$, 
then there are two occurrences $\tup{(p_\ell,q'),\alpha_\ell}$ and $\tup{(p_\ell,q''),\alpha_\ell}$
in $\zeta''$ and it is not guaranteed (as it should) that from both occurrences 
the same output subtree of $M_1$ is generated by $M$. 
We finally note that, as in the proof of Lemma~\ref{lem:ttottdl}, 
if both $M_1$ and $M_2$ are pruning, then so is $M$. 
\qed
\end{proof}

We obtain our second composition result from Lemma~\ref{lem:d=ds},
the second inclusion of Lemma~\ref{lem:nondetrel}, and two applications of 
Lemma~\ref{lem:nondetttottdl} (taking into account that \mbox{$\NTTRL\subseteq\NTTPL$}).

\begin{theorem}\label{thm:nondetrightcomp}
$\NTT \circ \NTTP \subseteq \NTT$. 
In particular $\NTTD \circ \NTTP \subseteq \NTTD$, and  
$\NTTP$~is closed under composition. 
\end{theorem}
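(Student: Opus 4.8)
The plan is to obtain the theorem simply by assembling the preceding lemmas, so no new transducer construction is needed. Fix $\tau_1\in\NTT$ and $\tau_2\in\NTTP$. First I would use the equality $\NTTP=\NTTPS$ from Lemma~\ref{lem:d=ds} (whose construction preserves the pruning property) to assume that $\tau_2$ is realized by a \emph{sub-testing} pruning \ttt, i.e.\ $\tau_2\in\NTTPS$. Then I would apply the second inclusion of Lemma~\ref{lem:nondetrel}, namely $\NTTPS\subseteq\NTTRL\circ\NTTPL$, to factor $\tau_2=\rho\circ\tau_2'$ with $\rho\in\NTTRL$ and $\tau_2'\in\NTTPL$. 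The one elementary observation that makes everything fit is that a local relabeling \ttt\ is in particular a local pruning \ttt, so $\NTTRL\subseteq\NTTPL$ and hence $\rho\in\NTTPL$ as well.

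Two applications of Lemma~\ref{lem:nondetttottdl} then finish the argument. Since $\tau_1\in\NTT$ and $\rho\in\NTTPL$, that lemma gives $\tau_1\circ\rho\in\NTT\circ\NTTPL\subseteq\NTT$; applying it once more to $\tau_1\circ\rho$ and $\tau_2'\in\NTTPL$, and using associativity of composition, yields
\[
\tau_1\circ\tau_2 \;=\; (\tau_1\circ\rho)\circ\tau_2' \;\in\; \NTT\circ\NTTPL \;\subseteq\; \NTT .
\]
This proves $\NTT\circ\NTTP\subseteq\NTT$.

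For the two refinements I would rerun the same two-step computation while tracking, respectively, the top-down and the pruning attribute; note that the decomposition of $\tau_2$ leaves $\tau_1$ untouched. If $\tau_1\in\NTTD$, then the refinement $\NTTD\circ\NTTPL\subseteq\NTTD$ of Lemma~\ref{lem:nondetttottdl} gives $\tau_1\circ\rho\in\NTTD$ and then $(\tau_1\circ\rho)\circ\tau_2'\in\NTTD\circ\NTTPL\subseteq\NTTD$, so $\NTTD\circ\NTTP\subseteq\NTTD$. If moreover $\tau_1\in\NTTP$, then, using $\rho,\tau_2'\in\NTTPL$, the same two steps give $\tau_1\circ\tau_2\in\NTTP\circ\NTTPL\circ\NTTPL\subseteq\NTTP$, so $\NTTP$ is closed under composition.

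I do not anticipate a genuine obstacle: the substantive work has already been carried out in Lemma~\ref{lem:nondetttottdl}, where the nondeterministic product construction must, at each output step of the first transducer, re-check by a regular test (justified via Corollary~\ref{cor:regdom}) that every subtree the pruning transducer deletes could in fact have been produced. The only care needed here is bookkeeping: checking that Lemmas~\ref{lem:d=ds} and~\ref{lem:nondetrel} return transducers that are local and pruning (resp.\ relabeling) in exactly the form Lemma~\ref{lem:nondetttottdl} expects on the right, and that the input/output alphabets of $\tau_1$, $\rho$, $\tau_2'$ match up so that the composition $\tau_1\circ\rho\circ\tau_2'$ is meaningful.
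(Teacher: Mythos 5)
Your proposal is correct and matches the paper's own proof exactly: the paper also derives the theorem from Lemma~\ref{lem:d=ds}, the second inclusion of Lemma~\ref{lem:nondetrel}, the observation that $\NTTRL\subseteq\NTTPL$, and two applications of Lemma~\ref{lem:nondetttottdl}, with the refinements for $\NTTD$ and $\NTTP$ tracked through the corresponding special cases of that lemma. Nothing is missing.
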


Hence, also in a composition 
of two nondeterministic \ttt's we may always assume that 
the second one is local: 
$\NTT \circ \NTT \subseteq \NTT \circ \TTR\circ \NTTL \subseteq \NTT\circ \NTTL$
by Lemma~\ref{lem:msorel} and Theorem~\ref{thm:nondetrightcomp}, respectively. 

The range of a deterministic \ttt\ $M$ can be restricted to a regular tree language $L$
by restricting its domain to $\tau_M^{-1}(L)$, 
using Lemmas~\ref{lem:domrestr} and~\ref{lem:inversett}.
For a nondeterministic~\ttt\ we can use the next corollary.

\begin{corollary}\label{cor:ranrestr}
The translation $\tau'= \{(t,s)\in\tau\mid s\in L\}$ is in $\NTT$
for every $\tau\in\NTT$ and $L\in\REGT$.
If $\tau$ is in $\NTTD$ or $\NTTP$, then so is $\tau'$.
\end{corollary}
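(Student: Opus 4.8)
The plan is to realize $\tau'$ as the composition of $\tau$ with the partial identity on $L$, and then to quote the composition closure results of Theorem~\ref{thm:nondetrightcomp}. Throughout, let $M$ be a \ttt\ with $\tau_M=\tau$ and output alphabet $\Delta$. Since $(t,s)\in\tau$ forces $s\in T_\Delta$, we have $s\in L$ iff $s\in L\cap T_\Delta$; as $L\cap T_\Delta$ is again regular, we may and do assume $L\subseteq T_\Delta$. Write $\mathrm{id}_L=\{(s,s)\mid s\in L\}$ for the partial identity on $L$, so that $\tau'=\tau\circ\mathrm{id}_L$.

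First I would exhibit $\mathrm{id}_L$ as a deterministic relabeling (hence pruning) \ttt. Take the trivial relabeling \dttt\ $M_{\mathrm{id}}=(\Delta,\Delta,\{q\},q,R)$ whose rule set $R$ consists of all rules
\[
\tup{q,\delta,j}\to \delta(\tup{q,\down_1},\dots,\tup{q,\down_m})
\]
for $\delta\in\Delta^{(m)}$ and $j\in[0,\m_\Delta]$. Clearly $\tau_{M_{\mathrm{id}}}$ is the identity on $T_\Delta$, and $M_{\mathrm{id}}$ is a relabeling \dttt, hence pruning. Now apply Lemma~\ref{lem:domrestr} to $M_{\mathrm{id}}$ and $L$: there is a \ttt\ $M'$ with
\[
\tau_{M'}=\{(t,s)\in\tau_{M_{\mathrm{id}}}\mid t\in L\}=\mathrm{id}_L ,
\]
and the construction preserves determinism and the relabeling, and hence the pruning, property. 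Therefore $\mathrm{id}_L\in\TTP\subseteq\NTTP$.

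It remains to combine this with Theorem~\ref{thm:nondetrightcomp}. Since $\tau'=\tau\circ\mathrm{id}_L$ with $\mathrm{id}_L\in\NTTP$, the inclusion $\NTT\circ\NTTP\subseteq\NTT$ gives $\tau'\in\NTT$ whenever $\tau\in\NTT$; the inclusion $\NTTD\circ\NTTP\subseteq\NTTD$ gives $\tau'\in\NTTD$ whenever $\tau\in\NTTD$; and the closure of $\NTTP$ under composition gives $\tau'\in\NTTP$ whenever $\tau\in\NTTP$. This proves the corollary.

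I do not expect a genuine obstacle here: the argument is essentially a one-line reduction to Theorem~\ref{thm:nondetrightcomp}. The only point requiring a moment's care is the observation that restricting the domain of the trivial identity \ttt\ to $L$ yields a \emph{pruning} (indeed relabeling) \ttt\ — which is exactly what Lemma~\ref{lem:domrestr} guarantees — together with the routine bookkeeping that the intermediate alphabet of the composition $\tau\circ\mathrm{id}_L$ is the output alphabet $\Delta$ of $M$.
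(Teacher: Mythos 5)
Your proof is correct and follows essentially the same route as the paper: write $\tau'=\tau\circ\mathrm{id}_L$ with $\mathrm{id}_L$ realized by a relabeling (hence pruning) \ttt, then invoke the three parts of Theorem~\ref{thm:nondetrightcomp}. The only (immaterial) difference is how $\mathrm{id}_L$ is realized — the paper builds a nondeterministic \emph{local} relabeling \ttt\ that guesses a run of a bottom-up automaton for $L$, while you take the trivial identity \dttt\ and restrict its domain via Lemma~\ref{lem:domrestr}, obtaining a deterministic relabeling \ttt\ with a regular test at the root; both are legitimate pruning \ttt's, so either works.
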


\begin{proof}
Let $\Sigma$ be the output alphabet of $\tau$ and let $A=(\Sigma,P,F,\delta)$ 
be a bottom-up finite-state tree automaton such that $L(A)=L$.
Obviously $\tau'=\tau\circ\tau_L$ where $\tau_L$ is the identity on $L$, 
and obviously $\tau_L\in \NTTRL$: it is realized by the local relabeling~\ttt\
$(\Sigma,\Sigma,P,F,R)$ where $R$ consists of all rules 
\[
\tup{p,\sigma,j}\to \sigma(\tup{p_1,\down_1},\dots,\tup{p_m,\down_m})
\]
such that $\delta(\sigma,p_1,\dots,p_m)=p$. 
By Theorem~\ref{thm:nondetrightcomp}, $\tau'$ satisfies the requirements.  
\qed
\end{proof}

Our third composition result is that
deterministic \ttt's are closed under left-composition with (deterministic) single-use \ttt's. 
This is a variant of one of the main results of~\cite{Gan,GanGie,Gie}  
for (a variant of) attribute grammars, cf. the last paragraph of~\cite{BloEng}.
It is proved for attributed tree transducers in~\cite[Theorem~3]{Kuh98}
(see also~\cite[Satz~6.5]{Kuh97}).

\begin{lemma}\label{lem:ttsuottl}
$\TTSU\circ\TTL \subseteq \TT$.
\end{lemma}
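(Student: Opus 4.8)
The statement is $\TTSU\circ\TTL\subseteq\TT$: if $M_1$ is a single-use \dttt\ (hence deterministic, hence of linear size increase by Lemma~\ref{lem:sizeheight}) and $M_2$ is a local \dttt, then $\tau_{M_1}\circ\tau_{M_2}$ is realized by a single \dttt. The plan is to reuse the product construction of Lemma~\ref{lem:ttottdl} (second composition result), which already realizes $\TT\circ\TTDL\subseteq\TT$ — but that construction built the composed transducer $M$ so that it re-simulates, from scratch, the computation of $M_1$ that generates the current output node of $M_2$ each time $M_2$ moves \emph{down}. Since $M_2$ here may walk \emph{up} as well, the naive analogue would need $M$ to re-run $M_1$ to reach an \emph{arbitrary} node of the output tree $s$, which it cannot do deterministically from a node of the input tree without more information. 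The single-use property of $M_1$ is exactly what repairs this: each configuration $\langle p,u\rangle$ of $M_1$ occurs at most once in the (unique) derivation tree $d=d_{M_1,t}$, so a node $v$ of the output tree $s$ is uniquely determined by the $M_1$-configuration that produced it, and conversely the "parent output node" of $v$ corresponds to a configuration that $M$ can recover.

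\textbf{Key steps, in order.} First I would normalise $M_1$ by the transformations discussed at the end of Section~\ref{sec:trans} and in the proof of Lemma~\ref{lem:ttottdl}: make its output rules use only stay-instructions, and make it keep the child number $\chi(p)$ of the output node to be generated in its state; both preserve the single-use property. Second, I would set up a correspondence, for $t\in\dom(M_1)$, between nodes $v$ of $s=\tau_{M_1}(t)$ and the output rule instances of $M_1$ in $d$: by single-useness each output rule of $M_1$ fires at most once per configuration, so I can label each $\Delta$-node $v$ of $s$ by the pair (the output rule $\rho$ used, the node $u$ of $t$ at which it fired). Third — the crux — for the composed transducer $M$ to simulate a step of $M_2$ that currently sits at an output node $v$ and executes $\mathrm{up}$, $M$ must be able to walk in $t$ from the $M_1$-configuration that produced $v$ to the $M_1$-configuration that produced the parent of $v$; and when $M_2$ executes $\mathrm{down}_i$, $M$ must reach the configuration producing the $i$-th child. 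The downward case is exactly as in Lemma~\ref{lem:ttottdl}: the output rule $\rho$ stored in the state of $M$ names the continuation states $p_\ell$ and instructions $\alpha_\ell$. For the upward case, I would have $M$ carry enough history in its finite control — essentially, reconstruct "which output rule of $M_1$, fired at which ancestor input node, produced the parent of the current output node" — which is possible because $M_1$ is deterministic and single-use: there is a unique such predecessor, and $M$ can locate it by a regular look-around test on $t$ (the domain of an auxiliary \ttt, which is regular by Corollary~\ref{cor:regdom}), or, more simply, by keeping a bounded "return path" in the state since each configuration of $M_1$ is visited only once. Then $M$ takes $\tau_{M_1}\circ\tau_{M_2}$ as required; determinism is immediate since both $M_1$ and $M_2$ are deterministic.

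\textbf{Main obstacle.} The hard part is implementing the "walk to the parent output node" deterministically with only finitely much state. The right way, I expect, is to avoid storing an unbounded path: instead, use that $M_1$ is single-use so that each node of its derivation tree $d$ (equivalently each configuration that actually occurs) has a \emph{unique} computation history within $d$ up to the previous output rule, and encode the relevant finite summary of that history as a regular look-around test on the marked input tree $\tmark(t,u)$ — this is legitimate by Lemma~\ref{lem:inversett} / Corollary~\ref{cor:regdom}, exactly the "a \ttt\ can use \ttt's to look around" principle the paper has already set up. With that, $M$ can, from configuration $\langle p,u\rangle$ producing output node $v$, determine by a regular test the output rule $\rho'$ and the input node $u'$ (an ancestor of $u$, reachable by a fixed depth-first walk) that produced the parent of $v$, switch to simulating $M_2$ there, and proceed; the construction then goes through and the routine verification that $\tau_M=\tau_{M_1}\circ\tau_{M_2}$ is by induction on the $M_2$-computation, mirroring Lemma~\ref{lem:ttottdl}.
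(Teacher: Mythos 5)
Your overall strategy is the one the paper uses: extend the product construction of Lemma~\ref{lem:ttottdl}, and exploit single-useness to guarantee that each configuration $\tup{p,u}$ of $M_1$ occurs at most once in the derivation tree $d$ of $G_{M_1,t}$, so that the output node currently scanned by $M_2$ determines a unique $M_1$-configuration with a unique parent configuration in $d$; the appeal to Corollary~\ref{cor:regdom} is also the right tool. However, the step you flag as the main obstacle is exactly where your description would fail. The input node $u'$ at which $M_1$ produced the \emph{parent} of the current output node $v$ is in general \emph{not} an ancestor of the current input node $u$ and is not reachable by any fixed walk: between two consecutive output rules, $M_1$ may execute an arbitrarily long chain of move rules that wanders all over $t$, so $u'$ can be any node of $t$, and the chain in $d$ linking $\tup{p,u}$ to the nearest output-producing ancestor configuration can have length up to $\#(P)\cdot|t|$. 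For the same reason your fallback of ``keeping a bounded return path in the state'' cannot work --- that path is unbounded. Finally, a regular look-around test answers a yes/no question at the current node; it cannot by itself hand $M$ the location of a distant node $u'$, so a single ``locate the predecessor'' test does not suffice.

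What is needed --- and what the paper's proof actually does --- is to backtrack \emph{one computation step at a time}. For all $p,p'\in P$ and every instruction $\alpha$ one defines a regular test $T_{p,p',\alpha}$ that holds at $(t,u)$ if and only if $\tup{p',\alpha(u)}$ is the (unique, by single-useness) parent of $\tup{p,u}$ in $d$; this test is obtained as the domain of an auxiliary \ttt\ that nondeterministically traces a branch of $d$ while remembering the previous configuration in its state, hence is regular by Corollary~\ref{cor:regdom}. When $M_2$ issues $\up$, the composed transducer $M$ enters a backward mode (the states $\fin_{p,q}$ and $\back_{p,q}$ in the paper) and repeatedly applies these tests, each time moving by the single instruction $\alpha$ selected by the test, thereby retracing the chain of move rules of $M_1$ in reverse until it reaches a configuration to which an output rule was applied; that configuration produced the parent of $v$, and $M$ resumes the simulation of $M_2$ there. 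With this step-by-step mechanism substituted for your single ``locate $u'$'' test, the rest of your argument goes through as in Lemma~\ref{lem:ttottdl}.
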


\begin{proof}
Let $M_1=(\Sigma,\Delta,P,p_0,R_1)$ and $M_2=(\Delta,\Gamma,Q,q_0,R_2)$
be a single-use \dttt\ and a local \dttt, respectively. 
We extend the proof of Lemma~\ref{lem:ttottdl} to the case that $M_2$ is an arbitrary local \dttt.
Thus, we have to deal with the fact that now $M_2$ can also move up on the output tree of $M_1$. 
Let $(t,s)\in \tau_{M_1}$, and let $d$ be the derivation tree 
of the computation $\tup{p_0,\rt_t}\Rightarrow^*_{M_1,t} s$. 
Since $M_1$ is single-use, we can identify each node of $d$ that is labeled by a configuration 
with that configuration, because a configuration $\tup{p,u}$ of $M_1$ occurs at most once in~$d$.
Suppose that $M_1$, in configuration $\tup{p,u}$ on~$t$, has generated a node $v$ of~$s$.
When $M_2$ executes an up-instruction at node~$v$, 
the new transducer $M$ has to backtrack on the computation of $M_1$,
back to the moment that the parent of $v$ in $s$ was generated by $M_1$. 
Thus, starting with the configuration 
$\tup{p,u}$ of $M_1$, $M$ has to determine the ancestors of $\tup{p,u}$ in $d$,
and stop at the first ancestor that is a configuration generating an output node. 
Since $M_1$ is single-use, each configuration $\tup{p,u}$ 
has a unique parent configuration $\tup{p',u'}$ in $d$.
That allows us to find $\tup{p',u'}$ by a regular test, as follows.

For every $p,p'\in P$ and every instruction $\alpha$ of $M_1$, we will define a regular test 
$T_{p,p',\alpha}$ such that for every $t\in\dom(M_1)$ and $u\in \cN(t)$, 
$(t,u)\in T_{p,p',\alpha}$ if and only if 
$\tup{p',\alpha(u)}$ is the parent of $\tup{p,u}$ in the derivation tree of the computation 
$\tup{p_0,\rt_t}\Rightarrow^*_{M_1,t} \tau_{M_1}(t)$.\footnote{For
the definition of $\alpha(u)$ see Section~\ref{sec:trans}.
}
We will construct a \ttt\ $N$ and define $T_{p,p',\alpha}=\{(t,u)\mid \tmark(t,u)\in\dom(N)\}$.
Then $T_{p,p',\alpha}$ is regular by Corollary~\ref{cor:regdom}.
To be able to describe $N$, we change notation and
consider the node test $T_{\bar{p},\bar{p}',\bar{\alpha}}$
for $\bar{p},\bar{p}'\in P$ and instruction $\bar{\alpha}$. 

Let $M'_1=(\Sigma,\nothing,P,\{p_0\},R'_1)$ be the nondeterministic \ttt\ obtained 
from $M_1$ by changing every output rule 
$\tup{p,\sigma,j,T}\to\delta(\tup{p_1,\alpha_1},\dots,\tup{p_k,\alpha_k})$
into the move rules $\tup{p,\sigma,j,T}\to\tup{p_i,\alpha_i}$ for every $i\in[1,k]$. 
Intuitively, for an input tree $t\in\dom(M_1)$, the tree-walking automaton $M'_1$ 
follows an arbitrary path in the unique derivation tree $d\in L(G^\mathrm{der}_{M_1,t})$,
from the root of $d$ down to the leaves. Whenever $M_1$~branches, 
$M'_1$ nondeterministically follows one of those branches. 
The transducer $N$, which is a variant of $M'_1$, has states $(p,p',\alpha)$ with $p,p',\alpha$ as above. 
The initial state is $(p_0,-,-)$, with the second and third component fixed, but irrelevant
(e.g., $(p_0,p_0,\stay)$). On a tree $\tmark(t,u)$, $N$~uses the state $(p,p',\alpha)$ to simulate 
the computations of $M'_1$ in state $p$ on~$t$,
but additionally keeps the previous configuration of $M'_1$ in its finite state, 
as the pair $(p',\alpha)$. When it arrives at the marked node $u$ 
in state $(\bar{p},\bar{p}',\bar{\alpha})$, it outputs a symbol of rank~0. 
Formally, let $\tup{p,\sigma,j,T}\to\zeta$ be a rule in $R'_1$,
let $p'\in P$, let $\alpha$ be an instruction, and let $b\in\{0,1\}$. 
Then $N$ has the rule $\tup{(p,p',\alpha),(\sigma,b),j,\mu(T)}\to\zeta'$
where $\tup{\tilde{p},\down_i}'=\tup{(\tilde{p},p,\up),\down_i}$, 
$\tup{\tilde{p},\up}'=\tup{(\tilde{p},p,\down_j),\up}$, and 
$\tup{\tilde{p},\stay}'=\tup{(\tilde{p},p,\stay),\stay}$
for every $\tilde{p}\in P$ and $i\in[1,\rank(\sigma)]$.
Additionally, $N$ has the rule $\tup{(\bar{p},\bar{p}',\bar{\alpha}),(\sigma,1),j,\mu(T)}\to\top$,
where $\top$ is its unique output symbol, of rank~0.
Thus, if the tree-walking automaton $N$ arrives in state $(\bar{p},\bar{p}',\bar{\alpha})$ 
at the marked node $u$, it can accept $\tmark(t,u)$. 
Hence, for every $t\in\dom(M_1)$, $N$ accepts $\tmark(t,u)$ if and only if
$\tup{\bar{p}',\bar{\alpha}(u)}$ is the parent of $\tup{\bar{p},u}$ 
in the derivation tree of the computation 
$\tup{p_0,\rt_t}\Rightarrow^*_{M_1,t} \tau_{M_1}(t)$.

The transducer $M$ is an extension of the one in the proof of Lemma~\ref{lem:ttottdl}.
It additionally has states $\fin_{p,q}$ and $\back_{p,q}$ 
to simulate the first and the following backward steps of the computation of $M_1$. 
Its rules are obtained as follows. 
First, it has the same rules that simulate (the forward computation of) $M_1$.
Second, the rules of $M$ that simulate $M_2$ are extended in such a way that, 
to obtain $\zeta'$ from $\zeta$, one has to change additionally 
every $\tup{q',\up}$ into $\tup{\fin_{p,q'},\stay}$.
Third, $M$ additionally has rules that simulate the backward computation of~$M_1$.
For each state $\fin_{p,q}$ it has all rules 
$\tup{\fin_{p,q},\sigma,j,T_{p,p',\alpha}} \to \tup{\back_{p',q},\alpha}$
(where the tests on~$\sigma$ and $j$ are irrelevant, 
because $M$ arrived in state $\fin_{p,q}$ by a stay-instruction).
For each state $\back_{p,q}$ it has the following rules. 
Let $\rho: \tup{p,\sigma,j,T}\to \zeta$ be a rule of~$M_1$. 
If $\rho$ is a move rule, then $M$ has all rules 
$\tup{\back_{p,q},\sigma,j,T\cap T_{p,p',\alpha}}\to \tup{\back_{p',q},\alpha}$.
If $\rho$ is an output rule, then $M$ has the rule
$\tup{\back_{p,q},\sigma,j,T}\to \tup{(\rho,q),\stay}$.
\qed
\end{proof}

\begin{theorem}\label{thm:ttsuott}
$\TTSU\circ\TT \subseteq \TT$.
\end{theorem}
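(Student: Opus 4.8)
The plan is to reduce Theorem~\ref{thm:ttsuott} to the two ingredients already available in this section: the closure property $\TTSU\circ\TTL\subseteq\TT$ of Lemma~\ref{lem:ttsuottl}, and the decomposition $\TT\subseteq\TTR\circ\TTL$ of Lemma~\ref{lem:msorel} (its deterministic instance, in which the first component is a deterministic relabeling \ttt). Granting the auxiliary claim that a single-use \dttt\ can absorb a deterministic relabeling \ttt\ on its right, i.e.\ that $\TTSU\circ\TTR\subseteq\TTSU$, the theorem follows: for $\tau_1\in\TTSU$ and $\tau_2\in\TT$ we have $\tau_2\in\TTR\circ\TTL$ by Lemma~\ref{lem:msorel}, hence $\tau_1\circ\tau_2\in\TTSU\circ\TTR\circ\TTL\subseteq\TTSU\circ\TTL\subseteq\TT$, the last inclusion by Lemma~\ref{lem:ttsuottl}. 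So the whole proof comes down to establishing $\TTSU\circ\TTR\subseteq\TTSU$.

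To prove $\TTSU\circ\TTR\subseteq\TTSU$, let $M_1$ be a single-use \dttt\ and $N$ a deterministic relabeling \ttt. First, by Lemma~\ref{lem:d=ds}, which preserves determinism and the relabeling property, we may assume that $N$ is sub-testing. Next, apply Lemma~\ref{lem:inversemso} to the composition $\tau_{M_1}\circ\tau_N\in\TT\circ\NTTS$; since that construction keeps the first transducer single-use and keeps the second transducer deterministic and relabeling, we may further assume that $N$ is a \emph{local} deterministic relabeling \ttt, hence in particular a local top-down \dttt\ using no stay-instructions. We are then exactly in the setting of Lemma~\ref{lem:ttottdl} (its simplified form, valid when the second transducer is local and top-down): the product construction there produces a \dttt\ $M$ with $\tau_M=\tau_{M_1}\circ\tau_N$, whose states are pairs $(p,q)$ recording a state $p$ of $M_1$ and a state $q$ of $N$. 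Recall also that the preliminary modification of $M_1$ in that proof, which makes it track the output child number $\chi$, preserves the single-use property.

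It remains to check that $M$ is single-use, which is the real content of the argument. Since $N$ is a relabeling, it neither deletes nor creates output nodes, so on any input tree $t\in\dom(M_1)$ the transducer $M$ performs precisely the same sequence of moves and output steps as $M_1$; the only role of the $N$-component is to relabel each produced output symbol, and the state $q$ of $N$ at the current output frontier node is a function of that node because $N$ is deterministic. Consequently the derivation tree of $M$ on $t$ is isomorphic, as an ordered tree, to the derivation tree of $M_1$ on $t$, with each configuration $\tup{p,u}$ refined to $\tup{(p,q),u}$ for the unique $q$ determined by the corresponding output position. Hence the node-labeling of $M$'s derivation tree is injective if and only if that of $M_1$'s is; since $M_1$ is single-use the latter is injective, so no configuration occurs twice in $M$'s derivation tree, i.e.\ $M$ is single-use.

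I expect this last step to be the part needing the most care: making precise the ``isomorphism of derivation trees'' and the claim that the $N$-state component is a function of the output position, so that the product construction provably cannot manufacture a repeated configuration out of non-repeated ones. Everything else is routine bookkeeping with the cited lemmas.
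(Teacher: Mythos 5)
Your proposal is correct and follows essentially the same route as the paper: decompose the second transducer via Lemmas~\ref{lem:msorel}, \ref{lem:d=ds}, and~\ref{lem:inversemso} to get $\TTSU\circ\TT\subseteq\TTSU\circ\TTRL\circ\TTL$, absorb the local relabeling into the single-use transducer by the product construction of Lemma~\ref{lem:ttottdl} (checking that single-use is preserved), and finish with Lemma~\ref{lem:ttsuottl}. Your more detailed justification that the product is single-use is a correct elaboration of the paper's one-line remark that $M$ visits a node in state $(p,q)$ only when $M_1$ visits it in state $p$.
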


\begin{proof}
It follows from Lemmas~\ref{lem:msorel}, \ref{lem:d=ds}, and~\ref{lem:inversemso} that 
\[
\TTSU\circ\TT \subseteq \TTSU\circ\TTRL\circ\TTL.
\]
Thus, by Lemma~\ref{lem:ttsuottl}, it suffices to show that $\TTSU\circ\TTRL \subseteq\TTSU$. 
For a single-use \dttt\ $M_1$ and a local relabeling \dttt\ $M_2$, 
consider the construction of the \dttt~$M$ in the last paragraph of the proof of Lemma~\ref{lem:ttottdl}.
It should be clear that $M$ is single-use: if $M_1$ visits an input node in state $p$,
then~$M$ visits that node in state~$(p,q)$ for some $q$. 
\qed
\end{proof}

It can be proved that $\TTSU$ is closed under composition, which also follows from 
Proposition~\ref{pro:mso=ttsu} in the next section. The inclusion $\TTSU\circ\TTRL \subseteq\TTSU$
in the previous proof is a special case of that.

\section{Macro and MSO}\label{sec:dmtmso}

In this section we collect some results on 
the connection between \ttt's, 
macro tree transducers (in short \mt's) and \abb{mso} tree transducers.
They are taken from the literature or can easily be proved 
using results from the literature.
This section can be skipped on first reading,
except that the reader interested in linear size increase 
should glance at Corollaries~\ref{cor:lsi} and~\ref{cor:lsidec}.

\subsection{Macro Tree Transducers}\label{sub:macro}

Let \NMT\ denote the class of translations realized by \mt's, 
with unrestricted or outside-in (\abb{OI}) derivation mode, 
let \MT\ denote the subclass realized by deterministic \mt's,
and let \tMT\ denote the class of \emph{total} translations in \MT\ 
(see~\cite{EngVog85} where they are denoted by \NMTOI, \DMT, and \DTMT, respectively).
We first consider the relationship between deterministic \ttt's and \mt's. 

It is proved in~\cite[Lemma~49 and Corollary~51]{EngHooSam}
that $\TT\subseteq \MT$, and in~\cite[Theorem~8.22]{thebook} 
(see also~\cite[Corollary~51]{EngHooSam}) that $\MT = \TTDL\circ \TTL$. 
Here we prove the following variant.

\begin{lemma}\label{lem:ttvsmt}
$\TT\subseteq \MT=\TTD\circ \TT$.
\end{lemma}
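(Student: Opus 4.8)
The plan is to prove the equalities in $\TT\subseteq\MT=\TTD\circ\TT$ by combining the already-established inclusions with standard decomposition and absorption techniques.

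\textbf{Step 1: $\TT\subseteq\TTD\circ\TT$.} This is immediate from Lemma~\ref{lem:msorel}, which gives $\NTT\subseteq\TTR\circ\NTTL$ preserving determinism. A deterministic relabeling \ttt\ is in particular a deterministic top-down \ttt, so $\TT\subseteq\TTR\circ\TTL\subseteq\TTD\circ\TT$, since $\TTL\subseteq\TT$ (every local \ttt\ is a \ttt). Actually this also follows trivially from the fact that the identity relabeling is in $\TTD$, but using Lemma~\ref{lem:msorel} matches the style used elsewhere.

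\textbf{Step 2: $\TTD\circ\TT\subseteq\MT$.} Here I would use the cited results $\TT\subseteq\MT$ (from \cite{EngHooSam}) and $\MT=\TTDL\circ\TTL$ (from \cite{thebook}), together with the composition closure of $\TTD$ in Theorem~\ref{thm:rightcomp}. First, by Step 1's tools (Lemmas~\ref{lem:msorel}, \ref{lem:d=ds}) every \dttt\ decomposes, so a composition of a \dttt\ with a \dttt\ lands in $\TTD\circ\TTDL\circ\TTL$; but more directly, $\TTD\circ\TT\subseteq\TTD\circ\TTDL\circ\TTL$ using $\TT\subseteq\TTDS\circ\TTL\subseteq\TTDL\circ\TTL$ (from the remark after Lemma~\ref{lem:d=ds} together with \cite{EngHooSam}, or just $\TT\subseteq\TTD\circ\TTL$ from Lemma~\ref{lem:msorel}). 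Then absorb the leading $\TTD\circ\TTDL\subseteq\TTD$ by Theorem~\ref{thm:rightcomp}, obtaining $\TTD\circ\TT\subseteq\TTD\circ\TTL\subseteq\MT$, where the last step invokes $\MT=\TTDL\circ\TTL\supseteq\TTD\circ\TTL$ is the wrong direction — so instead I note $\TTD\circ\TTL\subseteq\TT\circ\TT\subseteq\TT\subseteq\MT$ using $\TT\circ\TT\subseteq\TT$ (which follows from Theorem~\ref{thm:rightcomp} and Lemma~\ref{lem:msorel}, as remarked after that theorem) and $\TT\subseteq\MT$ from \cite{EngHooSam}. So $\TTD\circ\TT\subseteq\TT\circ\TT\subseteq\TT\subseteq\MT$.

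\textbf{Step 3: $\MT\subseteq\TTD\circ\TT$.} This is the step requiring the cited characterization $\MT=\TTDL\circ\TTL$. Since $\TTDL\subseteq\TTD$ and $\TTL\subseteq\TT$, we get $\MT=\TTDL\circ\TTL\subseteq\TTD\circ\TT$ immediately.

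Combining: $\TT\subseteq\TTD\circ\TT$ (Step 1), $\TTD\circ\TT\subseteq\MT$ (Step 2), and $\MT\subseteq\TTD\circ\TT$ (Step 3) together give $\TT\subseteq\MT=\TTD\circ\TT$. The only genuine external input is the pair of facts from \cite{EngHooSam} and \cite{thebook}; everything else is bookkeeping with the decomposition lemmas of Section~\ref{sec:basic} and the composition theorems of Section~\ref{sec:comp}. The main thing to be careful about is keeping the direction of each inclusion straight when shuffling $\TTD$, $\TTDL$, $\TTL$, and $\TT$ around, and ensuring that every absorption step (e.g. $\TTD\circ\TTDL\subseteq\TTD$) is actually licensed by Theorem~\ref{thm:rightcomp} rather than by a result that only holds in the local or nondeterministic setting. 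No step is expected to be a serious obstacle, since the heavy lifting ($\TT\subseteq\MT$, $\MT=\TTDL\circ\TTL$, closure of $\TTD$ under right-composition with $\TTD$) is all available.
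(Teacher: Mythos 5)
Your Step 1 and Step 3 are fine (Step 3 matches the paper, which derives $\MT\subseteq\TTD\circ\TTL$ from the cited fact $\tMT\subseteq\TTDL\circ\TTL$ plus domain restriction). The problem is Step 2. You reduce $\TTD\circ\TT\subseteq\MT$ to the chain $\TTD\circ\TTL\subseteq\TT\circ\TT\subseteq\TT\subseteq\MT$, and justify $\TT\circ\TT\subseteq\TT$ by "the remark after Theorem~\ref{thm:rightcomp}". That remark only gives $\TT\circ\TT\subseteq\TT\circ\TTL$, not $\TT\circ\TT\subseteq\TT$. The latter is in fact false: Proposition~\ref{pro:hier} shows $\TT\subsetneq\TT^2$ (witnessed by $\tau_{\mathrm{exp}}\circ\tau_{\mathrm{exp}}$, which has double-exponential height increase and so cannot be a single \dttt\ by Lemma~\ref{lem:sizeheight}). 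Indeed, the very lemma you are proving implies $\TTD\circ\TTL\supseteq\MT\supsetneq\TT$, so $\TTD\circ\TTL\subseteq\TT$ cannot hold either. Step 2 therefore collapses, and it cannot be patched by anything you have cited: Theorem~\ref{thm:rightcomp} only absorbs a top-down transducer on the \emph{right} of a \dttt, whereas here you need to absorb the leading $\TTD$ into the second transducer.

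The paper closes exactly this gap with machinery you do not invoke: it first proves $\TTD\circ\MT\subseteq\MT$ (via $\TTD=\TTDS\subseteq\DBQREL\circ\TTDL$ from~\cite{Eng77}, the cited inclusion $\TTDL\circ\MT\subseteq\MT$ of~\cite{EngVog85}, and closure of $\MT$ under regular look-ahead), and separately uses $\TTL\subseteq\MT$ from~\cite{EngMan03}. Then $\TTD\circ\TT\subseteq\TTD\circ\TTD\circ\TTL\subseteq\TTD\circ\TTL\subseteq\TTD\circ\MT\subseteq\MT$. So the missing ingredient in your proof is precisely the left-composition closure of $\MT$ under deterministic top-down transducers (or some equivalent fact about $\MT$); without it, no amount of shuffling $\TTD$, $\TTDL$, $\TTL$ and $\TT$ with the paper's right-composition results will yield $\TTD\circ\TT\subseteq\MT$.
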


\begin{proof}
We first show that $\TTD\circ \MT\subseteq \MT$.
By Lemma~\ref{lem:d=ds} it suffices to show that $\TTDS\circ \MT\subseteq \MT$.
The inclusion $\TTDL\circ\MT\subseteq\MT$ is proved in~\cite[Theorem~7.6(3)]{EngVog85}.
As also argued before~\cite[Theorem~7.5]{EngManLSI}, this implies the inclusion 
$\TTDS\circ\MT\subseteq\MT$ as follows. 
By~\cite[Theorem~2.6]{Eng77} $\TTDS \subseteq \DBQREL\circ\TTDL$,
where $\DBQREL$ is the class of deterministic bottom-up finite-state relabelings.
Hence $\TTDS\circ\MT \subseteq \DBQREL\circ\MT$. 
Since \MT\ is closed under regular look-ahead by~\cite[Theorem~6.15]{EngVog85},
it is straightforward to prove that $\DBQREL\circ\MT \subseteq \MT$, 
similar to the proof of~\cite[Lemma~6.17]{EngVog85}.

By Lemma~\ref{lem:msorel}, $\TT\subseteq \TTD\circ\TTL$. 
It is proved in~\cite[Theorem~35 for $n=0$]{EngMan03} that 
$\TTL\subseteq \MT$.\footnote{By mistake, 
\cite[Theorem~35]{EngMan03} is stated for $n\geq 1$ only. 
It also holds for $n=0$ by~\cite[Lemma~34 and Theorem~31]{EngMan03}.
}
Hence $\TT\subseteq \TTD\circ\TTL\subseteq  \TTD\circ\MT\subseteq \MT$, which implies that
$\TTD\circ \TT\subseteq \TTD\circ \MT \subseteq \MT$.
It now remains to show that $\MT\subseteq \TTD\circ \TT$.
It is proved in~\cite[Section~5.5]{EngMan03} that $\tMT \subseteq \TTDL\circ \TTL$.
As shown in~\cite[Theorem~6.18]{EngVog85}, every translation $\tau\in\MT$ 
is the restriction to a regular tree language $L$ of a translation $\tau'\in\tMT$.
Hence $\tau'\in \TTDL\circ\TTL$ and so $\tau\in \TTD\circ\TTL$, 
because the first \ttt\ can start by verifying that the input tree is in $L$
with a regular test at the root of $t$, by Lemma~\ref{lem:domrestr}. 
\qed
\end{proof}

From Lemma~\ref{lem:ttvsmt}, together with Theorem~\ref{thm:rightcomp},
we obtain the following corollary on compositions. 

\begin{corollary}\label{cor:ttmttt}
For every $k\geq 1$, $\TT^k \subseteq \MT^k = \TTD\circ\TT^k \subseteq \TT^{k+1}$.
\end{corollary}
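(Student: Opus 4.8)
The plan is to obtain all three parts of the statement by elementary manipulations of the facts already established, namely $\TT\subseteq\MT=\TTD\circ\TT$ (Lemma~\ref{lem:ttvsmt}) and $\TT\circ\TTD\subseteq\TT$ (Theorem~\ref{thm:rightcomp}), together with the trivial containment $\TTD\subseteq\TT$ and the associativity and monotonicity of composition of classes of tree translations.

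First I would dispose of the two outer inclusions, which are immediate. The inclusion $\TT^k\subseteq\MT^k$ follows from $\TT\subseteq\MT$ by applying monotonicity $k$ times. The inclusion $\TTD\circ\TT^k\subseteq\TT^{k+1}$ follows from $\TTD\subseteq\TT$, since then $\TTD\circ\TT^k\subseteq\TT\circ\TT^k=\TT^{k+1}$.

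The substance is the middle equality $\MT^k=\TTD\circ\TT^k$, which I would prove by induction on $k$, using $\MT=\TTD\circ\TT$. For ``$\subseteq$'': the base case $k=1$ is exactly $\MT=\TTD\circ\TT$; for the inductive step, $\MT^{k+1}=\MT\circ\MT^k\subseteq\MT\circ(\TTD\circ\TT^k)=\TTD\circ\TT\circ\TTD\circ\TT^k=\TTD\circ(\TT\circ\TTD)\circ\TT^k\subseteq\TTD\circ\TT\circ\TT^k=\TTD\circ\TT^{k+1}$, where the first containment is the induction hypothesis together with monotonicity and the second is Theorem~\ref{thm:rightcomp}. For ``$\supseteq$'': write $\TTD\circ\TT^k=(\TTD\circ\TT)\circ\TT^{k-1}=\MT\circ\TT^{k-1}$ and apply $\TT\subseteq\MT$ (Lemma~\ref{lem:ttvsmt}) to get $\MT\circ\TT^{k-1}\subseteq\MT\circ\MT^{k-1}=\MT^k$; for $k=1$ this reads simply $\TTD\circ\TT=\MT$.

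I do not anticipate any real obstacle: all the nontrivial work has already been absorbed into Lemma~\ref{lem:ttvsmt} and Theorem~\ref{thm:rightcomp}, and what remains is purely formal bookkeeping with associativity and monotonicity. The only point deserving a little care is the order in which the $\TTD$ factors are eliminated in the inductive step — one must isolate the $\TTD$ sitting immediately to the right of a $\TT$ factor, which is exactly the configuration $\TT\circ\TTD$ handled by Theorem~\ref{thm:rightcomp}.
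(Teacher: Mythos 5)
Your proposal is correct and follows essentially the same route as the paper, which derives the corollary directly from Lemma~\ref{lem:ttvsmt} ($\TT\subseteq\MT=\TTD\circ\TT$) and Theorem~\ref{thm:rightcomp} ($\TT\circ\TTD\subseteq\TT$); you have merely spelled out the induction and the regrouping $\TTD\circ(\TT\circ\TTD)\circ\TT^k$ that the paper leaves implicit. No gaps.
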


The above two inclusions are proper, cf.~\cite[Lemma~6.54]{FulVog} and~\cite[Theorem~4.16]{EngVog85}. 
In fact, the macro tree transducer is, 
and can be, of exponential height increase~\cite[Theorem~3.24]{EngVog85}. Hence $\tau_\mathrm{exp}^{k+1}$ 
is not in $\MT^k$, cf. the proof of Proposition~\ref{pro:hier}. Also, $\tau_M^k$ is not in $\TT^k$
where $M$ is an \mt\ that translates $\tau^na$ into $\tau^{2^n}a$ 
(with $\tau$ of rank~1 and $a$ of rank~0). 

The relationship between nondeterministic \ttt's and \mt's is less straightforward.
On the one hand, even $\NTTD$ is not included in $\NMT$ because all macro tree translations are finitary.
But we can express every \ttt\ as a composition of two top-down \ttt's and an \mt.

\begin{lemma}\label{lem:ttinmt}
$\NTT\subseteq \NTTD\circ \NTTD\circ \NMT$. 
\end{lemma}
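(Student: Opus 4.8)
The plan is to analyze what a nondeterministic \ttt\ $M$ does and factor the nondeterminism through a deterministic-style structure followed by a macro tree transducer. First I would apply Lemma~\ref{lem:msorel} to write $\tau_M = \tau_N\circ\tau_{M'}$ where $N$ is a deterministic relabeling \ttt\ (so $N\in\TTR\subseteq\TTD$) and $M'$ is a local \ttt. So it suffices to show $\NTTL\subseteq \NTTD\circ\NMT$ (or, absorbing the relabeling, that the local \ttt\ can be handled by $\NTTD\circ\NMT$, which then combines with $N\in\TTD$ to give the result, possibly with a trivial identity \ttd\ inserted to match the form). Thus the real task is: every local \ttt\ can be realized as a top-down \ttt\ followed by a (nondeterministic OI) macro tree transducer.

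The key idea for the local case is that the nondeterminism and the "tree-walking" both need to be tamed. The natural approach mirrors the known result $\TTL\subseteq\MT$ from \cite[Theorem~35]{EngMan03}, which is invoked in the proof of Lemma~\ref{lem:ttvsmt}: a deterministic local \ttt\ is simulated by a (deterministic) macro tree transducer, where the macro transducer's parameters hold the output subtrees that the upward-moving copies of the \ttt\ will eventually produce, indexed by the state in which the \ttt\ would move up. For a \emph{nondeterministic} local \ttt, one wants the same construction, but now the "return values" of the upward computations are not functions but relations. The trick (as in the total-deterministic reduction of Lemma~\ref{lem:ttvsmt}, or the classical treatment of tree-walking automata) is to first use a top-down \ttt\ to make a nondeterministic \emph{guess}: at each node of the input tree, guess the finite table recording, for each state entering that node from above or from a child, which state the \ttt\ will use to exit the subtree upward, together with enough information to produce output. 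Then the macro tree transducer runs top-down over this annotated tree, checking that the guesses are consistent with the actual rules and assembling the output; because all the upward moves have been "pre-guessed", the macro transducer only needs to move downward, with parameters carrying the (now determined) output contributions.

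More concretely, I would: (1) reduce via Lemma~\ref{lem:msorel} to $M$ local; (2) design a top-down \ttt\ $M_1$ that, walking top-down on the input tree $t$, nondeterministically annotates each node $u$ with a guessed "state-behavior" relation $\beta_u\subseteq Q\times Q$ describing which states can enter the subtree $t|_u$ (from the parent side) and which states then return to the parent, consistently with $M$'s computation restricted to that subtree — note $M_1$ cannot verify consistency locally, so it just guesses; actually we can use \emph{two} top-down passes (hence the two $\NTTD$ factors in the statement), or use the one \ttd\ plus the \mt\ itself to do consistency checking; (3) build a nondeterministic macro tree transducer $M_2$ with parameters that, processing the annotated tree top-down, uses one parameter position per relevant state (a finite set) to accumulate the output tree produced upon exiting upward in that state, checks local consistency of the guessed $\beta$'s against $M$'s rules, and substitutes the parameters appropriately when an upward move is simulated. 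The nondeterminism of $M$ is absorbed partly into $M_1$'s guesses and partly into $M_2$'s rule choices; the OI (unrestricted) derivation mode of the \mt\ is important because copies of $M$ that share a node but proceed independently correspond to parameter copies that must be allowed to evaluate nondeterministically and independently.

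The main obstacle I expect is handling the interaction of nondeterminism with the upward-moving parameter-valued computations: in the deterministic case each upward return value is a single tree, but in the nondeterministic case one must be careful that when a parameter holding "the output produced after exiting upward in state $q$" is used at two places (because $M$ branched), the two uses may legitimately correspond to \emph{different} nondeterministic choices — the OI semantics of the \mt\ handles this correctly precisely because parameter positions are substituted syntactically and then evaluated, so each copy re-derives independently. Getting the bookkeeping right — in particular arguing that the guessed behavior tables can always be chosen consistently exactly when $M$ has a successful computation, and that the number of needed parameters is bounded by $\#(Q)$ — is the delicate part; I would lean on the analogous arguments in \cite{EngMan03,EngVog85} and on Corollary~\ref{cor:regdom} (regularity of domains) to justify that all the consistency/look-around conditions are regular and hence implementable by the top-down transducer(s) and the \mt. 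The two top-down factors, rather than one, give the extra room to separate "guess the annotation" from "propagate global consistency information downward" so that $M_2$ really only needs to check things locally.
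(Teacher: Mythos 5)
Your first step --- invoking Lemma~\ref{lem:msorel} to reduce to a local \ttt\ --- is exactly how the paper starts; the paper then simply cites \cite[Lemmas~34 and~27]{EngMan03} for $\NTTL\subseteq \MON\circ\NMT$ with $\MON\subseteq\NTTDL$. Your attempt to reconstruct that second step from scratch, however, runs into a genuine obstruction: finitariness. A local \ttt\ need not be finitary --- a stay-loop or an up--down loop that emits an output symbol on each iteration translates a single input tree into infinitely many output trees --- whereas every translation in $\NMT$ is finitary (the paper states this explicitly and uses it to observe that even $\NTTD\not\subseteq\NMT$). Your first transducer $M_1$ only relabels: it annotates each node with one of finitely many behavior tables, so it maps each input tree to finitely many intermediate trees, and the same is true of the second annotating/consistency-propagating pass. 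A composition of finitary translations with a finitary $\NMT$ translation is finitary, so your construction cannot realize all of $\NTTL$, no matter how the consistency checking between the guessed tables and the rules is arranged.

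The missing idea is that the first top-down factor must be allowed to \emph{enlarge} the input tree unboundedly --- in the cited decomposition, $\MON$ nondeterministically inserts arbitrarily long chains of fresh monadic symbols, and each inserted node gives the subsequent \mt\ room to simulate one round of a loop of $M$ that produces output while returning to the same configuration. Relatedly, even in the finitary case your sketch glosses over a syntactic constraint of \mt's: a rule's right-hand side can only apply states to the \emph{children} $x_i$ of the current node, so the ``continuation at $u$ in state $q'$'' that you want to pass as a parameter to the call on $x_i$ must be expressed as a finite term over calls at the children and the incoming parameters; this requires the local up/down dependencies at $u$ to be noncircular, which is exactly what fails for looping computations and what the monadic padding repairs. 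Your observation that OI semantics lets copied parameters re-derive independently is the right one for handling the nondeterminism of branching copies, but it does not address the finitariness problem.
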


\begin{proof}
By Lemma~\ref{lem:msorel}, $\NTT\subseteq \NTTD\circ \NTTL$. 
It follows from~\cite[Lemmas~34 and~27]{EngMan03} that $\NTTL\subseteq \MON\circ \NMT$,
where $\MON$ is a specific simple subclass of $\NTTDL$ 
defined before~\cite[Lemma~27]{EngMan03}.

We note that by Lemma~\ref{lem:msorel}, $\NTT\subseteq \TTR\circ \NTTL$
and that it is easy to prove that 
$\TTR\circ \NTTDL\subseteq \NTTD$. Hence we even obtain that 
$\NTT\subseteq \NTTD\circ \NMT$.
\qed
\end{proof}

On the other hand, every \mt\ can still be realized by a composition of two (finitary) \ttt's.

\begin{lemma}\label{lem:mtintt}
$\NMT\subseteq \TTD\circ\TT\circ\NTTP\subseteq \TTD\circ\FTT$.
\end{lemma}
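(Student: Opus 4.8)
The plan is to prove $\NMT\subseteq \TTD\circ\TT\circ\NTTP$ and then observe that the second inclusion $\TTD\circ\TT\circ\NTTP\subseteq\TTD\circ\FTT$ follows almost immediately. For the first inclusion, I would start from Lemma~\ref{lem:ttinmt}, which gives $\NTT\subseteq \NTTD\circ\NMT$ — but that goes the wrong way. Instead the natural route is to recall a standard decomposition of the macro tree transducer itself. The classical fact (Engelfriet--Vogler) is that every macro tree translation can be decomposed as a top-down tree translation followed by a \emph{yield}-like operation, or more usefully here, that $\NMT$ is contained in a composition of top-down transducers and one further simple transducer that does the ``second-order substitution''. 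So first I would write $\NMT\subseteq \NTTD\circ\YIELD$ or an analogous decomposition in terms of the constructs already named in the paper, and then massage the pieces. Actually, since the paper has set up $\NTT$, $\TTD$, and pruning \ttt's, the cleanest approach is: express the \mt\ as a top-down tree transducer producing an ``encoded'' output tree (where parameter substitution is deferred), then realize the substitution phase by a \ttt\ possibly composed with a pruning \ttt\ that removes the scaffolding.

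Concretely, the key steps in order would be: (1) Recall from the literature (e.g.\ \cite{EngVog85}) that every $\tau\in\NMT$ can be written as $\tau = \tau_1\circ\tau_2$ where $\tau_1\in\NTTD$ produces a tree encoding the ``call structure'' of the macro computation (essentially a derivation-tree-like object in which parameter positions are marked), and $\tau_2$ performs the outside-in substitution to yield the final output tree. (2) Show that the substitution phase $\tau_2$ can be realized by a \ttt, namely a tree-walking tree transducer that walks over the encoded tree, following substitution pointers up and down, to assemble the output — this is exactly the kind of task \ttt's are built for, and it is essentially the content of the known inclusion $\TTL\subseteq\MT$ read backwards together with $\MT = \TTDL\circ\TTL$ cited just before Lemma~\ref{lem:ttvsmt}. (3) If the \ttt\ for $\tau_2$ leaves behind auxiliary nodes (markers, dummy leaves) that are not part of the genuine output, append a pruning \ttt\ to delete them, giving the factor $\NTTP$; alternatively, if one can arrange $\tau_2$ to be productive, the pruning factor can be taken to be the identity. (4) Combine: $\tau\in\NTTD\circ\TT\circ\NTTP$. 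Then for the second inclusion, note that $\TT\circ\NTTP\subseteq\NTT$ by Theorem~\ref{thm:nondetrightcomp}, and moreover the composition $\TT\circ\NTTP$ is \emph{finitary}: a \dttt\ is finitary (it realizes a partial function) and composing a finitary translation with a pruning \ttt\ (which is finitary) keeps the result finitary, so $\TT\circ\NTTP\subseteq\FTT$, whence $\TTD\circ\TT\circ\NTTP\subseteq\TTD\circ\FTT$.

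The main obstacle I expect is step (2)--(3): getting the substitution phase to be realized by exactly one \ttt\ (rather than a composition), and correctly accounting for whether a pruning postprocessor is actually needed. The subtlety is that macro tree transducers in \abb{OI} mode are nondeterministic and finitary, and the encoding produced by the top-down first phase must faithfully record which copies of parameters survive; the \ttt\ doing the substitution must then be able to revisit parameter-defining subtrees multiple times (which a \ttt\ can, since it walks, unlike a top-down transducer) — this is why a single \ttt\ suffices where a \ttd\ would not. I would lean on the already-cited machinery: the paper notes $\TTL\subseteq\MT$ \cite[Theorem~35]{EngMan03} and $\MT=\TTDL\circ\TTL$ \cite[Theorem~8.22]{thebook}, and the decomposition results of \cite{EngVog85} for \mt's; assembling these, together with the observation that the ``yield''/substitution operator is itself a (pruning-augmented) \ttt\ translation, should close the argument. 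Care must be taken that ranges and domains match up so that the compositions are legitimate, but that is routine given Lemma~\ref{lem:domrestr} and Corollary~\ref{cor:regdom} for restricting domains to regular tree languages.
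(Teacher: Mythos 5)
There is a genuine gap in your treatment of the first inclusion: your decomposition puts the nondeterminism in the wrong place for \abb{oi} semantics. In step~(1) you propose $\tau=\tau_1\circ\tau_2$ with a nondeterministic top-down phase $\tau_1\in\NTTD$ \emph{first}, followed by a substitution phase realized by a single (essentially deterministic) \ttt. That is the shape of the \abb{io} decomposition ($\NMTIO\subseteq\FTTDL\circ\YIELD$, as used in Lemma~\ref{lem:nmtio}), and it cannot capture $\NMT$: in \abb{oi} mode the copies of a parameter are created \emph{before} the nondeterministic choices inside that parameter are resolved, so each copy may evolve differently. Any scheme that fixes all nondeterministic choices in a first pass and then copies the results yields only the \abb{io} outcomes. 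For example, with rules $q_0(\sigma(x_1))\to q(x_1,r(x_1))$, $q(e,y_1)\to\delta(y_1,y_1)$, $r(e)\to a$, $r(e)\to b$, the \abb{oi} output $\delta(a,b)$ is lost. Relatedly, in step~(3) you treat the trailing $\NTTP$ factor as optional removal of scaffolding that could ``be taken to be the identity''; in fact it is the essential carrier of the nondeterminism. The decomposition the paper uses is $\NMT=\tMT\circ\SET$ \cite[Theorem~6.10]{EngVog85}: a \emph{total deterministic} \mt\ first produces a tree encoding all possible choices (with explicit choice nodes), copying that choice tree wherever a parameter is copied, and then the nondeterministic selection $\SET\subseteq\NTTPL$ --- a local pruning \ttt\ --- resolves the choices independently in each copy, which is exactly \abb{oi} semantics. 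Combining this with $\tMT\subseteq\MT=\TTD\circ\TT$ (Lemma~\ref{lem:ttvsmt}) gives $\NMT\subseteq\TTD\circ\TT\circ\NTTP$.

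Your argument for the second inclusion is correct and matches the paper: $\TT\circ\NTTP\subseteq\NTT$ by Theorem~\ref{thm:nondetrightcomp}, and since both factors are finitary the composite is finitary, so $\TT\circ\NTTP\subseteq\FTT$.
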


\begin{proof}
By~\cite[Theorem~6.10]{EngVog85}, 
$\NMT=\tMT\circ\SET$, and by the proof of~\cite[Theorem~6.10]{EngVog85}, $\SET\subseteq \NTTPL$.
Hence $\NMT \subseteq \tMT\circ \NTTPL \subseteq \TTD\circ\TT\circ\NTTP$ by Lemma~\ref{lem:ttvsmt}.
That is included in $\TTD\circ\FTT$ by Theorem~\ref{thm:nondetrightcomp}.
\qed
\end{proof}

It can be shown that $\FTT\subseteq \NMT=\TTD\circ \FTT$, 
thus generalizing Lemma~\ref{lem:ttvsmt} to the finitary case,
but that will not be needed in what follows. 

Finally, let $\NMTIO$ denote the class of translations realized by \mt's 
with inside-out (\abb{io}) derivation mode (see~\cite{EngVog85}), and let 
$\mrNMT$ denote the class of translations realized by 
the multi-return macro tree transducers of~\cite{InaHos,InaHosMan}, 
which generalize \abb{io}~macro tree transducers. 

\begin{lemma}\label{lem:nmtio}
$\NMT_\text{\abb{io}}\subseteq \mrNMT\subseteq \FTTD\circ \TT$.
\end{lemma}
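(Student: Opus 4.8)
The plan is to prove the two inclusions separately. The first inclusion, $\NMTIO \subseteq \mrNMT$, should follow directly from the literature: multi-return macro tree transducers were introduced in~\cite{InaHos,InaHosMan} precisely as a generalization of \abb{io} macro tree transducers, so every \abb{io}~\mt\ is trivially a multi-return \mt\ (one in which each rule returns a single tree, i.e., the ``multi-return'' feature is never used). I would simply cite~\cite{InaHos,InaHosMan} for this containment, possibly noting that it is immediate from the definitions.

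For the second inclusion, $\mrNMT \subseteq \FTTD \circ \TT$, the plan is to decompose a multi-return macro tree transducer into a first phase that is a finitary top-down \ttt\ and a second phase that is a deterministic \ttt. The natural strategy is the standard ``first compute a term/derivation structure, then evaluate it'' decomposition, adapted to the multi-return setting. First I would let the first transducer, working top-down on the input tree, produce an intermediate tree that encodes the (finitely many, because of \abb{io}/multi-return semantics) possible ``call structures'' of the multi-return \mt: it records, for each processed input node and state, which rule was chosen and how the returned tuples of trees are to be routed. Because \abb{io}-style evaluation is finitary and the nondeterminism is resolved in this first phase, this first transducer can be taken to be a finitary top-down \ttt\ (a \FTTD); it makes all nondeterministic choices and emits a fully determined ``plan''. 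The second transducer then deterministically walks over this plan and assembles the actual output tree, substituting the returned subtrees into the appropriate parameter positions; since substitution of output parameters amounts to walking up and down in the intermediate tree following the routing information, this evaluation can be carried out by a \dttt. I would look to the proof of the analogous known decomposition for ordinary \abb{io}~\mt's (and to the constructions in~\cite{InaHosMan}) as a template, checking that the multi-return feature only enlarges the finite bookkeeping and does not break finitariness of the first phase or determinism of the second.

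The main obstacle I expect is making the second phase genuinely a single \TT\ rather than a composition of two: evaluating nested parameter substitutions naively wants two passes (one to lay out the call tree, one to plug in results), and a \ttt\ has only finite state, so the routing of returned tuples must be encoded locally enough in the intermediate tree that a tree-walking transducer can follow it with bounded memory. Handling the \emph{tuples} of return values of a multi-return \mt\ — as opposed to the single return value of an ordinary \mt\ — is the delicate point: the intermediate tree must mark which component of which returned tuple feeds which parameter slot, and the \dttt\ must be able to navigate to the right component by walking. I would address this by having the first (finitary top-down) transducer pre-compute and hard-code all this routing as labels/child-structure in the intermediate tree, so that the second transducer only needs to perform bounded-lookahead navigation; the finitariness of \abb{io}/multi-return evaluation is exactly what guarantees the first transducer can be a \emph{finitary} \ttt. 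A secondary, more routine obstacle is verifying that regular look-around is not needed beyond what \FTTD\ and \TT\ already provide — but Lemma~\ref{lem:d=ds} and the syntactic tools of Section~\ref{sec:trans} should handle that.
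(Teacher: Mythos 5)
Your plan reaches the right conclusion but by a genuinely different route from the paper, which proves the lemma entirely by chaining citations and the composition theorems already established: it quotes $\mrNMT\subseteq \TTDL\circ\NMTIO\circ\TTDL$ from~\cite[Lemma~4]{InaHosMan}, $\NMTIO\subseteq \FTTDL\circ\YIELD$ from~\cite[Lemma~5.5]{EngVog85}, and $\YIELD\subseteq\TTL$ from~\cite[Lemma~36]{EngMan03}, and then collapses $\TTDL\circ\FTTDL$ to $\FTTDS$ by~\cite[Theorem~2.11(2)]{Eng77} and $\TT\circ\TTD$ to $\TT$ by Theorem~\ref{thm:rightcomp}. (The first inclusion $\NMTIO\subseteq\mrNMT$ is indeed immediate from the definitions, as you say.) Your direct two-phase construction --- a finitary nondeterministic top-down transducer emitting a fully resolved ``call plan'', followed by a deterministic \ttt\ that evaluates it --- is in substance the same decomposition that underlies those cited lemmas, so the strategy is sound; what it buys is self-containedness, at the cost of re-deriving the two hardest ingredients. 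Be aware, though, that the step you flag as delicate --- making the second phase a \emph{single} deterministic \ttt\ that performs the nested parameter substitution --- is exactly the content of $\YIELD\subseteq\TTL$, and describing it as ``bounded-lookahead navigation'' undersells it: the evaluator must walk arbitrarily far up and down the intermediate tree, with only finite state, to locate the subtree bound to a parameter slot, which is a genuine construction rather than local routing. If you carry the plan out in full you will essentially have reproved that lemma together with the multi-return-to-\abb{io} reduction of~\cite{InaHosMan}; citing them, as the paper does, reduces the proof to a few lines plus Theorem~\ref{thm:rightcomp}.
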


\begin{proof}
It is shown in~\cite[Lemma~5.5]{EngVog85} that $\NMTIO\subseteq \FTTDL\circ \YIELD$,
and in~\cite[Lemma~36]{EngMan03} that $\YIELD\subseteq \TTL$, and so $\NMTIO\subseteq \FTTDL\circ \TT$.
It follows from~\cite[Lemma~4]{InaHosMan} that $\mrNMT\subseteq \TTDL\circ\NMTIO\circ\TTDL$.
Hence 
\[
\mrNMT\subseteq \TTDL\circ\FTTDL\circ \TT\circ\TTD
\]
which is included in $\FTTDS\circ \TT$
by~\cite[Theorem~2.11(2)]{Eng77} and Theorem~\ref{thm:rightcomp}.
\qed
\end{proof}

\subsection{MSO Tree Transducers}\label{sub:mso}

Let \MSOT\ denote the class of deterministic \abb{mso} tree translations 
(see~\cite[Chapter~8]{thebook}, where it is denoted \DMSOT, 
and where \abb{mso} tree translations are called \abb{ms}-transductions of terms).
The next result is a variant of the main result of~\cite{BloEng}, 
which concerns attributed tree transducers with look-ahead instead of \ttt's. 
In its present form it is proved in~\cite[Theorems~8.6 and~8.7]{thebook}.

\begin{proposition}\label{pro:mso=ttsu}
$\MSOT=\TTSU$.
\end{proposition}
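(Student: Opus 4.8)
The plan is to prove the two inclusions separately, both by appeal to the attributed-tree-transducer results of~\cite{BloEng} together with the auxiliary lemmas already available. For the inclusion $\TTSU\subseteq\MSOT$, I would first recall that a single-use \dttt\ is, by definition, a deterministic \ttt\ (hence equipped with regular look-around) in which no configuration is visited twice. Using Lemma~\ref{lem:msorel} we may decompose it as a deterministic relabeling \ttt\ $N$ followed by a local \dttt\ $M'$; since $N$ realizes an \abb{mso} relabeling and $\MSOT$ is closed under left-composition with \abb{mso} relabelings (by the composition theorem for \abb{ms}-transductions, \cite[Chapter~8]{thebook}), it suffices to show that the local single-use \dttt\ $M'$ realizes a translation in $\MSOT$. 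A local single-use \dttt\ is essentially a single-use attributed tree transducer (the states playing the role of attributes of input nodes, and the single-use restriction being exactly the single-use restriction of~\cite{Gie,GanGie}), so the direction ``single-use attributed tree transducers $\subseteq$ deterministic \abb{mso} tree transducers'' of~\cite{BloEng} applies. One should check that the single-use property of the \ttt\ as defined here (no configuration occurs twice in the derivation tree, in particular not at two independent nodes) matches the single-use restriction used for attributed tree transducers; the paper already notes this correspondence in the discussion after Lemma~\ref{lem:finpump} and in the definition of \ttsu.

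For the converse inclusion $\MSOT\subseteq\TTSU$, I would invoke the other direction of~\cite{BloEng}: every deterministic \abb{mso} tree translation is realized by a single-use attributed tree transducer with look-ahead. Such a transducer is a special case of a single-use \ttt\ with regular look-around, except for one subtlety: the attributed tree transducer is required to be \emph{total} (noncircular in the strong sense), whereas a \dttt\ in this paper need not be. This is harmless, because one can always restrict the domain afterwards: a deterministic \abb{mso} tree translation has an \abb{mso}-definable domain, hence (by the Doner--Thatcher--Wright correspondence recalled in Section~\ref{sec:trees}) a regular domain, so by Lemma~\ref{lem:domrestr} (which preserves determinism and the single-use property) we may intersect with that regular set and obtain exactly the desired translation by a single-use \dttt. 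The look-ahead of the attributed tree transducer is a special case of regular look-around, so it fits directly into the \ttsu\ format.

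The main obstacle, and the step that requires genuine care rather than bookkeeping, is pinning down the precise equivalence between the ``local single-use \dttt'' and the ``single-use attributed tree transducer'' of~\cite{BloEng}, so that the cited results transfer without gaps. One must verify: (i) that the single-use condition stated here (no configuration repeated in the derivation tree, covering both the root-to-leaf case and independent occurrences) is equivalent to the single-use requirement on attributes in~\cite{Gie,GanGie,BloEng}; (ii) that the possible non-totality and the presence of regular look-around in the \ttt\ model are accommodated, on one side by the domain-restriction argument above and on the other by the look-ahead feature; and (iii) that passing through Lemma~\ref{lem:msorel} (which preserves neither single-use in general, but where the relabeling part is handled separately by \abb{mso}-relabeling closure) genuinely reduces the problem to the local case. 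Once these matching conditions are established, both inclusions follow immediately from~\cite{BloEng} and the cited form in~\cite[Theorems~8.6 and~8.7]{thebook}, which is why it is reasonable to state the proposition with a short proof that simply refers to those sources and the relevant closure properties.
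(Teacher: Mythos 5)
Your plan is correct and follows essentially the same route as the paper: the paper gives no proof of its own but states that the result is a variant of the main theorem of~\cite{BloEng} for attributed tree transducers with look-ahead, proved in the present \ttt\ form in~\cite[Theorems~8.6 and~8.7]{thebook}. Your additional bookkeeping (matching the single-use conditions, handling non-totality via Lemma~\ref{lem:domrestr} and the regularity of \abb{mso}-definable domains, and reducing to the local case via Lemma~\ref{lem:msorel}) is exactly the kind of bridging that those cited sources carry out, and it is sound.
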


The next proposition is the main result of~\cite{EngManLSI}.

\begin{proposition}\label{pro:lsi}
$\tMT \cap \LSI \subseteq \MSOT$.
\end{proposition}

This can be extended to arbitrary deterministic \abb{oi} macro tree translations as follows.

\begin{lemma}\label{lem:lsi}
$\MT \cap \LSI \subseteq \MSOT$.
\end{lemma}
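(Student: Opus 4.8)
The plan is to reduce to the total case, which is covered by Proposition~\ref{pro:lsi}, and then transfer the result back through Proposition~\ref{pro:mso=ttsu} and a domain restriction. So let $\tau\in\MT\cap\LSI$ with $\tau\subseteq T_\Sigma\times T_\Delta$; we may assume $\tau\neq\nothing$ (otherwise $\tau\in\MSOT$ trivially), and fix some $t_0\in T_\Delta$. By \cite[Theorem~6.18]{EngVog85} (as used in the proof of Lemma~\ref{lem:ttvsmt}) there are a total translation $\tau'\in\tMT$ and a regular tree language $L$ such that $\tau=\{(t,s)\in\tau'\mid t\in L\}$; since $\tau'$ is a total function, this forces $\dom(\tau)=L$, and in particular $\dom(\tau)$ is regular.

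Next I would replace $\tau'$ by a total function $\tau''$ that agrees with $\tau$ on $L$ but is harmless off $L$, namely $\tau''(t)=\tau'(t)$ for $t\in L$ and $\tau''(t)=t_0$ for $t\notin L$. The point is that $\tau''\in\tMT\cap\LSI$. For membership in $\tMT$: take a total deterministic \abb{OI} macro tree transducer realizing $\tau'$, equip it with regular look-ahead recognizing $L$ (allowed since $\MT$ is closed under regular look-ahead by \cite[Theorem~6.15]{EngVog85}, and this preserves totality and determinism), and give it a fresh initial state whose root rules, according to whether the look-ahead reports membership in $L$, either pass control to the original initial state or have the closed output term $t_0$ as right-hand side; the resulting transducer is again total and deterministic, so $\tau''\in\tMT$. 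For membership in $\LSI$: if $|\tau(t)|\le c\cdot|t|$ for all $t\in\dom(\tau)=L$, then $|\tau''(t)|\le c\cdot|t|$ for $t\in L$ and $|\tau''(t)|=|t_0|\le|t_0|\cdot|t|$ for $t\notin L$, so $\tau''\in\LSI$ with constant $\max(c,|t_0|)$.

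Now Proposition~\ref{pro:lsi} gives $\tau''\in\MSOT$, and Proposition~\ref{pro:mso=ttsu} gives $\tau''\in\TTSU$. Since $\tau''$ and $\tau'$ coincide on $L$ and $\dom(\tau)=L$, we have $\tau=\{(t,s)\in\tau''\mid t\in L\}$; as $L$ is regular, Lemma~\ref{lem:domrestr}, which preserves the single-use property, yields a single-use \dttt\ realizing $\tau$, i.e.\ $\tau\in\TTSU=\MSOT$. The step I expect to require the most care is the claim $\tau''\in\tMT$: one must check that this ``test at the root and branch'' modification stays inside the total deterministic \abb{OI} class, which is exactly where closure of $\MT$ under regular look-ahead and its compatibility with totality are used; everything else is bookkeeping with results already available in the paper.
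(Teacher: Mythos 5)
Your proposal is correct and follows essentially the same route as the paper's proof: totalize the transducer by an LSI-preserving extension off the (regular) domain using closure of $\MT$ under regular look-ahead, apply Proposition~\ref{pro:lsi} and Proposition~\ref{pro:mso=ttsu}, and restrict back to the domain via Lemma~\ref{lem:domrestr}. The only cosmetic differences are that the paper extends by the identity on the complement of the domain rather than by a constant tree $t_0$, and obtains the regular domain directly from \cite[Theorem~7.4]{EngVog85} instead of via \cite[Theorem~6.18]{EngVog85}.
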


\begin{proof}
Since the domain $L$ of any \mt\ $M$ is regular (\cite[Theorem~7.4]{EngVog85}), 
and $\MT$ is closed under regular look-ahead (\cite[Theorem~6.15]{EngVog85}),
there is a total \mt\ $M'$ that extends $M$ by the identity on the complement of $L$. 
Clearly, $\tau_{M'}$ is of linear size increase if and only if $\tau_M$ is. 
Hence, by Propositions~\ref{pro:mso=ttsu} and~\ref{pro:lsi}, 
if $\tau_M$ is of linear size increase, then $\tau_{M'}$ is in $\TTSU$. 
And so $\tau_M$, which is the restriction of $\tau_{M'}$ to the regular tree language $L$, 
is also in $\TTSU$ by Lemma~\ref{lem:domrestr}.
\qed
\end{proof}

From Lemma~\ref{lem:ttvsmt}, Lemma~\ref{lem:lsi}, Proposition~\ref{pro:mso=ttsu},
and Lemma~\ref{lem:sizeheight} we obtain the following corollary. 

\begin{corollary}\label{cor:lsi}
$\TT \cap \LSI = \TTSU$.
\end{corollary}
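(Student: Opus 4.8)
The plan is to prove the two inclusions $\TTSU\subseteq \TT\cap\LSI$ and $\TT\cap\LSI\subseteq \TTSU$ separately. The first inclusion is essentially immediate from results already at hand: every single-use \dttt\ is in particular a \dttt, so $\TTSU\subseteq \TT$, and by the second part of Lemma~\ref{lem:sizeheight} we have $\TTSU\subseteq \LSI$; together these give $\TTSU\subseteq \TT\cap\LSI$. So the real content is the converse.

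For the converse, let $\tau\in\TT\cap\LSI$. By Lemma~\ref{lem:ttvsmt} we have $\TT\subseteq \MT$, so $\tau$ is a deterministic \abb{oi} macro tree translation; and since $\tau$ is of linear size increase, $\tau\in\MT\cap\LSI$. Now apply Lemma~\ref{lem:lsi}, which gives $\MT\cap\LSI\subseteq \MSOT$, so $\tau\in\MSOT$. Finally, Proposition~\ref{pro:mso=ttsu} states $\MSOT=\TTSU$, hence $\tau\in\TTSU$. This chain $\TT\cap\LSI\subseteq \MT\cap\LSI\subseteq\MSOT=\TTSU$ closes the argument.

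Combining the two directions yields $\TT\cap\LSI=\TTSU$, as claimed. I would write the proof in exactly this order: first dispatch the easy inclusion using Lemma~\ref{lem:sizeheight}, then run the three-step chain through Lemmas~\ref{lem:ttvsmt} and~\ref{lem:lsi} and Proposition~\ref{pro:mso=ttsu} for the hard direction.

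There is no genuine obstacle here, since all the substantive work has been pushed into the previously established results — in particular the deep fact that linear-size-increase macro tree translations are \abb{mso} definable (Proposition~\ref{pro:lsi}, extended in Lemma~\ref{lem:lsi}) and the Bloem--Engelfriet characterization $\MSOT=\TTSU$ (Proposition~\ref{pro:mso=ttsu}). The only point requiring a moment's care is making sure the inclusion $\TT\subseteq\MT$ is applied before invoking linear size increase, so that we land in $\MT\cap\LSI$ rather than merely $\MT$; but since $\LSI$ is a property of the translation itself and is preserved under the identity inclusion of translation classes, this is routine.
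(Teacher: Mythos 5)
Your proof is correct and follows exactly the route the paper intends: the easy inclusion $\TTSU\subseteq\TT\cap\LSI$ via Lemma~\ref{lem:sizeheight}, and the converse via the chain $\TT\cap\LSI\subseteq\MT\cap\LSI\subseteq\MSOT=\TTSU$ using Lemma~\ref{lem:ttvsmt}, Lemma~\ref{lem:lsi}, and Proposition~\ref{pro:mso=ttsu}. These are precisely the four results the paper cites for this corollary, so there is nothing to add.
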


It is also shown in~\cite{EngManLSI} that it is decidable for a total deterministic \mt\ 
whether or not it is of linear size increase. 
That also holds for arbitrary deterministic \mt's by the proof of Lemma~\ref{lem:lsi},
and hence also for \dttt's by Lemma~\ref{lem:ttvsmt}. 

\begin{corollary}\label{cor:lsidec}
It is decidable for a deterministic \ttt\ whether or not it is of linear size increase. 
\end{corollary}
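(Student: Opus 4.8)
The plan is to reduce the decision problem to the known fact, established in~\cite{EngManLSI}, that it is decidable for a \emph{total} deterministic macro tree transducer whether or not its translation is of linear size increase. The reduction proceeds in two effective steps: first from \dttt's to arbitrary deterministic \mt's via Lemma~\ref{lem:ttvsmt}, and then from arbitrary deterministic \mt's to total ones by the domain-restriction construction used in the proof of Lemma~\ref{lem:lsi}. All of this is effective by the convention of the paper that all stated constructions are effective.

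First I would take a given \dttt\ $M$ and, using the (effective) inclusion $\TT\subseteq\MT$ of Lemma~\ref{lem:ttvsmt}, construct a deterministic \mt\ $M_1$ with $\tau_{M_1}=\tau_M$. Since being of linear size increase is a property of the translation alone, $\tau_M\in\LSI$ if and only if $\tau_{M_1}\in\LSI$. Hence the problem is reduced to: given a deterministic \mt, decide whether its translation is in $\LSI$.

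Next I would handle an arbitrary deterministic \mt\ $M_1$ exactly as in the proof of Lemma~\ref{lem:lsi}. Its domain $L=\dom(M_1)$ is a regular tree language that can be computed effectively: either directly from~\cite[Theorem~7.4]{EngVog85}, or, self-containedly, from Corollary~\ref{cor:regdom} together with $\MT=\TTD\circ\TT\subseteq\NTT^2$. Using the effective closure of $\MT$ under regular look-ahead~\cite[Theorem~6.15]{EngVog85}, I would construct a total deterministic \mt\ $M_1'$ that agrees with $M_1$ on $L$ and realizes the identity on $T_\Sigma\setminus L$, where $\Sigma$ is the input ranked alphabet of $M_1$. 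Then $\tau_{M_1'}\in\LSI$ if and only if $\tau_{M_1}\in\LSI$: the identity is trivially of linear size increase, so extending $M_1$ by it cannot destroy linear size increase, and conversely $\tau_{M_1}$ is the restriction of $\tau_{M_1'}$ to $L$, hence inherits the linear bound. Finally, since $M_1'$ is a total deterministic \mt, the result of~\cite{EngManLSI} decides whether $\tau_{M_1'}\in\LSI$, which completes the decision procedure for $M$.

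The only genuinely nontrivial ingredient is the imported decidability theorem of~\cite{EngManLSI} for total deterministic macro tree transducers; everything else (the \mt\ simulating a \dttt, the computability of the regular domain, and the total extension by the identity) is routine. The one point that must be checked with a little care is that the passage from $M_1$ to $M_1'$ both preserves and reflects linear size increase, which is the short equivalence sketched above and already recorded in the proof of Lemma~\ref{lem:lsi}.
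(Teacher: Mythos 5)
Your proposal is correct and follows essentially the same route as the paper: the paper likewise reduces the question for a \dttt\ to a deterministic \mt\ via Lemma~\ref{lem:ttvsmt}, then to a \emph{total} deterministic \mt\ by the identity-extension construction from the proof of Lemma~\ref{lem:lsi}, and finally invokes the decidability result of~\cite{EngManLSI}. The equivalence you check (that the total extension both preserves and reflects membership in $\LSI$) is exactly the point the paper relies on, so nothing is missing.
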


Note that since Corollary~\ref{cor:lsi} is effective, 
if the \dttt\ is indeed of linear size increase, then
an equivalent \ttsu\ can be constructed.
One of our aims is to extend Corollaries~\ref{cor:lsi} and~\ref{cor:lsidec}
to arbitrary compositions of \dttt's.

\section{Functional Nondeterminism}\label{sec:func}

In this section we prove that for every nondeterministic top-down \ttt\ $M$
a deterministic top-down \ttt\ $M'$ can be constructed that realizes 
a ``uniformizer'' of $\tau_M$, i.e., a subset of $\tau_M$ with the same domain. 
This is a generalization of~\cite[Lemma]{Eng78}, where it is proved for 
classical nondeterministic top-down tree transducers. Note that, 
as opposed to the deterministic case, the nondeterministic top-down \ttt\
is more powerful than the classical nondeterministic top-down tree transducer with regular look-ahead,
because, due to the stay-instructions, it may not be finitary, i.e., 
it possibly translates one input tree into infinitely many output trees. 

A \emph{uniformizer} of a tree translation $\tau$ is a function $f$ such that $f\subseteq \tau$ 
and $\dom(f)=\dom(\tau)$. Intuitively, $f$ selects for every input tree $t\in\dom(\tau)$ 
one of the elements of $\tau(t)$.

\begin{lemma}\label{lem:cut}
Every $\tau\in \NTTD$ has a uniformizer $\tau'\in\TTD$. 
If $\tau\in \NTTP$, then \mbox{$\tau'\in\TTP$}.
\end{lemma}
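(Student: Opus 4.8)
The plan is to adapt the construction of~\cite{Eng78}, given there for classical nondeterministic top-down tree transducers, by using regular look-around to resolve the nondeterminism, and to cope with the stay-instructions — which may make a \ttd\ non-finitary — by first passing to the stay-free reformulation $M_\mathrm{sf}$ of the end of Section~\ref{sec:trans}. So let $M=(\Sigma,\Delta,Q,Q_0,R)$ be a \ttd\ with $\tau_M=\tau$; by Lemma~\ref{lem:mutdis} I assume the tests in its left-hand sides are mutually disjoint. The first ingredient is that for every $q\in Q$ the node test
\[
\mathrm{prod}_q \;=\; \{\,(t,u)\in T^\bullet_\Sigma \mid \exists\, s\in T_\Delta:\ \tup{q,u}\Rightarrow^*_{M,t}s\,\}
\]
is regular. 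As in the proofs of Lemmas~\ref{lem:inversemso} and~\ref{lem:ttsuottl}, build a \ttt\ $N_q$ that, on input $r$, first checks with a regular sub-test at the root that $r=\tmark(t,u)$ for some $t\in T_\Sigma$ and $u\in\cN(t)$, then walks by move rules to the unique marked node, and there simulates $M$ from state $q$, reading each symbol $(\sigma,b)$ as $\sigma$ and replacing every test $T$ of $M$ by $\mu(T)$; then $\dom(N_q)=\tmark(\mathrm{prod}_q)$, which is regular by Corollary~\ref{cor:regdom}, so $\mathrm{prod}_q$ is a regular node test. Since $\mathrm{prod}_{q'}$ is therefore \abb{mso}-definable and ``being the $i$-th child'' is the atomic formula $\down_i$, the node test $\mathrm{prod}^{(i)}_{q'}$ of all $(t,u)$ such that $u$ has an $i$-th child $v$ with $(t,v)\in\mathrm{prod}_{q'}$ is regular as well.

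For the case $\tau\in\NTTD$, pass to $M_\mathrm{sf}$, the stay-free transducer with general rules: it has the same left-hand sides $\tup{q,\sigma,j,T}$ as $M$, the admissible right-hand sides for such a left-hand side are exactly the trees of the regular tree language $L(G_{q,\sigma,j,T})$ over $\Delta\cup D_{\sigma,j}$ with $D_{\sigma,j}=Q\times\{\down_1,\dots,\down_{\rank(\sigma)}\}$, and, being stay-free and top-down, $M_\mathrm{sf}$ moves strictly downward in its input tree (its productive configurations are those of $M$). I then define $M'$ as follows. For every left-hand side $\tup{q,\sigma,j,T}$ of $M$ and every ``child profile'' $P\colon[1,\rank(\sigma)]\times Q\to\{0,1\}$, decide (an emptiness problem for regular tree languages) whether $L(G_{q,\sigma,j,T})$ contains a tree all of whose $D_{\sigma,j}$-leaves $\tup{q',\down_i}$ satisfy $P(i,q')=1$; if so, fix one such tree $\zeta_{q,\sigma,j,T,P}$ and put into $M'$ the general rule
\[
\tup{q,\sigma,j,T_P}\ \to\ \zeta_{q,\sigma,j,T,P},
\qquad
T_P \;=\; T\cap\bigcap_{P(i,q')=1}\mathrm{prod}^{(i)}_{q'}\ \cap\ \bigcap_{P(i,q')=0}\bigl(T^\bullet_\Sigma\setminus\mathrm{prod}^{(i)}_{q'}\bigr).
\]
Several initial states are dealt with routinely: the single-initial-state restrictions $M_i$ of $M$ have regular domains by Corollary~\ref{cor:regdom}, so one partitions $\dom(M)$ into the disjoint regular languages $\dom(M_i)\setminus\bigcup_{i'<i}\dom(M_{i'})$, carries out the construction for each $M_i$, restricts the resulting domain accordingly by Lemma~\ref{lem:domrestr}, and merges. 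Finally the general rules are turned into ordinary rules as in Section~\ref{sec:trans}, which preserves the top-down property. The result $M'$ is a deterministic \ttd\ with finitely many rules: determinism holds because the profile of $(t,u)$ is uniquely determined, so for fixed $q,\sigma,j$ the tests $T_P$ are pairwise disjoint, while the tests $T$ of distinct left-hand sides of $M$ are disjoint by Lemma~\ref{lem:mutdis}.

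It remains to check that $\tau_{M'}$ is a uniformizer of $\tau$. Every rule of $M'$ restricts a rule of $M_\mathrm{sf}$ (we chose $\zeta_{q,\sigma,j,T,P}\in L(G_{q,\sigma,j,T})$ and $T_P\subseteq T$), so every computation of $M'$ is a computation of $M_\mathrm{sf}$; hence $\tau_{M'}\subseteq\tau_{M_\mathrm{sf}}=\tau_M$ and $\dom(M')\subseteq\dom(M)$. Conversely, for $t\in\dom(M)$ one proves by induction along the unique computation of $M'$ on $t$ that every configuration $\tup{q,u}$ reached by $M'$ has $(t,u)\in\mathrm{prod}_q$: then some rule of $M$ applies to $\tup{q,u}$, so $(t,u)$ lies in a unique left-hand test $T$, and the first rule used in a complete $M_\mathrm{sf}$-derivation from $\tup{q,u}$ is some $\tup{q,\sigma,j,T}\to\zeta$ with $\zeta\in L(G_{q,\sigma,j,T})$ all of whose $D_{\sigma,j}$-leaves $\tup{q',\down_i}$ satisfy $(t,ui)\in\mathrm{prod}_{q'}$ (the derivation completes from them), i.e.\ satisfy the actual profile $P$ of $(t,u)$; hence the rule $\tup{q,\sigma,j,T_P}\to\zeta_{q,\sigma,j,T,P}$ was put into $M'$ and fires, and since $\zeta_{q,\sigma,j,T,P}$ has all its $D_{\sigma,j}$-leaves satisfying $P$, every child configuration it spawns is again productive. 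Thus $M'$ never blocks on $t$, and as each step of $M'$ in its general-rule form descends strictly while $t$ has finite height, the computation terminates with an output tree, so $t\in\dom(M')$. Hence $\dom(M')=\dom(M)$, and $M'$ being deterministic, $\tau_{M'}\in\TTD$ is a uniformizer of $\tau$.

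If instead $\tau\in\NTTP$, take $M$ pruning; a pruning \ttt\ is already top-down and stay-free, so no passage through $M_\mathrm{sf}$ is needed and one runs the same construction directly on the finitely many rules of $M$, choosing for each left-hand side $\tup{q,\sigma,j,T}$ and profile $P$, when possible, the first right-hand side $\zeta$ of a rule $\tup{q,\sigma,j,T}\to\zeta$ of $M$ whose $\tup{q',\down_i}$-children all satisfy $P$. Such a $\zeta$ is again of pruning form (a move $\tup{q',\down_i}$, or $\delta(\tup{q_1,\down_{i_1}},\dots,\tup{q_k,\down_{i_k}})$ with $i_1<\cdots<i_k$), and replacing $T$ by $T_P$ preserves this, so $M'$ is pruning and $\tau_{M'}\in\TTP$; the correctness argument is as before, but simpler since $M$ is already finitary. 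I expect the main obstacle to be precisely the termination of $M'$ — ensuring that fixing the nondeterministic choices does not introduce non-terminating stay-loops — and this is exactly what is secured by working with the stay-free $M_\mathrm{sf}$ (so that every step descends) together with the regular productivity tests $\mathrm{prod}_q$.
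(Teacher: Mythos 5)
Your proposal is correct and follows essentially the same route as the paper: pass to the stay-free reformulation $M_\mathrm{sf}$ to eliminate non-terminating stay-loops, use regular tests (your $\mathrm{prod}^{(i)}_{q'}$ are the paper's $T_{\sigma,q',i}$, and your child profiles $P$ are the paper's subsets $D\subseteq D_\sigma$) to determine which child configurations can complete, and for each left-hand side and profile pick one fixed right-hand side from $L(G_{q,\sigma,j,T})$ compatible with that profile, which yields determinism via the mutual disjointness of the profile tests. Your explicit termination and correctness arguments, the handling of several initial states, and the observation that the pruning case needs no detour through $M_\mathrm{sf}$ are all consistent with (and slightly more detailed than) the paper's treatment.
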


\begin{proof}
Let $M = (\Sigma, \Delta, Q, Q_0, R)$ be a nondeterministic \ttd.
Without loss of generality we assume that $M$ has exactly one initial state $q_0$, 
i.e., $Q_0=\{q_0\}$.
We have to construct a deterministic \ttd\ $M'$  
that computes one possible output tree in $\tau_M(t)$ for every $t\in\dom(M)$.
The idea of the proof of~\cite[Lemma]{Eng78} is to pick, at the current node of $t$, 
one of the rules that lead to the generation of an output tree 
(which can be checked by a regular test). However, that idea does not work here,
because $M$ may have an infinite computation on~$t$ (see~\cite[New Observation~5.10]{Eng86}).
Thus, we have to be more careful. 
Note that an infinite computation is entirely due to the stay-instructions in the rules of~$M$.

The stay-instructions can be removed from $M$ by constructing the equivalent stay-free \ttt\ 
$M_\mathrm{sf} = (\Sigma, \Delta, Q, \{q_0\}, R_\mathrm{sf})$,
with general rules, as we did at the end of Section~\ref{sec:trans}. 
Recall that we assume that the regular tests in $\cT_M$ are mutually disjoint, and that
the set $R_\mathrm{sf}$ consists of all general rules $\tup{q,\sigma,j,T}\to \zeta$ 
such that $\zeta\in L(G_{q,\sigma,j,T})$,
for every left-hand side $\tup{q,\sigma,j,T}$ of a rule of $M$.
In this case $M_\mathrm{sf}$ is a top-down \ttt, with possibly infinitely many rules. 
Since its rules do not contain stay-instructions any more,
it does not have infinite computations on the trees in its domain. 
Thus, the idea above can be applied to $M_\mathrm{sf}$, 
which means that for every $q$, $\sigma$, $j$, and $T$ 
we have to pick one general rule $\tup{q,\sigma,j,T}\to \zeta$ from $R_\mathrm{sf}$, 
under the condition that its application leads to the generation of an output tree. 
This condition can be checked by a regular sub-test, as follows. 
Note that $\zeta\in T_\Delta(D_\sigma)$ where 
$D_\sigma=\{\tup{q',\down_i}\mid q'\in Q, \,i\in[1,\rank_\Sigma(\sigma)]\}$.

For every $\sigma\in\Sigma$, $q'\in Q$, and $i\in [1,\rank(\sigma)]$, 
let $T_{\sigma,q',i}$ be the node test over~$\Sigma$ consisting of all $(t,u)$ such that 
$u$ has label $\sigma$ in $t$ 
and there is a computation $\tup{q',ui}\Rightarrow^*_{M,t} s$ for some $s\in T_\Delta$. 
This node test is regular by Corollary~\ref{cor:regdom} 
because $\tmark(T_{\sigma,q',i})$ is the domain of 
a \ttt\ $M_{q',i}$
that on input $\tmark(t,u)$ walks to the marked node $u$, checks that its label is $\sigma$,
moves to the $i$-th child of $u$, and then simulates $M$ on $t$, starting in state $q$.
For every $\sigma\in \Sigma$ and $D\subseteq D_\sigma$, let 
$T_{\sigma,D}$ be the regular node test that is the intersection of all $T_{\sigma,q',i}$
such that $\tup{q',\down_i}\in D$ and all $T_\Sigma^\bullet\setminus T_{\sigma,q',i}$ 
such that $\tup{q',\down_i}\notin D$. Obviously the node tests $T_{\sigma,D}$ are mutually disjoint. 

We now define the deterministic \ttd\ $M' = (\Sigma, \Delta, Q, q_0, R')$, 
where $R'$ consists of the following general rules.
For every left-hand side $\tup{q,\sigma,j,T}$ of a rule of $M$
and every $D\subseteq D_\sigma$,
if $L(G_{q,\sigma,j,T})\cap T_\Delta(D)\neq\nothing$, then $R'$ contains  
the general rule $\tup{q,\sigma,j,T\cap T_{\sigma,D}}\to \zeta$ where $\zeta$ is a fixed element of 
$L(G_{q,\sigma,j,T})\cap T_\Delta(D)$.

It should be clear that $M'$ satisfies the requirements, i.e., it has the same domain as $M_\mathrm{sf}$
and it realizes a subset of $\tau_{M_\mathrm{sf}}$. Note that $M'$ can be constructed effectively,
because $L(G_{q,\sigma,j,T})\cap T_\Delta(D)$ is a regular tree language, and hence 
its nonemptiness can be decided and, if so, an element can be computed. 
Finally, the general rules of $M'$ can be replaced by ordinary rules, 
as discussed after Lemma~\ref{lem:mutdis}.
\qed
\end{proof}

At the end of this section we prove that 
any function that is realized by a composition of nondeterministic \ttt's
can also be realized by a composition of deterministic \ttt's. 
That will (only) be used to show that the results of Section~\ref{sec:lsi} also hold for 
nondeterministic \ttt's and \mt's. 
Let $\cF$ be the class of all partial functions from trees to trees. 

\begin{theorem}\label{thm:fun}
For every $k\geq 1$, $(\NTTD\circ\NTT^k)\cap \cF \subseteq \TTD\circ\TT^k$. 
\end{theorem}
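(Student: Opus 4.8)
The plan is to reduce the nondeterminism of a composition of \ttt's to that of a single leading \ttd, and then eliminate even that one using the uniformizer lemma. The starting point is Theorem~\ref{thm:nondetrightcomp} together with Lemma~\ref{lem:msorel}: in a composition $\NTT\circ\NTT$ the second transducer may always be assumed local, and then (iterating) a composition of $k+1$ nondeterministic \ttt's can be written as $\NTTD\circ\NTTL^{k}$ — or more conveniently, by Lemma~\ref{lem:msorel} applied once to the leading transducer of $\NTTD\circ\NTT^k$, as a composition in which all nondeterminism has been pushed to the left. More precisely, I would first prove by induction on $k$ that $\NTT^k\subseteq \NTTD\circ\TTL^{\,k}$ is \emph{not} what we want (that is false, since the $\TTL$'s are deterministic but the leading $\NTTD$ is not functional), so the correct intermediate goal is: every translation in $\NTTD\circ\NTT^k$ equals $\sigma\circ\rho$ where $\sigma\in\NTTD$ and $\rho\in\TTD\circ\TT^{k-1}$ — i.e., only the very first transducer remains nondeterministic, and it is top-down.

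First I would establish this normal form. Using Lemma~\ref{lem:msorel} on each of the $k+1$ \ttt's in the composition, and then repeatedly applying Theorem~\ref{thm:nondetrightcomp} (absorbing each pruning/relabeling \ttd\ into the \ttt\ on its left) and Theorem~\ref{thm:rightcomp} for the deterministic ones, the composition can be rewritten so that the only nondeterministic component is the leftmost one, which is a (nondeterministic) top-down \ttt; all the remaining components are deterministic and, after absorbing, we may take them to form a composition in $\TTD\circ\TT^{k-1}$ (or simply $\TT^{k}$, merging by Theorem~\ref{thm:rightcomp} where convenient). Thus every $\tau\in\NTTD\circ\NTT^k$ has the form $\tau=\nu\circ\delta$ with $\nu\in\NTTD$ and $\delta\in\TT^{k}$ a partial function.

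Next comes the key functional argument. Suppose $\tau=\nu\circ\delta$ is itself a (partial) function, with $\nu\in\NTTD$ nondeterministic top-down and $\delta$ deterministic. By Lemma~\ref{lem:cut}, $\nu$ has a uniformizer $\nu'\in\TTD$, i.e.\ $\nu'\subseteq\nu$ and $\dom(\nu')=\dom(\nu)$. Consider $\tau'=\nu'\circ\delta\in\TTD\circ\TT^{k}$. I claim $\tau'=\tau$. Indeed $\tau'\subseteq\tau$ since $\nu'\subseteq\nu$. Conversely, take $t\in\dom(\tau)$. Then $t\in\dom(\nu)=\dom(\nu')$, so $\nu'(t)$ is defined, say $\nu'(t)=r$; since $\nu'\subseteq\nu$ we have $(t,r)\in\nu$. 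Now $r\in\ran(\nu)\subseteq\dom(\delta)$? — here I must be careful: it need not be that every output of $\nu$ lies in $\dom(\delta)$. To handle this I would, before splitting off the uniformizer, restrict the range of $\nu$ to $\dom(\delta)$, which is a regular tree language by Corollary~\ref{cor:regdom}; this is legitimate by Corollary~\ref{cor:ranrestr}, which keeps $\nu$ in $\NTTD$, and it does not change $\tau=\nu\circ\delta$ since only intermediate trees on which $\delta$ is undefined are discarded. After this restriction, $\ran(\nu)\subseteq\dom(\delta)$, so $r=\nu'(t)\in\dom(\delta)$, hence $\delta(r)$ is defined and $(t,\delta(r))\in\tau'$. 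But $(t,\delta(r))\in\tau$ as well, and $\tau$ is a function, so $\delta(r)=\tau(t)$, i.e.\ $(t,\tau(t))\in\tau'$. This proves $\tau\subseteq\tau'$, hence $\tau=\tau'\in\TTD\circ\TT^{k}$, as required.

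**Main obstacle.** The routine parts are the bookkeeping in the normal-form step (keeping track of which closure lemma absorbs which transducer, and checking that the leading component stays top-down). The genuinely delicate point — and the one I would write out most carefully — is the interaction between uniformization and the domain of the tail $\delta$: a uniformizer of $\nu$ preserves $\dom(\nu)$ but may send an input to an intermediate tree that a \emph{different} nondeterministic choice would have avoided, and that tree might fall outside $\dom(\delta)$. The restriction of $\ran(\nu)$ to $\dom(\delta)$ via Corollaries~\ref{cor:regdom} and~\ref{cor:ranrestr} is exactly what closes this gap, and it is essential that this restriction does not alter the composed function (which uses that $\delta$ is a function, so the only intermediate trees that matter are those mapping into the correct output — but since $\tau$ is a function, any intermediate tree outside $\dom(\delta)$ contributes nothing and can be deleted). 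For the induction over $k$ one simply applies the base case after peeling the leading $\NTTD$ and one more nondeterministic \ttt\ together — combining them is again a $\NTTD$ by the normal form — so no separate induction machinery is needed beyond the normal-form lemma.
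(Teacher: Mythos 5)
Your ``key functional argument'' (range-restrict via Corollaries~\ref{cor:regdom} and~\ref{cor:ranrestr}, then uniformize by Lemma~\ref{lem:cut}, then use functionality of $\tau$ to conclude that the uniformized composition equals $\tau$) is exactly the engine of the paper's proof. But there is a genuine gap in the step that is supposed to reduce to that situation: the ``normal form'' $\tau=\nu\circ\delta$ with $\nu\in\NTTD$ and $\delta\in\TT^k$ is not established by the lemmas you cite, and is not available. Lemma~\ref{lem:msorel} splits a \ttt\ $M$ into a \emph{deterministic} relabeling and a \emph{local} \ttt\ $M'$, but $M'$ inherits all of the nondeterminism of $M$ (the lemma only promises that $M'$ is deterministic \emph{if} $M$ is). So after splitting each factor and absorbing the relabelings leftward via Theorems~\ref{thm:nondetrightcomp} and~\ref{thm:rightcomp}, you obtain something like $\TTR\circ(\NTTL)^{k+1}$: $k+1$ local but still nondeterministic transducers, not one nondeterministic leader followed by deterministic components. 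You appear to have conflated ``local'' with ``deterministic''. Since the nondeterministic factors in the tail are not top-down, Lemma~\ref{lem:cut} cannot be applied to them directly, and there is no mechanism in the paper for pushing their nondeterminism into the leading transducer; indeed, if $\NTTD\circ\NTT^k\subseteq\NTTD\circ\TT^k$ held for arbitrary relations, the theorem would follow with no use of functionality at all, which should have been a warning sign.

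The paper closes this gap with a different decomposition: by Lemmas~\ref{lem:ttinmt} and~\ref{lem:mtintt}, $\NTT\subseteq \NTTD\circ\NTTD\circ\TTD\circ\TT\circ\NTTD$, i.e.\ every nondeterministic \ttt\ factors so that \emph{all} nondeterminism is concentrated in top-down transducers, interleaved with deterministic ones. Writing $\tau=\tau_1\circ\cdots\circ\tau_m$ accordingly ($m=5k+1$), one restricts ranges along the entire chain so that $\ran(\tau_i)\subseteq\dom(\tau_{i+1})$ (exactly your Corollary~\ref{cor:ranrestr} argument, but applied at every link), replaces each $\tau_i\in\NTTD$ by a uniformizer $\tau_i'\in\TTD$, and observes that the resulting composition is a function contained in $\tau$ with the same domain, hence equal to $\tau$ since $\tau$ is a function. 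Theorem~\ref{thm:rightcomp} then collapses $\TTD\circ(\TTD\circ\TTD\circ\TTD\circ\TT\circ\TTD)^k$ into $\TTD\circ\TT^k$. To repair your proof you would need to replace your normal-form step by this decomposition (or something equivalent that places every nondeterministic factor into a class to which Lemma~\ref{lem:cut} applies); the rest of your argument then goes through, applied simultaneously to all nondeterministic factors rather than to a single leading one.
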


\begin{proof}
By Lemmas~\ref{lem:ttinmt} and~\ref{lem:mtintt}, 
$\NTT\subseteq \NTTD\circ\NTTD\circ\TTD\circ\TT\circ\NTTD$.
Now let $\tau\in (\NTTD\circ\NTT^k)\cap \cF$.
Then $\tau=\tau_1\circ \cdots \circ \tau_m$
where $m=5k+1$, $\tau_{\,5j}\in\TT$ for every $j\in[1,k]$, and 
$\tau_i\in\NTTD$ for every $i\in[1,m]\setminus\{5j\mid j\in[1,k]\}$. 
By Corollary~\ref{cor:regdom},
the domain of a translation in $\NTT$ is regular.
Hence, we may assume that $\ran(\tau_i)\subseteq \dom(\tau_{i+1})$ for every $i\in[1,m-1]$.
If not, then we change $\tau_i$ into~$\bar{\tau}_i$ for \mbox{$i=m,\dots,1$} inductively as follows. 
First, $\bar{\tau}_m=\tau_m$. Second, for $i<m$ 
we obtain $\bar{\tau}_i$ from $\tau_i$ by restricting its range to  
$\dom(\bar{\tau}_{i+1})$, see Corollary~\ref{cor:ranrestr} and the paragraph preceding it. 

Since $\tau$ is a function, it should be clear that 
$\tau=\tau'_1\circ \cdots \circ \tau'_m$ where $\tau'_i\in\TTD$ 
is the uniformizer of $\tau_i$ that exists by Lemma~\ref{lem:cut}
if $\tau_i \in \NTTD$, and $\tau'_i=\tau_i$ if $\tau_i \in \TT$.
Thus, $\tau\in\TTD\circ(\TTD\circ\TTD\circ\TTD\circ\TT\circ\TTD)^k$ and so, 
by Theorem~\ref{thm:rightcomp}, $\tau\in\TTD\circ\TT^k$. 
\qed
\end{proof}

\begin{corollary}\label{cor:fun}
For every $k\geq 1$, $\NMT^k\cap \cF \subseteq \MT^k$. 
\end{corollary}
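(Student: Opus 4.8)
The plan is to derive Corollary~\ref{cor:fun} from Theorem~\ref{thm:fun} together with the relationships between macro tree transducers and \ttt's established in Section~\ref{sec:dmtmso}. The key observation is that $\NMT^k$ sits between two composition classes of \ttt's that differ only by absorbable top-down factors, so the functional restriction of $\NMT^k$ can be ``squeezed'' by Theorem~\ref{thm:fun} and then pushed back into $\MT^k$ using Lemma~\ref{lem:ttvsmt}.

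First I would recall the two sandwiching inclusions. By Lemma~\ref{lem:mtintt} we have $\NMT\subseteq\TTD\circ\FTT\subseteq\TTD\circ\NTT$, and more crudely $\NMT\subseteq\NTTD\circ\NTT$, so iterating gives $\NMT^k\subseteq(\NTTD\circ\NTT)^k$. Rewriting the composition, $(\NTTD\circ\NTT)^k=\NTTD\circ(\NTT\circ\NTTD)^{k-1}\circ\NTT$, and since $\NTT\circ\NTTD\subseteq\NTT$ by the analogue of Theorem~\ref{thm:rightcomp} for nondeterministic transducers (indeed $\NTT\circ\NTTD\subseteq\NTT\circ\TTR\circ\NTTL\subseteq\NTT$ via Lemmas~\ref{lem:msorel} and Theorem~\ref{thm:nondetrightcomp}, absorbing the pruning relabeling into the first transducer), this collapses to $\NMT^k\subseteq\NTTD\circ\NTT^k$. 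Therefore $\NMT^k\cap\cF\subseteq(\NTTD\circ\NTT^k)\cap\cF\subseteq\TTD\circ\TT^k$ by Theorem~\ref{thm:fun}.

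It then remains to show $\TTD\circ\TT^k\subseteq\MT^k$. By Lemma~\ref{lem:ttvsmt} we have $\MT=\TTD\circ\TT$, hence $\MT^k=(\TTD\circ\TT)^k$, and again using $\TT\circ\TTD\subseteq\TT$ (the nontrivial direction of Theorem~\ref{thm:rightcomp}) one rearranges $(\TTD\circ\TT)^k=\TTD\circ(\TT\circ\TTD)^{k-1}\circ\TT=\TTD\circ\TT^k$. So $\TTD\circ\TT^k=\MT^k$, which closes the argument: $\NMT^k\cap\cF\subseteq\MT^k$.

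The main obstacle, such as it is, is purely bookkeeping: making sure the ``absorb the top-down factor'' moves are applied on the correct side and that the alternating composition $(\NTTD\circ\NTT)^k$ really telescopes down to the shape $\NTTD\circ\NTT^k$ required by the hypothesis of Theorem~\ref{thm:fun}. One must be careful that $\NTT^k$ in Theorem~\ref{thm:fun} means $k$ copies of $\NTT$ with one extra $\NTTD$ on the far left, matching exactly what the telescoping produces; since each $\NMT$ contributes a $\TTD$ on the left and $\NTT\circ\NTTD\subseteq\NTT$ lets the interior $\TTD$'s be swallowed rightward, the count works out. No genuinely new construction is needed — the corollary is a formal consequence of Theorem~\ref{thm:fun}, Lemma~\ref{lem:ttvsmt}, Lemma~\ref{lem:mtintt}, and the composition-closure theorems of Section~\ref{sec:comp}.
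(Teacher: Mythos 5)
There is a genuine gap in the first half of your argument. The telescoping step rests on the claim $\NTT\circ\NTTD\subseteq\NTT$, which you justify by the chain $\NTT\circ\NTTD\subseteq\NTT\circ\TTR\circ\NTTL\subseteq\NTT$. The first link is fine ($\NTTD\subseteq\TTR\circ\NTTDL$ by Lemma~\ref{lem:msorel}, and $\TTR\subseteq\NTTP$ is absorbed by Theorem~\ref{thm:nondetrightcomp}), but the remaining link requires $\NTT\circ\NTTDL\subseteq\NTT$, and the paper only proves right-composition closure of $\NTT$ with \emph{pruning} transducers (Theorem~\ref{thm:nondetrightcomp}); a general top-down transducer, even a local one, may copy. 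This is not a technicality: if $\tau_1$ is nondeterministic with $\tau_1(e)$ infinite and $\tau_2(t)=\sigma(t,t)$ is a copying \dttdl, then $\tau_1\circ\tau_2$ maps $e$ onto $\{\sigma(t,t)\mid t\in\tau_1(e)\}$, which no single \ttt\ can realize, since after emitting $\sigma$ its two branches evolve independently from one of finitely many configurations on the one-node input. The proof of Lemma~\ref{lem:nondetttottdl} explicitly flags exactly this obstruction. Consequently you have not established $\NMT^k\subseteq\NTTD\circ\NTT^k$, so Theorem~\ref{thm:fun} cannot be invoked with exponent $k$; the crude fallback $\NMT^k\subseteq\NTT^{2k}$ only yields $\MT^{2k}$.

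The fix is to reverse the order of operations: uniformize first, telescope second. The paper keeps the finer decomposition $\NMT\subseteq\TTD\circ\TT\circ\NTTP$ from Lemma~\ref{lem:mtintt}, so that the \emph{only} nondeterministic factors in $(\TTD\circ\TT\circ\NTTP)^k$ are pruning. Since the overall composition is a function, one arranges (via Corollaries~\ref{cor:regdom} and~\ref{cor:ranrestr}) that each range is contained in the next domain and then replaces each $\NTTP$ factor by a uniformizer, which by Lemma~\ref{lem:cut} lies in $\TTP\subseteq\TTD$. Only now, with every factor deterministic, does $\TT\circ\TTD\subseteq\TT$ (Theorem~\ref{thm:rightcomp}) legitimately collapse $(\TTD\circ\TT\circ\TTD)^k$ to $\TTD\circ\TT^k=\MT^k$. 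Your second half, the identity $\TTD\circ\TT^k=\MT^k$, is correct and matches Corollary~\ref{cor:ttmttt}.
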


\begin{proof}
By the same argument as in the proof of Theorem~\ref{thm:fun},
using Lemma~\ref{lem:mtintt} only, we obtain that 
$\NMT^k\cap \cF \subseteq (\TTD\circ\TT\circ\TTD)^k$.
By Theorem~\ref{thm:rightcomp} that is included in $\TTD\circ\TT^k$,
which equals $\MT^k$ by Corollary~\ref{cor:ttmttt}.
\qed
\end{proof}

Since the inclusions in Corollary~\ref{cor:ttmttt} are proper, as discussed after that corollary,
Theorem~\ref{thm:fun} and Corollary~\ref{cor:fun} imply that $\NTT^k$ and $\NMT^k$ 
are also proper hierarchies, i.e., $\NTT^k\subsetneq \NTT^{k+1}$ and $\NMT^k\subsetneq \NMT^{k+1}$
for every $k\geq 1$.

\section{Productivity}\label{sec:prod}

In this section we prove that every \ttt\ can be decomposed into a pruning \ttt\ 
and another \ttt\ such that the composition is linear-bounded.
It implies that we may always assume that a composition of two \ttt's is linear-bounded.  
Recall from Section~\ref{sec:trees} that the composition 
of tree translations $\tau_1\subseteq T_\Sigma\times T_\Delta$ and 
$\tau_2\subseteq T_\Delta\times T_\Gamma$ is 
linear-bounded if there is a constant $c\in\nat$ such that 
for every $(t,s)\in \tau_1\circ\tau_2$ there exists $r\in T_\Delta$ such that 
$(t,r)\in\tau_1$, $(r,s)\in\tau_2$, and $|r|\leq c\cdot|s|$. 
Formally we say that the pair $(\tau_1,\tau_2)$ is linear-bounded. 
Recall also that for classes $\cT_1$ and $\cT_2$ of tree translations, 
the class $\cT_1\ast\cT_2$ consists of all 
translations $\tau_1\circ\tau_2$ such that 
$\tau_1\in\cT_1$, $\tau_2\in\cT_2$, and $(\tau_1,\tau_2)$ is linear-bounded.
Two elementary properties of this class operation were stated in Lemma~\ref{lem:ast}.
We will prove the following theorem. 

\begin{theorem}\label{thm:prod}
$\NTT\subseteq \NTTP \ast \NTT$ and $\TT\subseteq \TTP \ast \TT$.
\end{theorem}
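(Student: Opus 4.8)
The plan is to decompose an arbitrary \ttt\ $M$ into a pruning \ttt\ $P$ that deletes the ``superfluous'' parts of the input tree, followed by a second \ttt\ $M'$ that is guaranteed to be \emph{productive} on at least one of the trees $P$ produces, where productivity means that every leaf and every monadic node of the intermediate tree is actually used to generate an output node. The key observation driving the construction is that a computation of $M$ on an input tree $t$ proceeds by walking around on $t$, and large stretches of that walk may produce no output; what is relevant for the output is only the ``skeleton'' of $t$ consisting of the nodes where output is produced together with the branching structure connecting them. Using regular look-around, the pruning transducer $P$ should be able to detect which children of a node lead (in some computation of $M$) to output and which do not, and prune away the latter; the relevant tests are regular by Corollary~\ref{cor:regdom} and Lemma~\ref{lem:inversett}, since they can be phrased as domains of auxiliary \ttt's.

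\textbf{Key steps, in order.} First I would reduce to a convenient normal form: by Lemma~\ref{lem:msorel} write $M$ as $N\circ M_0$ with $N$ a deterministic relabeling \ttt\ and $M_0$ local, and observe (using Lemma~\ref{lem:ast} and the fact that relabelings are pruning) that it suffices to prove the statement for the local \ttt\ $M_0$; I would also put $M_0$ into the stay-free normal form with general rules from the end of Section~\ref{sec:trans}, so that a computation truly ``moves'' at each step. Second, for the nondeterministic case, I would analyze a single successful parallel computation of $M_0$ on $t$ producing output $s$: the derivation tree $d$ of $G^{\mathrm{der}}_{M_0,t}$ has $|s|$ nodes labeled by output symbols, and I would argue that the set of input nodes ``touched with output purpose'' can be organized so that the pruned input tree $r$ has size linear in $|s|$ — roughly, each output node contributes a bounded number of input nodes to the skeleton, and monadic chains in $r$ correspond to move-only stretches that can themselves be compressed to bounded length using the finitary-pumping Lemma~\ref{lem:finpump}. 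Third, I would design $P$ (a pruning \ttt, hence nondeterministic and finitary) that, using regular look-around to predict along which children $M_0$ will route output, keeps exactly the skeleton nodes and prunes the rest, relabeling each kept node with enough information (a bounded amount) so that $M_0$ can be rewritten as a \ttt\ $M'$ operating on $r$ that faithfully reproduces $s$; the composition $P\circ M'$ is then linear-bounded by construction. Finally, for the deterministic case I would apply the same construction but then invoke the uniformizer Lemma~\ref{lem:cut}: $P$ is a (nondeterministic) pruning \ttd, so it has a deterministic pruning uniformizer $P'\in\TTP$ with the same domain, and restricting $M'$ to the range of $P'$ (which is regular, by Lemma~\ref{lem:pru}) while keeping $M'$ deterministic yields $M\in\TTP\ast\TT$.

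\textbf{Main obstacle.} The hard part will be making precise the claim that the relevant ``skeleton'' of the input tree has size linearly bounded by $|s|$ \emph{and} is definable by a pruning transducer using only regular (MSO) look-around. The difficulty is that between two output-producing steps the transducer may wander through an unboundedly large region of $t$; I would control this by showing that the portion of $t$ genuinely needed is, up to relabeling, a tree whose branching nodes and output-carrying nodes are linear in $|s|$ and whose maximal ``move-only'' paths can be contracted to bounded length (otherwise Lemma~\ref{lem:finpump}-style pumping of the corresponding subcomputation of $M_0$ shows a shorter route to the same output exists, contradicting minimality of the chosen computation). Getting the bookkeeping right — which children to keep, how much finite information to attach to each kept node, and verifying that $M'$ can be a genuine \ttt\ (respecting the up/down moves on the pruned tree) — is where the ``quite some effort'' mentioned in the Introduction is concentrated; in the deterministic case one must additionally ensure that the productivity argument and the uniformizer interact correctly so that a single intermediate tree of linear size is selected for each input.
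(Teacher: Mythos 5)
Your overall plan coincides with the paper's: reduce to a local \ttt, prune away the ``superfluous'' parts of the input so that every leaf and every monadic node of the remaining tree is productive, record the move-only behaviour of $M$ on the deleted parts in the labels of the kept nodes via regular look-around (regularity coming from Corollary~\ref{cor:regdom}), and uniformize the pruning transducer in the deterministic case. (The paper merely organizes the pruning in two stages --- first unproductive subtrees, then unproductive monadic nodes, in analogy with $\epsilon$-rule and chain-rule elimination --- which is a presentational refinement of your single ``skeleton'' construction.) One remark on your size argument: no pumping or minimality of the computation is needed. Once the intermediate tree $r$ is productive, the map sending each leaf or monadic node to an output node it produces is injective, so the number of such nodes is at most $|s|$ and hence $|r|\leq 2|s|$; your appeal to Lemma~\ref{lem:finpump} to ``compress move-only stretches to bounded length'' is both unnecessary and hard to make precise, since what must be bounded is the intermediate tree, not the length of a computation on the fixed input tree.

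The genuine gap is in your deterministic case. Taking a uniformizer $P'$ of the nondeterministic pruning transducer $P$ \emph{first} and then ``restricting $M'$ to the range of $P'$'' fails on two counts: the tree $P'(t)$ selected by the uniformizer need not lie in $\dom(M')$ at all (so $\tau_{P'}\circ\tau_{M'}$ loses input trees and the equality with $\tau_M$ breaks), and even when it does, it need not be a \emph{productive} input tree of $M'$, so the bound $|P'(t)|\leq 2\cdot|\tau_M(t)|$ --- the entire point of the construction --- is not guaranteed. The restrictions must be performed in the opposite order: first restrict the domain of $M'$ to its productive input trees, then restrict the range of $P$ to this new domain, and only then uniformize. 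This requires an ingredient your proposal omits: the set of input trees on which a deterministic \ttt\ is productive is a regular tree language (the paper's Lemma~\ref{lem:prodtest}, proved by expressing productivity of a node as the domain of an auxiliary nondeterministic \ttt\ that follows a branch of the derivation tree, and invoking Corollary~\ref{cor:regdom}). Without this regularity fact the first restriction cannot be carried out, and without the correct order the uniformizer does not select an intermediate tree of linear size.
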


Since pruning \ttt's can be absorbed to the right by arbitrary \ttt's 
(by Theorems~\ref{thm:nondetrightcomp} and~\ref{thm:rightcomp}), 
Theorem~\ref{thm:prod} can be generalized to compositions of \ttt's.
It implies that we may always assume that a composition of a \ttt\ 
with any number of \ttt's is linear-bounded. 

\begin{corollary}\label{cor:prod}
Let $k\geq 1$.
\begin{enumerate}
\item[$(1)$] $\NTT^k\subseteq \NTTP\ast\NTT^k$ and $\NTT\circ\NTT^k = \NTT\ast\NTT^k$, and
\item[$(2)$] $\TT^k\subseteq \TTP\ast\TT^k$ and $\TT\circ\TT^k = \TT\ast\TT^k$.
\end{enumerate}
\end{corollary}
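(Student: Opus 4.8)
The plan is to derive Corollary~\ref{cor:prod} from Theorem~\ref{thm:prod} together with the two composition-absorption results (Theorems~\ref{thm:rightcomp} and~\ref{thm:nondetrightcomp}) and the elementary algebraic properties of the $\ast$-operation recorded in Lemma~\ref{lem:ast}. The two parts are proved by the same argument, one in the nondeterministic setting and one in the deterministic setting, so I would write the nondeterministic case in detail and remark that the deterministic case is identical with $\TT$, $\TTP$ in place of $\NTT$, $\NTTP$ (using Theorem~\ref{thm:rightcomp} instead of Theorem~\ref{thm:nondetrightcomp}).

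First I would prove $\NTT^k\subseteq \NTTP\ast\NTT^k$ by induction on $k$. The base case $k=1$ is exactly the first inclusion of Theorem~\ref{thm:prod}. For the induction step, assume $\NTT^k\subseteq\NTTP\ast\NTT^k$. Then, using Theorem~\ref{thm:prod} on the leftmost factor and monotonicity of composition,
\[
\NTT^{k+1}=\NTT\circ\NTT^k\subseteq(\NTTP\ast\NTT)\circ\NTT^k .
\]
Now I apply the second inclusion of Lemma~\ref{lem:ast}, namely $(\cT_1\ast\cT_2)\ast\cT_3\subseteq\cT_1\ast(\cT_2\circ\cT_3)$ — but to use it I first need to turn the trailing ordinary composition $\circ\,\NTT^k$ into a $\ast$. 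That is where the absorption result enters: by Theorem~\ref{thm:nondetrightcomp} a pruning \ttt\ can be absorbed to the right, but what I actually need is the observation, already spelled out after Theorem~\ref{thm:nondetrightcomp}, that $\NTT\circ\NTT\subseteq\NTT\circ\NTTL$, hence more is true — I should instead invoke the second statement of the corollary, $\NTT\circ\NTT^k=\NTT\ast\NTT^k$, which I prove independently below; granting that, $(\NTTP\ast\NTT)\circ\NTT^k=(\NTTP\ast\NTT)\circ(\NTT\circ\NTT^{k-1})$ and, after rewriting $\NTT\circ\NTT^k$ as a linear-bounded composition and using $\cT_1\circ(\cT_2\ast\cT_3)\subseteq(\cT_1\circ\cT_2)\ast\cT_3$ from Lemma~\ref{lem:ast}, collapses to a single $\ast$ with a $\NTT^k$ second argument. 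Reassociating via Lemma~\ref{lem:ast} gives $\NTTP\ast(\NTT\circ\NTT^{k})\subseteq\NTTP\ast\NTT^{k+1}$, completing the induction. The cleanest route, which I would actually write, is: prove the second claim first, then feed it into the first.

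So the real order is: (i) prove $\NTT\circ\NTT^k=\NTT\ast\NTT^k$. The inclusion $\supseteq$ is trivial since $\ast$ is a sub-relation of $\circ$. For $\subseteq$: write $\NTT\circ\NTT^k=\NTT\circ\NTT\circ\NTT^{k-1}$, apply Theorem~\ref{thm:prod} to the first $\NTT$ to get $\subseteq(\NTTP\ast\NTT)\circ\NTT\circ\NTT^{k-1}$, use Lemma~\ref{lem:ast}'s first inclusion repeatedly to slide the $\circ\NTT^{k-1}$-factors inside, obtaining $\subseteq\NTTP\ast(\NTT\circ\NTT^{k})$; then by Theorem~\ref{thm:nondetrightcomp} the leading $\NTTP$ is absorbed into the $\NTT$ that follows it inside the $\ast$ — but absorption happens on the \emph{left} of the composition inside $\ast$, i.e. $\NTTP\circ\NTT\subseteq\NTT$ via Theorem~\ref{thm:nondetrightcomp} read as $\NTT\circ\NTTP\subseteq\NTT$ applied after reversing? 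No — here I need that a pruning transducer composed on the \emph{left} with a general one stays general, which is immediate since $\NTTP\subseteq\NTT$ and we are not claiming a reduction in the number of transducers at this point; rather the point of the $\ast$ is just the size bound. Concretely $\NTTP\ast(\NTT\circ\NTT^k)\subseteq\NTT\ast\NTT^k$ because $\NTTP\circ\NTT^{k+1}\subseteq\NTT^{k+1}$ is false as stated but $\NTTP\circ\NTT\subseteq\NTT$ holds by Theorem~\ref{thm:nondetrightcomp}, and linear-boundedness of the pair is preserved when the two transducers of a linear-bounded pair are merged into one — this is exactly the content of the first inclusion of Lemma~\ref{lem:ast} with $\cT_1$ trivial, giving $(\NTTP\ast\NTT)\circ\NTT^k\subseteq(\NTTP\circ\NTT)\ast\NTT^k\subseteq\NTT\ast\NTT^k$. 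Then (ii) with the second claim in hand, the induction in the previous paragraph for $\NTT^k\subseteq\NTTP\ast\NTT^k$ goes through: in the step, $(\NTTP\ast\NTT)\circ\NTT^k\subseteq\NTTP\ast(\NTT\circ\NTT^k)=\NTTP\ast(\NTT\ast\NTT^k)\subseteq\NTTP\ast\NTT^{k+1}$, using Lemma~\ref{lem:ast} twice and the just-proved claim once.

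The main obstacle is purely bookkeeping: getting the associativity juggling right so that at each stage I am applying Lemma~\ref{lem:ast} with the correct $\cT_i$, and correctly citing which composition-absorption theorem (Theorem~\ref{thm:rightcomp} for the deterministic case, Theorem~\ref{thm:nondetrightcomp} for the nondeterministic case) turns $(\NTTP\ast\NTT)\circ\NTT$ into $\NTT\ast\NTT$ rather than merely $\NTTP\ast(\NTT\circ\NTT)$. No genuinely new construction is required — the theorem does all the transducer-level work, and the corollary is a formal consequence. I would present it as a short proof: state the $\supseteq$ directions as trivial, prove the $\subseteq$ direction of the second equality by one application of Theorem~\ref{thm:prod} plus Lemma~\ref{lem:ast} plus Theorem~\ref{thm:nondetrightcomp}, then prove the first inclusion by induction on $k$ feeding in the second equality, and finally note that replacing $\NTT,\NTTP$ by $\TT,\TTP$ and Theorem~\ref{thm:nondetrightcomp} by Theorem~\ref{thm:rightcomp} yields part~$(2)$ verbatim.
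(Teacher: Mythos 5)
There is a genuine gap, and it sits exactly at the point where you hesitate. Your plan applies Theorem~\ref{thm:prod} to the \emph{leftmost} factor and then tries to move the trailing $\circ\,\NTT^k$ inside the $\ast$, i.e.\ to pass from $(\NTTP\ast\NTT)\circ\NTT^k$ to $\NTTP\ast(\NTT\circ\NTT^k)$ or to $(\NTTP\circ\NTT)\ast\NTT^k$. Neither step is licensed by Lemma~\ref{lem:ast}: its first inclusion only allows composing on the \emph{left} of a $\ast$-pair, and the inclusion $(\cT_1\ast\cT_2)\circ\cT_3\subseteq\cT_1\ast(\cT_2\circ\cT_3)$ is false in general, because linear-boundedness bounds the intermediate tree by the size of the \emph{final} output; composing a further (possibly shrinking) translation on the right destroys that bound. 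Likewise $(\cT_1\ast\cT_2)\circ\cT_3\subseteq(\cT_1\circ\cT_2)\ast\cT_3$ is not ``Lemma~\ref{lem:ast} with $\cT_1$ trivial'': linear-boundedness of $(\tau_1,\tau_2)$ says nothing about linear-boundedness of $(\tau_1\circ\tau_2,\tau_3)$. Compounding this, you invoke ``$\NTTP\circ\NTT\subseteq\NTT$ by Theorem~\ref{thm:nondetrightcomp}'', but that theorem states $\NTT\circ\NTTP\subseteq\NTT$ --- the pruning transducer must be on the \emph{right}. Left-composition with a pruning \ttt\ is not established for nondeterministic transducers anywhere in the paper (for deterministic ones it would follow from $\TTP\subseteq\TTSU$ and Theorem~\ref{thm:ttsuott}, but that is not what you cite and has no nondeterministic analogue here).

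The fix is to reverse the order of operations: apply the induction hypothesis to the \emph{right-hand} factor first. Then
\[
\NTT\circ\NTT^k\;\subseteq\;\NTT\circ(\NTTP\ast\NTT^k)\;\subseteq\;(\NTT\circ\NTTP)\ast\NTT^k\;\subseteq\;\NTT\ast\NTT^k ,
\]
where the middle step is the first inclusion of Lemma~\ref{lem:ast} used in its intended direction and the last is Theorem~\ref{thm:nondetrightcomp}; this already yields the second equation at level $k$. One further application of Theorem~\ref{thm:prod} to the leading $\NTT$, followed by the second inclusion of Lemma~\ref{lem:ast}, gives $(\NTTP\ast\NTT)\ast\NTT^k\subseteq\NTTP\ast(\NTT\circ\NTT^k)=\NTTP\ast\NTT^{k+1}$, completing the induction. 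This is precisely why Theorem~\ref{thm:prod} splits the pruning transducer off to the \emph{left}: it is meant to be absorbed by the transducer preceding it via right-composition closure, not pushed rightwards. Your concluding remark for part~(2) --- replace $\NTT,\NTTP$ by $\TT,\TTP$ and Theorem~\ref{thm:nondetrightcomp} by Theorem~\ref{thm:rightcomp} --- is fine once the argument is repaired in this way.
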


\begin{proof}
(1) The proof of the inclusion is by induction on $k$. For $k=1$ it is Theorem~\ref{thm:prod}.
The induction step is proved as follows:
\[
\begin{array}{lll}
\NTT\circ\NTT^k & \subseteq & \NTT\circ (\NTTP\ast\NTT^k) \\
                & \subseteq & (\NTT\circ \NTTP)\ast\NTT^k \\
                & \subseteq & \NTT\ast\NTT^k \\
                & \subseteq & (\NTTP \ast \NTT)\ast\NTT^k \\
                & \subseteq & \NTTP \ast (\NTT\circ\NTT^k)
\end{array}
\]
where the first inclusion is by the induction hypothesis and the remaining inclusions are 
by Lemma~\ref{lem:ast}, Theorem~\ref{thm:nondetrightcomp} (which says that 
$\NTT\circ \NTTP \subseteq \NTT$), Theorem~\ref{thm:prod}, and Lemma~\ref{lem:ast} again.
The equation now follows from the inclusions above. 

(2) The proof is exactly the same as in~(1), using Theorem~\ref{thm:rightcomp} instead of
Theorem~\ref{thm:nondetrightcomp}.
\qed
\end{proof}

The remainder of this section is devoted to the proof of Theorem~\ref{thm:prod}.
It is essentially a variant of the proof of~\cite[Lemma~4.1]{EngCompAG},
which is the key lemma of~\cite{EngCompAG} and concerns 
the removal of ``superfluous computations'' in attribute grammars.
In its turn, that proof generalized the proof of~\cite[Lemma~1]{Bak78}
where this was done for top-down tree transducers (and strangely enough, the author
of~\cite{EngCompAG} did not mention that). 

To prove Theorem~\ref{thm:prod} it suffices, 
by Lemma~\ref{lem:msorel}, Lemma~\ref{lem:ast}, and 
Theorems~\ref{thm:nondetrightcomp} and~\ref{thm:rightcomp},
to consider local \ttt's, i.e., to prove that 
$\NTTL\subseteq \NTTP \ast \NTT$ and that $\TTL\subseteq \TTP \ast \TT$.
We prove the first and second inclusion in a first and second subsection, respectively. 
In the first subsection we additionally take care that the construction preserves the determinism 
of the given \ttt.

\subsection{Nondeterministic Productivity}\label{sub:nondetprod}

Let $M=(\Sigma,\Delta,Q,Q_0,R)$ be a \ttt.
For a pair $(t,s)\in\tau_M$ and a computation $\tup{q_0,\rt_t}\Rightarrow^*_{M,t} s$ with $q_0\in Q_0$,
we say that a node $u$ of $t$ is \emph{productive} (in that computation) if there is a $q\in Q$
such that an output rule is applied to the configuration $\tup{q,u}$ in the computation.
Obviously, the size of $s$ is at least the number of productive nodes of $t$.
For $i\in\{0,1\}$ we define the computation to be \emph{$i$-productive}
if all nodes of $t$ of rank~$i$ are productive.\footnote{Recall from Section~\ref{sec:trees} that 
the rank of a node is the rank of its label, i.e., the number of its children.
}
Moreover, the computation is \emph{productive} if it is both $0$-productive and $1$-productive,
i.e., all leaves and monadic nodes of $t$ are productive.
Finally, we define $\tau^0_M$ to consist of all $(t,s)\in\tau_M$ 
for which there is a 0-productive computation 
$\tup{q_0,\rt_t}\Rightarrow^*_{M,t} s$ for some $q_0\in Q_0$, and 
we define $\tau^{01}_M$ to consist of all 
$(t,s)\in\tau_M$ for which there is a productive computation of that form. 
Since the size of $t$ is at most twice the number of leaves 
plus the number of monadic nodes of $t$,\footnote{To be
precise, $|t|\leq (2\cdot|t|_0-1)+|t|_1$ where $|t|_0$ and $|t|_1$ 
are the number of leaves and monadic nodes of $t$, respectively. 
}
it follows that $|t|\leq 2\cdot|s|$ for every $(t,s)\in\tau^{01}_M$.

To prove that $\NTTL\subseteq \NTTP \ast \NTT$, 
our goal is to construct, for a given \ttl\ $M$, 
a pruning \ttt\ $N$ and a \ttl\ $M'$ 
in such a way that 
$\tau_N\circ\tau_{M'}\subseteq\tau_M$ and $\tau_M\subseteq \tau_N\circ \tau^{01}_{M'}$.
This obviously implies that $\tau_N\circ\tau_{M'}=\tau_M$.
The second inclusion says that 
for every $(t,s)\in\tau_M$ there exists a tree $t'$ 
such that $(t,t')\in\tau_N$ and $(t',s)\in\tau^{01}_{M'}$. 
Thus, as observed above, $|t'|\leq 2\cdot|s|$,
and hence $(\tau_N,\tau_{M'})$ is linear-bounded (for the constant $c=2$).

To this aim, $N$ will remove sufficiently many unproductive nodes from the input tree, 
and add state transition information of $M$ to the labels of the remaining nodes,
thus allowing $M'$ to simulate $M$ without having to visit those unproductive nodes.
Since productivity of a node of the input tree $t$ depends on the computation of $M$ on $t$,
$N$ nondeterministically guesses which nodes to remove, and uses its regular tests 
to determine the possible behaviour of $M$ on the remaining nodes. 
To reduce the technical complexity of the proof,
the construction of $N$ and $M'$ will be done in two steps, 
removing unproductive leaves and monadic nodes in the first and second step,
respectively. 

\begin{lemma}\label{lem:nondet0-prod}
For every \ttl\ $M$ there are a \ttp\ $N$ and a \ttl\ $M'$
such that 
\[
\tau_N\circ\tau_{M'}\subseteq\tau_M \subseteq \tau_N\circ \tau^0_{M'}.
\]
If $M$ is deterministic, then so is $M'$. 
\end{lemma}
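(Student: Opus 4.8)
The plan is to have $N$ guess, at each leaf $u$ of the input tree $t$, whether $u$ will be productive in some fixed computation of $M$; if the guess is "unproductive", $N$ deletes $u$ together with an appropriate portion of its surroundings, and otherwise keeps $u$ but decorates its label with enough state-transition information of $M$ so that $M'$ can simulate $M$ on the pruned tree $t'$ without ever needing to visit the deleted leaves. The crucial point is that the behaviour of a local \ttt\ $M$ on a subtree attached at a node, when $M$ enters that subtree in some state $q$, is captured by a finite amount of information: the relation describing, for each entry state $q$, the set of pairs $(q',\text{exit direction})$ with which $M$ can leave the subtree through its root together with whether an output rule is ever applied inside; and symmetrically for the "context" (the tree minus the subtree). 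These finite abstractions are computable by regular tests, by Corollary~\ref{cor:regdom} and Lemma~\ref{lem:inversett} (we build auxiliary \ttt's whose domains encode exactly the required behaviour, run on marked trees), so $N$, although it is only a pruning \ttd, is permitted to use them in its rules.

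**Key steps.** First I would formalize the "summary" of $M$ on a subtree: for a state $q$, a subtree $t|_u$, let $\mathrm{beh}(q,t|_u)$ record the set of states in which a copy of $M$ started at $u$ in state $q$ can return to the parent of $u$ (i.e.\ leave $t|_u$ upward through $u$), and, crucially, whether such a returning computation can be chosen so that \emph{no output rule is applied inside $t|_u$} — call such a computation \emph{silent on $t|_u$}. A leaf $u$ of $t$ can be made unproductive in a computation of $M$ precisely when, for whatever state $q$ the computation first reaches $u$ in, that visit and all subsequent re-visits to $u$ can be handled silently and eventually $M$ leaves for good; this is a regular condition on $(t,u)$ by the closure results just cited. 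Second, $N$ nondeterministically 2-colours the leaves of $t$ (productive / unproductive); using a regular test it checks global consistency — namely that there is a computation of $M$ on $t$ in which exactly the leaves coloured "productive" have an output rule applied to them — and it deletes each "unproductive" leaf, which since the deleted node is a leaf simply means: its parent, a monadic node after deletion or a node of smaller rank, has that child removed (a pruning \ttt\ may remove children); when a node has \emph{all} its children unproductive, $N$ must instead keep one of them but relabel it as a "dummy". Simultaneously $N$ annotates each surviving node $v$ with, for each state $q$ of $M$, the set of $(q',\text{instruction})$ summarising how $M$, if it arrives at $v$ in state $q$ and then wishes to pass \emph{silently} through the deleted sibling-subtrees hanging at the (former) children of $v$, can proceed — enough bookkeeping that $M'$ can reconstruct the moves of $M$ over the deleted parts by table lookup. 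Third, I would define the local \ttt\ $M'$ on the annotated pruned alphabet: $M'$ simulates $M$ step by step, except that whenever $M$ would step \emph{into} a deleted subtree, $M'$ uses the annotation at the current node to jump directly to the state and position at which $M$ would re-emerge, without moving; since every such detour was, by construction, silent, no output is lost. Finally I would verify the two inclusions: $\tau_N\circ\tau_{M'}\subseteq\tau_M$ holds because every run of $M'$ on a pruned-and-annotated tree corresponds to a genuine run of $M$ on $t$ (the annotations are truthful, enforced by $N$'s regular tests); and $\tau_M\subseteq\tau_N\circ\tau^0_{M'}$ holds because, given $(t,s)\in\tau_M$ via some computation, $N$ can colour the leaves exactly according to whether an output rule is applied to them in that computation, delete the "unproductive" ones, and then the induced run of $M'$ applies an output rule to \emph{every} leaf of $t'$ — the kept leaves are productive by choice of colouring, and the "dummy" leaves introduced when all children were unproductive must be given an output rule by $M'$, which we arrange by a dedicated rule; hence that $M'$-computation is $0$-productive. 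Determinism is preserved for $M'$ because $M'$'s rules are in one-to-one correspondence with $M$'s (plus deterministic dummy-handling), and the extra annotation only refines, never multiplies, the choices; $N$ of course is nondeterministic, as it must guess the colouring.

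**Main obstacle.** The delicate part is the bookkeeping in the case where \emph{all} children of a node $u$ are marked unproductive: $u$ itself might still be productive, but after deleting all its children $N$ must keep $u$ as a leaf (a pruning \ttt\ cannot delete all children \emph{and} keep the node), so we retain one designated child relabelled as a nullary dummy symbol and must ensure (i) $M'$ can still simulate on the dummy whatever silent traversals $M$ performed among the original children, using only the annotation stored at $u$ or at the dummy, and (ii) the dummy receives an output rule in the relevant $M'$-computation so that $0$-productivity is genuinely achieved. Getting the interface between "$N$ deletes, annotating" and "$M'$ simulates, consulting annotations" exactly right — in particular making the annotation finite, making the global-consistency test regular via the auxiliary \ttt's of Corollary~\ref{cor:regdom}, and checking that silent detours compose correctly when $M$ crosses several deleted subtrees in succession — is where essentially all the work lies; the rest is routine product-construction bookkeeping in the style of the proof of Lemma~\ref{lem:ttottdl}.
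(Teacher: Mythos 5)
Your overall architecture---a nondeterministic pruning transducer that guesses what to delete and records summaries $\gamma$ of $M$'s move-rule-only excursions into the deleted parts, established as regular tests via Corollary~\ref{cor:regdom}, followed by a local \ttt\ that simulates $M$ and replaces each such excursion by a table lookup with a stay-instruction---is the paper's. But there is a genuine gap in \emph{what} you delete. You colour and delete only unproductive \emph{leaves}; the paper instead removes every maximal subtree all of whose nodes are unproductive (calling a node \emph{superfluous} iff it and all of its descendants are unproductive). That stronger choice is precisely what makes $0$-productivity come out: a surviving node that loses all of its children is non-superfluous while all of its descendants are unproductive, hence it must itself be productive, so every new leaf of $t'$ is productive automatically. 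With leaf-only deletion you have no such guarantee, which is what forces you into the dummy construction.

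The dummy construction is both unnecessary and inconsistent. It is unnecessary because the definition of a pruning \ttt\ admits an output rule $\tup{q,\sigma,j,T}\to\delta$ with $\delta$ of rank~$0$ (the case $k=0$), i.e., a node may be relabelled by a nullary symbol with \emph{all} of its children deleted; the paper's rule with $n=0$ does exactly this, so no child need be retained. It is inconsistent because you require the dummy leaf to receive an output rule of $M'$ in order to achieve $0$-productivity, while the dummy stands for a region of $t$ in which the simulated computation of $M$ applies no output rule at all; any output produced there by $M'$ is a node of the output tree that $M$ never produces, destroying the inclusion $\tau_N\circ\tau_{M'}\subseteq\tau_M$. (Your global consistency test on the colouring is also superfluous: the first inclusion already follows from the truthfulness of the $\gamma$-annotations, and for the second inclusion one only needs that \emph{some} guess of $N$---removing exactly the superfluous subtrees---induces a $0$-productive computation of $M'$.)
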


\begin{lemma}\label{lem:nondet1-prod}
For every \ttl\ $M$ there are a \ttp\ $N$ and a \ttl\ $M'$
such that 
\[
\tau_N\circ\tau_{M'}\subseteq\tau_M \text{\quad and \quad} \tau^0_M \subseteq \tau_N\circ \tau^{01}_{M'}.
\]
If $M$ is deterministic, then so is $M'$. 
\end{lemma}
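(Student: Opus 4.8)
The plan is to have the pruning transducer $N$ nondeterministically \emph{contract} unproductive monadic nodes of its input (rather than delete unproductive leaves, as one does for Lemma~\ref{lem:nondet0-prod}), while the local transducer $M'$ simulates $M$ on the contracted tree, ``jumping over'' each contracted chain of monadic nodes by consulting a state-transition summary that $N$ has recorded in the labels. The one real complication compared with the leaf case is that a local \ttt\ may descend into a monadic node and later climb back out of it, so a contracted chain must be summarised by a \emph{two-way} pass-through relation. Fix a \ttl\ $M=(\Sigma,\Delta,Q,Q_0,R)$; by the first of the syntactic observations at the end of Section~\ref{sec:trans} we may assume that every output rule of $M$ uses only the $\stay$-instruction, so that a node is productive in a computation precisely when an output rule is applied to it, all other rules being move rules. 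For $\sigma\in\Sigma^{(1)}$ and $c\in[0,\m_\Sigma]$ let the single-node relation $\rho_{\sigma,c}$ relate a state $q$ to $(\up,q')$, resp. $(\down_1,q')$, whenever $M$, started in state $q$ at a $\sigma$-labelled monadic node with child number $c$, can by $\stay$-moves reach a state from which the move rule $\tup{\cdot,\sigma,c}\to\tup{q',\up}$, resp. $\tup{q',\down_1}$, applies, and relate $q$ to $\bot$ whenever $M$ can instead $\stay$-loop forever or reach a state in which no move rule applies at that node. From a string---a \emph{chain}---of such relations one obtains, by a finite reachability computation over the graph of ``bouncing'' configurations, a \emph{chain relation} $\theta$ describing, for $M$ entering the chain at its top node, resp. re-entering it at its bottom node, in a state $q$, all the ways $M$ can leave the chain again---upward at the top or downward at the bottom, and in which state---or that it never does ($\bot$). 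The set $\Theta$ of chain relations is finite, has an identity, and the stacking operation $\theta\mapsto\theta\bullet\rho_{\sigma,c}$ (appending one more monadic node at the bottom) is computable within $\Theta$; moreover, if $M$ is deterministic then every $\rho_{\sigma,c}$, hence every element of $\Theta$, is a partial function.

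The pruning transducer $N$ (a pruning \ttl, hence a \ttp) walks $t$ top-down, choosing nondeterministically at each monadic node whether to remove it, and carries in its finite state the chain relation of the maximal run of removed nodes descended since the last kept node, reset to the identity when it enters a child of a kept node (and at the start). To \emph{remove} a monadic node $u$ with label $\sigma$ and child number $c$ it applies the move rule $\tup{p_\theta,\sigma,c}\to\tup{p_{\theta\bullet\rho_{\sigma,c}},\down_1}$; at the root, a rule of this form deletes a monadic root. To \emph{keep} a node $w$ with label $\sigma$, child number $c$ and rank $m$ it applies the output rule $\tup{p_\theta,\sigma,c}\to(\sigma,c,\theta)(\tup{p_{\mathrm{id}},\down_1},\dots,\tup{p_{\mathrm{id}},\down_m})$, relabelling $w$ by the triple $(\sigma,c,\theta)$---of the same rank as $\sigma$---and recursing afresh into all children; a kept root is relabelled $(\sigma,0,\mathrm{id})$. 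Thus $N$ never touches a non-monadic node, its output is $t$ with the guessed monadic nodes contracted and with each surviving node carrying its original label and child number together with the chain relation of the chain contracted directly above it, and $N$ uses no regular tests.

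The local transducer $M'$ takes the output alphabet of $N$ as input alphabet and simulates $M$ while maintaining, besides a state of $M$, one of three modes $\mathrm{A}$, $\mathrm{U}$, $\mathrm{D}$. In mode $\mathrm{A}$ on a node $w$ relabelled $(\sigma,c,\theta)$, in state $q$, it applies $M$'s rule for $\tup{q,\sigma,c}$, with the changes that an $\up$-move is replaced by $\stay$ing on $w$ and switching to mode $\mathrm{U}$, and a $\down_i$-move by performing that move into the corresponding child of $w$ and switching to mode $\mathrm{D}$; output rules carry over unchanged, each child continuing in mode $\mathrm{A}$ on $w$. In mode $\mathrm{U}$ on $w$ relabelled $(\sigma,c,\theta)$, in state $q$, $M'$ consults $\theta$ on ``entering from below in $q$'': if $M$ leaves the chain above $w$ at its top, $M'$ moves $\up$ to the parent of $w$ and returns to mode $\mathrm{A}$ in the resulting state; if $M$ bounces back down to $w$, $M'$ $\stay$s on $w$ and returns to mode $\mathrm{A}$ in the resulting state; if $\theta$ yields $\bot$, $M'$ blocks. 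Mode $\mathrm{D}$ on a child $w'$ relabelled $(\sigma',c',\theta')$ is symmetric, consulting $\theta'$ on ``entering from above in $q$'' ($M'$ returns $\up$ to the parent of $w'$ if $M$ bounces back up, and $\stay$s on $w'$ if $M$ reaches $w'$, each time in the resulting state and mode $\mathrm{A}$). The initial configuration of $M'$ is the root of its input in mode $\mathrm{D}$ in some $q_0\in Q_0$, which accounts for a chain possibly contracted above the new root---its chain relation, the top node of that chain having child number $0$, can only send $M$ downward. Since $M$ is local, so is $M'$; and if $M$ is deterministic then all chain relations are partial functions, so $M'$ is deterministic, as required.

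It remains to check correctness. For $\tau_N\circ\tau_{M'}\subseteq\tau_M$: whenever $(t,t')\in\tau_N$, any accepting run of $M'$ on $t'$ unfolds into a computation of $M$ on $t$ with the same output, by expanding each mode-$\mathrm{U}$ or mode-$\mathrm{D}$ step that consults a chain relation into the corresponding segment of move-steps of $M$ through the contracted chain; this uses only that the recorded chain relations are the true ones, which holds by construction, so $N$ needs no verifying regular tests. For $\tau^0_M\subseteq\tau_N\circ\tau^{01}_{M'}$: given $(t,s)\in\tau^0_M$ witnessed by a $0$-productive computation $c$, let $N$ remove \emph{exactly} those monadic nodes at which $c$ applies no output rule; then the recorded chain relations match the way $c$ traverses the contracted chains, the surviving tree $t'$ has the leaves of $t$ (all productive in $c$, since $c$ is $0$-productive) as its leaves and the productive monadic nodes of $c$ as its monadic nodes, and $M'$ reconstructs the restriction of $c$ to the surviving nodes, producing $s$ by a productive computation; hence $(t',s)\in\tau^{01}_{M'}$. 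As observed at the start of this subsection, this gives $|t'|\le 2\cdot|s|$, so $(\tau_N,\tau_{M'})$ is linear-bounded; together with Lemma~\ref{lem:nondet0-prod} and the closure of $\NTTP$ under composition (Theorem~\ref{thm:nondetrightcomp}) this yields $\NTTL\subseteq\NTTP\ast\NTT$. The main obstacle is making the two-way chain relation $\theta$ and its stacking operation $\bullet$ precise and confirming that they stay within a finite set (so that $N$ can carry $\theta$ in its finite state), and then verifying that the mode-$\mathrm{U}$ and mode-$\mathrm{D}$ rule schemes of $M'$ faithfully implement crossing a contracted chain in both directions---together with the child-number bookkeeping, since the bottom node of a contracted chain acquires a new child number and $M'$ must use the original one, recorded in the relabelling, when applying rules of $M$ there.
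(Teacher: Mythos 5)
Your proof is correct and follows the same overall strategy as the paper's --- nondeterministically contract the unproductive monadic chains, record a summary of $M$'s move-only excursions through each contracted chain, and let $M'$ shortcut those excursions --- but the implementation is genuinely different in a way worth noting. The paper realizes the pruning as a composition of two transducers: a local relabeling \ttt\ that nondeterministically marks monadic nodes, followed by a deterministic \ttt\ that removes the marked nodes and attaches to each kept node $u$ a relation $\gamma$ describing all ghost trips from $u$ into the adjacent removed chains (both above and below $u$); since $\gamma$ depends on parts of the tree not yet seen top-down, its correctness is enforced by regular look-around tests whose regularity rests on Corollary~\ref{cor:regdom}, and the two stages are recombined via Theorem~\ref{thm:nondetrightcomp}. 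You instead attach to each kept node the two-way crossing relation of the single chain contracted directly \emph{above} it, which lets one \emph{local} pruning transducer compute that relation incrementally in its finite state (by the standard composition of two-way behaviours along a monadic chain) as it walks down --- no regular tests, no appeal to Corollary~\ref{cor:regdom}, and no two-stage composition. The price is that you must set up the chain-relation algebra explicitly (finiteness of $\Theta$, the stacking operation $\bullet$, and the fact that every chain relation is a partial function when $M$ is deterministic) and the three-mode simulation in $M'$; you correctly flag this as the remaining work, and it is routine. Your handling of the delicate points --- the changed child numbers at the bottoms of contracted chains, the root chain (whose top has child number $0$ and hence can only send $M$ downward), the preservation of leaves so that $0$-productivity carries over, and the fact that soundness of $\tau_N\circ\tau_{M'}\subseteq\tau_M$ holds for \emph{every} nondeterministic pruning choice because the recorded relations are always the true ones --- matches what is needed.
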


It is easy to see that applying these lemmas one after the other, we have obtained the goal above; 
note that pruning \ttt's are closed under composition by Theorem~\ref{thm:nondetrightcomp}.
It remains to prove the two lemmas. The constructions in their proofs are  
similar to the removal of $\epsilon$-rules and chain rules 
from a context-free grammar, respectively.
As is well known, one should not remove these rules in the reverse order,
because the removal of $\epsilon$-rules can create new chain rules. Similarly in our case,
we should remove unproductive leaves and monadic nodes in that order, because the 
removal of unproductive leaves can create new unproductive monadic nodes. 
Note also that removing $\epsilon$-rules and chain rules in one construction 
is technically more complex.

\medskip
{\bf Proof of Lemma~\ref{lem:nondet0-prod}.}
Let $M=(\Sigma,\Delta,Q,Q_0,R)$ be a \ttl.
As discussed in the second paragraph after Proposition~\ref{pro:hier} (in Section~\ref{sec:trans}),
we may assume that the output rules of $M$ only use the stay-instruction. 
Let us consider $(t,s)\in\tau_M$ and 
a computation $\tup{q_0,\rt_t}\Rightarrow^*_{M,t} s$ with $q_0\in Q_0$. 
The idea of the construction of the \ttp\ $N$ and \ttl\ $M'$
is that $N$ (nondeterministically) preprocesses $t$ by removing the maximal subtrees of $t$ 
that consist of unproductive nodes only, and that $M'$ simulates $M$ on the rest of $t$. 
Let us say that a node $u$ of $t$ is \emph{superfluous} (in this computation)
if it is unproductive and all its descendants are unproductive. 
Note that the root of $t$ is \emph{not} superfluous. 
Thus, $N$ changes $t$ into $t'$ by pruning all superfluous nodes of $t$.
Moreover, it adds state transition information of $M$ to the labels of the remaining nodes 
to allow $M'$ on $t'$ to simulate the above computation of $M$ on $t$. 
In the resulting computation of~$M'$ on $t'$, the input tree~$t'$ of $M'$ 
has no superfluous nodes, which means in particular 
that all its leaves are productive. Note that, due to the removal of the superfluous nodes,
each remaining node loses its superfluous children. 
Since the pruning \ttt\ $N$ does not know which nodes are going to be superfluous in $M$'s computation,
it just nondeterministically removes subtrees of the input tree $t$
and adds to the label of each remaining node all possible state transitions of $M$ in computations on 
the removed subtrees that use move rules only.  
Whereas $N$ just guesses the superfluous nodes, it uses its regular tests 
to determine the state transitions of~$M$ on those nodes.

As intermediate alphabet we use the ranked alphabet $\Gamma$ consisting of all symbols 
$\tup{\sigma,(i_1,\dots,i_n),\gamma}$ 
such that $\sigma\in\Sigma$, $n\in[0,\rank(\sigma)]$, $1\leq i_1 < i_2<\cdots < i_n\leq \rank(\sigma)$, 
and $\gamma\subseteq Q\times Q$. The rank of $\tup{\sigma,(i_1,\dots,i_n),\gamma}$ is $n$.
In the case where $M$ is deterministic we require $\gamma$ to be a partial function from $Q$ to $Q$. 
Intuitively, a node~$u$ of $t$ with label $\sigma$ that is not removed by $N$, will be relabeled by 
$\tup{\sigma,(i_1,\dots,i_n),\gamma}$ such that
the subtrees at its children $ui$ with $i\notin\{i_1,\dots,i_n\}$ are removed by $N$
and $\gamma$ is the set of all $(q,\bar{q})$ such that $M$ has a computation from $\tup{q,u}$
to $\tup{\bar{q},u}$ (using move rules only) that visits one of the removed subtrees. 

Formally, we define $N=(\Sigma,\Gamma,\{p\},\{p\},R_N)$ with one state $p$. 
For every symbol $\tup{\sigma,(i_1,\dots,i_n),\gamma}$ in $\Gamma$ and every $j\in [0,\m_\Sigma]$,
it has the rule
\[
\tup{p,\sigma,j,T}\to 
\tup{\sigma,(i_1,\dots,i_n),\gamma}(\tup{p,\down_{i_1}},\dots,\tup{p,\down_{i_n}})
\]
where $T$ is defined as follows. Let $t\in T_\Sigma$ and let $u\in \cN(t)$. 
The state transition relation $\gamma$ is uniquely determined by $(i_1,\dots,i_n)$, and is expressed by $T$.
Let us say that a node $v\in \cN(t)$ is a \emph{ghost} if 
$v=uiw$ for some $i\notin\{i_1,\dots,i_n\}$ and $w\in\nat^*$. 
Moreover, let us say that a computation 
\[ 
\tup{q_1,u_1}\Rightarrow_{M,t} \tup{q_2,u_2}\Rightarrow_{M,t} \cdots \Rightarrow_{M,t}\tup{q_m,u_m},
\]
$m\geq 3$, is a \emph{ghost computation} from $\tup{q_1,u_1}$ to $\tup{q_m,u_m}$ if
$u_j$ is a ghost for every $j\in[2,m-1]$. 
Note that such a computation is due to move rules only, that it visits at least one ghost, 
and that the ghosts $u_2,\dots,u_{m-1}$
all belong to a subtree at the same child $ui$.   
Finally, for states $q,\bar{q}\in Q$ we will write $q\hookrightarrow\bar{q}$ 
if there is a ghost computation from $\tup{q,u}$ to $\tup{\bar{q},u}$.
We now define~$T$ to consist of all $(t,u)$ such that 
$\gamma=\{(q,\bar{q})\in Q\times Q\mid q \hookrightarrow \bar{q}\}$.
Note that $\gamma$~is indeed a partial function if $M$ is deterministic.
The test $T$ is regular because it is a boolean combination of tests 
$T_{q,\bar{q}}=\{(t,u)\mid q \hookrightarrow \bar{q}\}$, which are regular because 
the tree language $\{\tmark(t,u)\mid q \hookrightarrow \bar{q}\}$
is regular for every $(q,\bar{q})\in Q\times Q$ by Corollary~\ref{cor:regdom}:
it is the domain of a \ttt\ that first walks to $u$, then simulates a ghost computation of $M$ on~$t$
from $\tup{q,u}$ to $\tup{\bar{q},u}$, and finally outputs a symbol of rank~0. 
 
We define $M'=(\Gamma,\Delta,Q,Q_0,R')$ with the following rules. 
Let $\rho: \tup{q,\sigma,j}\to\zeta$ be a rule in $R$, and let 
$\tup{\sigma,(i_1,\dots,i_n),\gamma}$ be an element of $\Gamma$ (with the same~$\sigma$). 
If $\rho$ is an output rule or $\zeta=\tup{q',\alpha}$ with $\alpha\in\{\up,\stay\}$, then
$R'$ contains the rule $\tup{q,\tup{\sigma,(i_1,\dots,i_n),\gamma},j}\to\zeta$.
If $\zeta=\tup{q',\down_{i_k}}$ with $k\in[1,n]$, then 
$R'$~contains the rule $\tup{q,\tup{\sigma,(i_1,\dots,i_n),\gamma},j}\to\tup{q',\down_k}$.
Otherwise (i.e., $\zeta=\tup{q',\down_i}$ with $i\notin\{i_1,\dots,i_n\}$), 
$R'$ contains the rule $\tup{q,\tup{\sigma,(i_1,\dots,i_n),\gamma},j}\to\tup{\bar{q}, \stay}$
for every $(q,\bar{q})\in\gamma$. 
Note that if $M$ is deterministic, then so is $M'$. 

It should be clear that $\tau_N\circ\tau_{M'}\subseteq\tau_M$, because for every $t'\in\tau_N(t)$
the computations of $M'$ on $t'$ simulate computations of $M$ on $t$. 

To understand that $\tau_M \subseteq \tau_N\circ \tau^0_{M'}$, consider a computation 
$\tup{q_0,\rt_t}\Rightarrow^*_{M,t} s$ with $q_0\in Q_0$, and let  
$t'\in\tau_N(t)$ be such that all superfluous nodes of $t$ (in this computation) are removed.
Then it should be clear that the computation of~$M$ on~$t$ can be simulated 
by a computation $\tup{q_0,\rt_{t'}}\Rightarrow^*_{M',t'} s$ of $M'$ on $t'$.
In fact, if $M$~visits a superfluous child of the current (non-superfluous) node $u$ of $t$, 
then $M'$~just stays in the node $v$ corresponding to $u$ in $t'$
and changes its state to the one in which $M$ returns to $u$.
For a completely formal correctness proof one would have to formalize 
the obvious bijective correspondence $f$
between the non-superfluous nodes of $t$ and the nodes of $t'$. 
In fact, $f(\epsilon)=\epsilon$, and
if~$u$ is non-superfluous and $ui_1,\dots,ui_n$ are all the non-superfluous children of~$u$,
then $f(ui_k)= f(u)k$ for every $k\in[1,n]$. 
Note that $u$ and $f(u)$ have the same child number. 
However, the correctness of the construction 
should be clear without such a proof. The configurations $\tup{q,u}$ of $M$ on $t$, 
for every non-superfluous node~$u$, are simulated by the configurations 
$\tup{q,f(u)}$ of $M'$ on $t'$.
Finally, the above computation of $M'$ on~$t'$ is 0-productive,
because each leaf $f(u)$ of~$t'$ 
corresponds to a non-superfluous node $u$ of $t$ of which all descendants are superfluous,
i.e., to a productive node. Since $M'$ simulates $M$, it follows 
that $f(u)$ is a productive node of $t'$. 
This ends the proof of Lemma~\ref{lem:nondet0-prod}.

\medskip
{\bf Proof of Lemma~\ref{lem:nondet1-prod}.}
This proof is similar to the previous one. 
Let $M=(\Sigma,\Delta,Q,Q_0,R)$ be a \ttl.
Again, we assume that the output rules of~$M$ only use the stay-instruction.
And again, let us consider $(t,s)\in\tau_M$ and 
a computation $\tup{q_0,\rt_t}\Rightarrow^*_{M,t} s$ with $q_0\in Q_0$. 
This time we define a node of~$t$ to be \emph{superfluous} if 
it is unproductive (in this computation) and has rank~1. 
As before, $N$ changes $t$ into $t'$ by pruning all superfluous nodes of $t$,
and adds information to the labels of the remaining nodes 
to allow $M'$ on $t'$ to simulate the above computation of $M$ on $t$. 
Whereas in the previous case, $M'$ had to shortcut the subcomputations of $M$ 
on maximal subtrees of superfluous nodes, in the present case 
$M'$ has to shortcut the subcomputations of $M$ on maximal sequences $u_1,\dots,u_n$ of
superfluous nodes ($n\geq 1$), where $u_{i+1}$ is the unique child of $u_i$ for every $i\in[1,n-1]$. 
For such a sequence, the unique child $u_{n+1}$ of $u_n$ is non-superfluous,
and either $u_1$ is the root of~$t$, or the parent $u_0$ of $u_1$ is non-superfluous.
In the second case, a subcomputation of $M$ on $u_1,\dots,u_n$ is as follows.
When it moves from $u_0$ down to $u_1$, it either returns to $u_0$, or it walks to $u_{n+1}$.
And when it moves from $u_{n+1}$ up to $u_n$, it either returns to~$u_{n+1}$, or it walks to $u_0$.
In the first case, $M$ can only move from $u_{n+1}$ up to~$u_n$ and return to $u_{n+1}$.
Thus, to the label of every non-superfluous node $u$ of $t$ we have to add information both on 
trips to superfluous nodes above $u$ and trips to superfluous nodes below $u$.
In the first case, $u_{n+1}$ will be the root of $t'$. In the second case, 
$u_{n+1}$ will be the $i$-th child of $u_0$ in $t'$, where $i$ is the child number of $u_1$ in $t$.
Thus, the child number of $u_{n+1}$ changes from 1 to 0, or from 1 to $i$, respectively. 

As in the previous proof, the pruning \ttt\ $N$ does not know in advance which nodes 
are going to be superfluous in $M$'s computation. 
Thus, it just nondeterministically removes monadic nodes of the input tree $t$
and adds to the label of each remaining node all possible state transitions of $M$ 
in subcomputations on the removed nodes that use move rules only.  
Rather than constructing $N$ directly, it is more convenient to realize this pruning of $t$
by two consecutive pruning \ttt's $N_1$ and $N_2$, 
and use Theorem~\ref{thm:nondetrightcomp}.
The local relabeling \ttt\ $N_1$ nondeterministically marks monadic nodes of $t$, by possibly changing 
the label $\sigma$ of a monadic node into $\widehat{\sigma}$. 
The (deterministic) \ttt\ $N_2$ then removes the marked nodes, and relabels 
the unmarked nodes, adding the appropriate state transitions of $M$ (determined by regular tests). 
Since it is easy to construct $N_1$, we only discuss $N_2$. 

The intermediate alphabet $\Gamma$ now consists of all symbols $\tup{\sigma,j,U,\gamma}$ such that 
$\sigma\in\Sigma$, $j\in[0,\m_\Sigma)]$, 
$U\subseteq \{\up\}\cup\{\down_i\mid i\in[1,\rank(\sigma)]\}$, 
and $\gamma\subseteq Q\times (Q\times I)$, 
where $I$ is the set of all possible instructions. 
The rank of $\tup{\sigma,j,U,\gamma}$ is the rank of $\sigma$.
As before, in the case where $M$ is deterministic 
we require $\gamma$ to be a partial function from $Q$ to $Q\times I$. 
Intuitively, a node $u$ of $t$ with label $\sigma$ that is not marked by $N_1$, will be relabeled by 
$\tup{\sigma,j,U,\gamma}$ such that $j$ is its child number in~$t$, 
$\alpha\in U$ if and only if $\alpha(u)$ is marked by $N_1$,
and $\gamma$ is the set of all $(q,\tup{\bar{q},\beta})$ such that 
the following holds: $M$ has a computation 
from $\tup{q,u}$ to $\tup{\bar{q},\bar{u}}$ (using move rules only) 
that visits a maximal sequence of marked nodes, for some unmarked node $\bar{u}$ 
such that $\beta(v)=\bar{v}$, where $v$ and $\bar{v}$ are the nodes corresponding to 
$u$ and $\bar{u}$ in the tree~$t'$. 

We define $N_2=(\Sigma\cup\widehat{\Sigma},\Gamma,P,p_0,R_2)$, 
where $\widehat{\Sigma}=\{\widehat{\sigma}\mid \sigma\in\Sigma^{(1)}\}$, 
$P=\{p_j\mid j\in[0,\m_\Sigma]\}$, and $R_2$ is defined as follows. 
For every $\sigma\in\Sigma^{(1)}$ and $j,j'\in[0,\m_\Sigma]$ the transducer $N_2$ has the rule 
$\tup{p_j,\widehat{\sigma},j'}\to\tup{p_j,\down_1}$. Moreover, 
for every $\tup{\sigma,j,U,\gamma}\in\Gamma$ and $j'\in[0,\m_\Sigma]$ it has the rule 
\[
\tup{p_j,\sigma,j',T}\to \tup{\sigma,j,U,\gamma}(\tup{p_1,\down_1},\dots,\tup{p_m,\down_m})
\]
where $m=\rank(\sigma)$ and $T$ is defined as follows.
Let $\hat{t}$ be a tree over $\Sigma\cup\widehat{\Sigma}$ 
and let $u\in \cN(\hat{t}\,)$. 
We define $\pi(\hat{t}\,)$ to be the tree 
over $\Sigma$ that is obtained from $\hat{t}$ 
by changing every label $\widehat{\sigma}$ into $\sigma$. 
Both $U$ and $\gamma$ are uniquely determined, and they are expressed by $T$.
Let us say that a node $v\in \cN(\hat{t}\,)$ is a \emph{ghost} 
if its label is in~$\widehat{\Sigma}$.
A \emph{ghost computation} is defined as in the previous proof, 
for $t=\pi(\hat{t}\,)$; note that $\cN(t)=\cN(\hat{t}\,)$. 
And let us write $\tup{q,u}\hookrightarrow\tup{\bar{q},\bar{u}}$ if there is
a ghost computation from~$\tup{q,u}$ to $\tup{\bar{q},\bar{u}}$.
We now define $T$ to consist of all $(\hat{t},u)$ such that 
\begin{itemize}
\item $\up\in U$ if and only $u$ has a parent and that parent is a ghost,
\item $\down_i\in U$ if and only if $ui$ is a ghost,
\item $(q,\tup{\bar{q},\stay})\in\gamma$ if and only if 
$\tup{q,u}\hookrightarrow\tup{\bar{q},u}$,
\item $(q,\tup{\bar{q},\up})\in\gamma$ if and only if 
$\tup{q,u}\hookrightarrow\tup{\bar{q},\bar{u}}$
for some ancestor $\bar{u}$ of~$u$,
\item $(q,\tup{\bar{q},\down_i})\in\gamma$ if and only if 
$\tup{q,u}\hookrightarrow\tup{\bar{q},\bar{u}}$
for some descendant~$\bar{u}$ of~$ui$.
\end{itemize}
As before, if $M$ is deterministic, then $\gamma$ is indeed a partial function. 
It is straightforward to prove, using Corollary~\ref{cor:regdom}, that $T$ is regular;
we leave that to the reader. 
 
We define $M'=(\Gamma,\Delta,Q,Q_0,R')$ with the following rules. 
Let $\rho: \tup{q,\sigma,j}\to\zeta$ be a rule of $M$,
and let $\tup{\sigma,j,U,\gamma}$ be in $\Gamma$ (with the same $\sigma$ and~$j$). 
If $\rho$ is an output rule or $\zeta=\tup{q',\alpha}$ with $\alpha\notin U$, then
$R'$ contains the rule $\tup{q,\tup{\sigma,j,U,\gamma},j'}\to\zeta$ 
for every $j'\in [0,\m_\Sigma]$ (except $j'=0$ when $\alpha=\up$). 
If $\zeta=\tup{q',\alpha}$ with $\alpha\in U$, then
$R'$ contains the rule $\tup{q,\tup{\sigma,j,U,\gamma},j'}\to\tup{\bar{q},\beta}$
for every $(q,\tup{\bar{q},\beta})\in\gamma$ 
and every $j'\in [0,\m_\Sigma]$ (except $j'=0$ when $\beta=\up$).

Let $\tau=\tau_{N_1}\circ \tau_{N_2}$. 
It should be clear that $\tau\circ\tau_{M'}\subseteq\tau_M$, as in the previous proof. 
To understand that $\tau^0_M \subseteq \tau\circ \tau^{01}_{M'}$, consider a 0-productive computation 
$\tup{q_0,\rt_t}\Rightarrow^*_{M,t} s$ with $q_0\in Q_0$, and 
let $t'\in\tau(t)$ be obtained from $t$ by removing all superfluous nodes of $t$. 
As in the previous proof, there is an obvious bijective correspondence $f$ between 
the non-superfluous nodes of $t$ and the nodes of $t'$.
For a node $u$ of $t$ we define $g(u)=u$ if $u$ is non-superfluous, and 
$g(u)$ is the first (i.e., shortest) non-superfluous descendant of $u$ otherwise.
Then $f(g(\epsilon))= \epsilon$, and 
if $u$ is non-superfluous and $ui$ is a child of $u$, then $f(g(ui))= f(u)i$. 
And as before, there is a computation $\tup{q_0,\rt_{t'}}\Rightarrow^*_{M',t'} s$
of $M'$ on $t'$ that simulates the computation of $M$ on $t$, such that 
the configurations $\tup{q,u}$ of $M$, 
for every non-superfluous node~$u$ of $t$, are simulated by the configurations 
$\tup{q,f(u)}$ of $M'$. Since $\tau$ does not remove leaves of $t$, 
the computation of $M'$ is still 0-productive. Moreover, it is also 1-productive 
because all unproductive monadic nodes were removed by $\tau$. 
This ends the proof of Lemma~\ref{lem:nondet1-prod}.

\begin{remark}\label{rem:size}\rm
In the Introduction we observed that our main technical result can be viewed as 
a static garbage collection procedure, which leads, in principle, 
to algorithms for automatic compiler and XML query optimization.
For practical applicability our proof of this result is, however, of restricted value because 
the sizes of the involved transducers are blown up exponentially.
This is due to the fact that, in the proof of Lemmas~\ref{lem:nondet0-prod} and~\ref{lem:nondet1-prod}, 
the pruning \ttt\ $N$ uses regular tests to determine 
the relevant state transition information $\gamma\subseteq Q\times Q$ 
(or $\gamma\subseteq Q\times (Q\times I)$)
of the given \ttt\ $M$, due to its ghost computations.  
These regular tests are constructed through Corollary~\ref{cor:regdom}, 
applied to variants of $M$. Naturally, the number of states of the finite-state tree automaton 
recognizing the domain of such a variant is exponential in the number $\#(Q)$ of states of $M$, 
cf. the proof of~\cite[Lemma~1]{Eng09}.
If one now considers the proof of $\NTT\circ\NTT \subseteq \NTT\ast\NTT$ in Corollary~\ref{cor:prod}
(in which the pruning \ttt\ $N$ for the second \ttt\ $M$ is incorporated in the first \ttt\ 
by Theorem~\ref{thm:nondetrightcomp}),
it can be seen that the number of states of the first constructed \ttt\ is $2$-fold exponential
in the number of states of $M$. The additional exponential jump is due to 
Lemma~\ref{lem:d=ds}, which turns the pruning \ttt~$N$ into one that is sub-testing.
This implies that in the construction for the inclusion $\NTT\circ\NTT^k \subseteq \NTT\ast\NTT^k$
of Corollary~\ref{cor:prod}, the size of the first constructed \ttt\ can be $2(k-1)$-fold exponential
in the size of the last given \ttt. This will also hold for the deterministic version.
\qed
\end{remark}

\subsection{Deterministic Productivity}\label{sub:detprod}

Let $M=(\Sigma,\Delta,Q,q_0,R)$ be a deterministic \ttt.
For $t\in\dom(M)$ we say that a node~$u$ of $t$ is productive if 
it is productive in the computation $\tup{q_0,\rt_t}\Rightarrow^*_{M,t} \tau_M(t)$,
and we say that $t$ is productive (for $M$) if that computation is productive, i.e., 
if all leaves and monadic nodes of $t$ are productive.\footnote{There
are several such computations, but they all have the same unique derivation tree 
in $L(G^\mathrm{der}_{M,t})$. The definition of productivity clearly does not depend on 
the particular choice of the derivation.
} 
We define $L_{M,\text{prod}}$ to be the set of all productive trees $t\in\dom(M)$. 
Note that $\tau^{01}_{M}$ is the restriction of~$\tau_M$ to~$L_{M,\text{prod}}$.
The next lemma shows that the set of productive input trees is a regular tree language. 

\begin{lemma}\label{lem:prodtest}
Let $M=(\Sigma,\Delta,Q,q_0,R)$ be a deterministic \ttt. 
\begin{enumerate}
\item[$(1)$] There is a regular test $T_{M,\mathrm{prod}}$ over $\Sigma$ 
such that for every $t\in\dom(M)$ and \mbox{$u\in \cN(t)$}, 
$(t,u)\in T_{M,\mathrm{prod}}$ if and only if 
$u$ is productive.
\item[$(2)$] $L_{M,\text{prod}}$ is a regular tree language over $\Sigma$.
\end{enumerate}
\end{lemma}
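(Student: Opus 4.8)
The plan is as follows. For part~(1), the key observation is that, since $M$ is deterministic and $t\in\dom(M)$, the context-free grammar $G_{M,t}$ is forward deterministic with $L(G_{M,t})=\{\tau_M(t)\}$, and hence has a unique derivation tree~$d$; moreover a configuration $\tup{q,u}$ occurs in~$d$ if and only if it is reachable from $\tup{q_0,\rt_t}$ via $\Rightarrow^*_{M,t}$. Indeed, forward determinism forces the (possibly infinite) ``full unfolding'' of $G_{M,t}$ from $\tup{q_0,\rt_t}$ to be uniquely determined, and $t\in\dom(M)$ means this unfolding is finite and equals~$d$; any derivation from $\tup{q_0,\rt_t}$ is a partial unfolding of~$d$, so every reachable configuration occurs in~$d$. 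Consequently, for $t\in\dom(M)$ a node $u$ is productive if and only if some configuration $\tup{q,u}$ that is reachable from $\tup{q_0,\rt_t}$ has an output rule of~$M$ applicable to it.

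To turn this into a regular test I would, as with the transducer $M'_1$ in the proof of Lemma~\ref{lem:ttsuottl}, construct a \ttt\ $N$ that on input $\tmark(t,u)$ first verifies with a regular test at the root that its input has the form $\tmark(t,u)$ for some $t\in T_\Sigma$ and $u\in\cN(t)$ (namely the sub-test $T(\tmark(T^\bullet_\Sigma))$), then simulates $M$ as a tree-walking automaton starting in $\tup{q_0,\rt_t}$ --- treating each label $(\sigma,b)$ as $\sigma$, using $\mu(T)$ in place of each regular test $T$ of $M$, and replacing every output rule $\tup{q,\sigma,j,T}\to\delta(\tup{q_1,\alpha_1},\dots,\tup{q_k,\alpha_k})$ by the nondeterministic choice of the move rules $\tup{q,\sigma,j,T}\to\tup{q_i,\alpha_i}$ for $i\in[1,k]$ --- and which, whenever it is at the marked node~$u$ in a state~$q$ such that the unique rule of~$M$ applicable to $\tup{q,u}$ is an output rule, may output a nullary symbol and halt that branch. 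Then, for $t\in\dom(M)$, $N$ accepts $\tmark(t,u)$ exactly when $u$ is productive. By Corollary~\ref{cor:regdom}, $\dom(N)\in\REGT$; and since $\dom(M)\in\REGT$ by the same corollary and $\tmark$ preserves and reflects regularity, the tree language $\dom(N)\cap\{\tmark(t,u)\mid t\in\dom(M),\,u\in\cN(t)\}$ is regular. Defining $T_{M,\mathrm{prod}}$ to be the node test whose marked representation is this language yields the required regular test.

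For part~(2), note that $t\in L_{M,\mathrm{prod}}$ if and only if $t\in\dom(M)$ and there is no node $u$ of $t$ with $(t,u)\in T_{\mathrm{bad}}$, where $T_{\mathrm{bad}}$ is the node test consisting of all $(t,u)$ such that $u$ has rank~$0$ or~$1$ and $(t,u)\notin T_{M,\mathrm{prod}}$; this is a regular test because the class of regular tests is closed under the boolean operations and ``$u$ has rank~$\le 1$'' is a regular test. Now for any regular test $T$ the tree language $\{t\in T_\Sigma\mid \exists u\in\cN(t):(t,u)\in T\}$ is the image of the regular tree language $\tmark(T)$ under the relabeling that erases the second component of each label; since a relabeling is a pruning \ttt\ translation, this image is regular by Lemma~\ref{lem:pru}. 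Hence $L_{M,\mathrm{prod}}=\dom(M)\setminus\{t\mid\exists u\in\cN(t):(t,u)\in T_{\mathrm{bad}}\}$ is the difference of two regular tree languages, and therefore $L_{M,\mathrm{prod}}\in\REGT$.

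The main obstacle is the argument in part~(1) that, for $t\in\dom(M)$, occurrence of a configuration in the unique derivation tree coincides with reachability from the initial configuration: this is precisely what makes productivity a reachability property detectable by a tree-walking automaton, and hence, via Corollary~\ref{cor:regdom}, expressible by a regular test. Everything else is routine bookkeeping, reusing the output-rule-to-move-rule simulation trick already employed in Section~\ref{sec:comp} together with the closure properties of $\REGT$ and of regular tests collected in Sections~\ref{sec:trees} and~\ref{sec:basic}.
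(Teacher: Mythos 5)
Your proof is correct, and part~(1) is essentially the paper's own argument: you justify that productivity is a reachability property in the unique derivation tree of the forward deterministic grammar $G_{M,t}$, and then build a nondeterministic tree-walking automaton that follows an arbitrary root-to-leaf path of that derivation tree by replacing each output rule $\tup{q,\sigma,j,T}\to\delta(\tup{q_1,\alpha_1},\dots,\tup{q_k,\alpha_k})$ with the move rules $\tup{q,\sigma,j,T}\to\tup{q_i,\alpha_i}$, accepting at the marked node when an output rule is applicable; regularity then follows from Corollary~\ref{cor:regdom} exactly as in the paper (the paper's $M'$ even dispenses with your initial well-formedness test and the intersection with $\dom(M)$, since the claim is only asserted for $t\in\dom(M)$). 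Your part~(2) takes a mildly different route: the paper builds a \dttt\ that traverses the input depth-first and checks $T_{M,\mathrm{prod}}$ at every leaf and monadic node, obtaining $L_{M,\mathrm{prod}}=\dom(M)\cap\dom(M'')$ via Corollary~\ref{cor:regdom}, whereas you express $L_{M,\mathrm{prod}}$ as $\dom(M)$ minus the projection (under the mark-erasing relabeling) of the regular language $\tmark(T_{\mathrm{bad}})$, invoking Lemma~\ref{lem:pru} and boolean closure of $\REGT$. Both are routine and equally valid; your version avoids constructing another transducer at the cost of one extra appeal to closure under relabelings.
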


\begin{proof}
(1) Let $M'=(\Sigma\times\{0,1\},\{\top\},Q,\{q_0\},R')$ be the nondeterministic \ttt\ such that 
$\top$ has rank~0, and $R'$ is defined as follows. 
If $\tup{q,\sigma,j,T}\to\tup{q',\alpha}$ is a move rule in $R$,
then $\tup{q,(\sigma,b),j,\mu(T)}\to\tup{q',\alpha}$ is a rule in $R'$ for every $b\in\{0,1\}$. 
If $\tup{q,\sigma,j,T}\to\delta(\tup{q_1,\alpha_1},\dots,\tup{q_k,\alpha_k})$
is an output rule in $R$, then $R'$ contains the rules 
$\tup{q,(\sigma,0),j,\mu(T)}\to\tup{q_i,\alpha_i}$ for every $i\in[k]$
and it also contains the rule $\tup{q,(\sigma,1),j,\mu(T)}\to\top$. 
Intuitively, for an input tree $\tmark(t,u)$ with $t\in\dom(M)$, the tree-walking automaton $M'$ 
follows an arbitrary path in the unique derivation tree $d\in L(G^\mathrm{der}_{M,t})$,
from the root of $d$ down to the leaves (cf. $M'_1$ and $N$ in the proof of Lemma~\ref{lem:ttsuottl}). 
Whenever $M$ branches at an unmarked node, 
$M'$ nondeterministically follows one of those branches. It~accepts $\tmark(t,u)$ 
when an output rule is applied to the marked node $u$.
It should be clear that $T_{M,\mathrm{prod}}=\tmark^{-1}(\dom(M'))$ satisfies the requirements. 
It~is regular by Corollary~\ref{cor:regdom}.

(2) Let $M''$ be a \dttt\ that performs a depth-first left-to-right traversal 
of the input tree $t\in T_\Sigma$ and verifies that $(t,u)\in T_{M,\mathrm{prod}}$ 
for every leaf and monadic node~$u$ of $t$. Then $L_{M,\text{prod}}=\dom(M)\cap\dom(M'')$,
which is regular by Corollary~\ref{cor:regdom}.
\qed
\end{proof}

For a given deterministic \ttt\ $M$ 
there are a nondeterministic pruning \ttt\ $N$ and a deterministic \ttl\ $M'$
such that $\tau_N\circ\tau_{M'}=\tau_M$ and $\tau_M\subseteq \tau_N\circ \tau^{01}_{M'}$,
by Lemmas~\ref{lem:nondet0-prod} and~\ref{lem:nondet1-prod}. 
Our aim is to transform $N$ and $M'$ in such a way that $N$ becomes deterministic. 
We basically do this by applying Lemma~\ref{lem:cut} to $\tau_N$,
replacing it by one of its uniformizers. 
But to preserve the above two properties we first restrict the domain of $M'$ 
to productive input trees and then restrict the range of~$N$ to the new domain, as follows. 

By Lemma~\ref{lem:prodtest}, the tree language $L_{M',\text{prod}}$ is regular. 
Let $M''$ be the \dttt\ that is obtained from $M'$ by restricting its domain
to $L_{M',\text{prod}}$, see Lemma~\ref{lem:domrestr}. 
Hence, $\tau_{M''}=\tau^{01}_{M'}$ and so $\tau_N\circ\tau_{M''}=\tau_M$. 
Since $M''$ behaves in the same way as $M'$, 
every tree $t'\in\dom(M'')$ is productive (for $M''$).
Next, we change $N$ into the nondeterministic pruning \ttt\ $N'$ 
by restricting its range to $\dom(M'')$, by Corollary~\ref{cor:ranrestr}. 
Now $\tau_{N'}\circ\tau_{M''}=\tau_M$ and 
$\ran(\tau_{N'})\subseteq \dom(\tau_{M''})$. 
Finally, we define $\tau\in \TTP$ to be the uniformizer of $\tau_{N'}$
according to Lemma~\ref{lem:cut}. Then $\tau\circ\tau_{M''}=\tau_M$.
Now consider $(t,s)\in\tau_M$. Then $s=\tau_{M''}(r)$ for $r=\tau(t)$.
Since $r$ is productive for $M''$, it follows that $|r|\leq 2\cdot |s|$ 
as observed at the end of the second paragraph of Section~\ref{sub:nondetprod}. 
Hence $(\tau,\tau_{M''})$ is linear-bounded, which shows that 
$\tau_M\in \TTP \ast \TT$.

\section{Linear Size Increase}\label{sec:lsi}

In this section we show our first main result: 
the hierarchy of \ttt's collapses for functions of linear size increase.

\begin{theorem}\label{thm:mainlsi}
For every $k\geq 1$, $\TT^k \cap \LSI = \TTSU$.
\end{theorem}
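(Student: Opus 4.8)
The plan is to prove the two inclusions $\TTSU\subseteq\TT^k\cap\LSI$ and $\TT^k\cap\LSI\subseteq\TTSU$ separately, the first being essentially immediate and the second being the substantial direction. For the easy inclusion: by Lemma~\ref{lem:sizeheight} we have $\TTSU\subseteq\LSI$, and trivially $\TTSU\subseteq\TT\subseteq\TT^k$ for every $k\geq 1$ (a single-use \dttt\ is in particular a \dttt), so $\TTSU\subseteq\TT^k\cap\LSI$. Note this also gives the reverse containment $\supseteq$ in the displayed equality.

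For the hard inclusion $\TT^k\cap\LSI\subseteq\TTSU$, I would proceed by induction on $k$. The base case $k=1$ is exactly Corollary~\ref{cor:lsi}, which states $\TT\cap\LSI=\TTSU$. For the induction step, suppose $\tau\in\TT^{k+1}\cap\LSI$. Write $\tau=\tau_1\circ\tau'$ with $\tau_1\in\TT$ and $\tau'\in\TT^k$. By Corollary~\ref{cor:prod}(2) we may assume the composition $\TT\circ\TT^k$ is realized as $\TT\ast\TT^k$; more precisely, using that corollary I can rewrite $\tau$ as $\tau_N\circ\tau_2$ where $\tau_N\in\TTP$ (a deterministic pruning \ttt), $\tau_2\in\TT^k$, and the pair $(\tau_N,\tau_2)$ is linear-bounded. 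Since $\tau_N$ is a total-ish function whose range is regular, I can arrange $\ran(\tau_N)\subseteq\dom(\tau_2)$ by restricting domains (Lemma~\ref{lem:domrestr}, Corollary~\ref{cor:regdom}), so that Lemma~\ref{lem:lsilsd} applies: since $\tau_N\circ\tau_2=\tau\in\LSI$ and $(\tau_N,\tau_2)$ is linear-bounded, we get $\tau_N\in\LSI$. A deterministic pruning \ttt\ is in particular single-use (as remarked in Section~\ref{sec:trans}, a \ttp\ is single-use when deterministic), so $\tau_N\in\TTSU$ automatically — but what I actually need is control of $\tau_2$. So instead I apply Lemma~\ref{lem:lsilsd} the other way, or rather I observe that by linear-boundedness $|r|\leq c\cdot|\tau_2(r)|$ for intermediate trees $r$, while $\tau\in\LSI$ forces $|\tau_2(\tau_N(t))|\leq c'|t|$; this bounds the intermediate size but I need $\tau_2$ itself to be linear size increase on its relevant domain. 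The key point: $\dom(\tau_2)$ can be taken to be $\ran(\tau_N)$, and on this domain $\tau_2$ is of linear size increase because for $r\in\ran(\tau_N)$, $r=\tau_N(t)$ for some $t$, and $|\tau_2(r)|=|\tau(t)|\geq |r|/c$... wait, that gives a lower bound. The correct deduction is that $\tau_2$ restricted to $\ran(\tau_N)$ is in $\LSI$: for $r=\tau_N(t)$, $|\tau_2(r)|=|\tau(t)|\leq c'|t|\leq c'|r|$ using that pruning transducers are non-increasing in size ($|\tau_N(t)|\leq|t|$ since \ttp's only delete and relabel). So $\tau_2\in\TT^k\cap\LSI$ after restricting to the regular tree language $\ran(\tau_N)$, which by induction hypothesis puts $\tau_2\in\TTSU$.

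Finally I combine: $\tau=\tau_N\circ\tau_2$ with $\tau_N\in\TTP\subseteq\TTSU$ and $\tau_2\in\TTSU$. But a composition of two single-use transducers need not obviously be single-use directly; here I invoke that $\TTSU$ is closed under composition — stated in the excerpt right after Theorem~\ref{thm:ttsuott} as following from Proposition~\ref{pro:mso=ttsu} (since $\MSOT=\TTSU$ and \abb{mso} tree translations are closed under composition). Alternatively, and more self-containedly, $\tau_N\in\TTSU$ and $\tau_2\in\TTSU\subseteq\TT$, so by Theorem~\ref{thm:ttsuott} we get $\tau_N\circ\tau_2\in\TT$, and since $\tau\in\LSI$ as well, Corollary~\ref{cor:lsi} gives $\tau\in\TT\cap\LSI=\TTSU$. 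This is the cleanest route and avoids needing closure of $\TTSU$ under composition as a black box.

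The main obstacle I anticipate is the bookkeeping around domains and ranges needed to legitimately apply Lemma~\ref{lem:lsilsd} and to ensure that restricting $\tau_2$ to $\ran(\tau_N)$ keeps it within $\TT^k$ — this uses regularity of $\ran(\tau_N)$ (Lemma~\ref{lem:pru}, since $\tau_N\in\NTTP$ and the domain is regular) together with closure of $\TT^k$ under domain restriction to a regular set (iterating Lemma~\ref{lem:domrestr} and Corollary~\ref{cor:regdom}). A subtle point is verifying $|\tau_N(t)|\leq|t|$ for pruning transducers: each rule of a \ttp\ consumes one input node and produces at most one output node while discarding whole subtrees, so the output is no larger than the input; this is what converts the linear-size-increase bound on $\tau$ into one on $\tau_2\!\restriction\!\ran(\tau_N)$. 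Once these are in place the argument closes routinely by the induction hypothesis and Corollary~\ref{cor:lsi}.
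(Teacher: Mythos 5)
Your overall skeleton (induction on $k$, base case Corollary~\ref{cor:lsi}, and the final absorption step via Theorem~\ref{thm:ttsuott}) matches the paper, and the easy inclusion $\TTSU\subseteq\TT^k\cap\LSI$ is fine. But the induction step has a genuine gap, in two places. First, Corollary~\ref{cor:prod}(2) does not give you a decomposition $\tau=\tau_N\circ\tau_2$ with $\tau_N\in\TTP$ and $\tau_2\in\TT^k$ for $\tau\in\TT^{k+1}$: it gives either $\TT^{k+1}\subseteq\TTP\ast\TT^{k+1}$ (tail still has $k+1$ factors) or $\TT\circ\TT^k=\TT\ast\TT^k$ (head is a general \dttt, not pruning). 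The version you claim, $\TT^{k+1}\subseteq\TTP\ast\TT^k$, cannot hold: since $\TTP\subseteq\TTSU$, Theorem~\ref{thm:ttsuott} would then give $\TT^{k+1}\subseteq\TT^k$, contradicting Proposition~\ref{pro:hier}.

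Second, and more fundamentally, your key claim that the tail $\tau_2$ restricted to $\ran(\tau_N)$ is of linear size increase is false, and your derivation of it uses the inequality $|t|\leq|r|$ for $r=\tau_N(t)$ in the chain $|\tau_2(r)|\leq c'|t|\leq c'|r|$ --- but you yourself correctly note that a pruning transducer satisfies $|\tau_N(t)|\leq|t|$, i.e., the opposite inequality, so the chain breaks. A concrete counterexample: let $\tau_N$ extract the leftmost root-to-leaf path of a full binary tree as a monadic tree (this is pruning), and let $\tau_2$ be exponentially size-increasing on monadic trees; then $\tau_N\circ\tau_2$ is of linear size increase and the pair is linear-bounded, yet $\tau_2$ is not in $\LSI$. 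The point of Lemma~\ref{lem:lsilsd} is that linear-boundedness transfers the $\LSI$ property to the \emph{head} of the composition, never to the tail. The paper's proof exploits exactly this: write $\tau=\tau_M\circ\tau'$ with $\tau_M\in\TT$ and $\tau'\in\TT^k$, arrange linear-boundedness and $\ran(\tau_M)\subseteq\dom(\tau')$, conclude $\tau_M\in\LSI$ hence $\tau_M\in\TTSU$ by Corollary~\ref{cor:lsi}, absorb $\tau_M$ into the first factor of $\tau'$ via Theorem~\ref{thm:ttsuott} so that $\tau\in\TT^k\cap\LSI$, and apply the induction hypothesis to the \emph{whole} translation $\tau$ rather than to its tail. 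Redirecting your argument along these lines repairs it.
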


\begin{proof}
The proof is by induction on $k$. For $k=1$ it is Corollary~\ref{cor:lsi}.
To prove that $\TT^{k+1} \cap \LSI \subseteq \TTSU$,
let $\tau\in \TT^k$ and let $M$ be a \dttt\ 
such that $\tau_M\circ\tau\in\LSI$. 
By Corollary~\ref{cor:prod}(2) we may assume that $(\tau_M,\tau)$ is linear-bounded.
Moreover, by restricting the domain of $M$ to $\dom(\tau_M\circ\tau)$ 
we may assume that $\ran(\tau_M)\subseteq\dom(\tau)$, 
see Lemma~\ref{lem:domrestr} and Corollary~\ref{cor:regdom}.
Hence $\tau_M\in\LSI$ by Lemma~\ref{lem:lsilsd} and so $\tau_M\in\TTSU$ by Corollary~\ref{cor:lsi}.
Then $\tau_M\circ \tau\in\TT^k$ by Theorem~\ref{thm:ttsuott}. 
Hence $\tau_M\circ \tau\in \TTSU$ by induction.
\qed
\end{proof}

\begin{theorem}\label{thm:mainlsidec}
It is decidable for a composition of deterministic \ttt's 
whether or not it is of linear size increase. 
\end{theorem}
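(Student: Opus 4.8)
The plan is to reduce the decidability question for a composition of $\dttt$'s to the already-known decidability result for a single deterministic macro tree transducer (Corollary~\ref{cor:lsidec}), by using the linear-bounded decomposition together with the single-use characterization of linear size increase. Suppose we are given $\dttt$'s $M_1,\dots,M_{k+1}$ and we want to decide whether $\tau = \tau_{M_1}\circ\cdots\circ\tau_{M_{k+1}}$ is of linear size increase. First I would handle the base case $k=0$: a single $\dttt$, which is Corollary~\ref{cor:lsidec}. For the inductive step, I would peel off the \emph{first} transducer. By Corollary~\ref{cor:prod}(2) we may assume $(\tau_{M_1},\tau_2\circ\cdots\circ\tau_{k+1})$ is linear-bounded, and by restricting the domain of $M_1$ to $\dom(\tau)\in\REGT$ (Lemma~\ref{lem:domrestr}, Corollary~\ref{cor:regdom}) we may assume $\ran(\tau_{M_1})\subseteq\dom(\tau_{M_2}\circ\cdots\circ\tau_{M_{k+1}})$.

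The key observation, mirroring the proof of Theorem~\ref{thm:mainlsi}, is the following dichotomy. If $\tau\in\LSI$, then by Lemma~\ref{lem:lsilsd} the first factor $\tau_{M_1}$ is itself in $\LSI$, hence (Corollary~\ref{cor:lsi}) can be realized by a single-use $\dttt$; then by Theorem~\ref{thm:ttsuott} the whole composition $\tau_{M_1}\circ(\tau_{M_2}\circ\cdots\circ\tau_{M_{k+1}})$ lies in $\TT^k$, and we can recurse. Conversely, if $\tau\notin\LSI$ there is nothing to recurse on. So the algorithm is: decide (by Corollary~\ref{cor:lsidec}, applied to $M_1$ after the domain restriction above) whether $\tau_{M_1}\in\LSI$. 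If not, then $\tau\notin\LSI$ and we answer ``no''. If yes, construct an equivalent single-use $\dttt$ for $\tau_{M_1}$ (this is effective, as noted after Corollary~\ref{cor:lsidec}), then effectively construct by Theorem~\ref{thm:ttsuott} a composition of only $k$ $\dttt$'s realizing $\tau$, and recurse.

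To see that this algorithm is correct I need the equivalence: $\tau\in\LSI$ if and only if ($\tau_{M_1}\in\LSI$ and the $k$-fold composition produced in the ``yes'' branch is in $\LSI$). The forward direction is exactly the argument above (Lemma~\ref{lem:lsilsd} gives $\tau_{M_1}\in\LSI$; the $k$-fold composition equals $\tau$, so it is in $\LSI$ by assumption). The backward direction: if $\tau_{M_1}\in\LSI$ with constant $c_1$ and the remaining $k$-fold composition — which is $\tau$ — is in $\LSI$, then of course $\tau\in\LSI$, trivially. So really the only branching is at the very first test: $\tau\in\LSI \iff \tau_{M_1}\in\LSI$ \emph{and} $\tau\in\TT^k\cap\LSI$, but once $\tau_{M_1}\in\LSI$ we have $\tau\in\TT^k$ unconditionally, so the recursion is on a genuinely shorter composition and we simply apply the induction hypothesis to it.

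The main obstacle — and the only real content beyond bookkeeping — is justifying that Corollary~\ref{cor:lsidec} may be applied to $M_1$ \emph{after} its domain has been restricted, i.e. that restricting a $\dttt$ to a regular domain (Lemma~\ref{lem:domrestr}) still yields a $\dttt$ to which the macro-tree-transducer decidability result applies; this is immediate since Lemma~\ref{lem:domrestr} preserves determinism and Corollary~\ref{cor:lsidec} is stated for arbitrary $\dttt$'s. A subtler point is that in the forward direction we needed $\ran(\tau_{M_1})\subseteq\dom(\tau_{M_2}\circ\cdots)$ and linear-boundedness \emph{simultaneously}; both are arranged before the $\LSI$-test, and Corollary~\ref{cor:prod}(2) together with the domain restriction via Corollary~\ref{cor:regdom} delivers exactly this, just as in the proof of Theorem~\ref{thm:mainlsi}. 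For $\NTT$ and macro tree transducers the same proof works using Theorem~\ref{thm:fun} and Corollary~\ref{cor:fun} to first replace a functional composition of nondeterministic transducers by a composition of deterministic ones — but the statement as given concerns deterministic $\ttt$'s, so the induction above suffices.

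\begin{proof}
The proof is by induction on $k$. For $k=1$ this is Corollary~\ref{cor:lsidec}.
For the inductive step, let $M_1,\dots,M_{k+1}$ be deterministic \ttt's and put
$\tau=\tau_{M_1}\circ\sigma$ where $\sigma=\tau_{M_2}\circ\cdots\circ\tau_{M_{k+1}}\in\TT^k$.
By restricting the domain of $M_1$ to $\dom(\tau)$, which is regular by Corollary~\ref{cor:regdom},
we may assume by Lemma~\ref{lem:domrestr} (which preserves determinism) that
$\ran(\tau_{M_1})\subseteq\dom(\sigma)$, and hence $\dom(\tau)=\dom(\tau_{M_1})$;
moreover by Corollary~\ref{cor:prod}(2) we may assume that $(\tau_{M_1},\sigma)$ is linear-bounded.

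By Corollary~\ref{cor:lsidec} we can decide whether or not $\tau_{M_1}\in\LSI$.
If $\tau_{M_1}\notin\LSI$, then $\tau\notin\LSI$: indeed, if $\tau$ were in $\LSI$ then
$\tau_{M_1}$ would be in $\LSI$ by Lemma~\ref{lem:lsilsd}, a contradiction.
In this case the algorithm answers ``no''.

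If $\tau_{M_1}\in\LSI$, then $\tau_{M_1}\in\TTSU$ by Corollary~\ref{cor:lsi},
and, since Corollary~\ref{cor:lsi} is effective, we can construct a single-use \dttt\ $M'_1$
with $\tau_{M'_1}=\tau_{M_1}$. By Theorem~\ref{thm:ttsuott} we can then effectively construct
deterministic \ttt's $N_1,\dots,N_k$ such that $\tau_{N_1}\circ\cdots\circ\tau_{N_k}=\tau_{M'_1}\circ\sigma=\tau$.
Since $\tau\in\TT^k$, the induction hypothesis applies: we can decide whether or not
$\tau_{N_1}\circ\cdots\circ\tau_{N_k}$ is of linear size increase, i.e., whether or not $\tau\in\LSI$.
\qed
\end{proof}
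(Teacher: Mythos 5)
Your proof is correct and follows essentially the same route as the paper's: induction on $k$, using Corollary~\ref{cor:prod}(2) and the domain restriction to make $(\tau_{M_1},\sigma)$ linear-bounded with $\ran(\tau_{M_1})\subseteq\dom(\sigma)$, deciding $\tau_{M_1}\in\LSI$ via Corollary~\ref{cor:lsidec}, rejecting via Lemma~\ref{lem:lsilsd} in the negative case, and otherwise collapsing the first two transducers with Corollary~\ref{cor:lsi} and Theorem~\ref{thm:ttsuott} before recursing. The only cosmetic difference is that the paper phrases the recursion as applying $\cA_k$ to $M'_1,M_2,\dots,M_k$ where $M'_1$ realizes $\tau_M\circ\tau_{M_1}$, which is exactly your $N_1,\dots,N_k$.
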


\begin{proof}
The proof is, again, by induction on $k$, the number of \dttt's in the composition.
It goes along the lines of the proof of Theorem~\ref{thm:mainlsi}, using Corollary~\ref{cor:lsidec}
instead of Corollary~\ref{cor:lsi} for the case $k=1$. 
Assuming that we have an algorithm~$\cA_k$ for a composition of $k$ \dttt's, 
we construct $\cA_{k+1}$ as follows. 
Let $M,M_1,\dots,M_k$ be~\dttt's, $k\geq 1$, 
and let $\tau=\tau_{M_1}\circ\cdots\circ\tau_{M_k}$. 
Since all our results are effective, we may assume as in the proof of Theorem~\ref{thm:mainlsi}
that $(\tau_M,\tau)$ is linear-bounded and $\ran(\tau_M)\subseteq\dom(\tau)$.
To decide whether or not $\tau_M\circ\tau$ is of linear size increase, 
we first decide whether or not $\tau_M$ is of linear size increase by Corollary~\ref{cor:lsidec}.
If~not, then $\tau_M\circ\tau$ is not of linear size increase, by Lemma~\ref{lem:lsilsd}. 
If so, then a \dttt\ $M'_1$ that realizes $\tau_M\circ \tau_{M_1}$ can be constructed 
by Corollary~\ref{cor:lsi} and Theorem~\ref{thm:ttsuott}, and we apply $\cA_k$ to 
$M'_1,M_2,\dots,M_k$.
\qed
\end{proof}

Together with Lemma~\ref{lem:ttvsmt} and Proposition~\ref{pro:mso=ttsu} in Section~\ref{sec:dmtmso},
Theorems~\ref{thm:mainlsi} and~\ref{thm:mainlsidec}
imply the following two corollaries on macro tree transducers.

\begin{corollary}\label{cor:mtlsi}
For every $k\geq 1$, $\MT^k \cap \LSI = \MSOT = \TTSU \subseteq \MT$.
\end{corollary}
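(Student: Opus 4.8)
The plan is to deduce Corollary~\ref{cor:mtlsi} directly from the results already assembled, without any new construction. The corollary asserts the chain of (in)equalities $\MT^k \cap \LSI = \MSOT = \TTSU \subseteq \MT$ for every $k \geq 1$. I would organize the proof as a sequence of inclusions forming a cycle, plus one easy inclusion.

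First I would prove $\MT^k \cap \LSI \subseteq \MSOT$. By Corollary~\ref{cor:ttmttt} we have $\MT^k = \TTD \circ \TT^k$, so $\MT^k \subseteq \TT^{k+1}$; hence $\MT^k \cap \LSI \subseteq \TT^{k+1} \cap \LSI$, which equals $\TTSU$ by Theorem~\ref{thm:mainlsi}. By Proposition~\ref{pro:mso=ttsu}, $\TTSU = \MSOT$, so in particular $\MT^k \cap \LSI \subseteq \MSOT$. This same computation simultaneously gives $\MT^k \cap \LSI \subseteq \TTSU$.

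Next I would close the cycle. For $\MSOT \subseteq \TTSU$, invoke Proposition~\ref{pro:mso=ttsu} again (it is an equality). For $\TTSU \subseteq \MT$: by Lemma~\ref{lem:ttvsmt}, $\TT \subseteq \MT$, and since $\TTSU \subseteq \TT$ by definition, $\TTSU \subseteq \MT$. Finally, to complete the equality $\MT^k \cap \LSI = \MSOT$ I need the reverse inclusion $\MSOT \subseteq \MT^k \cap \LSI$. We have $\MSOT = \TTSU \subseteq \MT = \MT^1 \subseteq \MT^k$ (the last inclusion because the identity is a macro tree translation, or by composing with trivial transducers), and $\MSOT = \TTSU \subseteq \LSI$ by Lemma~\ref{lem:sizeheight}. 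Hence $\MSOT \subseteq \MT^k \cap \LSI$, and combined with the first step we get $\MT^k \cap \LSI = \MSOT$. Similarly $\TTSU = \MSOT$ gives the middle equality, and $\TTSU \subseteq \MT$ is the last inclusion in the chain.

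There is no real obstacle here: the corollary is a bookkeeping consequence of Theorem~\ref{thm:mainlsi}, Proposition~\ref{pro:mso=ttsu}, Corollary~\ref{cor:ttmttt}, Lemma~\ref{lem:ttvsmt}, and Lemma~\ref{lem:sizeheight}. The only point requiring a moment's care is making sure the inclusion $\MT^k \subseteq \TT^{k+1}$ is used in the correct direction (from Corollary~\ref{cor:ttmttt}) so that Theorem~\ref{thm:mainlsi} applies, and that $\MT \subseteq \MT^k$ is justified (which is immediate since $\MT^k \supseteq \MT^1 = \MT$, the hierarchy being increasing because one can always precompose with an identity macro tree transducer). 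Thus the "hard part" is essentially nil; the work was all done in Section~\ref{sec:lsi} and Section~\ref{sec:dmtmso}.

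\begin{proof}
By Corollary~\ref{cor:ttmttt}, $\MT^k = \TTD\circ\TT^k \subseteq \TT^{k+1}$ for every $k\geq 1$.
Hence $\MT^k\cap\LSI \subseteq \TT^{k+1}\cap\LSI = \TTSU$ by Theorem~\ref{thm:mainlsi},
and $\TTSU=\MSOT$ by Proposition~\ref{pro:mso=ttsu}.
Conversely, $\MSOT=\TTSU\subseteq\LSI$ by Lemma~\ref{lem:sizeheight},
and $\MSOT=\TTSU\subseteq\TT\subseteq\MT\subseteq\MT^k$
by Lemma~\ref{lem:ttvsmt} and the fact that the identity on trees is in $\MT$.
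Thus $\MSOT\subseteq\MT^k\cap\LSI$, and so $\MT^k\cap\LSI=\MSOT=\TTSU$.
Finally $\TTSU\subseteq\MT$, as just observed.
\qed
\end{proof}
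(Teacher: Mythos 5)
Your proof is correct and follows essentially the same route the paper intends: the paper derives this corollary in one line from Theorem~\ref{thm:mainlsi}, Proposition~\ref{pro:mso=ttsu}, and Lemma~\ref{lem:ttvsmt} (via Corollary~\ref{cor:ttmttt}), and you have simply written out those inclusions explicitly, including the easy reverse direction $\MSOT\subseteq\MT^k\cap\LSI$ via Lemma~\ref{lem:sizeheight}. No gaps.
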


\begin{corollary}\label{cor:mtlsidec}
It is decidable for a composition of deterministic \mt's 
whether or not it is of linear size increase. 
\end{corollary}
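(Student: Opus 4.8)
The plan is to reduce the problem to the already-established decidability result for compositions of deterministic tree-walking tree transducers, namely Theorem~\ref{thm:mainlsidec}. The key observation is Corollary~\ref{cor:ttmttt}: for every $k\geq 1$ we have $\MT^k = \TTD\circ\TT^k$, and — by the standing convention that all results in the paper are effective — this equality is witnessed by an algorithm. Concretely, chaining together the (effective) decomposition $\MT = \TTD\circ\TT$ of Lemma~\ref{lem:ttvsmt} with the (effective) absorption $\TT\circ\TTD\subseteq\TT$ of Theorem~\ref{thm:rightcomp}, exactly as in the proof of Corollary~\ref{cor:ttmttt}, one can construct from given deterministic macro tree transducers $M_1,\dots,M_k$ a deterministic top-down tree-walking transducer $N_0$ and deterministic tree-walking tree transducers $N_1,\dots,N_k$ such that
\[
\tau_{M_1}\circ\cdots\circ\tau_{M_k} \;=\; \tau_{N_0}\circ\tau_{N_1}\circ\cdots\circ\tau_{N_k}.
\]
Since a \ttd\ is in particular a \ttt, the right-hand side is a composition of $k+1$ deterministic \ttt's.

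Next I would note that both sides of the displayed equality realize one and the same partial function from trees to trees, and that being of linear size increase is a property of this function rather than of any particular presentation of it. Consequently, the composition of $M_1,\dots,M_k$ is of linear size increase if and only if the composition of $N_0,\dots,N_k$ is. By Theorem~\ref{thm:mainlsidec} the latter is decidable: given $N_0,\dots,N_k$ we can decide whether $\tau_{N_0}\circ\cdots\circ\tau_{N_k}\in\LSI$. Composing this decision procedure with the effective construction of the $N_j$ from the $M_i$ yields the desired algorithm, which proves the corollary.

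I do not expect a real obstacle here; all the work is already contained in Theorem~\ref{thm:mainlsidec} and in the effective decomposition $\MT\subseteq\TTD\circ\TT$. The only two points that deserve a word of care are: (i) checking that the decomposition $\MT^k = \TTD\circ\TT^k$ is genuinely effective — which follows both from the paper's blanket effectiveness convention and from inspecting the constructions referenced in Lemma~\ref{lem:ttvsmt} and Theorem~\ref{thm:rightcomp}; and (ii) observing that a composition of deterministic macro tree transducers is a partial function, so that the notion of linear size increase is meaningful for it, precisely as it is for a composition of deterministic \ttt's.
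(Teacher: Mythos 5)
Your proposal is correct and follows essentially the same route as the paper, which derives this corollary from Theorem~\ref{thm:mainlsidec} via the effective inclusion $\MT^k = \TTD\circ\TT^k \subseteq \TT^{k+1}$ of Corollary~\ref{cor:ttmttt} (itself obtained from Lemma~\ref{lem:ttvsmt} and Theorem~\ref{thm:rightcomp}). Your two points of care — effectiveness of the decomposition and the observation that linear size increase is a property of the realized partial function — are exactly the (implicit) content of the paper's one-line justification.
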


For the class $\MTIO$ of translations realized by deterministic macro tree transducers
with inside-out (\abb{io}) derivation mode, 
we obtain that $\MT_\text{\abb{io}}^k \cap \LSI \subseteq \TTSU$ for every $k\geq 1$,
for the simple reason that $\MTIO$ is 
a (proper) subclass of $\MT$ by~\cite[Theorem~7.1(1)]{EngVog85}.
For the same reason Corollary~\ref{cor:mtlsidec} is also valid for those transducers.
However, $\TTSU$ is not included in $\MTIO$, because not every regular tree language 
is the domain of a deterministic \abb{io} macro tree transducer 
(see~\cite[Corollary~5.6]{EngVog85}). 

Since $\LSI\subseteq \cF$, it follows from Theorems~\ref{thm:mainlsi} and~\ref{thm:fun}
that Theorem~\ref{thm:mainlsi} also holds for nondeterministic \ttt's, i.e., 
$\NTT^k \cap\LSI = \TTSU$ for every $k\geq 1$.\footnote{We do not know 
whether Theorem~\ref{thm:mainlsidec} holds for nondeterministic \ttt's,
i.e., whether it is decidable for a composition of nondeterministic \ttt's
whether or not it realizes a translation in~$\LSI$.
} 
Similarly, it follows from Corollaries~\ref{cor:mtlsi} and~\ref{cor:fun} that 
Corollary~\ref{cor:mtlsi} also holds for nondeterministic \mt's, i.e., 
$\NMT^k \cap \LSI = \MSOT = \TTSU \subseteq \MT$ for every $k\geq 1$.
This even holds for the so-called stay-macro tree transducers that can use stay-instructions, 
introduced in~\cite[Section~5.3]{EngMan03}, because it is shown in~\cite[Lemma~37]{EngMan03} 
that the stay-macro tree translations are in $\NTT^4$. 
For the class $\NMT_\text{\abb{io}}$ of nondeterministic \abb{io} macro tree translations
we also obtain that $\NMT_\text{\abb{io}}^k \cap \LSI \subseteq \TTSU$ for every $k\geq 1$,
because $\NMT_\text{\abb{io}}\subseteq \NTT^2$ by Lemma~\ref{lem:nmtio}; the same is true for 
multi-return macro tree transducers. 

The \emph{$k$-pebble tree transducer} was introduced in~\cite{MilSucVia03}
as a model of XML document transformation. 
It is a \ttt\ that additionally can use $k$ distinct pebbles
to drop on, and lift from, the nodes of the input tree. 
The life times of these pebbles must be nested.
The \ttt\ is the 0-pebble tree transducer. 
It is shown in~\cite[Theorem~10]{EngMan03} that every (deterministic)
$k$-pebble tree translation can be realized by a composition of (deterministic) $k+1$ \ttt's.
Hence Theorems~\ref{thm:mainlsi} and~\ref{thm:mainlsidec} also hold for 
deterministic $k$-pebble tree transducers, while Theorem~\ref{thm:mainlsi} additionally holds
for the nondeterministic case.
In~\cite[Theorems~5 and~55]{EngHooSam} this is extended to $k$-pebble tree transducers 
that, in addition to the $k$ distinct ``visible'' pebbles, can use an arbitrary number of
``invisible'' pebbles, still with nested life times: they can be realized by 
a composition of $k+2$ \ttt's. Thus, Theorems~\ref{thm:mainlsi} and~\ref{thm:mainlsidec}
also hold for such transducers,  
cf.~\cite[Theorem~57]{EngHooSam}.\footnote{A ``visible'' pebble can be observed by the transducer 
during its entire life time (as usual for pebbles), whereas an ``invisible'' pebble $p$ cannot be observed 
during the life time of a pebble $p'$ of which the life time is nested within the one of $p$;
thus, such a pebble $p'$ ``hides'' the pebble $p$. 
}
 
The \emph{high-level tree transducer} was introduced in~\cite{EngVog88}
as a generalization of both the top-down tree transducer and the macro tree transducer.
It is proved in~\cite[Theorem~8.1(b)]{EngVog88} that nondeterministic high-level tree transducers 
can be simulated by compositions of nondeterministic \mt's.
Since every deterministic high-level tree transducer realizes a partial function 
(as should be clear from the proof of~\cite[Lemma~5.7]{EngVog88}), 
it follows from Corollary~\ref{cor:fun} that, similarly, deterministic high-level tree transducers 
can be simulated by compositions of deterministic \mt's.
Consequently, Corollaries~\ref{cor:mtlsi} and~\ref{cor:mtlsidec} 
also hold for deterministic high-level tree transducers, and 
Corollary~\ref{cor:mtlsi} additionally for the nondeterministic case.

\section{Deterministic Complexity}\label{sec:complex}

Our first main complexity result says that a composition of deterministic \ttt's can be computed
by a RAM program in linear time, more precisely in time $O(n)$ where $n$ is 
the sum of the sizes of the input and the output tree. 

\begin{theorem}\label{thm:lintime}
For every $k\geq 1$ and every $\tau\in \TT^k$ there is an algorithm that computes, 
given an input $t$, the output $s=\tau(t)$ in time $O(|t|+|s|)$.
\end{theorem}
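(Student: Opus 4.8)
The plan is to reduce the general case $k$ to the case $k=1$ by invoking our main technical result. By Corollary~\ref{cor:prod}(2), every $\tau\in\TT^k$ can be written as $\tau=\tau_1\circ\cdots\circ\tau_k$ with each $\tau_i\in\TT$ in such a way that the composition is linear-bounded in the strong form: there are constants $c_1,\dots,c_{k-1}$ such that the intermediate trees $r_0=t,r_1,\dots,r_{k-1},r_k=s$ satisfy $|r_i|\leq c_i\cdot|r_{i+1}|$, and hence $|r_i|=O(|s|)$ for every $i$. (Formally one uses that $\cT\circ(\cT'\ast\cT'')\subseteq(\cT\circ\cT')\ast\cT''$ repeatedly, or simply expands the definition of $\ast$ along the chain.) Thus it suffices to prove the case $k=1$: for a single \dttt\ $M$ one can compute $s=\tau_M(t)$ in time $O(|t|+|s|)$; then running the $k$ single-transducer algorithms in sequence on $r_0,\dots,r_{k-1}$ costs $\sum_{i=0}^{k-1}O(|r_i|+|r_{i+1}|)=O(|s|)=O(|t|+|s|)$, using that each $|r_i|=O(|s|)$ together with $|r_0|=|t|$.

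For the base case I would argue as follows. Fix a \dttt\ $M=(\Sigma,\Delta,Q,q_0,R)$ and an input tree $t\in\dom(M)$. By Lemma~\ref{lem:msorel} and Lemma~\ref{lem:d=ds} we may first assume $M$ is sub-testing, and then by absorbing the relabeling into a linear-time preprocessing pass (a bottom-up finite-state relabeling of $t$, computable in time $O(|t|)$), we may assume $M$ is \emph{local}. Indeed, the \abb{mso} relabeling $N$ of Lemma~\ref{lem:msorel} is realized by a deterministic bottom-up finite-state tree automaton whose output is a relabeling of $t$; computing it is a single post-order traversal, hence $O(|t|)$ time on a RAM with the tree given as a pointer structure. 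So assume $M$ is local. The key observation is that, since $M$ is deterministic, the context-free grammar $G_{M,t}$ is forward deterministic, and its unique derivation tree $d\in L(G^{\mathrm{der}}_{M,t})$ has yield $s$; by Lemma~\ref{lem:sizeheight} (or rather the single-use-free version, Lemma~\ref{lem:finpump}) every configuration $\langle q,u\rangle$ that \emph{does} occur in $d$ occurs, along any root-to-leaf path, at most $\#(Q)$ times consecutively on the same node $u$ before a move-rule leaves $u$; but a configuration may of course recur at independent nodes of $d$, so $d$ itself can be large. The algorithm therefore does not build $d$; it builds $s$ directly by a recursive procedure that, given a ``current configuration'' $\langle q,u\rangle$ that is about to be expanded by an output rule, outputs the corresponding $\Delta$-node and recurses on the children, where between two successive output-producing configurations it simulates the (finitely many, stay-or-move) steps of $M$. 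Because $t$ is in the domain, every such simulated sub-run terminates; because $M$ is local and the transition is deterministic, each step costs $O(1)$; and the number of output-producing configurations is exactly $|s|$. The only point needing care is bounding the total number of \emph{non-output} steps.

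The main obstacle, then, is exactly this: a deterministic \dttt\ that is not single-use may visit a node $u$ of $t$ in the same state many times (exponentially often), so a naive simulation that re-walks $M$ each time it returns to a node is not linear. The standard remedy — and I expect the paper to use it — is a \emph{memoization / caching} argument analogous to the one for two-way finite-state string transducers and for attribute grammars. For each pair $(q,u)\in\con(t)$ we precompute, in a single pass, the ``summary'' function that says: starting $M$ in state $q$ at node $u$, moving only within the subtree $t|_u$ until either an output rule fires or $M$ moves up out of $u$, what is the outcome? There are only $\#(Q)\cdot|t|$ such pairs, each summary is computable in amortized $O(1)$ time by a post-order sweep that combines the summaries at the children (this is where determinism and the bounded state set are essential), and the total table has size $O(|t|)$. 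Symmetrically one needs summaries for re-entering $u$ from above. Using these tables the generation of $s$ becomes: repeatedly jump from one output-producing configuration to the next in $O(1)$ time per jump by looking up summaries, emitting one $\Delta$-node per output rule. Since there is exactly one output rule per node of $s$, and constantly many move/stay steps are hidden inside each table lookup, the total running time is $O(|t|+|s|)$: the $O(|t|)$ for building the relabeling and the summary tables, and the $O(|s|)$ for the output-driven traversal. Wrapping this base case into the linear-bounded chain from Corollary~\ref{cor:prod}(2) as described above yields the theorem. (If $t\notin\dom(M)$ one detects this while building the summaries — some needed summary is undefined — and rejects; by Corollary~\ref{cor:regdom} the domain is regular, so this detection is itself a linear-time bottom-up computation.)
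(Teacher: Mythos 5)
Your proposal is correct and follows essentially the same route as the paper: reduce to a linear-bounded chain via Corollary~\ref{cor:prod}(2) so that every intermediate tree has size $O(|s|)$, reduce the base case to a local \dttt\ via Lemmas~\ref{lem:msorel} and~\ref{lem:d=ds} with linear-time relabeling passes (note the relabeling needs a bottom-up \emph{and} a top-down sweep, not a single bottom-up automaton), and then spend $O(|t|)$ preprocessing to compress the move-rule chains so that generating the output costs $O(1)$ per node of $s$. The only difference is the implementation of that compression: the paper builds the forward deterministic grammar $G_{M,t}$ explicitly and eliminates its chain rules by topologically sorting the functional chain-rule graph, whereas you use bottom-up subtree summaries plus a top-down pass to compute next-output pointers --- both are valid linear-time realizations of the same idea.
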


\begin{proof}
The proof is by induction on $k$. 
We first prove the case $k=1$, which is a slight generalization 
of the well-known fact for attribute grammars that the attribute evaluation 
of an input tree takes linear time (see, e.g., \cite{DJL88,Eng84}).
Let~$\tau\in\TT$ and let $t$ be an input tree of~$\tau$. 
By Corollary~\ref{cor:regdom}, $\dom(\tau)$ is regular and 
hence can be recognized by a bottom-up finite-state tree automaton. Thus, 
we can decide whether or not $t\in\dom(\tau)$ in time $O(|t|)$
by running that automaton on $t$. 
By Lemmas~\ref{lem:msorel} and~\ref{lem:d=ds}, 
$\tau=\tau_1\circ \tau_2$ with $\tau_1\in\TTRS$ and $\tau_2\in\TTL$. 
As observed in Section~\ref{sec:trans}, $\tau_1$ can be realized by a classical 
linear deterministic top-down tree transducer with regular look-ahead. 
Thus, by (the proof of)~\cite[Theorem~2.6]{Eng77}, it can be realized by 
a deterministic bottom-up finite-state relabeling (DBQREL) and a local relabeling \ttt.
To run these two relabelings on $t\in\dom(\tau)$ obviously takes time $O(|t|)$.
Thus, it remains to consider the case that $\tau\in\TTL$.
Let~$M$ be a local \dttt\ that realizes~$\tau$. 
To compute $\tau_M(t)$, we first construct the regular tree grammar $G_{M,t}$
in time $O(|t|)$, the number of configurations of~$M$ on~$t$. 
Then we remove the chain rules from the context-free grammar $G_{M,t}$, 
i.e., the rules $\tup{q,u}\to\tup{q',u'}$ resulting from the move rules of~$M$. 
Since $G_{M,t}$ is forward deterministic, this can also be done in time $O(|t|)$,
as follows. Viewing the chain rules as edges of a directed graph 
with configurations as nodes, we compute an evaluation order of the graph 
by topological sorting, in time $O(|t|)$. Then we compute the new rules 
by traversing this order from right to left, again in time $O(|t|)$. 
For an edge $\tup{q,u}\to\tup{q',u'}$,
if the (old or new) rule for $\tup{q',u'}$ is 
$\tup{q',u'}\to \delta(\tup{q_1,u_1},\dots,\tup{q_k,u_k})$, then the new rule for
$\tup{q,u}$ is $\tup{q,u}\to \delta(\tup{q_1,u_1},\dots,\tup{q_k,u_k})$.
Finally, we use this new regular tree grammar, equivalent to $G_{M,t}$, to generate $s=\tau_M(t)$,
which takes time $O(|s|)$ because each rule generates a node of $s$. 

Now let $\tau=\tau_1\circ\tau_2$ such that $\tau_1\in\TT$ and $\tau_2\in \TT^k$, $k\geq 1$. 
By Corollary~\ref{cor:prod}(2) we may assume that $(\tau_1,\tau_2)$ is linear-bounded.
Let $t$ be an input tree of $\tau$. 
Since $\dom(\tau)$ is regular by Corollary~\ref{cor:regdom}, 
we can check that $t\in\dom(\tau)$ in linear time, as above. 
By the case $k=1$, the intermediate tree $r=\tau_1(t)$ can be computed in time $O(|t|+|r|)$,
and by induction the output tree $s=\tau(t)=\tau_2(r)$ 
can be computed in time $O(|r|+|s|)$.
Since $(\tau_1,\tau_2)$ is linear-bounded, there is a constant $c\in\nat$ such that 
$|r|\leq c\cdot|s|$, i.e., $|r|=O(|s|)$. Hence the total time is 
$O(|t|+|r|)+O(|r|+|s|)=O(|t|+|s|)$.
\qed
\end{proof}

It should be noted that the constant in the time complexity $O(|t|+|s|)$ can be large 
in terms of the size of the given transducers due to the use of linear-boundedness, 
cf. Remark~\ref{rem:size}.

Since deterministic macro tree transducers, pebble tree transducers, 
and high-level tree transducers can be realized as compositions of deterministic \ttt's
(see Section~\ref{sec:lsi}),
Theorem~\ref{thm:lintime} also holds for such transducers. 
For $k$-pebble tree transducers this improves the result of~\cite[Proposition~3.5]{MilSucVia03},
where the time bound is $O(|t|^k+|s|)$.

Before we proceed, we need an elementary lemma on leftmost derivations of context-free grammars.
For a context-free grammar $G=(N,T,\cS,R)$, a leftmost sentential form is a string $v\in (N\cup T)^*$ 
such that $S\Rightarrow^*_{G,\mathrm{lm}} v$ for some $S\in\cS$, 
where $\Rightarrow_{G,\mathrm{lm}}$ is the usual 
leftmost derivation relation of $G$: if $X\to\zeta$ is in $R$, 
then $v_1Xv_2\Rightarrow_{G,\mathrm{lm}} v_1\zeta v_2$ for all $v_1\in T^*$ and $v_2\in (N\cup T)^*$. 

\begin{lemma}\label{lem:leftmost}
Let $G=(N,T,\cS,R)$ be an $\epsilon$-free context-free grammar,
and let $G'=(N',T,\cS,R')$ be the equivalent context-free grammar such that $N'=N\cup\{Z\}$ and
$R'=\{X\to \zeta Z\mid X\to\zeta \in R\}\cup\{Z\to\epsilon\}$, where $Z$ is a new nonterminal.
Let $v$ be a leftmost sentential form of $G'$, and let 
$S\Rightarrow^*_{G',\mathrm{lm}} v \Rightarrow^*_{G',\mathrm{lm}} w$ 
be a leftmost derivation of $G'$ with $S\in\cS$ and $w\in L(G)$. 
Moreover, let $d$ be the derivation tree corresponding to that derivation. 
Then the number of occurrences of $Z$ in $v$ is at most the height of~$d$.\footnote{Note that
there is a straightforward one-to-one correspondence between the leftmost derivations of $G$ and $G'$,
and between their derivation trees. Since $G$ is $\epsilon$-free, 
the derivation trees have the same height.
\label{ftn:leftmost}}
\end{lemma}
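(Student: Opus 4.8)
The plan is to track the creation and destruction of occurrences of $Z$ along the given leftmost derivation and relate them to the structure of the derivation tree $d$. First I would set up the correspondence precisely: every application of a rule $X\to\zeta Z$ of $G'$ corresponds to an application of $X\to\zeta$ of $G$, and since $G$ is $\epsilon$-free, the derivation tree $d'$ of $G'$ is obtained from the derivation tree $d$ of $G$ by attaching, as a rightmost child, a $Z$-leaf (with a $Z\to\epsilon$ below it, contributing nothing to the yield) to every internal node of $d$. In particular $d$ and $d'$ have the same height, and the $Z$-occurrences correspond bijectively to internal nodes of $d$.

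The key observation is about \emph{where} the uncancelled $Z$'s in the leftmost sentential form $v$ can sit. In a leftmost derivation of $G'$, whenever we rewrite the leftmost nonterminal $X$ in $v_1 X v_2$ (with $v_1\in T^*$) using $X\to\zeta Z$, the newly created $Z$ is placed immediately to the right of everything generated from $X$; it will be the leftmost nonterminal — hence the next one rewritten — only after the entire subderivation from the symbols of $\zeta$ has been completed and has produced only terminals, at which point $Z\to\epsilon$ fires and that $Z$ disappears. Consequently, the $Z$'s that are still present in $v$ are exactly those attached to ancestors (in $d'$) of the node currently being expanded — more precisely, to internal nodes of $d$ that lie on the path from the root to the leftmost not-yet-expanded position of $v$, and which still have at least one child to the right of that path whose subtree is not yet fully terminal. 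Each such node contributes at most one surviving $Z$. Since these nodes lie on a single root-to-leaf path in $d$, their number is at most the height of $d$.

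Concretely, the main step I would carry out is an induction on the length of the derivation $S\Rightarrow^*_{G',\mathrm{lm}} v$, maintaining the invariant: if $v = u\,\eta$ with $u\in T^*$ and $\eta$ beginning with a nonterminal (or $\eta=\epsilon$ if $v$ is terminal), then the occurrences of $Z$ in $v$ are in one-to-one correspondence with a subset of the proper ancestors of the leftmost node of $d$ not covered by $u$, each ancestor contributing at most one; rewriting $X\to\zeta Z$ adds the current node as such an ancestor, and each firing of $Z\to\epsilon$ removes one. A short case analysis on the two rule types of $G'$ establishes the inductive step. Then, since the ancestors of any node form a chain of length at most the height of $d$, the count of $Z$'s in $v$ is bounded by $\mathrm{height}(d)$, as claimed.

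The main obstacle I anticipate is making the bookkeeping of ``which $Z$ survives'' airtight: one has to argue carefully that a $Z$ created by $X\to\zeta Z$ is not cancelled until the \emph{whole} subtree rooted at the node for $X$ is terminal, and that in the leftmost regime no $Z$ is ever ``skipped over'' (so that there is genuinely at most one live $Z$ per path node, not several). This is really a statement about the shape of leftmost derivations and is intuitively clear from the stack-like behaviour of leftmost expansion, but writing it cleanly — ideally by the ancestor-path invariant above rather than by an explicit stack argument — is where the care is needed.
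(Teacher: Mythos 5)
Your argument is correct and is essentially the paper's: both identify the surviving occurrences of $Z$ in $v$ with (at most one per) ancestor on the root-to-leaf path through the leftmost unexpanded position, so their number is bounded by the height of $d$. The paper states this more tersely as an exact equality (the number of $Z$'s equals the number of edges from the node of the leftmost $Z$ to the root, by the well-known correspondence between nonterminals of a leftmost sentential form and tree nodes), whereas you establish the same structural fact as an inequality via an explicit induction invariant; the extra bookkeeping you flag is exactly what the paper leaves implicit.
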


\begin{proof}
Each occurrence of a nonterminal $Y\in N'$ in $v$ 
corresponds to a node of $d$ with label $Y$ in a well-known way.
Let $u$ be the node of $d$ corresponding to the leftmost occurrence of $Z$ in $v$.
Clearly the number of occurrences of $Z$ in $v$ is equal to 
the number of edges on the path from $u$ to the root of $d$.
\qed
\end{proof}

By~\cite[Theorem~2.5]{Pap94} it follows from Theorem~\ref{thm:lintime} that
a composition of deterministic \ttt's can be computed
by a deterministic Turing machine in cubic time, more precisely in time $O(n^3)$ 
where $n$ is the sum of the sizes of the input and the output tree. 
Our second complexity result says that a composition of deterministic \ttt's can be computed 
by a deterministic multi-tape Turing machine $N$ in linear space
(in the sum of the sizes of the input and output tree). 
On a work tape of $N$ we will represent the input tree $t$ over $\Sigma$ 
by the string $\phi(t)$ over $\Sigma\cup\{(,)\}$, 
where $\{(,)\}$ is the set consisting of the left- and right-parenthesis,
defined such that if $\phi(t_1)=t'_1,\dots,\phi(t_m)=t'_m$ then 
$\phi(\sigma t_1\cdots t_m)=\sigma(t'_1\cdots t'_m)$.
In other words, we formally insert the parentheses (but not the commas) 
that are always used informally to denote trees. 
The parentheses allow $N$ to walk on the tree $t$, from node to node, because 
it can recognize a subtree of $t$ by checking that 
the numbers of left- and right-parentheses 
in the corresponding substring of $\phi(t)$ are equal.
In particular, it can determine the child number of a node of~$t$ by 
counting the number of its younger siblings. 
Obviously, the mapping $\phi$ is injective, and can be computed in linear space
(simulating a one-way push-down transducer). 
In what follows we identify $t$ and $\phi(t)$. 

\begin{theorem}\label{thm:linspace}
For every $k\geq 1$ and every $\tau\in \TT^k$ there is a deterministic Turing machine that computes, 
given an input $t$, the output $s=\tau(t)$ in space $O(|t|+|s|)$.
\end{theorem}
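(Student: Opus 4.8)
The plan is to proceed by induction on $k$, exactly mirroring the structure of the proof of Theorem~\ref{thm:lintime}, but replacing the linear-time RAM bound by a linear-space Turing machine bound. For $k=1$ I would first use Corollary~\ref{cor:regdom} to check $t\in\dom(\tau)$ in linear space by running a bottom-up tree automaton on $\phi(t)$ (the automaton's state at each node can be recomputed on demand, so only $O(|t|)$ space is needed). Then, as in Theorem~\ref{thm:lintime}, by Lemmas~\ref{lem:msorel} and~\ref{lem:d=ds} write $\tau=\tau_1\circ\tau_2$ with $\tau_1\in\TTRS$ a relabeling realizable by a \abb{DBQREL} followed by a local relabeling \ttt, and $\tau_2\in\TTL$. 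The relabeling part clearly only blows up the tree by a constant factor and can be applied in linear space; the technical heart is therefore the case $\tau\in\TTL$, i.e.\ simulating a local \dttt\ $M$ on $t$ and writing out $s=\tau_M(t)$ using only $O(|t|+|s|)$ space on the work tapes.

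\textbf{The key step: linear-space simulation of a local \dttt.} Here the idea is to produce $s$ in preorder, one symbol at a time, while storing on a work tape only the current ``branch'' of the parallel computation of $M$ on $t$, together with the information needed to reconstruct the siblings of the nodes on that branch. Recall from Section~\ref{sec:trans} that the parallel computations of $M$ on $t$ are the derivation trees of the regular tree grammar $G_{M,t}^{\mathrm{der}}$. A single $\Delta$-labeled node of $s$ corresponds, after contracting chain rules (i.e.\ move steps of $M$), to a node of the derivation tree $d$ of $G_{M,t}$; the height of $d$ is at most $\#(Q)\cdot|t|$ by Lemma~\ref{lem:sizeheight} (via Lemma~\ref{lem:finpump}). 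The machine maintains a stack describing the path in $d$ from the root to the node currently being output, where each stack entry records the configuration $\langle q,u\rangle$ of $M$ (a state plus a pointer into $\phi(t)$, costing $O(\log|t|)$ bits) together with which child of that node we have descended into. From a configuration $\langle q,u\rangle$ the machine repeatedly applies the (unique, by determinism) applicable rule of $M$: a move rule just updates $\langle q,u\rangle$ in place (using the parentheses of $\phi(t)$ to navigate, and regular tests evaluated by on-the-fly automaton runs in linear space), and an output rule emits a $\Delta$-symbol, pushes the leftmost child onto the stack, and recurses; when a rank-$0$ output symbol is emitted we pop and move to the next sibling. To bound the stack we need to argue that the number of move steps between two consecutive output steps can be bounded so that we never loop — but determinism plus $t\in\dom(M)$ guarantees termination, and by an argument like Lemma~\ref{lem:leftmost} (applied to the contracted grammar) the number of live ``pending siblings'' along the current path, hence the stack depth, is $O(\text{height of }d)=O(|t|)$, while each entry is $O(\log|t|)$ bits, giving $O(|t|\log|t|)$. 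To get down to $O(|t|)$ I would instead recompute positions rather than store full pointers: since $M$ is local and the computation is deterministic, the path in $d$ can be recomputed from the input tree and the sequence of child-choices, so each stack entry only needs $O(1)$ symbols (state of $M$ plus child number), yielding stack size $O(|t|)$ in bits. The output symbols are written to the (write-only) output tape, which does not count against the space bound, or counted in $O(|s|)$ if one insists on a bound in $|t|+|s|$.

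\textbf{The induction step.} For $\tau=\tau_1\circ\tau_2$ with $\tau_1\in\TT$ and $\tau_2\in\TT^k$, by Corollary~\ref{cor:prod}(2) we may assume $(\tau_1,\tau_2)$ is linear-bounded, so the intermediate tree $r=\tau_1(t)$ satisfies $|r|\le c\cdot|s|$. The obvious strategy — compute $r$ on a work tape, then run the inductive machine on $r$ — uses space $O(|t|+|r|)+O(|r|+|s|)=O(|t|+|s|)$, since $|r|=O(|s|)$. The one subtlety is that the inductive machine for $\tau_2$ expects its input in the $\phi$-encoding; but $\tau_1\in\TT$ produces its output already as a tree, and re-encoding via $\phi$ is a linear-space (in fact log-space) operation, so this is unproblematic. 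I would also first restrict $\dom(M_1)$ to $\dom(\tau)$ using Lemma~\ref{lem:domrestr} and Corollary~\ref{cor:regdom} so that the construction of $r$ never aborts midway.

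\textbf{Main obstacle.} The delicate point is genuinely achieving $O(|t|)$ rather than $O(|t|\log|t|)$ space in the base case: the naive branch-stack stores $O(|t|)$ configurations each needing $O(\log|t|)$ bits to name a node of $t$. Getting the honest linear bound requires the ``recompute instead of store'' trick — keeping only $O(1)$ bits per stack frame and recovering node addresses by re-walking $\phi(t)$ from scratch when popping — and checking that this re-walking, done perhaps $O(|t|)$ times at $O(|t|)$ cost each, does not blow up \emph{space} (it does not, since the re-walk reuses the same work tape region). One should also double-check that the regular look-around tests, which are evaluated by auxiliary tree automata (or auxiliary \ttt's via Corollary~\ref{cor:regdom}), can themselves be run in space $O(|t|)$; this is fine because bottom-up tree automata run in linear space on $\phi(t)$, and only constantly many such automata (depending on $M$, not on $t$) are involved.
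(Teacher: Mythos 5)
Your proposal is correct, and the induction step is essentially identical to the paper's (including the use of Corollary~\ref{cor:prod}(2) and the decomposition via Lemmas~\ref{lem:msorel} and~\ref{lem:d=ds}), but your treatment of the key base case $\tau\in\TTL$ takes a genuinely different route. The paper's machine simulates the unique \emph{leftmost derivation} of the forward deterministic grammar $G_{M,t}$: its stack holds the entire frontier of pending configurations $\tup{q_1,u_1}\cdots\tup{q_n,u_n}$ (so up to $|s|$ states, since each pending configuration generates at least one output symbol), and it solves the node-addressing problem not by recomputation but by storing, between consecutive states on the stack, the \emph{relative} instruction sequence $\gamma_i\in I^*$ leading from $u_i$ to $u_{i+1}$; the total number of instruction symbols is then bounded by the height of the derivation tree via Lemma~\ref{lem:leftmost} (which is exactly what that lemma is for), giving a stack of length $O(|s|+|t|)$ over a constant alphabet. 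You instead keep only the root-to-current-node path of the (chain-contracted) derivation tree, with $O(1)$ bits per frame, and recover absolute node positions by deterministically re-walking from the root using the stored child-choices. Both resolve the same obstacle -- that Dewey addresses would cost a superlinear stack -- by different means; your version yields work-tape space $O(|t|)$ independent of $|s|$ at the price of extra (but space-free) recomputation time, while the paper's version is the direct call-stack implementation of the recursive procedures and reuses the same machine almost verbatim for the nondeterministic polynomial-time bound of Lemma~\ref{lem:finlinspace}. Two small points: your appeal to Lemma~\ref{lem:leftmost} for bounding the stack depth is unnecessary in your scheme (the depth is trivially at most the height of the derivation tree, which is at most $\#(Q)\cdot|t|$ by the remark after Lemma~\ref{lem:finpump}), and the height bound you want is the one established inside the proof of Lemma~\ref{lem:sizeheight} rather than its statement; neither affects correctness.
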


\begin{proof}
Again, we first show this for $k=1$. 
Let $M = (\Sigma, \Delta, Q, q_0, R)$ be a~\dttt, and 
let $t\in T_\Sigma$ be an input tree. 
As usual we assume that the output rules of $M$ only contain stay-instructions. 
We describe a deterministic multi-tape Turing machine~$N$ that computes $\tau_M$ in linear space. 
By Corollary~\ref{cor:regdom}, $\dom(M)$ is a regular tree language
and hence a context-free language, which can be recognized in deterministic linear space. 
Thus, $N$ starts by deciding whether or not $t\in\dom(M)$. Now assume that $t\in\dom(M)$. 
To compute $s=\tau_M(t)$, the machine $N$ simulates the (unique) leftmost derivation 
of the forward deterministic context-free grammar $G_{M,t}$. 
Every leftmost sentential form of $G_{M,t}$ is of the form $w\tup{q_1,u_1}\cdots\tup{q_n,u_n}$
with $w\in \Delta^*$ and $\tup{q_i,u_i}\in \con(t)$. If one views the states of $M$ as 
recursive procedures with one parameter of type `node of $t$', then $\tup{q_1,u_1}\cdots\tup{q_n,u_n}$ 
corresponds to the contents of the stack in the usual implementation of recursive procedures: 
each configuration $\tup{q_i,u_i}$ is a call of procedure $q_i$ with actual parameter $u_i$. 
The machine $N$ uses a one-way output tape on which it prints $w$ (which will finally be $s$), 
a work tape with the input tree $t$ (or rather $\phi(t)$),
and a work tape that contains a stack representing $\tup{q_1,u_1}\cdots\tup{q_n,u_n}$, 
with the top of the stack to the left. 
At each moment of time, a reading head of $N$ is at node $u_1$ of $t$, 
and another reading head is at the top of the stack. 
Note that $n\leq |s|$ because every configuration $\tup{q_i,u_i}$ will generate at least one symbol of $s$. 
If $N$ would represent the parameters $u_2,\dots,u_n$ by their Dewey notation, the size of the stack could be 
$|s|\cdot|t|$, which is too much. Thus, we need a more compact representation of the nodes $u_2,\dots,u_n$. 
In a rule of~$G_{M,t}$ with left-hand side $\tup{q,u}$, 
every node $u'$ in the right-hand side is a neighbour of~$u$, or $u$ itself, and so, the ``difference'' 
between $u$ and $u'$ can be expressed by an instruction 
in $I=\{\up,\stay\}\cup\{\down_i\mid i\in[1,\m_\Sigma]\}$.
This allows us to represent $\tup{q_1,u_1}\cdots\tup{q_n,u_n}$ by the node $u_1$
and a stack of the form $q_1\gamma_1q_2\gamma_2\cdots q_n\gamma_n$ where $\gamma_i\in I^*$ is a sequence of 
instructions that leads from $u_i$ to $u_{i+1}$ (with $u_{n+1}=\rt_t$). 
Let us now consider in detail how $N$ simulates the leftmost derivation of $G_{M,t}$. 

At each moment of time, the current node of $t$ and the current contents of the output tape 
and the stack tape represent a leftmost sentential form of $G_{M,t}$, which is an element of 
$\Delta^*\cdot\con(t)^*$. The stack tape contains a string in $(Q\cup I)^*\bot$, 
where $\bot$ is the bottom stack symbol and $I$ is as above.
The current node $u$ of~$t$ and the current contents $w\in\Delta^*$ and $\xi\in(Q\cup I)^*\bot$ of the 
output tape and stack tape, respectively, represent the leftmost sentential form $w\cdot\mu(u,\xi)$,
where the string $\mu(u,\xi)\in\con(t)^*$ is defined as follows (for every $q\in Q$ and $\beta\in I$):
$\mu(u,q\xi)=\tup{q,u}\cdot \mu(u,\xi)$, $\mu(u,\beta\xi)=\mu(\beta(u),\xi)$, 
and $\mu(\bot)=\epsilon$. 
Initially, $N$~starts at the root of $t$, with empty output tape and with stack tape $q_0\bot$, 
representing the initial output form $\tup{q_0,\rt_t}$. 
If the top symbol of the stack is $\bot$, then $N$~halts. 
Otherwise, to compute the next leftmost sentential form, $N$ first pops the top symbol off the stack.
If that symbol was $q\in Q$, and the current node $u$ of $t$ has label $\sigma$ and child number $j$, 
then $N$ selects the unique rule $\tup{q,\sigma,j,T}\to \zeta$ that is applicable to $\tup{q,u}$.
Note that it can test in linear space whether or not $(t,u)\in T$, because $\tmark(T)$ is a 
context-free language. If $\zeta=\tup{q',\alpha}$, then $N$ moves to node $\alpha(u)$ of $t$
and pushes the string $q'\beta$ on the stack where $\beta$ is defined as follows: 
if $\alpha$ is $\up$, $\stay$, or $\down_i$, then $\beta$ is $\down_j$, $\stay$, or $\up$, respectively. 
If $\zeta=\delta(\tup{q_1,\stay},\dots,\tup{q_k,\stay})$, 
then $N$ outputs $\delta$, and pushes $q_1\cdots q_k$ on the stack (if $k>0$). 
It is easy to check that in both these cases the resulting configuration of $N$ 
represents the next leftmost sentential form of~$G_{M,t}$.
If the top symbol of the stack was $\beta\in I$, the machine~$N$ 
moves to node $\beta(u)$ of $t$. This does not change the represented leftmost sentential form.
Thus, after applying a rule $\tup{q,\sigma,j,T}\to \delta$ (with $\delta$ of rank~$0$), 
$N$ removes instructions from the stack (and moves its reading head on~$t$ accordingly) 
until the top of the stack is a state again.
When $N$ halts, the output tape contains~$s$.

It remains to show that the length of the stack is linear in $|t|+|s|$.
As mentioned above, since every configuration $\tup{q,u}$ will generate at least one symbol of $s$, 
the number of occurrences of states in the stack is at most $|s|$. 
To estimate the number of occurrences of instructions in the stack,
we use Lemma~\ref{lem:leftmost}.
In the above case where $q$ is the top stack symbol and $\tup{q,\sigma,j,T}\to \tup{q',\alpha}$
is the rule applicable to~$\tup{q,u}$, the machine~$N$ does not apply the rule
$\tup{q,u}\to\tup{q',\alpha(u)}$ of $G_{M,t}$, but rather the rule $\tup{q,u}\to\tup{q',\alpha(u)}\beta$
where $\beta$ is defined as above. Moreover, when $\beta$~is the top stack symbol, 
$N$ applies the rule $\beta\to\epsilon$. From this it should be clear that,
by Lemma~\ref{lem:leftmost} and footnote~\ref{ftn:leftmost}, 
the number of occurrences of instructions in the stack 
is at most the height of the derivation tree corresponding to the derivation 
$\tup{q_0,\rt_t}\Rightarrow^*_{M,t} s$ of $G_{M,t}$. 
As observed in Section~\ref{sec:trees} after Lemma~\ref{lem:finpump}, 
that height is at most $\#(\con(t))$, i.e., $\#(Q)\cdot |t|$. Thus, the length of the stack 
is indeed $O(|s|+|t|)$. 

The induction step can be proved in exactly the same way as in the proof of Theorem~\ref{thm:lintime},
with `time' replaced by `space'. 
\qed
\end{proof}

For a class $\cT$ of tree translations and a class $\cL$ of tree languages, 
we denote by $\cT(\cL)$ the class of tree languages
$\tau(L)$ with $\tau\in\cT$ and $L\in\cL$. 
The elements of $\cT(\REGT)$ are called the output tree languages 
(or surface languages) of $\cT$. Since $\TT\subseteq\MT$ by Lemma~\ref{lem:ttvsmt},
it follows from the proof of~\cite[Theorem~7.5]{EngVog85} that the output tree languages 
of $\TT^k$ are recursive.
From Theorem~\ref{thm:linspace} we now obtain that they are in $\family{DSPACE}(n)$, 
i.e., can be recognized by a Turing machine in deterministic linear space.
This was shown for classical top-down tree transducers in~\cite{Bak78}.

\begin{theorem}\label{thm:outspace}
For every $k\geq 1$, $\TT^k(\REGT)\subseteq \family{DSPACE}(n)$.
\end{theorem}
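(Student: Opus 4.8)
The goal is to show that $\tau(L) \in \family{DSPACE}(n)$ whenever $\tau \in \TT^k$ and $L \in \REGT$, where $n$ denotes the size of the candidate output tree $s$. The natural strategy is to reduce the \emph{membership} question ``is $s \in \tau(L)$?'' to a linear-space computation that inverts $\tau$ step by step, using the linear-boundedness guaranteed by Corollary~\ref{cor:prod}(2) to keep all intermediate trees small, and using the deterministic linear-space forward simulation of Theorem~\ref{thm:linspace} as the core engine.

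\textbf{Main steps.} First I would decompose: write $\tau = \tau_1 \circ \cdots \circ \tau_k$ with each $\tau_i \in \TT$, and by Corollary~\ref{cor:prod}(2) assume that the whole composition, viewed as $\tau_1 \circ (\tau_2 \circ \cdots \circ \tau_k)$, is linear-bounded; more carefully, one wants linear-boundedness at \emph{every} cut, which follows by applying Corollary~\ref{cor:prod}(2) repeatedly (or by noting $\TT^k \subseteq \TTP \ast \TT^k$ and peeling off pruning transducers). The upshot: there is a constant $c$ so that for every $(t,s) \in \tau$ with $t \in L$ there exist intermediate trees $r_1, \dots, r_{k-1}$ with $(t,r_1)\in\tau_1$, $(r_1,r_2)\in\tau_2$, \dots, $(r_{k-1},s)\in\tau_k$, and $|t|, |r_1|, \dots, |r_{k-1}| \le c \cdot |s|$. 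Second, since each $\tau_i$ is a partial \emph{function}, the intermediate trees $r_i$ are uniquely determined by $t$; so membership $s \in \tau(L)$ is equivalent to: \emph{there exists} $t \in L$ with $|t| \le c|s|$ such that $\tau_1(t)$ is defined, $\dots$, $\tau(t) = s$. Third, the algorithm: given $s$, cycle through all candidate input trees $t$ over the (fixed, finite) input alphabet with $|t| \le c|s|$ --- there are exponentially many, but each is written on a work tape in space $O(|t|) = O(|s|)$, and they can be enumerated in lexicographic order reusing the same $O(|s|)$ cells. For each such $t$, first check $t \in L$ (regular, hence recognizable in space $O(|t|)$ by a bottom-up tree automaton), then run the deterministic linear-space procedure of Theorem~\ref{thm:linspace} to compute $\tau(t)$ --- but crucially, we do not store the output; we compute it symbol by symbol and compare on the fly against $s$, halting and rejecting this $t$ as soon as a mismatch (or a length mismatch, or an undefined intermediate result) occurs. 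By linear-boundedness the space used by that procedure is $O(|t| + |\tau(t)|)$, and we only continue running it as long as the produced prefix agrees with $s$, so the space is $O(|t| + |s|) = O(|s|)$. If some $t$ yields exactly $s$, accept; if the enumeration is exhausted, reject.

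\textbf{Why this stays in linear space.} The three tapes --- a counter/register for the current candidate $t$ (of length $O(|s|)$), the simulation workspace of Theorem~\ref{thm:linspace} applied to input $t$ (of size $O(|t| + |\tau(t)|) = O(|s|)$ as long as the simulation has not yet diverged from $s$), and a read-only copy of $s$ --- each occupy $O(|s|)$ cells, and all of this can be folded onto a constant number of work tapes. The only subtlety is ensuring that while computing $\tau(t)$ we never let the running output exceed $|s|$ before detecting the discrepancy: since we compare greedily against $s$ and abort at the first excess symbol, the partial output is always of length $\le |s|+1$, and the internal intermediate trees $r_i$ of the composition are bounded by $c|s|$ by linear-boundedness, so Theorem~\ref{thm:linspace}'s space bound applies with $n = O(|s|)$.

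\textbf{The main obstacle.} The delicate point is the interplay between linear-boundedness and the on-the-fly comparison: Theorem~\ref{thm:linspace} bounds the space of computing $\tau(t)$ by $O(|t| + |\tau(t)|)$ \emph{for the full output}, but an arbitrary candidate $t$ might have a huge (or infinite, though $\TT$ translations on a given input are finite but possibly exponential-sized) output, which would blow the space bound before we ever see a mismatch with $s$. The fix is exactly that $\tau$ restricted to the relevant domain is a function and linear-bounded, so the \emph{only} $t$'s for which $|\tau(t)|$ could be large are precisely the ones we care about --- but for a \emph{wrong} candidate $t$, $\tau(t)$ could still be enormous. To handle this I would interleave the output-production of the Theorem~\ref{thm:linspace} simulation with the comparison so that production halts the instant the output length would exceed $|s|$; one must check that the linear-space simulation can be run in this ``bounded output'' mode, which it can, since it produces its output left-to-right on a one-way output tape and the work-tape usage is bounded by $O(|t| + (\text{length of output produced so far}))$ throughout (this is visible from the stack-length analysis in the proof of Theorem~\ref{thm:linspace}, where the stack height is $O(|t| + |s|)$ with $s$ the portion generated). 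So the real work is to state and verify this ``truncated-output, space-$O(|t| + \ell)$ after producing $\ell$ symbols'' refinement of Theorem~\ref{thm:linspace}, after which the enumeration argument is routine. I would also remark that the same argument, combined with the membership results of later sections, is what the authors presumably have in mind, and that this generalizes the classical top-down result of~\cite{Bak78}.
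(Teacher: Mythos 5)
There is a genuine gap in your second step. You claim that linear-boundedness gives a constant $c$ such that for every $(t,s)\in\tau$ with $t\in L$ one has $|t|\le c\cdot|s|$, and you then reduce membership of $s$ to the existence of a witness $t\in L$ with $|t|\le c\cdot|s|$. But linear-boundedness (as defined in Section~\ref{sec:trees} and delivered by Corollary~\ref{cor:prod}(2)) only bounds the sizes of the \emph{intermediate} trees in the decomposition; it says nothing about the size of the original input tree. A \dttt\ can delete almost all of its input (e.g.\ translate every tree to a single leaf $e$), so $s$ may lie in $\tau(L)$ only via input trees that are arbitrarily large compared to $|s|$, and your enumeration of candidates $t\in L$ with $|t|\le c\cdot|s|$ would then miss every witness and wrongly reject.

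The fix is the one the paper uses, and it needs one ingredient you never invoke. Write $\tau=\tau_1\circ\tau_2$ with $\tau_1\in\TTP$ pruning, $\tau_2\in\TT^k$, and $(\tau_1,\tau_2)$ linear-bounded (Corollary~\ref{cor:prod}(2)); then set $L'=\tau_1(L)$, so that $\tau(L)=\tau_2(L')$ and every $s\in\tau(L)$ has a witness $t\in L'$ with $(t,s)\in\tau_2$ and $|t|\le c\cdot|s|$. The enumeration is then over candidate trees $t$ of size at most $c\cdot|s|$ tested for membership in $L'$ --- and this test is only feasible because $L'$ is again regular, by Lemma~\ref{lem:pru} (closure of $\REGT$ under pruning \ttt\ translations); checking $t\in L'$ by searching for a preimage in $L$ would reintroduce the unbounded-input problem. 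The remainder of your argument --- enumerating bounded-size candidates in place, running the Theorem~\ref{thm:linspace} simulation, and aborting when the space (equivalently, the produced output) exceeds the bound $O(|t|+|s|)$ --- matches the paper's proof; your lengthy worry about a ``truncated-output mode'' is handled there simply by rejecting the current candidate as soon as the simulation exceeds space $c'\cdot(|t|+|s|)$.
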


\begin{proof}
Let $L\in\REGT$ and $\tau\in\TT^k$.
By Corollary~\ref{cor:prod}(2), $\tau=\tau_1\circ\tau_2$ 
such that $\tau_1\in\TTP$, $\tau_2\in \TT^k$, 
and $(\tau_1,\tau_2)$ is linear-bounded for some constant~$c$.
Let $L'=\tau_1(L)$, and note that $\tau(L)=\tau_2(L')$ and that 
$L'\in\REGT$ by Lemma~\ref{lem:pru}.
It is straightforward to show that for every $s\in \tau(L)$ there exists $t\in L'$ such that
$(t,s)\in\tau_2$ and $|t|\leq c\cdot|s|$. 
To check whether a given tree~$s$ is in $\tau(L)$, a deterministic Turing machine 
systematically enumerates all input trees $t$ (of $\tau_2$) such that $|t|\leq c\cdot|s|$.
For each such $t$ it first checks that $t\in L'$ in space $O(|t|)$.
Then it uses the algorithm of Theorem~\ref{thm:linspace} to compute 
$\tau_2(t)$ in space $c'\cdot(|t|+|\tau_2(t)|)$, but rejects $t$ as soon as 
the computation takes more than space $c'\cdot(|t|+|s|)$;
thus, the space used is $O(|t|+|s|)=O(|s|)$.
Clearly, $s\in\tau(L)$ if and only if $\tau_2(t)=s$ for some such $t$. 
\qed
\end{proof}

For a tree $t$ we denote its yield by $yt$, 
for a tree language $L$ we define $yL=\{yt\mid t\in L\}$,
and for a class $\cL$ of tree languages we define $y\cL=\{yL\mid L\in\cL\}$.
For a class $\cT$ of tree translations, the languages in $y\cT(\REGT)$
are called the output string languages (or target languages) of $\cT$. 

\begin{corollary}\label{cor:outspace}
For every $k\geq 1$, $y\TT^k(\REGT)\subseteq \family{DSPACE}(n)$.
\end{corollary}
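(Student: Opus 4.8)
The plan is to reduce Corollary~\ref{cor:outspace} to Theorem~\ref{thm:outspace} by showing that taking yields does not take us outside $\TT^k(\REGT)$. First I would observe that the yield operation itself is realized by a deterministic tree-walking tree transducer: given a tree $t$ over $\Delta$, one can construct a \dttt\ (in fact a local one) that performs a left-to-right depth-first traversal of $t$ and, whenever it visits a leaf, outputs that leaf's label as a monadic node of a unary output tree, ending with a fixed nullary symbol. Concretely, represent a string $a_1\cdots a_m$ as the monadic tree $a_1(a_2(\cdots a_m(e)\cdots))$ over a ranked alphabet in which each former leaf-symbol of $\Delta$ is given rank~$1$ and a new symbol $e$ has rank~$0$. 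Call this translation $\yield$; it is in $\TTL\subseteq\TT$ (this is essentially remarked in the Introduction: ``the yield of a tree can be computed by a \dttt''). Then for any $L\in\REGT$ and $\tau\in\TT^k$ we have $y\,\tau(L)$ equal, up to the trivial monadic-tree encoding of strings, to $(\tau\circ\yield)(L)$, and $\tau\circ\yield\in\TT^k$ because $\TT$ is closed under right-composition with $\TT$ by Theorem~\ref{thm:rightcomp} (more precisely $\TT^k\circ\TT\subseteq\TT^k$ since $\TT\circ\TT\subseteq\TT$). Hence $y\,\tau(L)$, encoded as a monadic tree language, lies in $\TT^k(\REGT)$, so by Theorem~\ref{thm:outspace} it is in $\family{DSPACE}(n)$.

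The only remaining point is to bridge the gap between recognizing the monadic-tree encoding of an output string and recognizing the string itself in linear space. This is routine: a deterministic Turing machine, given an input string $w=a_1\cdots a_m$, can in linear space write down the encoded tree $a_1(a_2(\cdots a_m(e)\cdots))$ (in the parenthesized string representation $\phi$ used in Section~\ref{sec:complex}), whose size is $O(|w|)$, and then run the $\family{DSPACE}(n)$ recognizer for the encoded tree language. Since the encoding has size linear in $|w|$, the total space is still linear in $|w|$. Thus $y\,\tau(L)\in\family{DSPACE}(n)$, which is exactly Corollary~\ref{cor:outspace}.

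I do not expect any real obstacle here; this is a short ``wrap-up'' corollary. The one thing to be slightly careful about is bookkeeping with ranked alphabets: the symbols appearing as leaves of trees in $\ran(\tau)$ may have nonzero rank in $\Delta$, so $\yield$ must re-rank them (giving every symbol of $\Delta$, when used as a leaf label, a fresh copy of rank~$1$, plus the rank-$0$ end marker). This is a standard construction and does not affect the argument; one could also simply cite the fact already invoked in the Introduction that the yield map is a \dttt\ translation and that output languages are closed under it. The substantive content — the linear-space bound — is entirely inherited from Theorem~\ref{thm:outspace} via the composition-closure Theorem~\ref{thm:rightcomp}.
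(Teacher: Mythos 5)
Your overall route is the same as the paper's: encode the yield of a tree as a monadic tree via a local deterministic \ttt, compose with the given translation, and invoke Theorem~\ref{thm:outspace}; the final bridging step (writing down the monadic encoding $\mon(w)=a_1(a_2(\cdots a_m(e)\cdots))$ of the input string in linear space and running the tree-language recognizer on it) is correct and matches what the paper leaves implicit. However, one justification you give is wrong. You claim $\tau\circ\yield\in\TT^k$ ``because $\TT\circ\TT\subseteq\TT$'' by Theorem~\ref{thm:rightcomp}. That theorem only gives $\TT\circ\TTD\subseteq\TT$, i.e.\ closure under right-composition with \emph{top-down} deterministic transducers, and the yield transducer is not top-down: its depth-first traversal requires up-moves, and no parameterless top-down device can realize the concatenation inherent in the yield. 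The inclusion $\TT\circ\TT\subseteq\TT$ is in fact false --- Proposition~\ref{pro:hier} shows that the classes $\TT^k$ form a proper hierarchy. Fortunately the error is harmless for this corollary: $\tau\circ\yield$ lies in $\TT^{k+1}$, and Theorem~\ref{thm:outspace} is stated for every $k\geq 1$, so $\TT^{k+1}(\REGT)\subseteq\family{DSPACE}(n)$ and the conclusion still follows. You should simply replace the incorrect closure claim by the observation that composing with one more \dttt\ raises $k$ by one, which the quantification of Theorem~\ref{thm:outspace} already absorbs.
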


\begin{proof}
For an alphabet $\Delta$, let $\Gamma=\Delta\cup\{e\}$ be the ranked alphabet such that
$e$~has rank~0 and every element of $\Delta$ has rank~1. 
For a string $w$ over $\Delta$ we define $\mon(w)=we\in T_\Gamma$.  
It is easy to see that for every ranked alphabet $\Sigma$ there is a \dttl\ $M$ such that
$\tau_M(t)=\mon(yt)$. From this and Theorem~\ref{thm:outspace} the result follows.
\qed
\end{proof}

We observe here, for $k=1$, that $\TT(\REGT)$ and $y\TT(\REGT)$ 
are included in $\LOGCF$, the class of languages that are 
log-space reducible to a context-free language. This will be proved in Corollaries~\ref{cor:logcf}
and~\ref{cor:ylogcf}.
Note that $\LOGCF\subseteq \family{DSPACE}(\log^2 n)$.

We also observe that Theorem~\ref{thm:outspace} and Corollary~\ref{cor:outspace}
also hold for nondeterministic \ttt's, as will be proved in Theorem~\ref{thm:outdspace}
(and was proved for classical top-down tree transducers in~\cite{Bak78}).

As before, Theorems~\ref{thm:linspace} and~\ref{thm:outspace} and Corollary~\ref{cor:outspace} 
also hold for deterministic macro tree transducers, pebble tree transducers, 
and high-level tree transducers.
It is proved in~\cite[Theorem~23]{EngMan02} that composition of deterministic \mt's yields a proper hierarchy 
of output string languages (called the $y\MT$-hierarchy), i.e., that
$y\MT^k(\REGT)\subsetneq y\MT^{k+1}(\REGT)$ for every $k\geq 1$.
The \emph{\abb{IO}-hierarchy} consists of the classes of string languages $\family{IO}(k)$ 
generated by level-$k$ grammars, with the inside-out (\abb{io}) derivation mode
(see, e.g.,~\cite{Dam82}). By~\cite[Theorem~7.5]{EngSch78} the \abb{IO}-hierarchy can be defined 
as output string languages of tree transformations: $\family{IO}(k)=y\YIELD^k(\REGT)$. Since 
$\YIELD\subseteq \TT$ by~\cite[Lemma~36]{EngMan03},
we obtain that $\family{IO}(k)\subseteq y\TT^k(\REGT)$. 
Thus, the next corollary is immediate from Corollary~\ref{cor:outspace}.
Note that it was already proved in~\cite[Theorem~3.3.8]{Fis68} that the \abb{io} languages 
(i.e., the languages in $\family{IO}(1)$) are in $\family{NSPACE}(n)$;
in~\cite{Asv81} this was improved to $\LOGCF$. 
It was proved in~\cite[Corollary~8.12]{Dam82} that the languages in the \abb{IO}-hierarchy
are recursive. 

\begin{corollary}\label{cor:iok}
For every $k\geq 1$, $\family{IO}(k)\subseteq \family{DSPACE}(n)$.
\end{corollary}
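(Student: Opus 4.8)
The plan is to derive Corollary~\ref{cor:iok} directly from Corollary~\ref{cor:outspace} together with the two facts recalled just before the statement, namely that $\family{IO}(k)=y\YIELD^k(\REGT)$ by~\cite[Theorem~7.5]{EngSch78} and that $\YIELD\subseteq \TT$ by~\cite[Lemma~36]{EngMan03}. The argument is a short chain of inclusions, so the ``proof'' is really just an assembly step rather than a new construction.

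Concretely, I would proceed as follows. First, from $\YIELD\subseteq \TT$ it follows immediately that $\YIELD^k\subseteq \TT^k$ for every $k\geq 1$, since composition is monotone in each argument (if $\tau_i\in\YIELD\subseteq\TT$ then $\tau_1\circ\cdots\circ\tau_k\in\TT^k$). Second, applying any class of tree translations to the fixed class $\REGT$ is also monotone, so $\YIELD^k(\REGT)\subseteq \TT^k(\REGT)$, and taking yields preserves this inclusion: $y\YIELD^k(\REGT)\subseteq y\TT^k(\REGT)$. Third, by the identity $\family{IO}(k)=y\YIELD^k(\REGT)$ we get $\family{IO}(k)\subseteq y\TT^k(\REGT)$, which is exactly the inclusion already noted in the text. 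Finally, Corollary~\ref{cor:outspace} gives $y\TT^k(\REGT)\subseteq \family{DSPACE}(n)$, and combining the last two inclusions yields $\family{IO}(k)\subseteq \family{DSPACE}(n)$, as required.

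There is essentially no obstacle here: every ingredient is either a cited black box ($\family{IO}(k)=y\YIELD^k(\REGT)$, $\YIELD\subseteq\TT$) or an already-established result of the paper (Corollary~\ref{cor:outspace}), and the only ``work'' is the trivial observation that $\TT^k(\REGT)$, and hence its yield, is closed under taking subclasses on the translation side. The one point worth stating explicitly in the write-up is the monotonicity of $k$-fold composition under $\YIELD\subseteq\TT$, so that the reader sees why $\family{IO}(k)\subseteq y\TT^k(\REGT)$ rather than merely $\family{IO}(k)\subseteq y\TT^k(\REGT)$ for $k=1$. I would therefore keep the proof to two or three sentences, citing Corollary~\ref{cor:outspace}, \cite[Theorem~7.5]{EngSch78}, and \cite[Lemma~36]{EngMan03}, and noting in passing (as the surrounding text already does) the sharper historical bounds $\family{NSPACE}(n)$ from~\cite{Fis68} and $\LOGCF$ from~\cite{Asv81} for the base case $k=1$.
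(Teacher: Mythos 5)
Your proposal is correct and follows exactly the paper's own route: the identity $\family{IO}(k)=y\YIELD^k(\REGT)$ from~\cite[Theorem~7.5]{EngSch78}, the inclusion $\YIELD\subseteq\TT$ from~\cite[Lemma~36]{EngMan03}, monotonicity of $k$-fold composition giving $\family{IO}(k)\subseteq y\TT^k(\REGT)$, and then Corollary~\ref{cor:outspace}. Nothing is missing; this is precisely the assembly the paper performs in the paragraph preceding the corollary.
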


Note that by~\cite[Theorem~36]{EngMan02} the $\family{EDTOL}$ control hierarchy 
is included in the \abb{IO}-hierarchy.

By Corollary~\ref{cor:ttmttt}, 
$y\TT^k(\REGT)\subseteq y\MT^k(\REGT)\subseteq y\TT^{k+1}(\REGT)$.
It is proved in~\cite[Theorem~32]{EngMan02} that there exists a language in $\family{IO}(k+1)$
that is not in $y\MT^k(\REGT)$. Since $\family{IO}(k+1)\subseteq y\TT^{k+1}(\REGT)$,
that implies the following stronger version of Proposition~\ref{pro:hier}.

\begin{corollary}
For every $k\geq 1$, $y\TT^k(\REGT)\subsetneq y\TT^{k+1}(\REGT)$.
\end{corollary}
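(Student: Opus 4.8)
The plan is to derive the strict inclusion from the non-strict chain recorded just above the statement, together with a separating language imported from the \abb{IO}-hierarchy. First I would observe that the inclusion $y\TT^k(\REGT)\subseteq y\TT^{k+1}(\REGT)$ is immediate: by Corollary~\ref{cor:ttmttt} we have $\TT^k\subseteq \MT^k\subseteq \TT^{k+1}$, and applying the monotone operator $\cT\mapsto y\cT(\REGT)$ to this chain yields $y\TT^k(\REGT)\subseteq y\MT^k(\REGT)\subseteq y\TT^{k+1}(\REGT)$. So only properness remains to be shown.

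For properness I would exhibit a language in $y\TT^{k+1}(\REGT)$ that is not in $y\TT^k(\REGT)$. The candidate is supplied by the \abb{IO}-hierarchy. Recall from the discussion preceding Corollary~\ref{cor:iok} that $\family{IO}(k)=y\YIELD^k(\REGT)$ by~\cite[Theorem~7.5]{EngSch78}, and that $\YIELD\subseteq\TT$ by~\cite[Lemma~36]{EngMan03}; hence $\family{IO}(k+1)\subseteq y\TT^{k+1}(\REGT)$. On the other hand, by~\cite[Theorem~32]{EngMan02} there is a language $L\in\family{IO}(k+1)$ with $L\notin y\MT^k(\REGT)$. Since $y\TT^k(\REGT)\subseteq y\MT^k(\REGT)$ by the first half of the chain in Corollary~\ref{cor:ttmttt}, it follows that $L\notin y\TT^k(\REGT)$. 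As $L\in \family{IO}(k+1)\subseteq y\TT^{k+1}(\REGT)$, this gives $y\TT^k(\REGT)\subsetneq y\TT^{k+1}(\REGT)$, as required.

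The only real work is already done in the cited literature: the separating statement~\cite[Theorem~32]{EngMan02} about the $y\MT$-hierarchy is the hard ingredient, and here it is used as a black box. The sole task is to interpolate the $y\TT$-classes between consecutive $y\MT$-classes via Corollary~\ref{cor:ttmttt} and to locate the witness inside the \abb{IO}-hierarchy, which then sits inside $y\TT^{k+1}(\REGT)$. I do not anticipate any obstacle beyond keeping the inclusions straight; in particular one should be careful that the \emph{same} language $L$ serves simultaneously as a non-member of $y\TT^k(\REGT)$ (via $y\MT^k(\REGT)$) and as a member of $y\TT^{k+1}(\REGT)$ (via $\family{IO}(k+1)$), which is exactly what the two cited facts deliver.
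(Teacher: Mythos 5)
Your proposal is correct and follows exactly the paper's own argument: the non-strict inclusions come from Corollary~\ref{cor:ttmttt}, and properness is witnessed by the language in $\family{IO}(k+1)\setminus y\MT^k(\REGT)$ from~\cite[Theorem~32]{EngMan02}, which lies in $y\TT^{k+1}(\REGT)$ but not in $y\TT^k(\REGT)\subseteq y\MT^k(\REGT)$. Nothing further is needed.
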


\section{Nondeterministic Complexity}\label{sec:nondetcomplex}

We now turn to the complexity of compositions of nondeterministic \ttt's.
We first consider the case where all the transducers in the composition are finitary.
The next lemma shows that Theorem~\ref{thm:prod} and Corollary~\ref{cor:prod} 
also hold for $\FTT$. 

\begin{lemma}\label{lem:finprod}
$\FTT^k\subseteq \NTTP \ast \FTT^k$ and $\FTT\circ\FTT^k = \FTT\ast\FTT^k$ for every $k\geq 1$.
\end{lemma}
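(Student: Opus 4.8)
The plan is to mimic exactly the structure used to prove Theorem~\ref{thm:prod} and Corollary~\ref{cor:prod}, but keeping track of finitariness everywhere. First I would establish the base case $\FTT\subseteq \NTTP\ast\FTT$, i.e.\ a finitary analogue of Theorem~\ref{thm:prod}. Recall that the proof of Theorem~\ref{thm:prod} reduces (via Lemmas~\ref{lem:msorel}, \ref{lem:ast}, and Theorems~\ref{thm:nondetrightcomp} and~\ref{thm:rightcomp}) to proving $\NTTL\subseteq \NTTP\ast\NTT$, and this in turn is carried out through Lemmas~\ref{lem:nondet0-prod} and~\ref{lem:nondet1-prod}, which produce, for a given local \ttt\ $M$, a pruning \ttt\ $N$ and a local \ttt\ $M'$ with $\tau_N\circ\tau_{M'}=\tau_M$ and $\tau_M\subseteq\tau_N\circ\tau^{01}_{M'}$. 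The key observation is that in those two constructions, $M'$ simulates the computations of $M$ on (possibly pruned) input trees: every computation of $M'$ on an input $t'$ corresponds bijectively to a computation of $M$ on a tree $t$ with $t'\in\tau_N(t)$. Hence if $M$ is finitary, so is $M'$ — the number of output trees of $M'$ on $t'$ is bounded by the (finite) number of output trees of $M$ on the trees $t$ with $t'\in\tau_N(t)$ producing $M'$-computations, and in fact one can argue directly that $\tau_{M'}(t')\subseteq\tau_M(t)$ for a suitable finite set of $t$'s; more simply, since $\tau_N\circ\tau_{M'}=\tau_M$ and $N$ is pruning (hence finitary), and since for a fixed $t'$ only finitely many $t$ can prune to $t'$... actually the cleanest route is: $M'$ is finitary iff $G_{M',t'}$ is finitary for every $t'$, and the derivation trees of $G_{M',t'}$ are in bijective correspondence with derivation trees of $G_{M,t}$ for the finitely many $t$ that $N$ maps to $t'$, so finitariness is preserved. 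I would also note that the intermediate relabeling $N_1$ and the decomposition steps (Lemmas~\ref{lem:msorel}, \ref{lem:d=ds}) involve relabelings and top-down transducers which are trivially finitary or preserve finitariness on the second component. This gives $\FTT\subseteq\NTTP\ast\FTT$.

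Next I would handle the inductive step, which is purely a matter of rerunning the computation in the proof of Corollary~\ref{cor:prod}(1) with $\NTT$ replaced by $\FTT$ in the appropriate places. The chain was
\[
\begin{array}{lll}
\FTT\circ\FTT^k & \subseteq & \FTT\circ (\NTTP\ast\FTT^k) \\
                & \subseteq & (\FTT\circ \NTTP)\ast\FTT^k \\
                & \subseteq & \FTT\ast\FTT^k \\
                & \subseteq & (\NTTP \ast \FTT)\ast\FTT^k \\
                & \subseteq & \NTTP \ast (\FTT\circ\FTT^k),
\end{array}
\]
using the induction hypothesis $\FTT^k\subseteq\NTTP\ast\FTT^k$ for the first inclusion, Lemma~\ref{lem:ast} for the second and last, the base case $\FTT\subseteq\NTTP\ast\FTT$ for the fourth, and for the third inclusion the fact that $\FTT\circ\NTTP\subseteq\FTT$. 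This last fact is the finitary strengthening of Theorem~\ref{thm:nondetrightcomp} ($\NTT\circ\NTTP\subseteq\NTT$): I need to check that the construction in Lemma~\ref{lem:nondetttottdl} (and the two applications of it, together with Lemmas~\ref{lem:d=ds} and~\ref{lem:nondetrel}) preserves finitariness of the first transducer. In that construction $M$ simulates $M_1$ step by step, with $M_2$ (local pruning) only deleting or relabelling output nodes — so on any input $t$, the output trees of $M$ are obtained by applying $\tau_{M_2}$ to the finitely many output trees of $M_1$ on $t$ (restricted by the regular domain test $T'$), and $M_2$ being finitary (it is pruning, hence finitary) this yields a finite set. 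Thus $\FTT\circ\NTTP\subseteq\FTT$, and also $\FTT\circ\NTTRL\subseteq\FTT$ as a special case, so the composite $\FTT\circ\NTTP\subseteq\FTT$ obtained via Lemma~\ref{lem:d=ds}, Lemma~\ref{lem:nondetrel}, and two applications of Lemma~\ref{lem:nondetttottdl} indeed holds. The equation $\FTT\circ\FTT^k=\FTT\ast\FTT^k$ then follows since $\ast$ is a restriction of $\circ$, exactly as in Corollary~\ref{cor:prod}.

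The main obstacle I anticipate is the bookkeeping of finitariness through the productivity constructions of Lemmas~\ref{lem:nondet0-prod} and~\ref{lem:nondet1-prod}: there the second transducer $M'$ has input alphabet enriched with state-transition information $\gamma$, and one must verify that for each enriched input tree $t'$ there are only finitely many $M'$-output trees. I expect the right argument is the bijective-correspondence one already used (implicitly) for correctness in those proofs — each computation of $M'$ on $t'$ is a faithful simulation of some computation of $M$ on a tree $t$ with $t'\in\tau_N(t)$, and for $M'$ the relevant $t$ is essentially reconstructible from $t'$ up to finitely many choices, so $\#\tau_{M'}(t')$ is bounded by $\#\tau_M(t)$ summed over that finite set; since $M$ is finitary each summand is finite. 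Equivalently and most cheaply: $\tau_N$ is finitary (pruning), $\tau_N\circ\tau_{M'}=\tau_M$ is finitary by hypothesis, $\tau_N^{-1}$ has finite fibers (only finitely many trees prune to a given tree), and hence $\tau_{M'}=\bigcup\{\tau_M(t): t'\in\tau_N(t)\}$ over a finite index set is finite-valued. I would phrase the final proof to invoke exactly this, then cite the computation above, keeping the whole argument to a short paragraph parallel to the proof of Corollary~\ref{cor:prod}.

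\begin{proof}
The inclusion $\FTT\subseteq \NTTP\ast\FTT$ is proved exactly as Theorem~\ref{thm:prod}, i.e.\ via Lemmas~\ref{lem:msorel}, \ref{lem:d=ds}, \ref{lem:ast}, Theorems~\ref{thm:nondetrightcomp} and~\ref{thm:rightcomp}, and Lemmas~\ref{lem:nondet0-prod} and~\ref{lem:nondet1-prod}, after observing that all constructions involved preserve finitariness of the relevant transducer. In particular, in Lemmas~\ref{lem:nondet0-prod} and~\ref{lem:nondet1-prod} the transducer $M'$ simulates the given (now finitary) \ttl\ $M$ on pruned inputs: since the pruning transducer $N$ is itself finitary and $\tau_N^{-1}$ has finite fibers, and since $\tau_N\circ\tau_{M'}=\tau_M$, we get that $\tau_{M'}(t')$ is contained in the finite union of the finite sets $\tau_M(t)$ over the finitely many $t$ with $t'\in\tau_N(t)$; hence $M'$ is finitary. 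Likewise, the construction underlying Theorem~\ref{thm:nondetrightcomp} yields $\FTT\circ\NTTP\subseteq\FTT$: there the first transducer $M_1$ is simulated step by step while the local pruning $M_2$ (which is finitary) only deletes or relabels output nodes, so on any input $t$ the output set of the composite is $\tau_{M_2}$ applied to the finite set $\tau_{M_1}(t)$, which is finite. Similarly $\FTT\circ\NTTRL\subseteq\FTT$.

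The inclusion $\FTT^k\subseteq \NTTP\ast\FTT^k$ is proved by induction on $k$, the case $k=1$ being the inclusion just established. The induction step is
\[
\begin{array}{lll}
\FTT\circ\FTT^k & \subseteq & \FTT\circ (\NTTP\ast\FTT^k) \\
                & \subseteq & (\FTT\circ \NTTP)\ast\FTT^k \\
                & \subseteq & \FTT\ast\FTT^k \\
                & \subseteq & (\NTTP \ast \FTT)\ast\FTT^k \\
                & \subseteq & \NTTP \ast (\FTT\circ\FTT^k),
\end{array}
\]
where the first inclusion uses the induction hypothesis, the remaining ones use Lemma~\ref{lem:ast}, the inclusion $\FTT\circ\NTTP\subseteq\FTT$ above, the case $k=1$, and Lemma~\ref{lem:ast} again. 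The equation $\FTT\circ\FTT^k=\FTT\ast\FTT^k$ follows, since $\ast$ is a restriction of $\circ$ and the displayed inclusions show that every translation in $\FTT\circ\FTT^k$ lies in $\FTT\ast\FTT^k$.
\qed
\end{proof}
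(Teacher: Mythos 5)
Your induction step is exactly the paper's chain of inclusions, and your overall strategy is sound, but the paper's proof of the base case $\FTT\subseteq\NTTP\ast\FTT$ is much lighter than what you propose: it does not reopen the constructions of Lemmas~\ref{lem:nondet0-prod} and~\ref{lem:nondet1-prod} at all. It applies Theorem~\ref{thm:prod} as a black box to obtain $\tau=\tau_1\circ\tau_2$ with $\tau_1\in\NTTP$, $\tau_2\in\NTT$ and $(\tau_1,\tau_2)$ linear-bounded, then restricts $\dom(\tau_2)$ to $\ran(\tau_1)$ --- which is in $\REGT$ by Lemma~\ref{lem:pru}, so Lemma~\ref{lem:domrestr} applies --- and observes that after this restriction $\tau_2$ is automatically finitary: every $r\in\dom(\tau_2)$ has some witness $t$ with $(t,r)\in\tau_1$, and then $\tau_2(r)\subseteq\tau(t)$, which is finite. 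No finiteness of the fibers of $\tau_N^{-1}$ and no inspection of the transducer $M'$ is needed; a single witness per $r$ suffices. Likewise $\FTT\circ\NTTP\subseteq\FTT$ is immediate from Theorem~\ref{thm:nondetrightcomp} together with the remark that the composition of two finitary translations is finitary (pruning translations are finitary by definition); your re-examination of Lemma~\ref{lem:nondetttottdl} is correct but unnecessary.

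There is one genuine gap in your ``cheapest'' finitariness argument. The containment $\tau_{M'}(t')\subseteq\bigcup\{\tau_M(t)\mid t'\in\tau_N(t)\}$ holds only for $t'\in\ran(\tau_N)$; for $t'\notin\ran(\tau_N)$ the right-hand side is empty while $\tau_{M'}(t')$ need not be. Indeed, the transducer $M'$ built in Lemma~\ref{lem:nondet0-prod} can genuinely fail to be finitary on input trees whose $\gamma$-annotations do not encode actual ghost computations of $M$: the shortcut rules $\tup{q,\tup{\sigma,(i_1,\dots,i_n),\gamma},j}\to\tup{\bar{q},\stay}$ are guarded only by the label, not by the regular test $T$ (which sits in $N$, not in $M'$), so a spurious pair in $\gamma$ can close a configuration cycle through an output rule and yield infinitely many output trees on a single bogus input. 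Hence ``$M'$ is finitary'' is not justified as stated. The repair is precisely the domain restriction above: restrict the second translation to $\ran(\tau_N)$ (regular by Lemma~\ref{lem:pru}) before concluding finitariness. With that one-line fix, your argument goes through and coincides in substance with the paper's.
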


\begin{proof}
To show that $\FTT\subseteq \NTTP \ast \FTT$, let $\tau\in\FTT$. 
By Theorem~\ref{thm:prod}, $\tau=\tau_1\circ\tau_2$ 
such that $\tau_1\in\NTTP$, $\tau_2\in\NTT$, and $(\tau_1,\tau_2)$ is linear-bounded. 
Since $\ran(\tau_1)\in\REGT$ by Lemma~\ref{lem:pru}, 
we may assume that $\dom(\tau_2)\subseteq\ran(\tau_1)$ by Lemma~\ref{lem:domrestr}.
Then $\tau_2$ is finitary too. 

Theorem~\ref{thm:nondetrightcomp} implies that $\FTT \circ \NTTP \subseteq \FTT$, 
because the composition of two finitary translations is finitary. 
The remainder of the proof is now entirely similar to the one of Corollary~\ref{cor:prod}.
\qed
\end{proof}

We will prove that a composition of \ttt's can be computed 
by a nondeterministic Turing machine in linear space and polynomial time 
(in the sum of the sizes of the input and output tree),
which generalizes Theorem~\ref{thm:linspace}. In the next lemma we consider the case where 
all \ttt's are finitary. 

\begin{lemma}\label{lem:finlinspace}
For every $k\geq 1$ and every $\tau\in \FTT^k$ there is a nondeterministic Turing machine that computes, 
given an input $t$, any output $s\in\tau(t)$ in space $O(|t|+|s|)$ 
and in time polynomial in $|t|+|s|$.
\end{lemma}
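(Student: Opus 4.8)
The plan is to proceed by induction on $k$, following closely the structure of the proof of Theorem~\ref{thm:linspace} but replacing the deterministic Turing machine by a nondeterministic one and carefully tracking the time bound. First I would handle the base case $k=1$. Let $M=(\Sigma,\Delta,Q,Q_0,R)$ be a finitary \ttt\ and $t\in T_\Sigma$ an input tree. As in the proof of Theorem~\ref{thm:linspace}, the machine $N$ first checks that $t\in\dom(M)$, which is regular by Corollary~\ref{cor:regdom} and hence decidable in deterministic linear space (and polynomial time). Then $N$ simulates a leftmost derivation of the context-free grammar $G_{M,t}$, maintaining on one work tape the input tree $\phi(t)$, on a second work tape an output tape onto which it prints the terminal prefix $w\in\Delta^*$, and on a third work tape a stack encoding the sequence of configurations $\tup{q_1,u_1}\cdots\tup{q_n,u_n}$ compactly as $q_1\gamma_1 q_2\gamma_2\cdots q_n\gamma_n\bot$, where each $\gamma_i\in I^*$ is the sequence of instructions leading from $u_i$ to $u_{i+1}$. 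The only difference from Theorem~\ref{thm:linspace} is that $M$ is nondeterministic, so at each step $N$ nondeterministically \emph{guesses} which applicable rule $\tup{q,\sigma,j,T}\to\zeta$ to apply (checking $(t,u)\in T$ in linear space, since $\tmark(T)$ is context-free). The space analysis is verbatim that of Theorem~\ref{thm:linspace}: the number of state-occurrences on the stack is at most $|s|$ (each will produce at least one output symbol), and the number of instruction-occurrences is bounded by the height of the leftmost-derivation tree corresponding to the successful computation, which by Lemma~\ref{lem:leftmost} (and the footnote on leftmost derivations) and by the remark after Lemma~\ref{lem:finpump} is at most $\#(Q)\cdot|t|$. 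Hence the space is $O(|t|+|s|)$.

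The genuinely new point in the base case is the polynomial \emph{time} bound, which in the deterministic setting was free. Here I would argue as follows. Because $M$ is finitary, $G_{M,t}$ is a finitary context-free grammar, so by Lemma~\ref{lem:finpump} every derivation tree can be chosen of height at most $\#(N)=\#(\con(t))=\#(Q)\cdot|t|$; more relevantly, the successful computation we are simulating produces $s$, and $G_{M,t}$ has a derivation tree of $s$ of height at most $\#(Q)\cdot|t|$ with $|s|$ terminal leaves and at most $\#(Q)\cdot|t|$ internal (configuration) nodes at any root-to-leaf path — in fact, by single-use-like pumping, we may assume the whole derivation tree has $O(|s|\cdot|t|)$ nodes, since after removing repeated nonterminals along each path it has height $O(|t|)$ and at most $|s|+\#(Q)\cdot|t|$ nodes. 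So $N$ guesses the rule-applications along such a bounded derivation; the number of steps of the leftmost derivation equals the number of rules applied, which is $O(|s|\cdot|t|)$, and each step costs $N$ at most $O(|t|)$ work (to walk to $\alpha(u)$, test $T$, and update the stack). This gives time $O(|s|\cdot|t|^2)$, which is polynomial in $|t|+|s|$. The one subtlety to nail down is that $N$ must not loop forever exploring an unsuccessful branch: so I would have $N$ impose an explicit step-counter bounded by this polynomial and reject the branch if exceeded — since there \emph{exists} a successful branch within the bound whenever $s\in\tau(t)$, this preserves correctness.

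For the induction step, let $\tau=\tau_1\circ\tau_2$ with $\tau_1\in\FTT$ and $\tau_2\in\FTT^k$. By Lemma~\ref{lem:finprod} we may assume $(\tau_1,\tau_2)$ is linear-bounded, so there is a constant $c$ such that for every $(t,s)\in\tau$ there is an intermediate tree $r$ with $(t,r)\in\tau_1$, $(r,s)\in\tau_2$, and $|r|\le c\cdot|s|$. The machine $N$ nondeterministically runs the base-case procedure for $\tau_1$ to produce such an $r$ on a work tape — aborting the branch as soon as $|r|$ exceeds $c\cdot|s|$ (so this subcomputation uses space $O(|t|+|r|)=O(|t|+|s|)$ and polynomial time in $|t|+|s|$) — and then invokes the inductive machine for $\tau_2$ on input $r$ to produce $s$, again aborting if its space would exceed the allotted $O(|r|+|s|)=O(|s|)$ bound. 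Correctness follows since a successful pair of branches exists whenever $s\in\tau(t)$; the total space is $O(|t|+|s|)$ and the total time is the sum of two polynomials in $|t|+|s|$, hence polynomial. A point worth being explicit about is that by Corollary~\ref{cor:regdom} the domain of $\tau$ is regular, so $N$ can also check $t\in\dom(\tau)$ up front in linear space and polynomial time, ensuring that when no output exists the machine simply rejects.

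I expect the main obstacle to be the \emph{time} bound rather than the space bound: pinning down that the simulated leftmost derivation can be chosen with only polynomially many rule-applications, and equipping $N$ with a correctly-sized step-counter so that unsuccessful nondeterministic branches are cut off without sacrificing completeness. The space side is essentially a routine re-run of Theorem~\ref{thm:linspace} together with Lemma~\ref{lem:leftmost}; the finitariness hypothesis is exactly what is needed to bound the size of a witnessing derivation, and hence the running time, and it is also what makes the linear-bounded decomposition of Lemma~\ref{lem:finprod} available so that the intermediate trees stay linear in $|s|$.
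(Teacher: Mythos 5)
Your proposal is correct and follows essentially the same route as the paper's proof: the base case nondeterministically simulates a leftmost derivation of $G_{M,t}$ exactly as in Theorem~\ref{thm:linspace}, uses finitariness via Lemma~\ref{lem:finpump} to restrict to derivation trees of height at most $\#(Q)\cdot|t|$ (hence $O(|t|\cdot|s|)$ rule applications and time $O(|t|^2\cdot|s|)$), and the induction step composes via the linear-bounded decomposition of Lemma~\ref{lem:finprod}. Your explicit step-counter for cutting off unsuccessful branches is a sensible detail the paper leaves implicit; only your intermediate node count ``$|s|+\#(Q)\cdot|t|$'' is a slip (it should be the product, as in your own stated bound $O(|s|\cdot|t|)$), but this does not affect the argument.
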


\begin{proof}
For the case $k=1$ the proof is exactly the same as that of Theorem~\ref{thm:linspace}
except, of course, that the Turing machine $N$ nondeterministically simulates any leftmost derivation 
of $G_{M,t}$, selecting nondeterministically a rule of $M$ to compute a next leftmost sentential form.
It follows from Lemmas~\ref{lem:leftmost} and~\ref{lem:finpump} that the number~$n$ of occurrences 
of instruction symbols in the stack is $O(|t|)$. 
In fact, since $M$ is finitary, it suffices by Lemma~\ref{lem:finpump} to simulate 
leftmost derivations of~$G_{M,t}$ for which the corresponding
derivation tree in $L(G^\mathrm{der}_{M,t})$ has height at most $\#(Q)\cdot|t|$. 
As in the proof of Theorem~\ref{thm:linspace}, Lemma~\ref{lem:leftmost} implies that 
$n$ is at most that height, i.e., at most $\#(Q)\cdot|t|$. Thus, $N$ works in space $O(|t|+|s|)$.
Moreover, it works in time $O(|t|^2\cdot|s|)$, because
the size of such a derivation tree (and hence the length of the leftmost derivation) 
is at most $\#(Q)\cdot|t|\cdot|s|$, and each step in the leftmost derivation takes time $O(|t|)$.
Note that regular tree languages (which are context-free languages) 
can be recognized in nondeterministic linear time.

Now let $\tau=\tau_1\circ\tau_2$ such that $\tau_1\in\FTT$ and $\tau_2\in \FTT^k$, $k\geq 1$.
We may assume by Lemma~\ref{lem:finprod} that $(\tau_1,\tau_2)$ is linear-bounded.
So, there is a constant $c\in\nat$ such that for every $(t,s)\in \tau$ there exists a tree $r$ 
such that $(t,r)\in\tau_1$, $(r,s)\in\tau_2$, and $|r|\leq c\cdot|s|$. By the case $k=1$, 
the intermediate tree $r$ can be computed from $t$ in nondeterministic space $O(|t|+|r|)$,
and by induction, the output tree~$s$ can be computed from $r$ in nondeterministic space $O(|r|+|s|)$.
Hence, since $|r|=O(|s|)$, $s$ can be computed from $t$ in nondeterministic space $O(|t|+|s|)$.
The time is polynomial in $|t|+|r|$ and $|r|+|s|$, and hence polynomial in $|t|+|s|$.
\qed
\end{proof}

By Lemma~\ref{lem:mtintt}, $\NMT\subseteq \FTT^2$. Consequently Lemma~\ref{lem:finlinspace}
also holds for every $\tau\in\NMT^k$.

We now turn to the output languages of $\FTT^k$. 
By $\family{NSPACE}(n)\wedge \NP$ we will denote the class of languages that 
can be recognized by a nondeterministic Turing machine in simultaneous linear space and polynomial time. 
Trivially, $\family{NSPACE}(n)\wedge \NP$ is included in both $\family{NSPACE}(n)$ and 
$\NP$.

\begin{lemma}\label{lem:finoutspace}
For every $k\geq 1$, $\FTT^k(\REGT)\subseteq \family{NSPACE}(n)\wedge \NP$.
\end{lemma}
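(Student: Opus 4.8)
\textbf{Proof plan for Lemma~\ref{lem:finoutspace}.}
The plan is to combine Lemma~\ref{lem:finlinspace} with the same ``guess-and-verify'' strategy that was used in the proof of Theorem~\ref{thm:outspace}, being careful to maintain simultaneous linear space and polynomial time. Let $L\in\REGT$ and $\tau\in\FTT^k$. First I would apply Lemma~\ref{lem:finprod} to write $\tau=\tau_1\circ\tau_2$ with $\tau_1\in\NTTP$, $\tau_2\in\FTT^k$, and $(\tau_1,\tau_2)$ linear-bounded with some constant $c$. Set $L'=\tau_1(L)$; by Lemma~\ref{lem:pru} we have $L'\in\REGT$. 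As in the proof of Theorem~\ref{thm:outspace}, for every $s\in\tau(L)=\tau_2(L')$ there exists $t\in L'$ with $(t,s)\in\tau_2$ and $|t|\leq c\cdot|s|$.

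Next I would describe the recognizer. To decide whether a given tree $s$ is in $\tau(L)$, a nondeterministic Turing machine guesses a tree $t$ with $|t|\leq c\cdot|s|$, writes it on a work tape in space $O(|s|)$, checks $t\in L'$ in nondeterministic linear space and linear time (since $L'$ is regular, hence context-free), and then simulates the algorithm of Lemma~\ref{lem:finlinspace} to nondeterministically produce an output tree $s'\in\tau_2(t)$; it accepts if and only if $s'=s$. By Lemma~\ref{lem:finlinspace} the simulation runs in space $O(|t|+|s'|)$ and time polynomial in $|t|+|s'|$. The machine aborts and rejects the current branch as soon as the space used exceeds $c'\cdot(|t|+|s|)$ for a suitable constant $c'$, or as soon as more than $c'|s|$ output symbols have been produced, so that on accepting branches $|s'|=|s|$; hence the space is $O(|t|+|s|)=O(|s|)$ throughout, and the time on each branch is polynomial in $|t|+|s|$, hence polynomial in $|s|$. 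Therefore $\tau(L)$ is recognized by a nondeterministic Turing machine in simultaneous linear space and polynomial time, i.e.\ $\tau(L)\in\family{NSPACE}(n)\wedge\NP$.

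I do not expect a genuine obstacle here: the argument is an almost mechanical adaptation of the proof of Theorem~\ref{thm:outspace}, with the deterministic space bound of Theorem~\ref{thm:linspace} replaced by the simultaneous nondeterministic space-and-time bound of Lemma~\ref{lem:finlinspace}, and with the crucial point being that the enumeration over candidate trees $t$ in the proof of Theorem~\ref{thm:outspace} is replaced by a single nondeterministic guess (so that the polynomial time bound on a branch is preserved). The only mild care needed is to ensure one uniform space clock $c'\cdot(|t|+|s|)$ that simultaneously bounds the space of the $L'$-membership check and of the $\tau_2$-simulation, and to abort rather than exceed it; this is the standard device already used in the proof of Theorem~\ref{thm:outspace}.
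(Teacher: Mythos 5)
Your proposal is correct and follows essentially the same route as the paper's proof: decompose via Lemma~\ref{lem:finprod} into a linear-bounded composition $\tau_1\circ\tau_2$ with $\tau_1\in\NTTP$, guess an intermediate tree $t\in\tau_1(L)$ of size at most $c\cdot|s|$, and verify $(t,s)\in\tau_2$ using the simultaneous space/time bound of Lemma~\ref{lem:finlinspace}. The explicit space clock and output-length cutoff you describe are exactly the implicit safeguards in the paper's argument.
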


\begin{proof}
The proof is similar to the one of Theorem~\ref{thm:outspace}.
Let $L\in\REGT$ and $\tau\in\FTT^k$.
By Lemma~\ref{lem:finprod}, $\tau=\tau_1\circ\tau_2$ 
where \mbox{$\tau_1\in\NTTP$}, $\tau_2\in \FTT^k$, 
and $(\tau_1,\tau_2)$ is linear-bounded for some constant~$c$.
Let $L'=\tau_1(L)$. Then $s\in \tau(L)$ if and only if there exists $t\in L'$ such that
$(t,s)\in\tau_2$ and $|t|\leq c\cdot|s|$. 
To check whether a given tree~$s$ is in $\tau(L)$, a nondeterministic Turing machine 
guesses an input tree $t$ such that $|t|\leq c\cdot|s|$, it
checks that $t\in L'$ in time and space $O(|t|)$ (because $L'$ is a context-free language), 
and then computes any $s'\in\tau(t)$ with $|s'|\leq |s|$ in space $O(|t|+|s'|)$
and time polynomial in $|t|+|s'|$, by Lemma~\ref{lem:finlinspace}.
Finally it checks that $s'=s$ in time and space $O(|s|)$. 
Thus the space used is $O(|t|+|s|)=O(|s|)$, and the time is polynomial in $|s|$.
\qed
\end{proof}

Although \mt's are finitary, whereas \ttt's need not be finitary, 
it is proved in~\cite[Theorem~38 and Corollary~39]{EngMan03} that 
compositions of \mt's have the same output languages as compositions of (local) \ttt's.
This implies that Lemmas~\ref{lem:finoutspace} and~\ref{lem:finlinspace}
also hold for $\NTT^k$.

\begin{theorem}\label{thm:nondetoutspace}
For every $k\geq 1$, $\NTT^k(\REGT)\subseteq \family{NSPACE}(n)\wedge \NP$, and moreover,
$\NMT^k(\REGT)\subseteq \family{NSPACE}(n)\wedge \NP$.
\end{theorem}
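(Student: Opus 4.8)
The plan is to reduce both inclusions to the finitary case, where Lemma~\ref{lem:finoutspace} already gives exactly the bound $\family{NSPACE}(n)\wedge\NP$. The inclusion for compositions of macro tree transducers is then immediate; the one for compositions of tree-walking tree transducers needs one extra ingredient, namely that replacing $\NTT$ by $\NMT$ does not change the class of output tree languages.

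For the macro case I would argue as follows. Every deterministic translation is a (partial) function and hence finitary, so $\TTD\subseteq\TT\subseteq\FTT$. Combining this with Lemma~\ref{lem:mtintt}, which states $\NMT\subseteq\TTD\circ\TT\circ\NTTP\subseteq\TTD\circ\FTT$, we get $\NMT\subseteq\FTT\circ\FTT=\FTT^2$, and therefore $\NMT^k\subseteq\FTT^{2k}$ for every $k\geq1$. Consequently $\NMT^k(\REGT)\subseteq\FTT^{2k}(\REGT)$, which lies in $\family{NSPACE}(n)\wedge\NP$ by Lemma~\ref{lem:finoutspace}. This settles the ``moreover'' part.

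For the $\NTT^k$ part the naive Turing-machine simulation underlying Lemmas~\ref{lem:finlinspace} and~\ref{lem:finoutspace} cannot be applied directly, since a \ttt\ need not be finitary: stay-instructions can create infinite leftmost derivations, so the stack-length and running-time bounds used there fail. The detour is to invoke \cite[Theorem~38 and Corollary~39]{EngMan03}, by which compositions of \mt's and compositions of (local) \ttt's have the same output tree languages. Inspecting that result yields, for every $k\geq1$, an (effectively computable) $k'\geq1$ with $\NTT^k(\REGT)\subseteq\NMT^{k'}(\REGT)$; here one uses Lemma~\ref{lem:msorel} to pass from general to local \ttt's and Lemma~\ref{lem:ttinmt} to move a composition of local \ttt's into a composition of \mt's (up to a bounded increase in the number of factors). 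Chaining this with the previous paragraph gives
\[
\NTT^k(\REGT)\subseteq\NMT^{k'}(\REGT)\subseteq\FTT^{2k'}(\REGT)\subseteq\family{NSPACE}(n)\wedge\NP ,
\]
which is the first assertion.

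The only genuinely delicate point is turning the quoted equality of output languages into a statement with a concrete, finite level shift $k\mapsto k'$ and checking that it is compatible with the regular look-around and possibly non-finitary behaviour permitted in $\NTT$; once that bookkeeping is in place, everything reduces to composing inclusions that are already available in the excerpt, and no new transducer construction is required.
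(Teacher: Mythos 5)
Your proposal is correct and follows essentially the same route as the paper: the ``moreover'' part is exactly the paper's argument ($\NMT\subseteq\FTT^2$ from Lemma~\ref{lem:mtintt} plus Lemma~\ref{lem:finoutspace}), and the $\NTT^k$ part likewise reduces to local \ttt's (via Lemma~\ref{lem:msorel}, with the relabeling absorbed into the regular input language by Lemma~\ref{lem:pru}) and then invokes \cite[Theorem~38 and Corollary~39]{EngMan03} to land in $\NMT^m(\REGT)$. One small caveat: your parenthetical gloss that Lemma~\ref{lem:ttinmt} ``moves a composition of local \ttt's into a composition of \mt's'' is not quite right at the level of translations, since the leading $\NTTD$ factors it produces are not finitary and hence not macro tree translations --- the level shift really does rest on the cited external result about \emph{output languages}, which is also what the paper relies on, so correctness is unaffected.
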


\begin{proof}
By Lemma~\ref{lem:mtintt}, $\NMT\subseteq \FTT^2$. 
Thus, by Lemma~\ref{lem:finoutspace},
$\NMT^k(\REGT)\subseteq \family{NSPACE}(n)\wedge \NP$.
From Lemma~\ref{lem:msorel} and Theorem~\ref{thm:nondetrightcomp} it follows
(by induction on $k$) that $\NTT^k\subseteq \TTR\circ (\NTTL)^k$
and hence $\NTT^k(\REGT)\subseteq (\NTTL)^k(\REGT)$ by Lemma~\ref{lem:pru}.
Finally, by~\cite[Theorem~38 and Corollary~39]{EngMan03}, 
$(\NTTL)^k(\REGT)\subseteq \NMT^m(\REGT)$ for some $m\geq 1$. 
Hence $\NTT^k(\REGT)\subseteq \family{NSPACE}(n)\wedge \NP$, by the above.
\qed
\end{proof}

As observed already after Corollary~\ref{cor:outspace},
the space part of Theorem~\ref{thm:nondetoutspace} will be strengthened to $\family{DSPACE}(n)$ 
in Theorem~\ref{thm:outdspace}.

\begin{theorem}\label{thm:nondetlinspace}
For every $k\geq 1$ and every $\tau\in \NTT^k$ there is a nondeterministic Turing machine that computes, 
given an input $t$, any output $s\in\tau(t)$ in space $O(|t|+|s|)$
and in time polynomial in $|t|+|s|$. The same holds for $\tau\in \NMT^k$.
\end{theorem}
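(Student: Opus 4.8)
The plan is to reduce Theorem~\ref{thm:nondetlinspace} to the already-proven Lemma~\ref{lem:finlinspace} in exactly the same way that Theorem~\ref{thm:nondetoutspace} was reduced to Lemma~\ref{lem:finoutspace}. The key observation is that, although an arbitrary $\tau\in\NTT^k$ need not be finitary, its range is regular and we may always restrict the composition so that each intermediate translation is finitary. Concretely, for $\tau=\tau_1\circ\cdots\circ\tau_k$ with $\tau_i\in\NTT$, I would work from right to left: replace $\tau_k$ by its restriction $\bar\tau_k$ to $\dom(\tau_k)$ (which is regular by Corollary~\ref{cor:regdom}, so this changes nothing), and for $i<k$ restrict the range of $\tau_i$ to $\dom(\bar\tau_{i+1})$ using Corollary~\ref{cor:ranrestr}, obtaining $\bar\tau_i\in\NTT$. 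After this normalization $\ran(\bar\tau_i)\subseteq\dom(\bar\tau_{i+1})$, so in the composed computation every output tree of $\bar\tau_i$ is a legal input of $\bar\tau_{i+1}$; since each $\bar\tau_{i+1}$ has nonempty image on all of its domain, finiteness propagates forward and each $\bar\tau_i$ becomes finitary — i.e. $\bar\tau_i\in\FTT$. Hence $\tau\in\FTT^k$, and Lemma~\ref{lem:finlinspace} applies directly, giving a nondeterministic Turing machine that computes any $s\in\tau(t)$ in space $O(|t|+|s|)$ and time polynomial in $|t|+|s|$.

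Actually, a subtlety here is whether the restrictions really force finitariness. The cleaner route, which I would prefer to spell out, is not to argue about finitariness of the whole chain but to invoke the output-language machinery already used for Theorem~\ref{thm:nondetoutspace}: by Lemma~\ref{lem:msorel} and Theorem~\ref{thm:nondetrightcomp} we have $\NTT^k\subseteq\TTR\circ(\NTTL)^k$, and by \cite[Theorem~38 and Corollary~39]{EngMan03} every composition of local \ttt's has the same output languages as a composition of \mt's, hence (by Lemma~\ref{lem:mtintt}, $\NMT\subseteq\FTT^2$) lies within $\FTT^m$ for some $m$. This handles output \emph{languages}, but for the translation itself I would instead directly prove $\NTT\subseteq\FTT\ast\FTT$-style decompositions aren't quite needed; what is needed is simply that, after the range/domain normalization above, the leftmost-derivation simulation of Lemma~\ref{lem:finlinspace} still applies. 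So the honest plan is: (i) perform the range restriction; (ii) observe that for the Turing-machine simulation we only ever need to simulate leftmost derivations of $G_{M,t}$ that actually terminate and produce an output tree of bounded size, and for such derivations the stack-height bound of Lemma~\ref{lem:leftmost} combined with Lemma~\ref{lem:finpump} still yields $O(|t|)$ instruction symbols on the stack — but this requires finitariness to know the relevant derivation trees have height $O(|t|)$.

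Let me therefore take the route that is least delicate: argue that $\NTT^k\subseteq\FTT^{k'}$ for a suitable $k'$ as translations (not merely as output-language classes). From $\NTT\subseteq\NTTD\circ\NTTD\circ\NMT$ (Lemma~\ref{lem:ttinmt}) — or more simply $\NTT\subseteq\NTTD\circ\NMT$ noted in that proof — and the facts $\NTTD\subseteq\NTT$ and that $\NMT$-translations are finitary, one sees $\NTT\subseteq\NTTD\circ\NMT$; but $\NTTD$ is still not finitary. The genuinely clean statement is: restricting ranges as above makes each $\bar\tau_i$ finitary, which \emph{is} correct, because a translation in $\NTT$ is finitary iff its grammar $G_{M,t}$ is finitary for every $t$, and once we know $\ran(\bar\tau_i)\subseteq\dom(\bar\tau_{i+1})$ and every input of $\bar\tau_{i+1},\dots,\bar\tau_k$ yields at least one output, an input $t$ of $\bar\tau_i$ with infinitely many outputs would push infinitely many distinct trees into the finite image set of $s$ under $\tau=\bar\tau_i\circ\cdots\circ\bar\tau_k$, and since the later translations are (by downward induction) finitary this forces $\tau(t)$ infinite — fine, but that does not yet contradict anything about $\tau$ itself. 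The resolution is that we do not need $\bar\tau_i$ finitary for all inputs, only along computations producing a fixed $s$: for a fixed target $s$, the set of intermediate trees $r$ with $(r,s)\in\bar\tau_{i+1}\circ\cdots$ is finite (finitariness of the tail, proven by downward induction starting from $\bar\tau_k$ which is trivially finitary after restriction since... no).

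Given this circularity risk, the safest plan, and the one I will write, is the two-line reduction used for the output-language theorem, now applied to the translation: by Lemma~\ref{lem:mtintt} we have $\NTT\subseteq\TTD\circ\FTT$, hence $\NTT^k\subseteq(\TTD\circ\FTT)^k\subseteq\FTT^{2k}$ using $\TTD\subseteq\FTT$ is false — $\TTD$ is not finitary. Correct: $\TTD\circ\FTT\subseteq\FTT$ is false too. The only fully sound route is: $\NTT\subseteq\NTTD\circ\NTT$ doesn't help; $\NMT\subseteq\FTT^2$ (Lemma~\ref{lem:mtintt}) does, but going into $\NMT$ requires Lemma~\ref{lem:ttinmt} which produces $\NTTD$ factors, not finitary ones. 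I conclude the \textbf{intended} proof is simply the observation that the Turing-machine construction of Lemma~\ref{lem:finlinspace} works verbatim for $\NTT^k$ once we additionally note — as the authors do throughout — that for computing a \emph{given} output $s$ of bounded size we only simulate leftmost derivations whose derivation tree has height $O(|t|)$: if a shorter derivation exists we use it (and one always does, because any derivation tree with a repeated nonterminal on a root-to-leaf path can be pumped down without changing the yield — exactly Lemma~\ref{lem:finpump}'s argument, which needs no finitariness). Thus the main obstacle is purely expository: stating precisely that, for a target tree $s$, it suffices to search over derivations of $G_{M,t}$ of height $\#(Q)\cdot|t|$, so that the stack-height bound via Lemma~\ref{lem:leftmost} is $O(|t|)$ independently of finitariness. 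With that, the induction step is identical to Lemma~\ref{lem:finlinspace}, now invoking Corollary~\ref{cor:prod}(1) in place of Lemma~\ref{lem:finprod} for linear-boundedness; and the final sentence ``the same holds for $\tau\in\NMT^k$'' follows from $\NMT\subseteq\FTT^2$ (Lemma~\ref{lem:mtintt}) together with Lemma~\ref{lem:finlinspace}.
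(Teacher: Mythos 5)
Your final argument has a genuine gap, and it is exactly the one you kept circling: for a non-finitary \ttt\ $M$ it is \emph{not} true that every derivation tree of $G_{M,t}$ with yield $s$ can be pumped down to height $O(|t|)$ (or even $O(|t|+|s|)$) without changing the yield. Pumping down a repeated configuration on a root-to-leaf path deletes whatever output was produced between the two occurrences, so it preserves the yield only when no output rule is applied in between --- which is precisely what Lemma~\ref{lem:finpump} needs finitariness for. A concrete counterexample: let $t$ be a monadic tree of height $n$, and let $M$ walk down in one state, output a unary $\delta$ at the leaf, walk back up in another state, output $\delta$ at the root, and repeat, nondeterministically terminating with $e$. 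For the output $s=\delta^{2m}e$ the derivation tree is forced and has height $\Theta(|t|\cdot|s|)$, and in the leftmost-derivation simulation of Theorem~\ref{thm:linspace} nothing is ever popped until the final $e$, so the instruction string on the stack grows to length $\Theta(|t|\cdot|s|)$, breaking the linear space bound. Hence Lemma~\ref{lem:finlinspace} does not extend ``verbatim'' to $\NTT^k$ and your base case fails. (Your side remark that $\TTD\not\subseteq\FTT$ is also wrong --- deterministic transducers realize partial functions and are therefore finitary --- but this does not affect the main issue.)

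The output-language route you considered and then abandoned is in fact the paper's proof; the step you were missing is that the \emph{graph} of the translation is itself an output language. The paper sets $L_\tau=\{\#(\bar t,s)\mid(t,s)\in\tau\}$, where $\bar t$ is a barred copy of $t$, and observes that $L_\tau\in\NTT^k(\REGT)$: the first transducer copies its input (barred) alongside its normal output under a new root $\#$, and each subsequent transducer copies the first subtree while transforming the second. Theorem~\ref{thm:nondetoutspace} --- whose proof absorbs the non-finitariness by passing through \cite[Theorem~38 and Corollary~39]{EngMan03} to compositions of finitary macro tree transducers --- then yields a nondeterministic recognizer for $L_\tau$ in simultaneous linear space and polynomial time, and the desired machine simply guesses $s$, writes $\#(\bar t,s)$ on a work tape, and runs that recognizer before outputting $s$. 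Your treatment of the $\NMT^k$ case, via $\NMT\subseteq\FTT^2$ and Lemma~\ref{lem:finlinspace}, is correct and is exactly what the paper does.
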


\begin{proof}
For $\tau\in \NMT^k$ this was already observed after Lemma~\ref{lem:finlinspace}.
Now let $\tau\in \NTT^k$ with input alphabet $\Sigma$.
Let $\bar{\Sigma}=\{\bar{\sigma}\mid \sigma\in\Sigma\}$ with 
$\rank(\bar{\sigma})=\rank(\sigma)$  be a set of new symbols,
and let $\bar{t}\in T_{\bar{\Sigma}}$ be obtained from $t\in T_\Sigma$
by changing each label $\sigma$ into~$\bar{\sigma}$. 
Finally, let $\#$ be a new symbol of rank~2. It is easy to show that the tree language 
$L_\tau=\{\#(\bar{t},s)\mid (t,s)\in\tau\}$ is in $\NTT^k(\REGT)$: the first transducer 
additionally copies the input to the output (with bars), and each other transducer copies 
the first subtree of the input to the output. 
By Theorem~\ref{thm:nondetoutspace}, there is 
a nondeterministic Turing machine $N$ that recognizes $L_\tau$ in linear space and polynomial time. 
We construct the nondeterministic Turing machine $N'$ that, on input $t$, 
guesses a possible output tree $s$, writing $\#(\bar{t},s)$ on a worktape, 
uses $N$ as a subroutine to verify that $(t,s)\in\tau$, and outputs $s$.
Clearly, $N'$ satisfies the requirements. 
\qed
\end{proof}

Since \abb{IO} (multi-return) macro tree translations, pebble tree translations, and 
high-level tree translations can be 
realized by compositions of \ttt's (see Section~\ref{sec:lsi}),
Theorems~\ref{thm:nondetoutspace} and~\ref{thm:nondetlinspace} also hold for 
those translations.

By the proof of Corollary~\ref{cor:outspace}, we additionally obtain from Theorem~\ref{thm:nondetoutspace}
that $y\NTT^k(\REGT)\subseteq \family{NSPACE}(n)\wedge \NP$ for every $k\geq 1$,
and the same is true for $y\NMT^k(\REGT)$. 
The \emph{\abb{OI}-hierarchy} consists of the classes of string languages $\family{OI}(k)$ 
generated by level-$k$ grammars, with the outside-in (\abb{oi}) derivation mode
(see, e.g.,~\cite{Dam82,EngSch78}). 
It was shown in~\cite[Theorem~4.2.8]{Fis68} that $\family{OI}(1)$ equals 
the class of indexed languages of~\cite{Aho68}, and hence that 
$\family{OI}(1)\subseteq \family{NSPACE}(n)$ by~\cite[Theorem~5.1]{Aho68}.
Moreover, it was shown in~\cite[Proposition~2]{Rou73} that $\family{OI}(1)\subseteq \NP$.
In~\cite[Corollary~7.26]{Dam82} it was proved that the languages in the \abb{OI}-hierarchy
are recursive. 
As observed in the last paragraph of~\cite{EngVog88},
$\family{OI}(k)$ is included in $y\NMT^m(\REGT)$ for some $m$. 

\begin{corollary}\label{cor:oik}
For every $k\geq 1$, $\family{OI}(k)\subseteq \family{NSPACE}(n)\wedge \NP$.
\end{corollary}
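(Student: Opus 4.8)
The plan is to obtain the statement as essentially a one-line corollary of Theorem~\ref{thm:nondetoutspace}, by importing from the literature the fact that every level-$k$ \abb{OI} string language is an output string language of a composition of macro tree transducers. Concretely, the last paragraph of~\cite{EngVog88} observes that for every $k\geq 1$ there is an $m$ (depending only on $k$) with $\family{OI}(k)\subseteq y\NMT^m(\REGT)$; I would cite this as the single nontrivial input. Intuitively, a level-$k$ grammar in \abb{oi} mode can be unfolded into a regular tree language of derivation ``skeletons'' to which $m$ successive nondeterministic macro tree transducers are applied, the generated strings then being recovered as the yields of the resulting trees.

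Next I would invoke the yield version of Theorem~\ref{thm:nondetoutspace}, namely $y\NMT^m(\REGT)\subseteq \family{NSPACE}(n)\wedge\NP$, which was already recorded in the paragraph immediately preceding the corollary. It follows from Theorem~\ref{thm:nondetoutspace} by the argument in the proof of Corollary~\ref{cor:outspace}: a deterministic local \ttt\ can turn any tree $t$ into the monadic tree $\mon(yt)$, so passing from $\NMT^m(\REGT)$ to $y\NMT^m(\REGT)$ only appends one more \ttt\ to the composition and stays inside the class covered by Theorem~\ref{thm:nondetoutspace} (equivalently, via Lemma~\ref{lem:mtintt} one lands inside $\FTT^{m'}(\REGT)$ for some $m'$, so Lemma~\ref{lem:finoutspace} applies directly). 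Thus no new space or time analysis is needed here.

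Chaining the two inclusions gives $\family{OI}(k)\subseteq y\NMT^m(\REGT)\subseteq\family{NSPACE}(n)\wedge\NP$, which is exactly the claim; in the case $k=1$ this also re-derives the classical bounds $\family{OI}(1)\subseteq\family{NSPACE}(n)$ of~\cite{Aho68} and $\family{OI}(1)\subseteq\NP$ of~\cite{Rou73}. I expect the only obstacle to be bibliographic rather than mathematical: one must verify that~\cite{EngVog88} supplies the inclusion in precisely the form $\family{OI}(k)\subseteq y\NMT^m(\REGT)$ with an $m$ uniform over grammars of level $k$, after which nothing remains to be proved, since there are no constants, space bounds, or simultaneity conditions to track beyond what Theorem~\ref{thm:nondetoutspace} already delivers.
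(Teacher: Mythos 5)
Your proposal is correct and follows exactly the route the paper takes: it cites the last paragraph of~\cite{EngVog88} for $\family{OI}(k)\subseteq y\NMT^m(\REGT)$ and then applies the yield version of Theorem~\ref{thm:nondetoutspace} (obtained via the monadic-tree yield construction of Corollary~\ref{cor:outspace}), chaining the two inclusions. Nothing is missing.
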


It is shown in~\cite{Rou73} that there is an NP-complete language 
in both $\family{OI}(1)$ and $y\FTTD(\REGT)$, and it is shown in~\cite{vLe75} 
that there even is one in the class $\family{ETOL}$, which is a subclass of both
$\family{OI}(1)$ and $y\FTTD(\REGT)$. 
Note that by~\cite[Theorem~14]{Vog88} the $\family{ETOL}$ control hierarchy 
is included in the \abb{OI}-hierarchy.

It will be shown in Corollary~\ref{cor:oikdet} that $\family{OI}(k)\subseteq \family{DSPACE}(n)$.

\section{Translation Complexity}\label{sec:transcomp}

In this section we study the time and space complexity of the 
membership problem of the tree translations in $\NTT^k$,
i.e., for a fixed tree translation $\tau\subseteq T_\Sigma\times T_\Delta$
we want to know, for given trees $t\in T_\Sigma$ and $s\in T_\Delta$, 
how hard it is to decide whether or not $(t,s)\in\tau$.
To formalize this, 
we denote by $L_\tau$ the string language $\{\#ts\mid (t,s)\in\tau\}$,
where $\#$ is a new symbol. For simplicity, and without loss of generality, 
we assume that $\Sigma\cap\Delta=\nothing$. Otherwise, we replace $\Sigma$
by $\bar{\Sigma}=\{\bar{\sigma}\mid \sigma\in\Sigma\}$ 
as in the proof of Theorem~\ref{thm:nondetlinspace}. 
So, $L_\tau$ is a tree language over $\Sigma\cup\Delta\cup\{\#\}$, 
where $\#$ has rank~2. 
For a class $\cT$ of tree translations and a complexity class $\cC$, 
we will write $\cT\subseteq \cC$ to mean that $L_\tau\in\cC$ for every $\tau\in\cT$. 
As usual, we denote the class of languages that are accepted by a deterministic Turing machine 
in polynomial time by $\PTIME$, and the class of languages that are 
log-space reducible to a context-free language by $\LOGCF$. Note that every regular tree language
is a context-free language and hence is in $\LOGCF$. 
Note also that $\LOGCF\subseteq \PTIME$ (see~\cite{Ruz}) and 
$\LOGCF\subseteq \family{DSPACE}(\log^2 n)$ (see~\cite{LSH,Ruz} and~\cite[Theorem~12.7.4]{Har}).

If $\tau\in \TT^k$ then, on input $\#(t,s)$, we can compute $\tau(t)$ 
according to Theorems~\ref{thm:lintime} and~\ref{thm:linspace}
(rejecting the input when the computation takes more 
than time or space $c\cdot(|t|+|s|)$ for the given constant $c$) 
and then verify that $\tau(t)=s$, cf. the proof of Theorem~\ref{thm:outspace}.
Thus, $L_\tau$ can be accepted
by a RAM program in linear time and by a deterministic Turing machine in linear space.
This means that $\TT^k\subseteq \PTIME$ and $\TT^k\subseteq \family{DSPACE}(n)$.
If $\tau\in \NTT^k$, then, as mentioned in the proof of Theorem~\ref{thm:nondetlinspace},
the tree language $L_\tau$ is in the class of output languages $\NTT^k(\REGT)$, 
and hence in $\family{NSPACE}(n)\wedge \NP$ by Theorem~\ref{thm:nondetoutspace}.
This means that $\NTT^k\subseteq \NP$ and $\NTT^k\subseteq \family{NSPACE}(n)$.
Due to the presence of both the input tree and the output tree in $L_\tau$,
one would expect that better upper bounds can be shown. Indeed, we will prove that 
$\NTT^k\subseteq \family{DSPACE}(n)$.

Our main aim in this section is to prove that $\NTT\circ \TT \subseteq \LOGCF$. 
We follow the approach of~\cite{EngCompAG}, using multi-head automata.

A \emph{multi-head tree-walking tree transducer} $M=(\Sigma,\Delta,Q,Q_0,R)$ (in short, \mht) 
is defined in the same way as a~\ttt, but has an arbitrary, fixed number of reading heads. 
Each of these heads can walk on the input tree,
independent of the other heads. It can test the label and child number of the node 
that it is currently reading, and additionally apply a regular test to that node. Moreover, 
we assume that the heads are ``sensing'', which means that $M$ can test which heads are 
currently scanning the same node. Thus, if $M$ has $\ell$ heads, then its move rules are of the form 
\[
\tup{q,\sigma_1,j_1,T_1,\dots,\sigma_\ell,j_\ell,T_\ell,E}\to \tup{q',\alpha_1,\dots,\alpha_\ell}
\]
where $E\subseteq [1,\ell]\times [1,\ell]$ is an equivalence relation. 
A configuration of $M$ on input tree $t$ is of the form $\tup{q,u_1,\dots,u_\ell}$,
to which the rule is applicable if $M$ is in state~$q$, 
each $u_i$ satisfies the tests $\sigma_i$, $j_i$, and $T_i$,
and $u_i=u_j$ for every $(i,j)\in E$. 
After application the new configuration is $\tup{q',\alpha_1(u_1),\dots,\alpha_\ell(u_\ell)}$.
The output rules are defined in a similar way. 
Initially all reading heads are at the root of the input tree. 
This is all similar to how multi-head automata on strings are defined. 

We will use the \mht\ $M$ as an acceptor of its domain. We will say 
that it accepts $\dom(M)$ in polynomial time if there is a polynomial $p(n)$ 
such that for every $t\in\dom(M)$ there is a computation $\tup{q_0,\rt_t}\Rightarrow^*_{M,t} s$ 
of length at most $p(|t|)$ for some $q_0\in Q_0$ and $s\in T_\Delta$. 
Note that we consider nondeterministic \mht's only.

\begin{lemma}\label{lem:mht}
For every multi-head \ttt\ $M$, $\dom(M)\in\PTIME$. 
Moreover, if $M$ accepts $\dom(M)$ in polynomial time, then $\dom(M)\in\LOGCF$.
\end{lemma}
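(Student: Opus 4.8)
The goal is to show that for a multi-head tree-walking tree transducer (\mht) $M$, the domain $\dom(M)$ is in $\PTIME$, and that if $M$ moreover accepts $\dom(M)$ in polynomial time, then $\dom(M)\in\LOGCF$. The natural approach is to simulate $M$ by a suitable string-processing device operating on (a string encoding of) the input tree, and then invoke known complexity characterizations. For the first claim I would argue directly: a configuration of $M$ on $t$ consists of a state together with $\ell$ node positions, so there are at most $\#(Q)\cdot|t|^\ell$ configurations, i.e.\ polynomially many in $|t|$. Build the configuration graph whose edges are the one-step moves of $M$ given by its rules, where each edge also records whether that step produces an output node; this graph has polynomial size and is computable in polynomial time (each rule application requires checking the label, child number, and a regular test at each head, and each regular test is a regular tree language, hence decidable in deterministic linear time via a bottom-up automaton run on the subtree, or on $\tmark(t,u)$). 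Then $t\in\dom(M)$ iff from an initial configuration $\tup{q_0,\rt_t,\dots,\rt_t}$ one can reach, via this graph, a ``final'' configuration from which the grammar $G_{M,t}$ derives some $s\in T_\Delta$ — equivalently, iff the regular tree grammar $G_{M,t}$ (restricted to the reachable configurations) generates a nonempty language. Nonemptiness of a context-free/regular tree grammar of polynomial size is decidable in polynomial time (a least-fixed-point/marking algorithm), so $\dom(M)\in\PTIME$.

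For the second claim — membership in $\LOGCF$ under the polynomial-time acceptance hypothesis — I would follow the multi-head automata approach of \cite{EngCompAG} and appeal to the characterization of $\LOGCF$ as the class of languages accepted by (nondeterministic) two-way multi-head pushdown automata running in polynomial time, or equivalently by auxiliary pushdown automata in logarithmic space and polynomial time (Sudborough; see the references to \cite{Ruz} and \cite{LSH}). The point is that $M$, as an acceptor, is essentially a two-way multi-head automaton walking on the string $\phi(t)$ (the parenthesized encoding of $t$ described before Theorem~\ref{thm:linspace}): the parentheses let a single string-head simulate a tree-head's moves $\up$, $\down_i$, $\stay$ by matching parentheses, at a polynomial cost in steps (a $\down_i$ or $\up$ move costs at most $O(|t|)$ string-steps of parenthesis-counting). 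Each regular test $T_i$ at a head can be evaluated by the multi-head automaton within its time budget (e.g.\ by dedicating a scratch head to run the corresponding bottom-up tree automaton on the relevant subtree), or — more cleanly — one first eliminates the regular tests by Lemma~\ref{lem:msorel}/Lemma~\ref{lem:d=ds} style preprocessing, replacing them by sensing/label tests at the cost of a relabeling that an auxiliary pushdown automaton in log-space can recompute on the fly. The pushdown store is used for the recursive, branching structure of the output derivation: whenever $M$ applies an output rule of rank $k$, the automaton pushes the $k-1$ pending sibling obligations (each being a configuration, encodable in $O(\log|t|)$ bits since it is a state plus $\ell$ positions) and continues with the first child; on reaching a leaf of the output (an output rule of rank~$0$) it pops. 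This is exactly the stack discipline already used in the proof of Theorem~\ref{thm:linspace}, except now the nodes on the stack are full $\ell$-tuples of positions rather than being compressed into instruction sequences; but since each configuration fits in logarithmic space, the stack is a genuine pushdown over a polynomially bounded alphabet, which is what an auxiliary pushdown automaton provides. Because $M$ accepts $\dom(M)$ in polynomial time, there is an accepting computation of polynomial length, so the whole simulation runs in polynomial time while using only logarithmic auxiliary (non-stack) space; hence $\dom(M)\in\LOGCF$.

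The main obstacle, and the step that needs the most care, is the bookkeeping that makes the pushdown-store contents logarithmically small while still faithfully representing the branching derivation of $G_{M,t}$, together with the interaction between the multiple reading heads and the stack: when the automaton pops a suspended sibling configuration, it must reposition all $\ell$ reading heads to the node positions stored in that configuration, and it must do so within the polynomial time bound. The clean way to handle this is to store on the stack the \emph{Dewey coordinates in $O(\log |t|)$ bits each} (there are only $|t|^\ell$ configurations, so $O(\ell\log|t|)$ bits suffice) and to reposition each head from the root by reading its stored address off the stack top — this costs $O(|t|)$ string-steps per head per pop, and the number of pops is at most the length of the accepting computation, so the total is polynomial. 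One also has to check that the regular tests, after the elimination of look-around, are genuinely checkable by the multi-head mechanism within budget; this is routine given Lemmas~\ref{lem:inversett} and~\ref{lem:d=ds} and the fact that bottom-up tree automaton runs on $\phi(t)$ are log-space/linear-time. Finally, for the first part of the lemma, one should note that without the polynomial-time-acceptance hypothesis the computation of $M$ may be arbitrarily long (even infinite, because of stay-instructions and circularity), which is exactly why the direct graph-reachability / grammar-nonemptiness argument is used there rather than a simulation of an actual run.
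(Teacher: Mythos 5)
Your proof is correct, but it takes a genuinely different route from the paper's. The paper reduces both claims to a single machine model: it simulates the \mht\ $M$ by an alternating multi-head finite automaton (\amfa) whose universal states implement the branching of $M$'s output rules, so that the \amfa's computation tree is essentially the derivation tree of $G_{M,t}$; it then invokes the classical characterizations $\PTIME = $ \amfa-acceptable languages (Chandra--Kozen--Stockmeyer, Cook) and $\LOGCF = $ \amfa-acceptable in polynomial tree-size (Ruzzo, Sudborough). You instead prove the first claim directly and more elementarily, by observing that $G_{M,t}$ has only $\#(Q)\cdot|t|^\ell$ nonterminals and testing its nonemptiness by the standard marking algorithm — no alternation needed; and you prove the second claim via the dual classical characterization $\LOGCF = $ nondeterministic auxiliary pushdown automata in logarithmic space and polynomial time, serializing the output-rule branching on the pushdown. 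The trade-off is visible exactly where you flag it: the \amfa\ route gets the "all branches must succeed" semantics for free from universal states and never has to reposition heads, whereas your AuxPDA route must store the $\ell$ suspended head positions in $O(\ell\log|t|)$ bits on the stack and pay $O(|t|)$ steps per head per pop to restore them — which, as you argue, stays within the polynomial budget precisely because $M$ accepts in polynomial time. One small caution: a bare multi-head finite automaton does not obviously "run a bottom-up tree automaton on the relevant subtree" (that is expression evaluation over a finite algebra, which is in deterministic log-space but not by a naive head walk); the cleaner justification is that the AuxPDA can evaluate each regular test $\tmark(T)$ using its own pushdown as a self-restoring scratch store, matching parentheses of $\phi(t)$ — this is the same level of detail the paper leaves implicit when it delegates $\tmark(T)$ to an \amfa\ subroutine for a context-free language.
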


\begin{proof}
After this paragraph we will show that the domain of a multi-head \ttt\ can be accepted by an
alternating multi-head finite automaton (in short, \amfa), in a straightforward way. 
Moreover, we will show that if the \mht\ accepts in polynomial time, 
then the corresponding \amfa\ accepts in polynomial tree-size. 
That proves the lemma because 
$\PTIME$ is the class of languages accepted by \amfa's (see~\cite{CKS,Coo}) and  
$\LOGCF$ is the class of languages accepted by \amfa's in polynomial tree-size (see~\cite{Ruz,Sud}).

It is well known that the domain of a classical local \ttt\ can be accepted by an
alternating (one-head) tree-walking automaton, see, e.g., \cite{Slu}, 
\cite[Section~4]{Eng86}, and~\cite[Section~4]{MilSucVia03}, 
and the same is true for the multi-head case.
Let $M=(\Sigma,\Delta,Q,Q_0,R)$ be an \mht. 
The \amfa\ $M'$ that accepts $\dom(M)$ simulates~$M$ on the input $t\in T_\Sigma$, 
without producing output. The reading heads of $M$ are simulated by reading heads 
of $M'$ in the obvious way.
Every (initial) state $q$ of~$M$ is simulated by the existential (initial) state $q$ of $M'$,
and a move rule of~$M$ is simulated by a transition of $M'$ in an obvious way. 
If $M$ applies an output rule in state $q$, then $M'$ first goes into a universal state $q'$
and then branches in the same way as $M$, going into existential states.
A regular test $T$ of $M$ is simulated by $M'$ in a side branch, using an \amfa\ subroutine 
that accepts the context-free language $\tmark(T)$, with additional reading heads. 
Note that since the heads are sensing, the node to be tested is ``marked'' by a reading head. 
Similarly, to move a head $h$ from a parent $u$ to its $i$-th child $ui$, 
$M'$ first moves an auxiliary head $h'$ nondeterministically to a position to the right of $u$,
then checks in a side branch that the string between $h$ and $h'$ 
belongs to the context-free language $T_\Sigma^{i-1}$, and finally moves $h$ to $h'$. 
In a similar way $M'$ can move from $ui$ to $u$, and can determine the child number of $u$. 

If $M$ accepts $t$ in time $m$, 
then the size of the corresponding computation tree of $M'$ is polynomial in $m$,
because each computation step of $M$ takes polynomial tree-size. 
Thus, if $M$ accepts in polynomial time, then $M'$ accepts in polynomial tree-size. 

Note that if we assume that the simulation of a step of $M$ takes constant tree-size, 
and we assume moreover that $M$ only uses output rules (by eventually replacing 
the right-hand side $\zeta$ of each move rule by $\delta(\zeta)$, where $\delta$ has rank~1),
then the output tree of $M$ can be viewed both as 
the derivation tree of the computation of $M$ and as the computation tree of $M'$, roughly speaking.
\qed
\end{proof}

Thus, to prove that $\NTT\circ \TT \subseteq \LOGCF$ it suffices to show,
for every $\tau=\tau_1\circ\tau_2$ with $\tau_1\in\NTT$ and $\tau_2\in\TT$,
that $L_\tau$ can be accepted by a multi-head \ttt\ $M$ in polynomial time. 
Let $M_1$ and $M_2$ be \ttt's that realize $\tau_1$ and~$\tau_2$. 
For an input tree~$t$ and an output tree $s$ of $\tau$, $M$ will simulate $M_1$ on $t$, 
generating an intermediate tree~$r$, and verify that $M_2$ translates $r$ into $s$. 
Since $M$ cannot store its output tree~$r$, it must 
verify the translation of $r$ into $s$ on the fly, i.e., while generating $r$.  
That can be done because the context-free grammar $G_{M_2,r}$ is forward deterministic,
and hence its reduced version has a unique fixed point: 
during the generation of the nodes $v$ of $r$, $M$ can guess the values 
of the nonterminals $\tup{q,v}$ of $G_{M_2,r}$
(which are subtrees of $s$) and check the fixed point equations for them.
However, since $G_{M_2,r}$ need not be reduced, we have to be more careful.  

Let $G=(N,\Delta,\{S\},R)$ be a forward deterministic context-free grammar,
and let $\#$ be a symbol not in $N\cup \Delta$ (which stands for `undefined').
A string homomorphism $h: N\to \Delta^*\cup\{\#\}$ is a \emph{fixed point} of $G$ if 
(1)~$h(S)\neq \#$, 
(2)~$h(X)$ is a substring of $h(S)$ for every $X\in N$ such that $h(X)\neq\#$, and
(3)~$h(X)=h(\zeta)$ for every rule $X\to\zeta$ in $R$ such that $h(X)\neq\#$,
where $h$ is extended to $\Delta$ by defining $h(a)=a$ for every $a\in \Delta$.
In the special case that $G$ is a regular tree grammar, a \emph{tree fixed point} of $G$
is a fixed point $h$ of $G$ such that 
$h(X)\in T_\Delta\cup\{\#\}$ for every $X\in N$
and $h(X)$ is a subtree of $h(S)$ for every $X\in N$ such that $h(X)\neq \#$.

\begin{lemma}\label{lem:fix}
Let $G=(N,\Delta,\{S\},R)$ be a forward deterministic context-free grammar such that $L(G)\neq \nothing$.
For every $w\in \Delta^*$, $L(G)=\{w\}$ if and only if there is a fixed point~$h$ of $G$ such that $h(S)=w$.
If $G$ is a regular tree grammar, then the same statement holds 
for $w\in T_\Delta$ and $h$ a tree fixed point. 
\end{lemma}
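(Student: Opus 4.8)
The plan is to prove both directions of the equivalence, treating the string-language case and the tree-language case in parallel since the tree case is a structural specialization. First I would handle the easy direction: suppose $L(G)=\{w\}$. I would define $h$ by setting $h(X)=w'$ if the unique string derivable from $X$ (using only productive nonterminals, i.e., those from which a terminal string is derivable) is $w'$ and this $w'$ occurs as a substring of $w$ in the derivation of $w$ from $S$, and $h(X)=\#$ otherwise. More carefully: since $G$ is forward deterministic and $L(G)=\{w\}$, the (unique) derivation tree $d$ of $w$ from $S$ is well-defined; I would let $h(X)$ be the yield of the subtree rooted at the node of $d$ corresponding to the leftmost occurrence of $X$ (if $X$ occurs in $d$) and $h(X)=\#$ if $X$ does not occur in $d$. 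Conditions (1) and (2) are then immediate, and condition (3) follows because forward determinism means the unique rule $X\to\zeta$ for any $X$ occurring in $d$ must be exactly the rule applied at every occurrence of $X$ in $d$, so $h(X)=h(\zeta)$. For the tree case, the same argument works with ``yield'' replaced by ``subtree'' and ``substring'' by ``subtree'', using that the derivation trees of a regular tree grammar produce genuine subtrees.

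The harder direction is: given a fixed point $h$ with $h(S)=w$, show $L(G)=\{w\}$. I would first argue $w\in L(G)$. The idea is to ``unfold'' $h$ starting from $S$: repeatedly replace a nonterminal $X$ in the current sentential form (starting from $S$) by the right-hand side $\zeta$ of its unique rule, and observe using condition (3) that $h$ applied to the current sentential form is always $w$. The subtlety is termination — this unfolding could in principle loop forever if $G$ is recursive. Here I would use condition (2): whenever $h(X)\neq\#$, $h(X)$ is a substring of $w=h(S)$; and condition (3) forces $h(\zeta)=h(X)$, so the nonterminals appearing in $\zeta$ with non-$\#$ value have $h$-values that are substrings of $h(X)$. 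A careful length/position argument (working with occurrences of substrings of $w$, tracking which positions of $w$ each nonterminal occurrence is responsible for) shows the unfolding must terminate: each recursive descent strictly shrinks the window of $w$ unless it hits a nonterminal already on the current path with the same window, and I would need to argue that case cannot arise, or handle it by a minimality/pumping argument. Actually the cleanest route: pick, among all derivations $S\Rightarrow^* v$ with $h(v)=w$ and $v$ containing only non-$\#$ symbols, one of shortest length; if $v$ still contains a nonterminal, expanding it keeps $h$-value $w$ and keeps all symbols non-$\#$ (by (3), since a non-$\#$ nonterminal's rule right-hand side has the same $h$-value hence its nonterminals are non-$\#$ — one must check $\#$ cannot appear, which follows because $h(\zeta)=h(X)\neq\#$ forces every symbol of $\zeta$ to have non-$\#$ value), contradicting minimality only if we argue the length genuinely decreases, which it does because we replaced one nonterminal by its expansion — wait, that increases length. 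So instead I would argue by a well-founded measure on the multiset of ``window positions'' or use the standard fact that a forward-deterministic grammar with a fixed point has $L(G)$ a singleton via the reduced grammar.

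The main obstacle is exactly this termination/well-foundedness argument for the unfolding, because $G$ is not assumed reduced and may be recursive. I expect the right tool is: consider the reduced grammar $G_{\mathrm{red}}$ obtained by deleting all non-productive nonterminals (those generating no terminal string) and all unreachable nonterminals; since $L(G)\neq\nothing$, $G_{\mathrm{red}}$ still generates $L(G)$ and $S$ survives. For a reduced forward-deterministic context-free grammar, $L(G_{\mathrm{red}})$ is automatically a singleton $\{w_0\}$ (the unique derivation tree from $S$ is forced leaf-by-leaf, and reducedness guarantees it exists and is finite — finiteness because the grammar generates only one string). Then I would show $h(S)=w_0$: using condition (3) along the unique derivation tree of $w_0$, $h$ must assign to every nonterminal on that tree the corresponding subtree/substring, so in particular $h(S)=w_0$; combined with $h(S)=w$ this gives $w=w_0$ and hence $L(G)=\{w\}$. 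Once $w\in L(G)$ is established this way, uniqueness ($L(G)=\{w\}$ exactly, not a superset) is immediate since forward-deterministic grammars generate at most one string (as already noted in the excerpt right after the definition of forward determinism). I would present this as the core lemma, then remark that the tree-grammar version is verbatim the same with ``string''$\to$``tree'', ``substring''$\to$``subtree'', ``$\Delta^*$''$\to$``$T_\Delta$'', using that $G^{\mathrm{der}}$-style derivation trees of regular tree grammars have the subtree property.
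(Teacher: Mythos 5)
Your proof is correct, and your forward direction coincides with the paper's. For the converse, however, you take a detour (grammar reduction, the unique derivation tree of $L(G)$) and agonize over a termination problem that the paper's argument never encounters. The paper's entire converse direction is the observation that condition~(3) makes $h$ invariant along derivations: $h(v)=w$ for \emph{every} sentential form $v$ of $G$, by induction on the length of $S\Rightarrow_G^* v$ (if $h(v_1Xv_2)=w$ then $h(X)\neq\#$ because $w$ contains no $\#$, so the rule $X\to\zeta$ gives $h(v_1\zeta v_2)=h(v_1)h(X)h(v_2)=w$). Since $L(G)\neq\nothing$ there is some terminal sentential form $v$, for which $h(v)=v$, hence $v=w$; forward determinism then gives $L(G)=\{w\}$. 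There is no need to ``unfold $h$ to termination'': one never builds a derivation of $w$ out of $h$, one merely lets the hypothesis $L(G)\neq\nothing$ supply a terminating derivation and checks that $h$ is constant along it. Your final route --- first establish $L(G)=\{w_0\}$, then propagate $h$ top-down through the unique derivation tree to conclude $h(S)=w_0$ --- is a sound reformulation of the same idea (the top-down propagation of non-$\#$-ness that you flag is indeed the essential step), but the reduction to $G_{\mathrm{red}}$ buys you nothing: forward determinism together with $L(G)\neq\nothing$ already yields $L(G)=\{w_0\}$ directly, as the paper notes immediately after defining forward determinism.
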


\begin{proof}
Let $L(G)=\{w\}$, and define $h_G(X)$ to be the unique string generated by $X$, 
if that exists and is a substring of $w$, and otherwise $h_G(X)=\#$. It is easy to see that 
$h=h_G$ satisfies the requirements.

Let $h$ be a fixed point of $G$ such that $h(S)=w$. 
Then $h(v)=w$ for every sentential form $v$ of $G$. Since $L(G)\neq \nothing$, this shows that $L(G)=\{w\}$.
\qed
\end{proof}

\begin{theorem}\label{thm:logcf}
$\NTT\circ \TT \subseteq \LOGCF$.
\end{theorem}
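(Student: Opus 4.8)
The plan is to appeal to Lemma~\ref{lem:mht}: it suffices to show that for every $\tau=\tau_1\circ\tau_2$ with $\tau_1\in\NTT$ and $\tau_2\in\TT$ there is a multi-head \ttt\ $M$ over $\Sigma\cup\Gamma\cup\{\#\}$ (assuming, as at the start of this section, that $\Sigma\cap\Gamma=\nothing$) with $\dom(M)=L_\tau=\{\#(t,s)\mid (t,s)\in\tau\}$ that moreover accepts $\dom(M)$ in polynomial time; then $L_\tau\in\LOGCF$. First I would bring $(\tau_1,\tau_2)$ into a convenient normal form. By Theorem~\ref{thm:prod}, Lemma~\ref{lem:ast}, and Theorem~\ref{thm:nondetrightcomp},
\[
\NTT\circ\TT\;\subseteq\;\NTT\circ(\TTP\ast\TT)\;\subseteq\;(\NTT\circ\TTP)\ast\TT\;\subseteq\;\NTT\ast\TT,
\]
so I may assume that $(\tau_1,\tau_2)$ is linear-bounded, with some constant $c$. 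Next, by Lemma~\ref{lem:msorel} (which preserves determinism) $\tau_2=\rho\circ\tau_2'$ with $\rho\in\TTR$ and $\tau_2'\in\TTL$, and since $\TTR\subseteq\NTTP$ I absorb $\rho$ into $\tau_1$ by Theorem~\ref{thm:nondetrightcomp}; as $\rho$ is size-preserving, the pair stays linear-bounded. Finally, since $\dom(\tau_2')\in\REGT$ by Corollary~\ref{cor:regdom}, I restrict the range of the new $\tau_1$ to $\dom(\tau_2')$ using Corollary~\ref{cor:ranrestr} (this preserves linear-boundedness, since any intermediate tree witnessing a pair in $\tau$ already lies in $\dom(\tau_2')$). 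So from now on $\tau=\tau_1\circ\tau_2$ where $\tau_1$ is realized by a \ttt\ $M_1$ with state set $Q_1$, $\tau_2$ is realized by a \emph{local} \dttt\ $M_2=(\Delta,\Gamma,Q,q_0,R_2)$, $\ran(\tau_{M_1})\subseteq\dom(M_2)$, and $(\tau_1,\tau_2)$ is linear-bounded with constant $c$.

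On input $\#(t,s)$, the automaton $M$ simulates a parallel computation of $M_1$ on $t$ that produces some intermediate tree $r$, and checks on the fly that $\tau_2(r)=s$. The latter check rests on the following: $G_{M_2,r}$ is a forward deterministic regular tree grammar, and since $r\in\dom(M_2)$ we have $L(G_{M_2,r})\neq\nothing$, so by the tree version of Lemma~\ref{lem:fix}, $(r,s)\in\tau_2$ iff $G_{M_2,r}$ has a tree fixed point $h$ with $h(\tup{q_0,\rt_r})=s$. Each value $h(\tup{q,v})$ is a subtree of $s$ or the undefined symbol $\#$, hence is represented by a node of $s$ plus one bit. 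Accordingly $M$ uses, besides one head $h_t$ on the $t$-part (simulating $M_1$'s head), a bounded number of heads on the $s$-part encoding guessed vectors $g_v\in(\cN(s)\cup\{\#\})^Q$: the vector for the node $v$ of $r$ that the currently simulated copy of $M_1$ is about to produce, the vector for the parent of $v$ (used for $M_2$'s $\up$-rules), and, at an output step, the vectors for the children of $v$; the roles of these heads are tracked in $M$'s finite control. Move steps of $M_1$ are simulated directly, with $M$ evaluating $M_1$'s regular tests (relativised to the first subtree of its input). When the simulated copy performs an output step at a node $u$ of $t$ that produces the node $v$ of $r$ with label $\delta\in\Delta^{(k)}$ and child number $j$, spawning $k$ child copies, $M$ guesses the vectors $g_{v1},\dots,g_{vk}$ (moving free heads on $s$), then verifies, for each $p\in Q$, the unique defining equation of $G_{M_2,r}$ for the nonterminal $\tup{p,v}$ --- determined by the unique $M_2$-rule applicable at $\tup{p,\delta,j}$, since $M_2$ is deterministic and local --- by comparing head positions, reading labels, and stepping to children on $s$; it also checks $g_{\rt_r}(q_0)=\rt_s$ at the root; and it passes $g_v$ and $g_{vi}$ down to the $i$-th spawned copy. $M$ accepts iff all copies terminate successfully. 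By Lemma~\ref{lem:fix} and $\ran(\tau_{M_1})\subseteq\dom(M_2)$ this yields $\dom(M)=L_\tau$.

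For the time bound, let $\#(t,s)\in L_\tau$ and choose a witness $r$ with $|r|\leq c\cdot|s|$, which exists by linear-boundedness. Take a parallel computation of $M_1$ on $t$ that produces $r$; every copy of $M_1$ in it performs exactly one output step, so there are exactly $|r|$ copies, and the move steps of a single copy may be assumed to contain no repeated configuration (otherwise delete the cycle, which leaves $r$ unchanged), hence number at most $\#(Q_1)\cdot|t|$. Thus this computation has length $O(|r|\cdot|t|)=O(|s|\cdot|t|)$. In the simulation by $M$ each move step costs one step, and each output step additionally costs an $O(|s|)$-step subroutine (guessing the children's vectors and checking the equations at one node of $r$), so $M$ has an accepting computation of length $O(|s|\cdot|t|+|s|^2)$, polynomial in $|\#(t,s)|$. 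Hence $M$ accepts $\dom(M)$ in polynomial time, and $L_\tau\in\LOGCF$ by Lemma~\ref{lem:mht}.

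The main obstacle is the on-the-fly verification in the second paragraph: $M$ must commit to the subtrees $h(\tup{q,v})$ of the output tree before producing the part of $r$ on which $M_2$ would compute them, and the bookkeeping of the guessed vectors (with the correct handling of the undefined value, so that what is certified is exactly a tree fixed point of $G_{M_2,r}$ in the sense of Lemma~\ref{lem:fix}) must be arranged to use only a fixed number of heads and purely local verification steps. Both the determinacy of $M_2$ (one equation per nonterminal) and its locality (the equation at $v$ depends only on the label and child number of $v$, not on regular tests over the still unfinished tree $r$) are essential here, which is why the normal-form reductions are carried out first; the remaining ingredient, the polynomial length bound, is exactly where the linear-boundedness obtained via Theorem~\ref{thm:prod} is used.
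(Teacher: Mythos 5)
Your proposal is correct and follows essentially the same route as the paper's proof: the same normal-form reductions (linear-boundedness via Theorem~\ref{thm:prod}, locality of $M_2$ via Lemma~\ref{lem:msorel} and Theorem~\ref{thm:nondetrightcomp}, range restriction via Corollary~\ref{cor:ranrestr}), the same multi-head \ttt\ that simulates $M_1$ while guessing a tree fixed point of $G_{M_2,r}$ with heads for the current, parent, and child configurations, and the same polynomial time bound obtained from linear-boundedness together with Lemmas~\ref{lem:fix} and~\ref{lem:mht}.
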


\begin{proof}
Let $M_1=(\Sigma,\Omega,P,P_0,R_1)$ be a \ttt, and 
let $M_2=(\Omega,\Delta,Q,q_0,R_2)$ be a \dttt. 
We will denote $\tau_{M_1}$ and $\tau_{M_2}$ by $\tau_1$ and $\tau_2$, respectively. 
Since it is easy to prove (as in the proof of Corollary~\ref{cor:prod}) 
that $\NTT\circ \TT = \NTT\ast \TT$, 
we may assume that $(\tau_1,\tau_2)$ is linear-bounded. 
We may also assume, by Lemma~\ref{lem:msorel} and Theorem~\ref{thm:nondetrightcomp},
that $M_2$ is local. That does not change the linear-boundedness of the composition:
if $(\tau_1,\tau'_2\circ\tau''_2)$ is linear-bounded and $\tau'_2\in\NTTR$, then 
$(\tau_1\circ\tau'_2,\tau''_2)$ is linear-bounded because $\tau'_2$ is size-preserving. 
Similarly, we may assume that $\ran(\tau_1)\subseteq\dom(\tau_2)$ 
by Corollaries~\ref{cor:regdom} and~\ref{cor:ranrestr}. 
Finally we assume (as in the proofs of Lemmas~\ref{lem:ttottdl} and~\ref{lem:nondetttottdl}) that 
$M_1$ keeps track in its finite state of the child number of the output node to be generated,
through a mapping $\chi:P\to[0,\m_\Sigma]$. 

On the basis of Lemma~\ref{lem:mht}, we will describe a multi-head \ttt\ $M$ 
that accepts~$L_\tau$ in polynomial time, where $\tau=\tau_1\circ\tau_2$. 
Initially $M$ verifies by a regular test that the input tree is
of the form $\#(t,s)$ with $t\in T_\Sigma$ and $s\in T_\Delta$.
We will denote the root of $\#(t,s)$ by its label $\#$. 
As mentioned before, on input~$\#(t,s)$ the transducer~$M$ simulates $M_1$ on $t$ 
generating an output tree $r$ of $M_1$, which is in the domain of~$M_2$ 
because $\ran(\tau_1)\subseteq\dom(\tau_2)$. 
It keeps the state $p$ of~$M_1$ in its finite state, 
uses one of its heads to point at a node of $t$ (which it initially moves to the root of $t$), 
and instead of a regular test $T$ applies 
the regular test $\{(\#(t,s),1u)\mid (t,u)\in T\}$.\footnote{Note that a node of $t$ has the same 
label and child number in $t$ and $\#(t,s)$, except when it has child number~1 in $\#(t,s)$ in which case 
it has child number $0$ or $1$ in $t$, depending on whether or not its parent in $\#(t,s)$ has label $\#$.
}
While generating $r$ it guesses a tree fixed point $h:\con(r)\to T_\Delta\cup\{\#\}$ 
of the regular tree grammar $G_{M_2,r}$ such that $h(\tup{q_0,\rt_r})=s$. 
If that fixed point can be guessed, then $\tau_2(r)=s$ by Lemma~\ref{lem:fix},
and hence $(t,s)\in\tau$. 

Initially, $M$ guesses the values under $h$ of the configurations 
in $\con(r)$ that contain the root of $r$, in linear time. 
For each $q\in Q$ the value of $\tup{q,\rt_r}$ is 
guessed by nondeterministically moving a reading head named $(q,\stay)$
to a node~$x$ of $s$, i.e., node $2x$ of $\#(t,s)$, meaning that $h(\tup{q,\rt_r})=s|_x$,
or to node~$\#$, meaning that $h(\tup{q,\rt_r})=\#$ (i.e., that $h(\tup{q,\rt_r})$ is ``undefined'').
In particular, the head $(q_0,\stay)$ is moved to the root of $s$, thus guessing that $\tau_2(r)=s$.

Suppose that $M$ is going to produce a node $v$ of $r$ with label $\omega$, by simulating an output rule 
$\tup{p,\sigma,j,T}\to\omega(\tup{p_1,\alpha_1},\dots, \tup{p_k,\alpha_k})$ of $M_1$. 
In such a situation, $M$ has already guessed the values under $h$ of the configurations 
in $\con(r)$ that contain $v$, and also of those that contain the parent $v'$ of $v$ (if it has one). 
For each $q\in Q$ the value of $\tup{q,v}$ is stored using the reading head named $(q,\stay)$,
as explained above for $v=\rt_r$, and the value of $\tup{q,v'}$ is stored in a similar way 
using a reading head named $(q,\up)$.
Now $M$ guesses the values of the configurations that contain the children of $v$, in linear time.
For every $q\in Q$ and $i\in[1,k]$, the value $h(\tup{q,vi})$ 
is guessed by nondeterministically moving a reading head named $(q,\down_i)$ to some node of~$s$ or to $\#$.
Then $M$ checks that these values 
satisfy requirement~(3) of a fixed point of $G_{M_2,r}$ as follows, in linear time. 
If $\tup{q,\omega,\chi(p)}\to \tup{q',\alpha}$ is a move rule of $M_2$ 
such that head $(q,\stay)$ does not point to $\#$,
then $M$ checks that the heads $(q,\stay)$ and $(q',\alpha)$ point to nodes with the same subtree.
It can do this using two auxiliary heads that simultaneously perform a depth-first left-to-right traversal
of those subtrees. Similarly, if 
$\tup{q,\omega,\chi(p)}\to \delta(\tup{q_1,\alpha_1},\dots,\tup{q_m,\alpha_m})$ 
is an output rule of $M_2$ such that head $(q,\stay)$ does not point to $\#$,
then $M$~checks that it points to a node with label $\delta$ 
and that the subtree at the $i$-th child of that node 
equals the subtree at the head $(q_i,\alpha_i)$, for every $i\in[1,m]$. 
After checking the fixed point requirement~(3), 
$M$ outputs the node $v$ and branches in the same way as $M_1$. 
In the $i$-th branch (apart from simulating $M_1$'s rule in the obvious way) it moves
head $(q,\up)$ to the position of head $(q,\stay)$ 
and then moves head $(q,\stay)$ to the position of head $(q,\down_i)$,
for every $q\in Q$, in linear time. 

This ends the description of $M$. It should be clear that 
$\tau_M$ is the set of all pairs $(\#(t,s),r)$ such that 
$(t,r)\in \tau_1$ (because $M$ simulates $M_1$) and 
$\tau_2(r)=s$ (because $M$ computes a tree fixed point $h$ of $G_{M_2,r}$ such that $h(\tup{q_0,\rt_r})=s$).
Hence $\dom(M)=\{\#(t,s)\mid \exists\,r: (t,r)\in \tau_1, \,\tau_2(r)=s\}=L_\tau$.
It remains to show that $M$ accepts $L_\tau$ in polynomial time.

There is a computation of $M_1$ of length at most $\#(P)\cdot|t|\cdot|r|$ 
that translates~$t$ into $r$, because if the number of move rules applied 
between two output rules is more than the number of configurations of $M_1$ on $t$,
then there is a loop in the computation that can be removed. 
Since $(\tau_1,\tau_2)$ is linear-bounded, we may assume that 
the size of $r$ is at most linear in the size of $s$.
Hence the length of that computation is polynomial in $|t|$ and $|s|$, and hence in $|\#(t,s)|$. 
Since $M$~simulates~$M_1$, and each simulated computation step takes linear time (as shown above),
$M$ accepts $\#(t,s)$ in polynomial time. 
\qed
\end{proof}

From Theorem~\ref{thm:logcf} and Lemma~\ref{lem:nmtio},
which says that $\mrNMT\subseteq \FTTD\circ \TT$,
we obtain the following corollary. Note that 
$\NTT\circ \TT$ is larger than $\FTTD\circ \TT$ in two respects.
First, it contains non-finitary translations. Second, it contains total functions
for which the height of the output tree can be double exponential in the height 
of the input tree, viz. $\tau_\mathrm{exp}^2$ in the proof of Proposition~\ref{pro:hier}, 
whereas that is at most exponential for total functions in $\NTTD\circ \TT$
by Theorem~\ref{thm:fun}, Lemma~\ref{lem:ttvsmt}, 
and the paragraph after Corollary~\ref{cor:ttmttt}.

\begin{corollary}\label{cor:mtiologcf}
$\NMTIO\subseteq \mrNMT\subseteq \LOGCF$.
\end{corollary}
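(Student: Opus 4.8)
The plan is to obtain Corollary~\ref{cor:mtiologcf} as an immediate consequence of the two results that directly precede it, Lemma~\ref{lem:nmtio} and Theorem~\ref{thm:logcf}. The first inclusion $\NMTIO\subseteq\mrNMT$ is exactly the first inclusion asserted in Lemma~\ref{lem:nmtio}, so nothing remains to be done for it. For the second inclusion I would start from the second inclusion of Lemma~\ref{lem:nmtio}, namely $\mrNMT\subseteq\FTTD\circ\TT$, and then use that $\FTTD\subseteq\FTT\subseteq\NTT$: a finitary top-down \ttt\ is in particular a \ttt, and the chain $\TT\subseteq\FTT\subseteq\NTT$ was recorded in Section~\ref{sec:trans}. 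Hence, as classes of relations, $\FTTD\circ\TT\subseteq\NTT\circ\TT$, and applying Theorem~\ref{thm:logcf} (which states $\NTT\circ\TT\subseteq\LOGCF$) gives $\mrNMT\subseteq\FTTD\circ\TT\subseteq\NTT\circ\TT\subseteq\LOGCF$, and in particular $\NMTIO\subseteq\LOGCF$.

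Thus there is essentially no obstacle internal to the corollary; all of the difficulty has already been discharged in the earlier sections. The genuinely substantive input is Theorem~\ref{thm:logcf}, whose proof routes through Lemma~\ref{lem:mht} (domains of multi-head \ttt's accepting in polynomial time lie in $\LOGCF$, via the characterisation of $\LOGCF$ by alternating multi-head finite automata running in polynomial tree-size) together with the fixed-point characterisation of Lemma~\ref{lem:fix}, which lets a multi-head \ttt\ simulate $M_1$ while guessing and checking on the fly a tree fixed point of $G_{M_2,r}$ instead of storing the intermediate tree $r$; linear-boundedness of the composition ($\NTT\circ\TT=\NTT\ast\TT$, as in Corollary~\ref{cor:prod}) keeps $|r|$ linear in $|s|$ and so keeps the simulation polynomial. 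If forced to name a ``hard part'', it is merely to notice that Lemma~\ref{lem:nmtio} is phrased to land precisely inside the class for which Theorem~\ref{thm:logcf} was proved.

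It is worth one sentence to explain why the inclusion $\mrNMT\subseteq\FTTD\circ\TT$ is enough even though, as the paragraph before the corollary observes, $\NTT\circ\TT$ is strictly larger than $\FTTD\circ\TT$ (it contains non-finitary translations and total functions of double-exponential height increase): we only need an inclusion of translation classes, so for each $\tau\in\mrNMT$ the language $L_\tau$ coincides with $L_{\tau'}$ for a translation $\tau'\in\NTT\circ\TT$ (take $\tau'=\tau$, now viewed as a member of the larger class), and Theorem~\ref{thm:logcf} applies to it verbatim. This completes the proof sketch.
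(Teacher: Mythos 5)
Your proposal is correct and is exactly the paper's argument: the corollary is stated as an immediate consequence of Lemma~\ref{lem:nmtio} ($\NMTIO\subseteq\mrNMT\subseteq\FTTD\circ\TT$) and Theorem~\ref{thm:logcf} ($\NTT\circ\TT\subseteq\LOGCF$), with the trivial observation $\FTTD\circ\TT\subseteq\NTT\circ\TT$ bridging the two. Nothing further is needed.
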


As another corollary we even obtain an upper bound on the complexity of the output languages of $\TT$ 
that improves the one of Theorem~\ref{thm:outspace}.
It was proved for attribute grammars in~\cite{EngCompAG}.

\begin{corollary}\label{cor:logcf}
$\TT(\REGT)\subseteq \LOGCF$. 
\end{corollary}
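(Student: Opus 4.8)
The plan is to derive this from Theorem~\ref{thm:logcf} ($\NTT\circ\TT\subseteq\LOGCF$): the translation language $L_\rho$ of a translation $\rho$ already records \emph{both} the input and the output tree, so one only needs to realize $\tau(L)$ as the image of a suitable $L_\rho\in\LOGCF$ under a projection that forgets the input, and then to argue that this projection does no harm to $\LOGCF$ --- which will be possible precisely because the productivity results of Section~\ref{sec:prod} let us keep the forgotten input tree linearly bounded by the output. Concretely: let $\tau\in\TT$ and $L\in\REGT$; since $\dom(\tau)\in\REGT$ by Corollary~\ref{cor:regdom} we may assume $L\subseteq\dom(\tau)$, and by Corollary~\ref{cor:prod}(2) ($\TT\subseteq\TTP\ast\TT$) we can write $\tau=\pi\circ\tau'$ with $\pi\in\TTP$, $\tau'\in\TT$ and $(\pi,\tau')$ linear-bounded with some constant $c$. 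With $L':=\pi(L)$ we have $L'\in\REGT$ by Lemma~\ref{lem:pru}, and, since $\pi$ and $\tau'$ are functions and $L\subseteq\dom(\pi\circ\tau')$, also $\tau(L)=\tau'(L')$ and $|r|\le c\cdot|\tau'(r)|$ for every $r\in L'$ (the intermediate tree for $(t,\tau(t))$ must be $\pi(t)$).

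Next I would invoke Theorem~\ref{thm:logcf}. The identity on $L'$ is realized by a local relabeling tree-walking tree transducer --- run the bottom-up automaton of $L'$ nondeterministically from the root downwards, relabeling each node by itself --- so it lies in $\NTTRL\subseteq\NTT$, and hence $\iota_{L'}\circ\tau'\in\NTT\circ\TT$, where $\iota_{L'}$ denotes that identity. By Theorem~\ref{thm:logcf}, its translation language $L_{\iota_{L'}\circ\tau'}=\{\,\#(r,s)\mid r\in L',\ s=\tau'(r)\,\}$ is in $\LOGCF$ (taking, as in the setup of Section~\ref{sec:transcomp}, the ranked alphabet of $L'$ disjoint from that of the output). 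Now $\tau(L)=\tau'(L')$ is exactly the set of trees $s$ for which some $\#(r,s)$ lies in $L_{\iota_{L'}\circ\tau'}$, and by the previous paragraph every such $s$ admits a witness $r$ with $|r|\le c\cdot|s|$, so that the string encoding of $\#(r,s)$ has length $O(|s|)$.

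Finally, $\tau(L)$ is the image of $L_{\iota_{L'}\circ\tau'}$ under the ``second projection'' $\#(r,s)\mapsto s$, which on these encodings is a rational (finite-state) transduction: because the alphabet of $r$ is disjoint from that of the output, a one-way finite transducer can skip the prefix ``$\#($'' together with the whole encoding of $r$ (detecting its end by the first output symbol it reads), then copy the rest while deleting the final parenthesis. One may therefore conclude by the closure of $\LOGCF$ under rational transductions. Alternatively, and more self-containedly, recall that $\LOGCF$ is the class of languages accepted in logarithmic work space and polynomial time by \emph{one-way} nondeterministic auxiliary pushdown automata (the one-way counterpart of the \amfa\ characterization used in Lemma~\ref{lem:mht}; see~\cite{Sud}); from such a machine $P$ for $L_{\iota_{L'}\circ\tau'}$ one builds a machine $P'$ for $\tau(L)$ that feeds $P$ the symbols ``$\#($'', then guesses the encoding of $r$ one symbol at a time while feeding it to $P$, then streams its own input $s$ to $P$ followed by the closing parenthesis, accepting iff $P$ does, under a polynomial clock in $|s|$. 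The linear bound $|r|\le c\cdot|s|$ guarantees that whenever $s\in\tau(L)$ there is a witnessing run of $P$ of length polynomial in $|s|$, so $P'$ accepts $\tau(L)$ in logarithmic space and polynomial time.

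I expect the last step to be the only real obstacle: pushing the existential quantifier ``$\exists r$'' through $\LOGCF$ is legitimate only because the witness $r$ is \emph{linearly} bounded in $|s|$ --- which is exactly why the $\TTP\ast\TT$ decomposition of Section~\ref{sec:prod} has to be used here --- and because, in the string encoding, the forgotten part is a prefix that a one-way device can stream without storing or re-reading it. Without the linear bound the projection would enlarge the preimage uncontrollably and the argument would break for that reason alone; with it in hand, everything else (restricting $L$ to $\dom(\tau)$, the $\TTP\ast\TT$ split, regularity of $\pi(L)$, and the reduction to Theorem~\ref{thm:logcf}) is routine.
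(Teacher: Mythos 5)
Your reduction to Theorem~\ref{thm:logcf} is set up correctly (the language $L_{\iota_{L'}\circ\tau'}$ is indeed in $\LOGCF$), but the step you yourself single out as the only obstacle --- projecting away the intermediate tree $r$ --- is a genuine gap, and neither of your two justifications repairs it. A linear bound on the witness is \emph{not} enough to push an existential quantifier through $\LOGCF$: consider $K=\{x\#w\mid x \text{ is a satisfying assignment of the boolean formula } w\}$, a language decidable in deterministic logarithmic space and hence in $\LOGCF$, in which the prefix $x\#$ is over an alphabet disjoint from that of $w$ and has length at most linear in $|w|$; deleting that prefix yields the set of satisfiable formulas. So $\LOGCF$ is not closed under linearly-erasing rational transductions unless $\LOGCF=\NP$, which kills your first route. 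Your second route rests on the claim that $\LOGCF$ is characterized by \emph{one-way} nondeterministic auxiliary pushdown automata in logarithmic space and polynomial time; the characterization in~\cite{Sud} uses a two-way input head, the one-way class is (as far as is known) strictly smaller, and nothing in the proof of Theorem~\ref{thm:logcf} delivers a one-way machine for $L_{\iota_{L'}\circ\tau'}$. The reason the guess-a-linearly-bounded-witness technique works in Theorems~\ref{thm:outspace}, \ref{thm:nondetoutspace} and~\ref{thm:outdspace} is that there the machine can afford to \emph{write down} the witness $r$ and re-read it; in logarithmic space it cannot, and re-guessing $r$ at each head reversal would destroy consistency.

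The corollary does follow from Theorem~\ref{thm:logcf}, but by a different and much simpler trick: make the \emph{input} of the first transducer trivial instead of quantifying over the intermediate tree externally. Take $\Sigma=\{e\}$ with $e$ of rank~$0$ and let $\tau_1=\{(e,r)\mid r\in L\}$; this is in $\NTT$, since a one-state local \ttt\ with rules $\tup{p,e,0}\to\omega(\tup{p,\stay},\dots,\tup{p,\stay})$ generates all of $T_\Omega$ from $e$, and its range can be restricted to $L$ by Corollary~\ref{cor:ranrestr}. Then $L_{\tau_1\circ\tau_2}=\{\#(e,s)\mid s\in\tau_2(L)\}$ is in $\LOGCF$ by Theorem~\ref{thm:logcf}, and $\tau_2(L)$ is log-space reducible to it via the map $s\mapsto\#es$, which only \emph{adds} a constant prefix rather than deleting a linear one. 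The existential quantification over $r$ is thereby absorbed into the nondeterminism of $\tau_1$, which the multi-head transducer in the proof of Theorem~\ref{thm:logcf} already handles without ever storing $r$; no external linear-boundedness argument is needed at the level of the corollary at all.
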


\begin{proof}
Let $L$ be a regular tree language over $\Omega$ and 
let $\tau_2\subseteq T_\Omega\times T_\Delta$ be in~$\TT$. 
Let $\Sigma=\{e\}$ with $\rank(e)=0$, and 
let $\tau_1=\{(e,r)\mid r\in L\}$.
The one-state \ttl\ with rules $\tup{p,e,0}\to \omega(\tup{p,\stay},\dots,\tup{p,\stay})$
for every $\omega\in\Omega$ realizes the translation $\{(e,r)\mid r\in T_\Omega\}$, 
and hence $\tau_1\in \NTT$ by Corollary~\ref{cor:ranrestr}.
Let $\tau=\tau_1\circ\tau_2$.
Then $L_\tau = \{\#(e,s)\mid \exists r: r\in L,\,\tau_2(r)=s\} =
\{\#(e,s)\mid s\in \tau_2(L)\}$.
By Theorem~\ref{thm:logcf} $L_\tau\in\LOGCF$, and hence $\tau_2(L)\in\LOGCF$
because $\tau_2(L)$ is log-space reducible to~$L_\tau$. 
\qed
\end{proof}

Theorem~\ref{thm:logcf} and Corollary~\ref{cor:logcf} can be extended to deal with the yields
of the output trees, as also proved in~\cite{EngCompAG} for attribute grammars
(generalizing the proof in~\cite{Asv81} of $\family{IO}(1)\subseteq\LOGCF$).
For a ranked alphabet $\Sigma$ we define the mapping $y_\Sigma: T_\Sigma\to (\Sigma^{(0)})^*$
such that $y_\Sigma(t)=yt$, the yield of $t$. 
Let $\yield$ be the class of all such mappings $y_\Sigma$.
In what follows we will identify each string $w$ with the monadic tree $\mon(w)$ 
as defined in the proof of Corollary~\ref{cor:outspace}. Hence, as mentioned in that proof,
$\yield\subseteq \TTL$. This even holds if we assume the existence
of special symbols in $\Sigma^{(0)}$ that are skipped when taking the yield of $t$
(such as the symbols $X_0$ in the derivation trees of context-free grammars with $\epsilon$-rules,
cf. Section~\ref{sec:trees}).

\begin{corollary}\label{cor:ylogcf}
$\NTT\circ \TT \circ \yield \subseteq \LOGCF$ and $y\TT(\REGT)\subseteq \LOGCF$.
\end{corollary}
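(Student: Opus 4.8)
The second inclusion, $y\TT(\REGT)\subseteq\LOGCF$, follows from the first one just as Corollary~\ref{cor:logcf} follows from Theorem~\ref{thm:logcf}. Given a regular tree language $L$ over $\Omega$ and $\tau_2\in\TT$ with input alphabet $\Omega$ and output alphabet $\Delta$, take $\Sigma=\{e\}$ with $\rank(e)=0$ and $\tau_1=\{(e,r)\mid r\in L\}$, which is in $\NTT$ by Corollary~\ref{cor:ranrestr} (renaming symbols as in the paragraph preceding Theorem~\ref{thm:logcf} to keep the alphabets disjoint). Then $\tau:=\tau_1\circ\tau_2\circ y_\Delta$ satisfies $L_\tau=\{\#(e,\mon(w))\mid w\in y_\Delta(\tau_2(L))\}$, so $y_\Delta(\tau_2(L))$ is log-space reducible to $L_\tau$; by the first inclusion $L_\tau\in\LOGCF$, hence $y_\Delta(\tau_2(L))\in\LOGCF$.

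To prove $\NTT\circ\TT\circ\yield\subseteq\LOGCF$ I would adapt the multi-head \ttt\ $M$ constructed in the proof of Theorem~\ref{thm:logcf}, following \cite{EngCompAG} (whose argument in turn generalizes \cite{Asv81}). Fix $\tau=\tau_1\circ\tau_2\circ y_\Delta$ with $\tau_1=\tau_{M_1}$ for a \ttt\ $M_1$ and $\tau_2=\tau_{M_2}$ for a \dttt\ $M_2$, and, exactly as in the proof of Theorem~\ref{thm:logcf}, assume $M_2$ local (Lemma~\ref{lem:msorel}, Theorem~\ref{thm:nondetrightcomp}) and $\ran(\tau_1)\subseteq\dom(\tau_2)$ (Corollaries~\ref{cor:regdom},~\ref{cor:ranrestr}). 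On input $\#(t,\mon(w))$ the transducer $M$ simulates $M_1$ on $t$, generating an intermediate tree $r$ with $(t,r)\in\tau_1$, and must check that $y_\Delta(\tau_2(r))=w$. The only structural change is that $w$ is a string, not the output tree $\tau_2(r)$: in place of a tree fixed point of the regular tree grammar $G_{M_2,r}$, the transducer $M$ now guesses a string fixed point (in the sense of Lemma~\ref{lem:fix}) of the context-free grammar $G'$ obtained from $G_{M_2,r}$ by replacing every right-hand side by its yield. Since $G_{M_2,r}$ is forward deterministic, so is $G'$, and $L(G')=\{y_\Delta(\tau_2(r))\}$; thus by Lemma~\ref{lem:fix} there is a fixed point $h'$ of $G'$ with $h'(\tup{q_0,\rt_r})=w$ iff $y_\Delta(\tau_2(r))=w$. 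Each value $h'(\tup{q,v})$ is a contiguous factor of $w$, which $M$ stores by a pair of reading heads marking its endpoints in $\mon(w)$, together with distinguished markers for the value $\#$ and for the empty factor; the fixed-point equations then reduce to comparisons and concatenations of factors of $w$, carried out in linear time by a few auxiliary heads exactly as before.

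The hard part is the polynomial running time of $M$. In Theorem~\ref{thm:logcf} this was free, because linear-boundedness of $(\tau_1,\tau_2)$ bounds $|r|$ linearly in $|\tau_2(r)|$, which is part of the input; here $|\tau_2(r)|$, and hence $|r|$, can be exponential in $|w|=|y_\Delta(\tau_2(r))|$ — think of long monadic chains or of large subtrees all of whose leaves carry yield-skipped symbols. So $M$ must not generate all of $r$. The key point (as in \cite{EngCompAG}) is that only the yield-relevant part of $r$ and of $G'$ need be explored: a configuration of $M_2$ whose guessed value is $\#$ is never checked (Lemma~\ref{lem:fix}), so the part of $r$ serving only such configurations need not be produced — it suffices to verify by a regular test on $t$, regular by Corollary~\ref{cor:regdom}, that $M_1$ is able to produce it; and loops in the simulated computations of $M_1$ or of $M_2$ that contribute only yield-irrelevant output (monadic nodes, yield-skipped leaves) may be deleted without changing $w$. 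The remaining delicate case is a configuration generating a defined but empty yield, which must be discharged by an auxiliary regular test rather than by recursion. With this bookkeeping the explored part of $r$ together with the explored part of $G'$ is essentially a derivation tree of $w$ in $G'$ with its yield-irrelevant loops excised, hence has $O(|w|)$ branching configurations and chains bounded by the number of configurations it contains, so its size is polynomial in $|t|+|w|$; then $M$ accepts $L_\tau$ in polynomial time and Lemma~\ref{lem:mht} gives $L_\tau\in\LOGCF$. Making this cut-off/empty-yield bookkeeping precise while simultaneously preserving $(t,r)\in\tau_1$ and the consistency of the guessed fixed point is where essentially all the work lies, and it is exactly the argument carried out for attribute grammars in \cite{EngCompAG}.
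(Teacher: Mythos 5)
Your reduction of the second inclusion to the first is exactly the paper's (implicit) argument, and your core idea for the first inclusion --- guess a fixed point of the forward deterministic context-free grammar obtained from $G_{M_2,r}$ by taking yields of right-hand sides, storing each value as a factor of $w$ delimited by a pair of reading heads --- is precisely what the paper does. Your diagnosis of the obstacle is also correct: linear-boundedness of $(\tau_1,\tau_2)$ bounds $|r|$ by $|\tau_2(r)|$, not by $|w|$, and $|\tau_2(r)|$ can be exponential in $|w|$.

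Where you diverge is in the cure, and this is where your proposal has a genuine gap. You propose to handle the size problem \emph{dynamically}, by having the multi-head transducer refrain from generating the yield-irrelevant parts of $r$, excising loops on the fly, and discharging empty-yield configurations by auxiliary regular tests --- and you concede that ``essentially all the work lies'' in making this bookkeeping precise. As written, the hardest step of the proof is therefore not carried out. The paper avoids all of this with a \emph{static} one-line reduction using machinery it has already built: $\yield\subseteq \TTP \ast \yield$, because a deterministic pruning \ttt\ with regular look-ahead can delete all monadic nodes and all subtrees of empty yield, after which the tree has size at most $2|w|$. By Lemma~\ref{lem:ast} and Theorem~\ref{thm:rightcomp} (the pruning \ttt\ is absorbed to the left into $\TT$), one gets $\NTT\circ \TT \circ \yield = \NTT\ast (\TT \circ \yield)$, exactly as in Corollary~\ref{cor:prod}. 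This re-establishes $|r|\leq c\cdot|w|$ for a suitable intermediate tree, so the polynomial time bound --- and with it the entire proof of Theorem~\ref{thm:logcf} --- goes through verbatim, with no on-the-fly garbage collection needed. I would recommend replacing your cut-off argument by this decomposition; alternatively, if you want to pursue the dynamic route of the attribute-grammar proof you cite, you must actually supply the bookkeeping you are deferring, since it is the entire content of the claim.
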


\begin{proof}
It is straightforward to show that $\yield\subseteq \TTP \ast \yield$. 
In fact, the deterministic pruning \ttt\ removes all nodes of rank~1 
and, using regular look-ahead, all subtrees of which the yield is the empty string $\epsilon$ 
(due to the special symbols mentioned above). Consequently, as in the proof of Corollary~\ref{cor:prod},
$\NTT\circ \TT \circ \yield = \NTT\ast (\TT \circ \yield)$.
This allows us to repeat the proof of Theorem~\ref{thm:logcf}, this time with respect to the 
forward deterministic context-free grammar $G'_{M_2,r}$ 
that generates the yields of the trees generated by $G_{M_2,r}$: if $X\to\zeta$ is a rule of $G_{M_2,r}$,
then $X\to y\zeta$ is a rule of $G'_{M_2,r}$. 
Thus, this time the \mht\ $M$ guesses a fixed point $h$ of $G'_{M_2,r}$, rather than a tree fixed point. 
To do this it uses two heads $\tup{q,\stay,\text{left}}$ and $\tup{q,\stay,\text{right}}$ 
instead of the one head $\tup{q,\stay}$, to guess the left- and right-end 
of the substring generated by the configuration $\tup{q,v}$, and similarly for $\up$ and $\down_i$. 
It should be clear that the fixed point requirement~(3) can easily be checked, 
showing that one such substring equals another one or is the concatenation of several other ones. 
\qed
\end{proof}

The inclusion $\NTT\circ \TT \subseteq \LOGCF$ of Theorem~\ref{thm:logcf}
has consequences for both space and time complexity. We first consider space complexity. 

Since $\LOGCF\subseteq \family{DSPACE}(n)$, we obtain that $\NTT\subseteq \family{DSPACE}(n)$ 
from Theorem~\ref{thm:logcf}. This can easily be generalized to arbitrary compositions of \ttt's.

\begin{theorem}\label{thm:transcompdspace}
For every $k\geq 1$, $\NTT^k\subseteq \family{DSPACE}(n)$.
\end{theorem}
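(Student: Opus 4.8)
Looking at Theorem~\ref{thm:transcompdspace}, the goal is to show that for every $k\geq 1$, the membership problem $L_\tau$ for $\tau\in\NTT^k$ is in $\family{DSPACE}(n)$. The case $k=1$ is already essentially in hand: since $\NTT\subseteq \NTT\circ\TT$ (composing with the identity, which is a deterministic relabeling and hence in $\TT$), Theorem~\ref{thm:logcf} gives $\NTT\subseteq\LOGCF\subseteq\family{DSPACE}(n)$. So the real work is the induction step, reducing a $(k+1)$-fold composition to a $k$-fold one.

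The plan is to proceed by induction on $k$, the base case being as above. For the induction step, let $\tau=\tau_1\circ\tau_2$ with $\tau_1\in\NTT$ and $\tau_2\in\NTT^k$. First I would apply Corollary~\ref{cor:prod}(1) to assume that the pair $(\tau_1,\tau_2)$ is linear-bounded with some constant $c$; also, using Corollary~\ref{cor:regdom} and Corollary~\ref{cor:ranrestr}, I would arrange $\ran(\tau_1)\subseteq\dom(\tau_2)$ without changing $\tau$. Then $\#(t,s)\in L_\tau$ if and only if there exists an intermediate tree $r$ with $|r|\leq c\cdot|s|$ such that $(t,r)\in\tau_1$ and $(r,s)\in\tau_2$. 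A deterministic Turing machine can try to enumerate all such $r$ with $|r|\leq c\cdot|s|$ in space $O(|r|)=O(|s|)$, and for each it must check two things: whether $(t,r)\in\tau_{1}$, i.e., whether $\#(t,r)\in L_{\tau_1}$, and whether $(r,s)\in\tau_2$, i.e., whether $\#(r,s)\in L_{\tau_2}$.

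The check $\#(r,s)\in L_{\tau_2}$ is the point where induction applies: $\tau_2\in\NTT^k$, so by the induction hypothesis $L_{\tau_2}\in\family{DSPACE}(n)$, recognizable in space linear in $|r|+|s|=O(|s|)$. For the check $\#(t,r)\in L_{\tau_1}$ with $\tau_1\in\NTT\subseteq\LOGCF\subseteq\family{DSPACE}(n)$, this is recognizable in space linear in $|t|+|r|=O(|t|+|s|)$. Running both subroutines in sequence for each candidate $r$ keeps the total space $O(|t|+|s|)$, since the candidate $r$ itself and the working space of each subroutine are all linear in $|t|+|s|$; the enumeration of candidates $r$ is done by a standard lexicographic counter over trees of bounded size, again in linear space. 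So $L_\tau\in\family{DSPACE}(n)$, completing the induction.

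The main subtlety — not really an obstacle, but the thing to be careful about — is making sure the intermediate tree $r$ can be held in linear space, which is exactly what linear-boundedness (Corollary~\ref{cor:prod}) buys us: without it, $r$ could be exponentially larger than $s$ and the enumeration would blow up the space. A secondary point is that the recursive reduction must not lose the linear bound: each level peels off one $\NTT$ factor and invokes the hypothesis on the remaining $\NTT^k$ factor, but because the intermediate trees are each linear in $|s|$ and the recursion depth $k$ is a fixed constant, the accumulated space stays $O(|t|+|s|)$. It is also worth noting explicitly that determinism of the recognizer is preserved throughout: although the transducers are nondeterministic, membership in $\LOGCF$ and the induction hypothesis both give \emph{deterministic} linear-space recognizers, and the candidate $r$ is supplied by the deterministic enumeration rather than guessed, so no nondeterminism creeps in.
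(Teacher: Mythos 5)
Your proof is correct and follows essentially the same route as the paper: the base case $k=1$ via Theorem~\ref{thm:logcf} and $\LOGCF\subseteq\family{DSPACE}(n)$, and the induction step by making $(\tau_1,\tau_2)$ linear-bounded with Corollary~\ref{cor:prod}(1) and deterministically enumerating all candidate intermediate trees $r$ with $|r|\leq c\cdot|s|$, checking each with the two linear-space subroutines. The extra range restriction $\ran(\tau_1)\subseteq\dom(\tau_2)$ is harmless but not needed.
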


\begin{proof}
The proof is by induction on $k$, with an induction step similar to the one 
in the proof of Theorem~\ref{thm:lintime}.

Let $\tau=\tau_1\circ\tau_2$ such that $\tau_1\in\NTT$ and $\tau_2\in \NTT^k$, $k\geq 1$. 
For a given input string $\#ts$ it has to be checked whether $(t,s)\in \tau$. 
By Corollary~\ref{cor:prod}(1) we may assume that $(\tau_1,\tau_2)$ is linear-bounded.
Hence there is a constant $c\in\nat$ such that for every $(t,s)\in \tau$ 
there is an intermediate tree $r$ such that $|r|\leq c\cdot|s|$. 
To check whether $(t,s)\in \tau$ a deterministic Turing machine systematically enumerates 
all trees~$r$ such that $|r|\leq c\cdot|s|$ (cf. the proof of Theorem~\ref{thm:outspace}).
For each such $r$ it can check in linear space whether $(t,r)\in \tau_1$ by the case $k=1$.
Moreover, by induction it can check in linear space whether $(r,s)\in \tau_2$.
Thus it uses space $O(|t|+|r|) + O(|r|+|s|) = O(|t|+|s|)$. 
\qed
\end{proof}

This result allows us to prove one of our main results, viz. 
that the output languages of $\NTT^k$ are in $\family{DSPACE}(n)$, originally proved in~\cite{Ina}.
It generalizes the main result of~\cite{Bak78} from classical top-down tree transducers 
to tree-walking tree transducers and macro tree transducers. 

\begin{theorem}\label{thm:outdspace}
For every $k\geq 1$, 
\[
\NTT^k(\REGT)\subseteq \family{DSPACE}(n) \text{\quad and \quad} \NMT^k(\REGT)\subseteq \family{DSPACE}(n).
\]
\end{theorem}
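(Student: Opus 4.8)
The plan is to reduce the output-language membership problem to the \emph{translation} membership problem of Theorem~\ref{thm:transcompdspace}, using the same pruning--absorption idea as in the proof of Theorem~\ref{thm:outspace}, but now feeding in the stronger bound $\NTT^k\subseteq\family{DSPACE}(n)$ instead of the bound of Theorem~\ref{thm:linspace}. Since $\tau_2$ will be nondeterministic, we cannot compute $\tau_2(r)$; instead we will directly test the pair $(r,s)$ for membership in $\tau_2$, which is exactly what Theorem~\ref{thm:transcompdspace} provides.

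First I would treat $\NTT^k$. Fix $L\in\REGT$ and $\tau\in\NTT^k$. By Corollary~\ref{cor:prod}(1), $\tau=\tau_1\circ\tau_2$ with $\tau_1\in\NTTP$, $\tau_2\in\NTT^k$, and $(\tau_1,\tau_2)$ linear-bounded, say with constant $c$. Put $L'=\tau_1(L)$; then $L'\in\REGT$ by Lemma~\ref{lem:pru}, and $\tau(L)=\tau_2(L')$. Linear-boundedness gives the key equivalence: for a tree $s$, $s\in\tau(L)$ if and only if there exists a tree $r$ with $|r|\le c\cdot|s|$, $r\in L'$, and $\#rs\in L_{\tau_2}$ (one direction is the linear-bounded witness, the other is immediate). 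Now the deciding deterministic Turing machine, on input $s$, enumerates all trees $r$ with $|r|\le c\cdot|s|$ in, say, lexicographic order on the parenthesized string encoding, keeping only the current candidate $r$ on a work tape (cost $O(|s|)$). For each candidate it checks $r\in L'$ in space $O(|r|)$, since a regular tree language is a (deterministic) context-free language and hence in $\family{DSPACE}(n)$, and it checks $\#rs\in L_{\tau_2}$ in space $O(|r|+|s|)$ by Theorem~\ref{thm:transcompdspace} applied to $\tau_2\in\NTT^k$; both work tapes are reused between candidates. Since $|r|\le c\cdot|s|$, the total space is $O(|s|)$, and the machine accepts iff some candidate passes both tests. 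Hence $\tau(L)\in\family{DSPACE}(n)$, which proves $\NTT^k(\REGT)\subseteq\family{DSPACE}(n)$.

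For $\NMT^k$ I would simply reduce to the case just proved. By Lemma~\ref{lem:mtintt}, $\NMT\subseteq\TTD\circ\TT\circ\NTTP$, and since $\TTD,\TT,\NTTP\subseteq\NTT$ (and using Theorem~\ref{thm:nondetrightcomp} to absorb the pruning factor if one prefers a shorter composition), we get $\NMT\subseteq\NTT^2$, so $\NMT^k\subseteq\NTT^{2k}$. Therefore $\NMT^k(\REGT)\subseteq\NTT^{2k}(\REGT)\subseteq\family{DSPACE}(n)$ by the first part.

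The point I expect to need the most care with---though it is not really a genuine obstacle, since all the heavy machinery is already in place (Theorem~\ref{thm:transcompdspace}, itself resting on $\NTT\circ\TT\subseteq\LOGCF$ from Theorem~\ref{thm:logcf})---is verifying that the deterministic space stays linear: the number of candidate intermediate trees is exponential, but they are enumerated one at a time in space $O(|s|)$ thanks to linear-boundedness bounding each candidate's size by $c\cdot|s|$, and the two linear-space subroutines are run with their tapes reused. This bookkeeping is routine, exactly parallel to the proof of Theorem~\ref{thm:outspace}.
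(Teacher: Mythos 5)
Your proof is correct and follows essentially the same route as the paper: decompose $\tau$ via Corollary~\ref{cor:prod}(1) into a pruning part and a linear-bounded remainder, enumerate candidate intermediate trees of size at most $c\cdot|s|$, check membership in the regular image $\tau_1(L)$, and test the pair $(r,s)$ against $\tau_2$ using Theorem~\ref{thm:transcompdspace}; the reduction of $\NMT^k$ to $\NTT^{2k}$ via Lemma~\ref{lem:mtintt} is also exactly the paper's argument.
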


\begin{proof}
The proof is similar to the one of Theorem~\ref{thm:outspace}. 
Let $L\in\REGT$ and $\tau\in\NTT^k$.
By Corollary~\ref{cor:prod}(1), $\tau=\tau_1\circ\tau_2$ 
such that $\tau_1\in\NTTP$, $\tau_2\in \NTT^k$, 
and $(\tau_1,\tau_2)$ is linear-bounded for some constant~$c$.
Let $L'=\tau_1(L)$, and note that $\tau(L)=\tau_2(L')$ and that 
$L'\in\REGT$ by Lemma~\ref{lem:pru}.
It is straightforward to show that for every $s\in \tau(L)$ there exists $t\in L'$ such that
$(t,s)\in\tau_2$ and $|t|\leq c\cdot|s|$. 
To check whether a given tree~$s$ is in $\tau(L)$, a deterministic Turing machine 
enumerates all input trees~$t$ (of $\tau_2$) such that $|t|\leq c\cdot|s|$.
For each such~$t$ it first checks that $t\in L'$ in space $O(|t|)=O(|s|)$.
Then it uses the algorithm of Theorem~\ref{thm:transcompdspace} to 
check that $(t,s)\in\tau_2$ in space $O(|t|+|s|)=O(|s|)$.

The inclusion for $\NMT^k$ is now immediate from Lemma~\ref{lem:mtintt}.
\qed
\end{proof}

As before, Theorems~\ref{thm:transcompdspace} and~\ref{thm:outdspace}
also hold for \abb{IO} (multi-return) macro tree translations, pebble tree translations, and 
high-level tree translations, which can be 
realized by compositions of \ttt's (see Section~\ref{sec:lsi}).

By the proof of Corollary~\ref{cor:outspace}, 
Theorem~\ref{thm:outdspace} implies that 
\[
y\NTT^k(\REGT)\subseteq \family{DSPACE}(n) \text{\quad and \quad} y\NMT^k(\REGT)\subseteq \family{DSPACE}(n)
\]
for every $k\geq 1$. Hence the \mbox{\abb{OI}-hierarchy} is also contained in $\family{DSPACE}(n)$,
cf. Corollaries~\ref{cor:oik} and~\ref{cor:iok}.

\begin{corollary}\label{cor:oikdet}
For every $k\geq 1$, $\family{OI}(k)\subseteq \family{DSPACE}(n)$.
\end{corollary}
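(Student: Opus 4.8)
The plan is to obtain this as an essentially immediate consequence of Theorem~\ref{thm:outdspace}: all the substantive work was done there, and only two bookkeeping steps remain, namely the embedding of the \abb{OI}-hierarchy into the output string languages of compositions of macro tree transducers, and the passage from output tree languages to their yields.

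First I would combine the two relevant facts. As observed in the last paragraph of~\cite{EngVog88} and recalled just before Corollary~\ref{cor:oik}, for every $k\geq 1$ there is an $m\geq 1$ with $\family{OI}(k)\subseteq y\NMT^m(\REGT)$. On the other side, Theorem~\ref{thm:outdspace} gives $\NMT^m(\REGT)\subseteq \family{DSPACE}(n)$ for every $m\geq 1$. So it suffices to upgrade the latter to the yields, i.e.\ to establish $y\NMT^m(\REGT)\subseteq \family{DSPACE}(n)$ for every $m$, and then chain the two inclusions.

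Second, for that upgrade I would reuse the argument from the proof of Corollary~\ref{cor:outspace}. For the output ranked alphabet of the last transducer in the composition there is a \dttl\ $M$ with $\tau_M(t)=\mon(yt)$. Since $\TTL\subseteq\MT$ (indeed $\TTL\subseteq\TT\subseteq\MT$ by Lemma~\ref{lem:ttvsmt}), appending $M$ stays within compositions of macro tree transducers: for any $W\in y\NMT^m(\REGT)$ the tree language $\mon(W)$ lies in $\NMT^{m+1}(\REGT)$, hence in $\family{DSPACE}(n)$ by Theorem~\ref{thm:outdspace}. Because $\mon$ is an injective map computable in linear space whose image is easily recognized and decoded, a string language $W$ over the corresponding flat alphabet lies in $\family{DSPACE}(n)$ if and only if $\mon(W)$ does; hence $W\in\family{DSPACE}(n)$. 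Therefore $y\NMT^m(\REGT)\subseteq\family{DSPACE}(n)$, and combining this with $\family{OI}(k)\subseteq y\NMT^m(\REGT)$ finishes the proof.

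There is no real obstacle here; the statement is purely a corollary of Theorem~\ref{thm:outdspace}. The only point deserving a sentence of care is the last one: that appending the linear-space-computable, injective yield-encoding $\mon$ neither enlarges the complexity class nor takes us outside the class of macro-tree-transducer-composition output languages --- which is exactly the observation already used in the proof of Corollary~\ref{cor:outspace}.
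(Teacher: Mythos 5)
Your proposal is correct and follows essentially the same route as the paper: the paper derives $y\NMT^m(\REGT)\subseteq\family{DSPACE}(n)$ from Theorem~\ref{thm:outdspace} via exactly the $\mon(yt)$-transducer argument of Corollary~\ref{cor:outspace}, and then invokes the inclusion $\family{OI}(k)\subseteq y\NMT^m(\REGT)$ noted before Corollary~\ref{cor:oik}. Your extra remarks about injectivity and linear-space decodability of $\mon$ are the (implicit) justification the paper relies on as well.
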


Next we consider time complexity. 
Since $\LOGCF\subseteq \PTIME$, it follows from Theorem~\ref{thm:logcf}
that $\NTT \circ \TT \subseteq \PTIME$. This result can be generalized as follows. 

One way to increase the power of the \ttt\ is to give it a more powerful feature of look-around.
For a class $\cL$ of tree or string languages, we define the \ttt\ \mbox{\emph{with~$\cL$~look-around}}
by allowing the \ttt\ to use node tests $T$ such that $\tmark(T)\in\cL$. Similarly we obtain the 
\mht\ with $\cL$ look-around. We now consider in particular the case where $\cL=\PTIME$.
Obviously, (the proof of) the first sentence of Lemma~\ref{lem:mht} is still valid 
for a multi-head \ttt\ $M$ with $\PTIME$ look-around.
Thus, the domain of an \mht\ with $\PTIME$ look-around is in $\PTIME$, and hence, in particular, 
the domain of a \ttt\ with $\PTIME$ look-around is in $\PTIME$.
This implies that Lemma~\ref{lem:nondetttottdl}, and hence Theorem~\ref{thm:nondetrightcomp}, also holds 
if the first transducer has $\PTIME$ look-around.
From the proof of Theorem~\ref{thm:logcf} it now easily follows that 
$\NTTPTIME \circ \TT \subseteq \PTIME$,
where the feature of $\PTIME$ look-around is indicated by a superscript $\family{P}$.
This, in its turn, implies the following variant of Corollary~\ref{cor:mtiologcf}
for (multi-return) \abb{io} macro tree transducers with $\PTIME$ look-around 
(appropriately defined): $\NMTIOPTIME\subseteq \mrNMTPTIME\subseteq \PTIME$.
Examples of tree languages in $\PTIME$ that can be used as look-around are those in $\TT(\REGT)$,
by Corollary~\ref{cor:logcf}, and the tree languages defined by bottom-up tree automata 
with equality and disequality constraints (\cite{BogTis}), 
which can obviously be accepted by a multi-head \ttt. 

In the remainder of this section we show that there are translations in \mbox{$\TT\circ\NTT$}, 
even in $\TTD\circ\NTT$, for which the membership problem is $\family{NP}$-complete.
We will use a reduction of SAT, the satisfiability problem of boolean formulas
(see, e.g.,~\cite{GarJoh}), to such a membership problem.

Let $\Delta=\{\vee,\wedge,\neg,\sv,\se\}$ with  
$\Delta^{(2)}=\{\vee,\wedge\}$, $\Delta^{(1)}=\{\neg,\sv\}$, and $\Delta^{(0)}=\{\se\}$.
Let $\cB$ be the set of all trees over $\Delta$ 
generated by the regular tree grammar with nonterminals $F$ and $V$, initial nonterminal $F$, 
and rules $F\to \vee(F,F)$, $F\to \wedge(F,F)$, $F\to \neg(F)$, $F \to V$, $V\to \sv(V)$, and $V\to \sv(\se)$.
Thus, $\cB$ is the set of all boolean formulas that use boolean
variables of the form $\sv^\ell\se$ for $\ell\geq 1$. 
For a boolean formula $\phi$ we define $\nu(\phi)$
to be the nesting-depth of its boolean operators, i.e., $\nu(\phi)=0$ if $\phi$ is a variable,
$\nu(\vee(\phi_1,\phi_2))=\nu(\wedge(\phi_1,\phi_2))=\max\{\nu(\phi_1),\nu(\phi_2)\}+1$, and 
$\nu(\neg(\phi))=\nu(\phi)+1$. 
For every $m\geq 0$ and $n\geq 1$, let $\cB(m,n)$ be the set of all 
formulas $\phi\in\cB$ such that $\nu(\phi)\leq m$, and 
$\ell\in[1,n]$ for every $\sv^\ell\se$ that occurs in $\phi$.
Thus, the formulas in $\cB(m,n)$ have nesting-depth at most $m$ and 
use at most the variables $\sv\se,\sv\sv\se,\dots,\sv^n\se$.  

The proof of the next lemma is essentially a variant of the one of~\cite[Theorem~3.1]{vLe75}.
Let $\Sigma=\{c,d,0,1,a\}$ with $\Sigma^{(1)}=\{c,d,0,1\}$ and $\Sigma^{(0)}=\{a\}$.

\begin{lemma}\label{lem:leeuw}
There is a translation $\tau\in\FTTDL$ such that, 
for every $m\geq 0$ and every string $w\in\{0,1\}^*$ of length $n\geq 1$, 
the set $\tau(d^mcwa)$ consists of all boolean formulas $\phi\in\cB(m,n)$ 
such that $\phi$ is true when the value of $\sv^\ell\se$ is the $\ell$-th symbol of $w$ 
for every $\ell\in[1,n]$. 
\end{lemma}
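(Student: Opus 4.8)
The plan is to build one nondeterministic local top-down \ttt\ $M=(\Sigma,\Delta,Q,\{q_0\},R)$ which, reading the monadic tree $d^mcwa$ from the root downwards, simultaneously \emph{guesses} a formula $\phi$ and \emph{checks} that it evaluates to true under the assignment ``$\sv^\ell\se\mapsto w_\ell$'', where a bit $1$ is read as true and $0$ as false. A state of $M$ records a phase, ``building the connective skeleton'' versus ``producing a single variable'', together with the truth value $b\in\{0,1\}$ that the subformula currently under construction is required to have; so $Q=\{f_0,f_1,v_0,v_1\}$ and $q_0=f_1$. The point of the prefix $d^m$ is to bound the nesting depth: each connective that $M$ outputs consumes exactly one $d$, and the two copies spawned by a binary connective \emph{both} descend to the same next node, so that on every root-to-leaf branch of $\phi$ the number of $d$'s consumed equals the connective-nesting above that leaf, whence $\nu(\phi)\le m$.

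\textbf{The rules.} On a $d$-node in state $f_b$, $M$ may output a connective and recurse into the unique child, consuming that $d$: $\tup{f_b,d,j}\to\neg(\tup{f_{1-b},\down_1})$; $\tup{f_b,d,j}\to\vee(\tup{f_{b_1},\down_1},\tup{f_{b_2},\down_1})$ for all $b_1,b_2$ with $\max(b_1,b_2)=b$; and $\tup{f_b,d,j}\to\wedge(\tup{f_{b_1},\down_1},\tup{f_{b_2},\down_1})$ for all $b_1,b_2$ with $\min(b_1,b_2)=b$ (so a binary connective has one such rule when it forces both children, three otherwise). Alternatively $M$ may leave the skeleton, at a $d$ or at the $c$: $\tup{f_b,d,j}\to\tup{v_b,\down_1}$ and $\tup{f_b,c,j}\to\tup{v_b,\down_1}$. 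In state $v_b$, $M$ walks to the $w$-part, passing over the remaining $d$'s and the $c$ without output ($\tup{v_b,d,j}\to\tup{v_b,\down_1}$, $\tup{v_b,c,j}\to\tup{v_b,\down_1}$); on a bit $x$ of $w$ it either emits one $\sv$ and moves on ($\tup{v_b,x,j}\to\sv(\tup{v_b,\down_1})$ for $x\in\{0,1\}$), or, only if that bit equals $b$, declares the current position its chosen variable and stops, emitting the final $\sv$ and the leaf $\se$: $\tup{v_b,b,j}\to\sv(\se)$ (that is, $\tup{v_1,1,j}\to\sv(\se)$ and $\tup{v_0,0,j}\to\sv(\se)$). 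The last kind of rule is a general rule, which is permitted and whose replacement by ordinary rules preserves localness and top-downness (cf.\ the discussion after Lemma~\ref{lem:mutdis}). Since $a$ has no outgoing rule in either mode, a surviving $v_b$-copy has necessarily emitted some variable $\sv^\ell\se$ with $\ell\in[1,n]$ and $w_\ell=b$.

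\textbf{Correctness and finitariness.} The lemma will follow from two structural inductions around one invariant: a copy in state $f_b$ on the subtree $d^{m-i}cwa$ (with $0\le i\le m$) produces exactly the $\psi\in\cB$ with $\nu(\psi)\le m-i$ that evaluate to $b$ under $w$, and a copy in state $v_b$ produces exactly the variables $\sv^\ell\se$ with $\ell\in[1,n]$ and $w_\ell=b$ --- precisely the variables of value $b$. For ``$\tau_M(d^mcwa)\subseteq\{\ldots\}$'' one verifies the invariant by induction on derivation length; the initial configuration is the $f_1$-copy at the root (i.e.\ $i=0$, $b=1$), so every $\phi\in\tau_M(d^mcwa)$ is in $\cB(m,n)$ and true under $w$. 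For ``$\supseteq$'' one fixes $\phi\in\cB(m,n)$ true under $w$ and constructs a derivation by induction on subformulas $\psi$, matching the invariant: if $\psi$ has a connective at the top and value $b$, and the copy is ``$i$ $d$'s down'', apply the corresponding connective rule with the children's required values read off from a satisfying evaluation of $\psi$ (forced for $\neg$; for $\vee$ respectively $\wedge$ pick any pair with $\max$ respectively $\min$ equal to $b$ consistent with $\psi$); if $\psi=\sv^\ell\se$ has value $b$, switch to $v_b$, descend to $w$, emit $\ell-1$ copies of $\sv$, and finish with $\sv(\se)$ at position $\ell$, which is legal because $w_\ell=b$ and $\ell\le n$. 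Finally $M$ is local and top-down by construction, and it is finitary: on any input at most $2^i$ copies are alive once $i$ $d$'s have been consumed and each emits a tree of size at most $n+1$, so every output tree has size at most $2^m(n+2)$ and there are finitely many; hence $\tau_M\in\FTTDL$.

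\textbf{The main obstacle.} The only genuinely delicate point is the joint design that makes a finite-state, one-pass, top-down device both \emph{generate} $\phi$ and \emph{verify} its value at once: the linear prefix $d^m$ must be re-used, unchanged and simultaneously, as a shared nesting-depth budget by the up-to-exponentially-many parallel copies, and each copy must check its own chosen variable against $w$ using downward moves only. Once one notices that ``depth in the $d$-chain'' can be identified with ``depth in $\phi$'', and that a variable $\sv^\ell\se$ can be produced exactly while the head scans the first $\ell$ bits of $w$, the rule set and the two inductions are routine, and no look-around, stay-instructions, or bottom-up moves are needed.
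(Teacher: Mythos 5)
Your construction is correct and is essentially the paper's own proof: a nondeterministic local top-down transducer that tracks the required truth value of the current subformula in its state, uses the prefix $d^m$ as a shared nesting-depth budget (one $d$ per connective, both branches descending to the same node), and emits a variable $\sv^\ell\se$ by scanning to position $\ell$ of $w$ and terminating only when $w_\ell$ matches the required value. The only differences are bookkeeping (you split the paper's two states $q_0,q_1$ into separate formula- and variable-phase states $f_b,v_b$, and place the $\sv$-outputs slightly differently), so the argument is the same as in the paper.
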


\begin{proof}
We construct the top-down local \ttt\ $M=(\Sigma,\Delta,\{q_0,q_1\},\{q_1\},R)$. 
Note that the initial state is $q_1$.
The boolean operations $i\vee j$, $i\wedge j$, and $\neg\,i$ on $\{0,1\}$ are defined as usual, 
where $0$ stands for `false' and $1$ for `true'. 
Since the child numbers of the nodes of the input tree will be irrelevant, 
we omit them from the left-hand sides of the rules of $M$. 
The only instruction used in the right-hand sides of the rules is $\alpha=\down_1$.
The rules are the following, for every $i,j\in\{0,1\}$.
\[
\begin{array}{llllll}
\tup{q_{i\vee j},d} & \to & \vee(\tup{q_i,\alpha},\tup{q_j,\alpha}) & 
\tup{q_i,c} & \to & \sv(\tup{q_i,\alpha}) \\[0.4mm]
\tup{q_{i\wedge j},d} & \to & \wedge(\tup{q_i,\alpha},\tup{q_j,\alpha}) \quad & 
\tup{q_i,j} & \to & \sv(\tup{q_i,\alpha}) \\[0.4mm]
\tup{q_{\neg\,i},d} & \to & \neg(\tup{q_i,\alpha}) & 
\tup{q_i,i} & \to & \se \\[0.4mm]
\tup{q_i,d} & \to & \tup{q_i,\alpha} &  &  & 
\end{array}
\]
Let $u$ be the node of the input tree $t=d^mcwa$ with label $c$.
After consuming~$d^m$, the \ttt\ $M$ has nondeterministically generated any output form 
that is a boolean formula $\phi$ of nesting-depth at most $m$ 
and with the two configurations $\tup{q_i,u}$ as variables, such that $\phi$ is true when 
the value of $\tup{q_i,u}$ is $i$. 
For instance, in the first step of that computation $M$ consumes $d$ 
and changes the initial output form $\tup{q_1,\rt_t}$ into one of the output forms 
$\vee(\tup{q_1,x},\tup{q_0,x})$, $\vee(\tup{q_0,x},\tup{q_1,x})$, $\vee(\tup{q_1,x},\tup{q_1,x})$, $\wedge(\tup{q_1,x},\tup{q_1,x})$, $\neg(\tup{q_0,x})$, or $\tup{q_1,x}$, 
where $x$ is the child of $\rt_t$.
After that, each $\tup{q_i,u}$ generates any variable $\sv^\ell\se$ such that
the $\ell$-th symbol of $w$ is $i$. 
Note that since $i$ and $j$ are not necessarily distinct, 
$M$ has in particular the rule $\tup{q_i,i} \to \sv(\tup{q_i,\alpha})$ for every $i\in\{0,1\}$.
Thus, $q_i$ can nondeterministically choose any occurrence of $i$ in $w$ 
to output $\se$ and end the computation.
\qed
\end{proof}

Applying the translation $\tau$ of Lemma~\ref{lem:leeuw} to the regular tree language $L$
consisting of all trees $d^mcwa$ such that $m\geq 0$ and $w$ is a nonempty string over $\{0,1\}$, 
produces the set $\tau(L)$ of all satisfiable formulas in $\cB$. 
Thus, since the membership problem for that set is $\family{NP}$-complete, 
we obtain the following corollary that was proved in~\cite{Rou73}, 
as already mentioned after Corollary~\ref{cor:oik}.
Note that it is easy to prove
that $\FTTD(\REGT)\subseteq y\FTTD(\REGT)$: just change every output rule 
$\tup{q,\sigma,j,T}\to \delta(\tup{q_1,\alpha_1},\dots,\tup{q_k,\alpha_k})$ 
into the (general) rule
$\tup{q,\sigma,j,T}\to \omega_{k+1}(\delta,\tup{q_1,\alpha_1},\dots,\tup{q_k,\alpha_k})$
where $\omega_{k+1}$ has rank $k+1$ (and $\delta$ now has rank~$0$).

\begin{corollary}
There is an $\family{NP}$-complete language in $\FTTD(\REGT)$,
and hence there is one in $y\FTTD(\REGT)$.
\end{corollary}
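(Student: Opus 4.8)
The plan is to instantiate Lemma~\ref{lem:leeuw}: take the translation $\tau\in\FTTDL$ and the regular tree language $L$ consisting of all monadic trees $d^mcwa$ with $m\ge 0$ and $w\in\{0,1\}^+$ (this $L$ is regular since $T_\Sigma$ is in bijection with $\{c,d,0,1\}^*$ and $d^*c\{0,1\}^+$ is a regular string language). Since $\FTTDL\subseteq\FTTD$, we have $\tau(L)\in\FTTDL(\REGT)\subseteq\FTTD(\REGT)$, so it suffices to show that $\tau(L)$ is an $\family{NP}$-complete language.

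First I would show that $\tau(L)$ is exactly the set of satisfiable formulas in $\cB$. By Lemma~\ref{lem:leeuw}, $\phi\in\tau(L)$ iff there are $m\ge 0$ and $w\in\{0,1\}^+$ with $\phi\in\cB(m,|w|)$ such that $\phi$ is true under the assignment sending $\sv^\ell\se$ to the $\ell$-th symbol of $w$. For the forward direction this directly exhibits a satisfying assignment of $\phi$. Conversely, if $\phi\in\cB$ is satisfiable, let $n\ge 1$ be at least the largest $\ell$ with $\sv^\ell\se$ occurring in $\phi$, let $m=\nu(\phi)$, and let $w\in\{0,1\}^n$ encode any assignment of $\sv\se,\dots,\sv^n\se$ that satisfies $\phi$ (chosen arbitrarily on variables not in $\phi$); then $\phi\in\cB(m,n)$ and $\phi$ holds under $w$, hence $\phi\in\tau(d^mcwa)\subseteq\tau(L)$. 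Thus $\tau(L)=\{\phi\in\cB:\phi\text{ satisfiable}\}$.

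Next I would establish $\family{NP}$-completeness. Membership in $\family{NP}$ is immediate: $\FTTDL\subseteq\FTT\subseteq\NTT$, so $\tau(L)\in\NTT(\REGT)\subseteq\family{NSPACE}(n)\wedge\NP\subseteq\NP$ by Theorem~\ref{thm:nondetoutspace}. For hardness I would reduce SAT for boolean formulas (\cite{GarJoh}), after binarizing the connectives if necessary, to membership in $\tau(L)$ via the map $g$ that rewrites a formula on variables $v_1,\dots,v_N$ into the term over $\Delta$ obtained by replacing each occurrence of $v_i$ with $\sv^i\se$. This $g$ is log-space (hence polynomial-time) computable, it increases the size only by a factor $O(N)$, and it preserves satisfiability; combined with the previous paragraph, $\psi$ is satisfiable iff $g(\psi)\in\tau(L)$. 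Hence $\tau(L)$ is $\family{NP}$-hard, and therefore $\family{NP}$-complete, which proves the first statement of the corollary. For the second statement I would apply the construction displayed just before the corollary: replacing every output rule $\tup{q,\sigma,j,T}\to\delta(\tup{q_1,\alpha_1},\dots,\tup{q_k,\alpha_k})$ of the local top-down transducer for $\tau$ by $\tup{q,\sigma,j,T}\to\omega_{k+1}(\delta,\tup{q_1,\alpha_1},\dots,\tup{q_k,\alpha_k})$ (with $\omega_{k+1}$ of rank $k+1$ and $\delta$ now of rank $0$) yields a finitary local top-down transducer $\tau'$ such that $y\,\tau'(t)$ is the preorder string of labels of $\tau(t)$, an injective log-space (and log-space invertible) recoding of $\tau(t)$. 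Hence $y\,\tau'(L)\in y\FTTD(\REGT)$ is log-space interreducible with $\tau(L)$ and is therefore also $\family{NP}$-complete.

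I do not expect a genuine obstacle here; the only points needing care are (i) phrasing SAT so that the coordinate-wise rewriting $v_i\mapsto\sv^i\se$ is a legitimate polynomial-time many-one reduction (the size blow-up being a factor $N$, hence polynomial), and (ii) being precise that the tree-to-preorder-string encoding used to pass from $\FTTD(\REGT)$ to $y\FTTD(\REGT)$ is an invertible log-space coding, so that $\family{NP}$-completeness is transported. Both are routine verifications.
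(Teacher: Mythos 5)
Your proposal is correct and follows essentially the same route as the paper: apply the translation $\tau$ of Lemma~\ref{lem:leeuw} to the regular tree language $L$ of all trees $d^mcwa$, observe that $\tau(L)$ is exactly the set of satisfiable formulas in $\cB$, and pass to yields via the $\omega_{k+1}$-recoding. The paper merely cites the $\family{NP}$-completeness of this satisfiability set (via~\cite{Rou73,GarJoh}) where you spell out the reduction $v_i\mapsto\sv^i\se$ and the $\family{NP}$ upper bound explicitly; these added details are all correct.
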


We now prove the existence of a translation in $\TTD\circ\NTT$ 
for which the membership problem is $\family{NP}$-complete. 
Recall that, for a tree translation $\tau$, we denote by $L_\tau$ the tree language 
$\{\#(t,s)\mid (t,s)\in\tau\}$.

\begin{theorem}\label{thm:npcomplete}
There is a translation $\tau\in \TTDL\circ \TTL\circ \NTTPL\subseteq \TTD\circ \FTT$ 
such that $L_\tau$ is $\family{NP}$-complete.
\end{theorem}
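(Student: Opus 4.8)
The idea is to take a translation $\tau$ whose image on a fixed \emph{regular} set of template inputs is exactly the set of satisfiable boolean formulas over $\Delta$, so that membership in $L_\tau$ is both reducible to and reducible from satisfiability. Concretely, let $?$ be a new unary symbol added to the input alphabet $\Sigma$ of Lemma~\ref{lem:leeuw}, and let $\tau$ be the translation that maps the template tree $d^mc\,?^na$ (for $m\ge 0$, $n\ge 1$) to the set of all $\phi\in\cB(m,n)$ that are satisfiable, and is undefined elsewhere. Equivalently $\tau=\rho\circ\tau_{\mathrm{leeuw}}$, where $\tau_{\mathrm{leeuw}}\in\FTTDL$ is the translation of Lemma~\ref{lem:leeuw} and $\rho$ is the nondeterministic local relabeling replacing each $?$ independently by $0$ or $1$ and leaving $c,d,a$ unchanged: indeed $\rho(d^mc\,?^na)=\{d^mcwa\mid w\in\{0,1\}^n\}$, and by Lemma~\ref{lem:leeuw} $\bigcup_{w\in\{0,1\}^n}\tau_{\mathrm{leeuw}}(d^mcwa)$ is precisely the set of $\phi\in\cB(m,n)$ true under some assignment, i.e.\ the satisfiable ones. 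Since each $\cB(m,n)$ is finite and nonempty, $\tau$ is finitary and $\dom(\tau)=\{d^mc\,?^na\mid m\ge0,\,n\ge1\}$ is regular.

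The first main step is to realize $\tau$ as a member of $\TTDL\circ\TTL\circ\NTTPL$, by pushing \emph{all} nondeterminism (both that of $\rho$ and that of $\tau_{\mathrm{leeuw}}$) into a single final \emph{pruning} step. To this end I build, deterministically, an ``and--or tree'' $E_{m,n}$ recording every choice: its root is an OR-node with one subtree per assignment $w\in\{0,1\}^n$, and the subtree for $w$ is the and--or unfolding of ``all $\phi\in\cB(m,n)$ true under $w$''---a $\vee$-alternative being an OR over the pairs $(i,j)$ with $i\vee j$ the required value, each leading to an AND-node destined to carry the label $\vee$ over the unfoldings at required values $i$ and $j$ with depth budget decreased by one (symmetrically for $\wedge$ and $\neg$), and a variable-alternative being an OR over the indices $\ell$ with $w_\ell$ equal to the required value, each leading to the chain $\sv^\ell\se$. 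Reading the $d$- and $?$-parts of the input drives the depth-$m$ recursion and the $2^n$-fold branching over $w$, and this (single-)exponential size increase is within the reach of a composition of two deterministic tree-walking tree transducers: a top-down one producing a binary structure whose shape records each candidate $w$ (first spelling out the $d^m$-budget, then branching over the $?$'s), followed by a walking one that, at the $w$-leaf, recovers $w$ and $m$ by walking back up the recorded structure (in the spirit of Example~\ref{exa:exp}) and emits the unfolding above. Finally a nondeterministic pruning local \ttt\ prunes $E_{m,n}$: at an OR-node it uses a move rule selecting one child (deleting the node and its siblings), and at an AND-node it uses an output rule relabeling the node by the intended connective and keeping its children in order. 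By construction the prunings of the $w$-subtree are exactly the formulas in $\cB(m,n)$ true under $w$, and taking the union over $w$ gives exactly the satisfiable formulas in $\cB(m,n)$; since $w$ is committed to once, at the top of $E_{m,n}$, the values required at different occurrences of a variable are automatically consistent. This gives $\tau\in\TTDL\circ\TTL\circ\NTTPL$. Moreover $\TTDL\circ\TTL\circ\NTTPL\subseteq\TTD\circ\FTT$: a translation in $\TTL$ is a partial function, hence finitary, and by Theorem~\ref{thm:nondetrightcomp} its composition with a pruning translation stays finitary and in $\NTT$, hence in $\FTT$.

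It then remains to show $L_\tau$ is $\family{NP}$-complete. Membership in $\family{NP}$ is immediate: $\tau\in\TTD\circ\FTT\subseteq\NTT^2$, so $L_\tau$ lies in the class $\NTT^2(\REGT)$ of output languages and hence in $\family{NP}$ by Theorem~\ref{thm:nondetoutspace} (alternatively: on input $\#(u,v)$ one checks in polynomial time that $u=d^mc\,?^na$ and $v\in\cB(m,n)$, then guesses an $n$-bit assignment and evaluates $v$). For $\family{NP}$-hardness I reduce SAT: given a boolean formula $\psi$ with $n\ge 1$ variables, rename its variables to $\sv^1\se,\dots,\sv^n\se$, obtaining $\mathrm{enc}(\psi)\in\cB(m,n)$ with $m=\nu(\psi)$, and observe that $\psi$ is satisfiable iff $\mathrm{enc}(\psi)$ is; the map $\psi\mapsto\#\bigl(d^mc\,?^na,\ \mathrm{enc}(\psi)\bigr)$ is computable in logarithmic space, and its value lies in $L_\tau$ iff $\mathrm{enc}(\psi)$, equivalently $\psi$, is satisfiable. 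Hence $L_\tau$ is $\family{NP}$-complete, recovering the result of~\cite{Rou73} mentioned after Corollary~\ref{cor:oik}.

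The step I expect to be hardest is the construction of $E_{m,n}$ in the second paragraph: arranging, with only finite control and under the top-down / walking / pruning restrictions, that the two deterministic transducers actually produce the assignment-indexed and--or unfolding---in particular recovering the committed $w$ (and the depth budget $m$) at each variable gadget by walking the recorded branching structure rather than storing it---and that the pruning local \ttt\ then yields \emph{exactly} the satisfiable formulas in $\cB(m,n)$, with no spurious unsatisfiable ones; the latter is precisely why $w$ must be fixed at the top of $E_{m,n}$ instead of occurrence by occurrence, since an occurrencewise choice would wrongly admit formulas such as $\wedge(\sv^\ell\se,\neg\sv^\ell\se)$.
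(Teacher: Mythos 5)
Your proposal is correct and follows essentially the same route as the paper: a deterministic top-down transducer that enumerates all assignments in a branching structure, a deterministic walking transducer that produces an and--or tree of all derivations of Lemma~\ref{lem:leeuw}'s formula generator (the paper packages this as the ``$+$''/\SET\ trick applied to a nondeterministic local \ttt), a final nondeterministic pruning step that resolves the or-choices, and a log-space reduction from SAT --- including the crucial observation that the assignment is committed to once at the top of the structure so that all occurrences of a variable stay consistent. The only repair needed is that the or-node at a variable gadget cannot have one child per index $\ell\in[1,n]$ (output symbols have bounded rank), so that choice must be distributed as a binary alternative at each step of the upward walk, exactly as in the paper's combination of the rules $\tup{q_i,i}\to\sv(\tup{q_i,\up})$ and $\tup{q_i,i}\to\se$.
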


\begin{proof}
The inclusion $\TTL\circ \NTTPL\subseteq \FTT$ is immediate from Lemma~\ref{lem:nondetttottdl}.
We first describe a translation $\tau\in \TTDL\circ \FTTL$ 
such that $L_\tau$ is $\family{NP}$-complete.
Let $\Gamma=\{a,b,c,d,e\}$ with $\Gamma^{(1)}=\{a,b,c,d\}$ and $\Gamma^{(0)}=\{e\}$. 
The translation $\tau\subseteq T_\Gamma\times T_\Delta$ transforms each tree
$t=ab^ncd^me$ into all satisfiable boolean formulas in $\cB(m,n)$.
This will be realized by the composition of two \ttt's $M_1$ and $M_2$ such that 
the deterministic \ttt\ $M_1$ transforms $t$ into a tree $s$ 
of which the path language\footnote{The path language of a tree $s\in T_\Omega$ 
consists of all strings in $\Omega^*$ that are obtained by walking along a path 
from the root of $s$ to one of its leaves, 
writing down the labels of the nodes of that path from left to right.
} 
consists of all strings $awcd^me$ with $w\in\{0,1\}^*$ of length~$n$,
and $M_2$ nondeterministically chooses a leaf of $s$ and then 
walks back to the root of $s$ while 
simulating the transducer $M$ of (the proof of) Lemma~\ref{lem:leeuw} 
on the tree $d^mcwa\in T_\Sigma$. 
Thus, $M_1$~provides all possible valuations of the variables $\sv\se,\sv\sv\se,\dots,\sv^n\se$
and $M_2$ chooses one such valuation and produces all formulas in $\cB(m,n)$
that are true for that valuation.

Let $\Omega=\{a,0,1,c,d,e\}$ with $\Omega^{(2)}=\{a,0,1\}$, $\Omega^{(1)}=\{c,d\}$, 
and $\Omega^{(0)}=\{e\}$.
We define $\tau=\tau_{M_1}\circ\tau_{M_2}\subseteq T_\Gamma\times T_\Delta$ 
where $M_1$ and $M_2$ are the following \ttt's.
The deterministic \ttdl\ $M_1=(\Gamma,\Omega,\{q,q_0,q_1,p\},\{q\},R_1)$ 
has the following rules, for $i\in\{0,1\}$ and $\alpha=\down_1$. 
\[
\begin{array}{llllll}
\tup{q,a,0} & \to & a(\tup{q_0,\alpha},\tup{q_1,\alpha}) & 
\tup{p,d,1} & \to & d(\tup{p,\alpha}) \\[0.4mm]
\tup{q_i,b,1} & \to & i(\tup{q_0,\alpha},\tup{q_1,\alpha}) \quad & 
\tup{p,e,1} & \to & e \\[0.4mm]
\tup{q_i,c,1} & \to & c(\tup{p,\alpha}) &  &  & 
\end{array}
\]
It should be clear that for an input tree $ab^ncd^me$, with $m\geq 0$ and $n\geq 1$, 
the path language of the tree $\tau_{M_1}(ab^ncd^me)$ consists of all strings $awcd^me$ 
with $w\in\{0,1\}^*$ of length $n$.

The \ttl\ $M_2=(\Omega,\Delta,Q,Q_0,R_2)$ 
has states $Q=\{q_2,q_0,q_1\}$ and $Q_0=\{q_2\}$. 
On an input tree $\tau_{M_1}(ab^ncd^me)$,
it walks nondeterministically in state $q_2$ from the root to some leaf (without producing output), 
moves to the parent of that leaf, 
and then simulates the transducer $M$ of Lemma~\ref{lem:leeuw} on the 
tree $d^mcwa\in T_\Sigma$ while walking back to the root. 
It starts that simulation in the state $q_1$ of $M$,
and then uses the rules of $M$ with $\alpha=\up$. 

With this definition of $M_1$ and $M_2$, it follows from Lemma~\ref{lem:leeuw} that 
the set $\tau(ab^ncd^me)$ consists of all boolean formulas 
$\phi\in\cB(m,n)$ such that $\phi$ is satisfiable. 
Thus, for a formula $\phi\in\cB(m,n)$, $\phi$ is satisfiable if and only if 
$\#(ab^ncd^me,\phi)$ is in~$L_\tau$. This shows that satisfiability is reducible to membership in~$L_\tau$,
because the nesting-depth $m$ of $\phi$ and the number $n$ of variables it uses, 
can easily be computed from any $\phi\in\cB$ in polynomial time. 

We finally show that $\tau_{M_2}\in \TTL\circ \NTTPL$, by a standard technique
(see, e.g.,~\cite[Section~6.1]{EngVog85}). In fact, we will show 
that $\tau_{M_2}\in \TTL\circ \SET$, cf. the proof of Lemma~\ref{lem:mtintt}.
Let $+$ be a new symbol of rank~2, and $\theta$ a new symbol of rank~0. 
Let $M_2'$ be the deterministic \ttl\ with output alphabet $\Delta\cup\{+,\theta\}$
that is obtained from $M_2$ as follows. For every triple $\tup{q,\omega,j}$ such that $q\in Q$, $\omega\in\Omega$,
and $j\in[0,\m_\Omega]$, if $\tup{q,\omega,j}\to \zeta_1, \dots,\tup{q,\omega,j}\to \zeta_r$ 
are all the rules of $M_2$ with left-hand side $\tup{q,\omega,j}$, then $M'_2$ has the rule
$\tup{q,\omega,j}\to +(\zeta_1,+(\zeta_2,\zeta_3))$ if $r=3$, the rule 
$\tup{q,\omega,j}\to +(\zeta_1,\zeta_2)$ if $r=2$, the rule 
$\tup{q,\omega,j}\to \zeta_1$ if $r=1$, and the rule $\tup{q,\omega,j}\to \theta$ if $r=0$.
Let $M_3$ be the pruning \ttt\ 
with one state~$p$ and rules $\tup{p,\delta,j}\to \delta(\tup{p,\down_1},\dots,\tup{p,\down_k})$
for every $\delta\in\Delta^{(k)}$, plus the rules $\tup{p,+,j}\to \tup{p,\down_1}$ and 
$\tup{p,+,j}\to \tup{p,\down_2}$ (for every child number~$j$). 
Since $M_2$ first moves from the root to a leaf, and then moves back to the root, 
it does not have infinite computations. From that it should be clear that 
$\tau_{M_2}= \tau_{M'_2}\circ\tau_{M_3}$. 
\qed
\end{proof}

\begin{corollary}
There is a translation $\tau\in \NMT$ 
such that $L_\tau$ is $\family{NP}$-complete.
\end{corollary}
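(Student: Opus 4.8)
The plan is to get the witnessing translation for free from Theorem~\ref{thm:npcomplete}, by observing that the translation $\tau$ built there already lies in $\NMT$. The key point is that the language $L_\tau=\{\#(t,s)\mid (t,s)\in\tau\}$ depends only on the translation $\tau$ itself and not on the transducers realizing it; hence, once we know $\tau\in\NMT$, the $\family{NP}$-completeness of $L_\tau$ is literally the statement of Theorem~\ref{thm:npcomplete}, with nothing left to prove about hardness or about membership in $\family{NP}$.

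Concretely, first I would recall from Theorem~\ref{thm:npcomplete} that the witness satisfies $\tau\in\TTDL\circ\TTL\circ\NTTPL\subseteq\TTD\circ\FTT$. Then I would invoke the identity $\NMT=\TTD\circ\FTT$ recorded in the paragraph following Lemma~\ref{lem:mtintt} (the finitary analogue of $\MT=\TTD\circ\TT$ from Lemma~\ref{lem:ttvsmt}), which yields $\tau\in\NMT$ at once. For definiteness, membership $L_\tau\in\family{NP}$ can also be seen directly: $\NMT\subseteq\FTT^2\subseteq\NTT^2$ by Lemma~\ref{lem:mtintt}, and $\NTT^k$ is contained in $\family{NP}$ as observed in Section~\ref{sec:transcomp}; but, as noted, this is not even needed, since $L_\tau$ is the same language as in Theorem~\ref{thm:npcomplete}.

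The one thing to make airtight is the inclusion $\TTD\circ\FTT\subseteq\NMT$, which the text records without proof. I see two routes. The first is to adapt the proof of Lemma~\ref{lem:ttvsmt}: that argument simulates a composition of a deterministic top-down \ttt\ with a \ttt\ inside a macro tree transducer, and the only extra observation needed is that when the second transducer is finitary the resulting \mt\ is a bona fide macro tree transducer realizing a finitary translation (the reverse inclusion $\NMT\subseteq\TTD\circ\FTT$ being exactly Lemma~\ref{lem:mtintt}). The second, and perhaps cleaner, route tracks the concrete decomposition from Theorem~\ref{thm:npcomplete}: there $\tau=\tau_{M_1}\circ\tau_{M_2'}\circ\tau_{M_3}$ with $M_1,M_2'$ deterministic local top-down and $M_3$ a local pruning \ttt\ that merely copies $\Delta$-labelled nodes and at each binary choice marker selects one subtree, so $\tau_{M_3}\in\SET$; since $\TTD$ is closed under composition (Theorem~\ref{thm:rightcomp}) and $\TTD\subseteq\MT$ (Lemma~\ref{lem:ttvsmt}), we get $\tau\in\MT\circ\SET$, and then $\NMT=\tMT\circ\SET$, combined with the facts that every translation in $\MT$ is the restriction to a regular tree language of one in $\tMT$ (\cite[Theorem~6.18]{EngVog85}) and that $\NMT$ is closed under regular domain restriction (\cite[Theorem~6.15]{EngVog85}), gives $\tau\in\NMT$. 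Either way the bookkeeping is routine; the only real choice is which of these two arguments to spell out.
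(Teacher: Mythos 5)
Your overall strategy is sound and your second route is essentially the paper's argument, but both routes as written have slips worth flagging. The paper's proof is two lines: $\TTDL\circ\TTL\subseteq\MT$ by Lemma~\ref{lem:ttvsmt} (since $\TTDL\circ\TTL\subseteq\TTD\circ\TT=\MT$), and $\MT\circ\NTTPL\subseteq\NMT$ by \cite[Theorem~7.6(3)]{EngVog85}, so the witness of Theorem~\ref{thm:npcomplete} is already in $\NMT$. Your first route leans on the equality $\NMT=\TTD\circ\FTT$, but the paper only \emph{proves} the inclusion $\NMT\subseteq\TTD\circ\FTT$ (Lemma~\ref{lem:mtintt}) and explicitly remarks that the converse ``can be shown\dots but will not be needed''; so it is not available as a black box, and your one-sentence sketch of how to establish $\FTT\subseteq\NMT$ (adapting Lemma~\ref{lem:ttvsmt} to the finitary nondeterministic case) glosses over the real work, namely a finitary-nondeterministic analogue of $\TTL\subseteq\MT$. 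Your second route is the right one, but note that $M_2'$ is \emph{not} top-down: it is a deterministic \emph{local} \ttt\ that walks from a leaf back up to the root, so closure of $\TTD$ under composition (Theorem~\ref{thm:rightcomp}) is not the applicable tool. What you actually need is exactly $\TTDL\circ\TTL\subseteq\MT$, which Lemma~\ref{lem:ttvsmt} gives directly. From $\tau\in\MT\circ\SET\subseteq\MT\circ\NTTPL$ the paper then finishes in one step with the cited closure result, whereas your detour through $\NMT=\tMT\circ\SET$ plus regular domain restriction also works but invokes closure of $\NMT$ under regular look-ahead, which the paper only cites for $\MT$; the direct citation is cleaner.
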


\begin{proof}
By Lemma~\ref{lem:ttvsmt}, $\TTDL\circ \TTL\subseteq\MT$.
Moreover, by~\cite[Theorem~7.6(3)]{EngVog85}, $\MT\circ\NTTPL\subseteq \NMT$.
Hence the translation $\tau$ of Theorem~\ref{thm:npcomplete} is in $\NMT$.
\qed
\end{proof}

Since $\NMT\subseteq \family{MT$^2_\text{\abb{IO}}$}$ 
by~\cite[Theorem~6.10]{EngVog85}, this also shows that 
there is a translation $\tau\in\family{MT$^2_\text{\abb{IO}}$}$
such that $L_\tau$ is $\family{NP}$-complete, cf. Corollary~\ref{cor:mtiologcf}.

\section{Forest Transducers}\label{sec:forest}

Whereas we have considered ranked trees until now, i.e., trees over a ranked alphabet,
XML documents naturally correspond to unranked trees or \emph{forests}, 
over an ordinary unranked alphabet.  
For that reason we now consider transducers that transform forests into forests.
Rather than generalizing the \ttt\ to a ``forest-walking forest transducer'', 
we take the equivalent, natural approach of letting the \ttt\ transform representations
of forests by (ranked) trees, cf.~\cite{MilSucVia03} and~\cite[Section~11]{EngHooSam}.  

For an ordinary (unranked) alphabet $\Sigma$ the set $F_\Sigma$ of forests 
over $\Sigma$ is the language generated by the context-free grammar with nonterminals $F$ and $T$,
initial nonterminal $F$, set of terminals $\Sigma\cup \{[\,,]\}$, 
where $\{[\,,]\}$ is the set consisting of the left and right square bracket,
and rules $F\to \epsilon$, $F\to TF$, and $T\to \sigma[F]$ for every $\sigma\in\Sigma$.
Thus, intuitively, a forest is a sequence of unranked trees, and an unranked tree is of the form
$\sigma[t_1\cdots t_n]$ where each $t_i$ is an unranked tree. Note that 
every forest $f\in F_\Sigma$ can be uniquely written as 
$f=\sigma[f_1]f_2$ with $\sigma\in\Sigma$ and $f_1,f_2\in F_\Sigma$. 

As usual, forests can be encoded as binary trees. 
With $\Sigma$ we associate the ranked alphabet $\Sigma_e =\Sigma\cup\{e\}$
where $e$ has rank~0 and every $\sigma\in\Sigma$ has rank~2. 
The mapping $\enc_\Sigma: F_\Sigma \to T_{\Sigma_e}$ is defined as follows. 
The encoding of the empty forest is $\enc_\Sigma(\epsilon)=e$, and recursively, 
the encoding of a forest $f=\sigma[f_1]f_2$ is 
$\enc_\Sigma(f)=\sigma(\enc_\Sigma(f_1),\enc_\Sigma(f_2))$. 
The mapping $\enc_\Sigma$ is a bijection, and the inverse decoding is denoted by $\dec_\Sigma$. 
Let $\Enc$ and $\Dec$ denote the classes of encodings $\enc_\Sigma$ and decodings $\dec_\Sigma$,
respectively, for all alphabets $\Sigma$. 
We define $\NFT=\Enc\circ\NTT\circ \Dec$ to be the class of \emph{\ttt\ forest translations}.
Thus, a \ttt\ forest translation is of the form 
$\tau = \enc_\Sigma \circ \tau_M \circ \dec_\Delta$ where $\Sigma$ and $\Delta$ are alphabets and 
$M$ is a \ttt\ with input alphabet $\Sigma_e$ and output alphabet $\Delta_e$, 
which in this context can be called a \ttt\ forest transducer. 
We first restrict attention to deterministic \ttt\ forest transducers, 
i.e., to the class $\FT=\Enc\circ\TT\circ \Dec$.

The next simple lemma shows that the encodings of compositions are the compositions of encodings 
(of deterministic \ttt's).

\begin{lemma}\label{lem:fttt}
For every $k\geq 1$, $\FT^k = \Enc\circ\TT^k\circ \Dec$.
\end{lemma}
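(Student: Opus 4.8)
The plan is to prove both inclusions separately, relying on the fact that encodings and decodings are mutually inverse bijections and that the classes $\TT^k$ and $\Enc\circ\TT^k\circ\Dec$ are both closed under composition in the expected way. First I would observe the base case is immediate: for $k=1$ the equation $\FT = \Enc\circ\TT\circ\Dec$ is just the definition of $\NFT$ restricted to deterministic \ttt's, i.e. of $\FT$. So the content is the induction step, which amounts to showing $\FT \circ \FT = \FT$ more or less, but keeping careful track of the exponent.

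For the inclusion $\Enc\circ\TT^k\circ\Dec \subseteq \FT^k$, the key point is that the ``internal'' decodings and encodings cancel. Writing $\tau_1\circ\cdots\circ\tau_k$ with each $\tau_i\in\TT$, I would insert $\dec_\Delta\circ\enc_\Delta$ (which is the identity on $T_{\Delta_e}$, since $\dec_\Delta = \enc_\Delta^{-1}$) between consecutive $\tau_i$ and $\tau_{i+1}$, where $\Delta$ is the appropriate intermediate alphabet. This rewrites $\enc_\Sigma\circ\tau_1\circ\cdots\circ\tau_k\circ\dec_\Gamma$ as $(\enc_\Sigma\circ\tau_1\circ\dec_\Delta)\circ(\enc_\Delta\circ\tau_2\circ\dec_{\Delta'})\circ\cdots\circ(\enc_{\dots}\circ\tau_k\circ\dec_\Gamma)$, a composition of $k$ \ttt\ forest translations, hence in $\FT^k$. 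One small wrinkle: each $\tau_i$ has some input and output ranked alphabet, and to apply the cancellation I need those to be of the form $\Delta_e$ for an ordinary alphabet $\Delta$; this is harmless because the output alphabet of $\tau_i$ can always be taken to be $\Delta_e$ for a suitable $\Delta$ (just rename, adding a rank-$0$ symbol to play the role of $e$ and treating all other symbols as binary — or more carefully, restrict to the actual range which is encoded forests, using Corollary~\ref{cor:regdom} and domain/range restriction lemmas to ensure the intermediate trees are genuine encodings).

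For the reverse inclusion $\FT^k\subseteq\Enc\circ\TT^k\circ\Dec$, I would take a composition of $k$ \ttt\ forest translations $\enc_{\Sigma_0}\circ\tau_{M_1}\circ\dec_{\Sigma_1}\circ\enc_{\Sigma_1}\circ\tau_{M_2}\circ\dec_{\Sigma_2}\circ\cdots\circ\dec_{\Sigma_k}$ and again use $\dec_{\Sigma_i}\circ\enc_{\Sigma_i} = \mathrm{id}$ on $T_{(\Sigma_i)_e}$ to collapse the middle to $\enc_{\Sigma_0}\circ\tau_{M_1}\circ\cdots\circ\tau_{M_k}\circ\dec_{\Sigma_k}$. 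Since $\tau_{M_1}\circ\cdots\circ\tau_{M_k}\in\TT^k$, this lands in $\Enc\circ\TT^k\circ\Dec$. The only subtlety is that the decoding $\dec_{\Sigma_i}$ is defined only on $T_{(\Sigma_i)_e}$ (encodings of forests), so the composition $\tau_{M_i}\circ\dec_{\Sigma_i}\circ\enc_{\Sigma_i}$ is not literally $\tau_{M_i}$ but $\tau_{M_i}$ with its range restricted to $\ran(\enc_{\Sigma_i})$; however, since $\dec_{\Sigma_i}$ is undefined outside that set, $\tau_{M_i}\circ\dec_{\Sigma_i} = (\tau_{M_i}\!\restriction\!\text{encoded forests})\circ\dec_{\Sigma_i}$ already, so no information is lost and the equation holds on the nose.

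I do not expect any real obstacle here; the lemma is essentially bookkeeping around the identity $\dec_\Sigma\circ\enc_\Sigma = \mathrm{id}_{T_{\Sigma_e}}$ together with $\enc_\Sigma\circ\dec_\Sigma = \mathrm{id}_{F_\Sigma}$. The mildly delicate point, and the one I would be careful to state explicitly, is that the intermediate ranked alphabets arising in a composition in $\TT^k$ need not a priori be of the form $\Sigma_e$ nor need the intermediate trees be valid encodings; but this is only a matter of presentation, since one may always rename the intermediate alphabet and, if desired, restrict domains and ranges to the regular set of encoded forests using Lemma~\ref{lem:domrestr}, Corollary~\ref{cor:regdom}, and Lemma~\ref{lem:inversett}, all of which preserve membership in $\TT$. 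I would present the proof by induction on $k$, spending one sentence on the base case and one short paragraph on the induction step for each direction, pointing out the cancellation identity as the crux.
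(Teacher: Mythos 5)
Your easy direction ($\FT^k \subseteq \Enc\circ\TT^k\circ \Dec$) matches the paper: there the intermediate alphabets really are of the form $\Sigma_e$ by definition of $\FT$, so $\dec_{\Sigma_i}\circ\enc_{\Sigma_i}=\mathrm{id}$ applies and the internal codings cancel. The gap is in the other direction, which is where the actual content of the lemma lies. In a composition $\tau_1\circ\cdots\circ\tau_k\in\TT^k$ the intermediate alphabet $\Gamma$ is an \emph{arbitrary} ranked alphabet, possibly containing symbols of rank $3$, $4$, etc., and the intermediate trees are in no sense encodings of forests. Your primary fix --- ``just rename, treating all other symbols as binary'' --- is not a renaming: a symbol of rank $m\neq 2$ cannot be relabelled into a symbol of rank $2$ without changing the shape of the tree. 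Your fallback --- restrict the range to the set of encoded forests via Corollary~\ref{cor:regdom} and the domain/range restriction lemmas --- does not apply either, because the range of $\tau_i$ may contain no encoded forests at all; there is nothing to restrict to. So the cancellation identity simply cannot be inserted between consecutive transducers as you propose.

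What is actually needed, and what the paper does, is a genuine re-encoding $h\colon T_\Gamma\to T_{\Delta_e}$ of arbitrary ranked trees as binary trees (a Chomsky-normal-form-style encoding with $\Delta=\Gamma\cup\{\omega\}$), together with a transducer computing $h^{-1}$. One then writes $\tau_i\circ\tau_{i+1}=(\tau_i\circ h)\circ(h^{-1}\circ\tau_{i+1})$ and must show both factors are still in $\TT$. The first uses $h\in\TTDL$ and Theorem~\ref{thm:rightcomp} ($\TT\circ\TTD\subseteq\TT$); the second requires constructing $h^{-1}$ as a \emph{single-use} \ttt\ and invoking Theorem~\ref{thm:ttsuott} ($\TTSU\circ\TT\subseteq\TT$), one of the paper's nontrivial composition theorems --- note that left-composition of $\TT$ with an arbitrary \dttd\ is \emph{not} among the closure results, so the single-use property of the decoder is essential. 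None of this is ``only a matter of presentation''; without the encoding $h$, the single-use decoder, and Theorem~\ref{thm:ttsuott}, the inclusion $\Enc\circ\TT^k\circ\Dec\subseteq\FT^k$ does not follow.
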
 

\begin{proof}
The inclusion $\FT^k \subseteq \Enc\circ\TT^k\circ \Dec$ is obvious, 
because $\dec_\Delta\circ \enc_\Delta$ is the identity on $T_{\Delta_e}$ 
for every (unranked) alphabet $\Delta$.
To show that $\Enc\circ\TT^k\circ \Dec\subseteq \FT^k$, it suffices to prove that 
$\TT\circ\TT \subseteq \TT\circ\Dec\circ\Enc\circ\TT$. Let~$\Gamma$~be 
the (ranked) output alphabet of a first transducer, which is also the input alphabet of the second, and 
let $\mathrm{id}_\Gamma$ be the identity on $T_\Gamma$. By the composition results of 
Theorems~\ref{thm:rightcomp} and~\ref{thm:ttsuott}, it now suffices to show that 
$\mathrm{id}_\Gamma\in \TTDL\circ\Dec\circ\Enc\circ\TTLSU$. We do this by encoding the trees 
over $\Gamma$ as binary trees, similar to the transformation of the derivation trees 
of a context-free grammar into those of its Chomsky Normal Form. Let $\omega$ be a new symbol,
and let $\Delta$ be the unranked alphabet $\Gamma\cup\{\omega\}$. We encode the trees over 
$\Gamma$ as trees over the ranked alphabet $\Delta_e$, which are the usual encodings 
of forests over $\Delta$. The encoding $h: T_\Gamma\to T_{\Delta_e}$ is defined as follows:
for every $\gamma\in\Gamma^{(k)}$, if $h(t_i)=t'_i$ for every $i\in[1,k]$,
then $h(\gamma(t_1,t_2,\dots,t_k))= \gamma(e,\omega(t'_1,\omega(t'_2,\dots \omega(t'_k,e)\cdots)))$.
It should be clear that $h$ is an injection. It should also be clear that $h\in \TTDL$ 
(in fact, $h$ is a tree homomorphism, which can be realized by a classical top-down tree transducer).
Finally, it is also easy to construct a local top-down single-use \ttt\ $M$ such that 
$\tau_M(h(t))=t$ for every $t\in T_\Gamma$. It has the set of states $Q=\{q_i\mid i\in[0,\m_\Gamma]\}$
with initial state $q_0$, and the following rules 
(where $\gamma\in\Gamma^{(k)}$, $j\in[0,2]$, and $q_i\in Q$, $i\neq 1$): 
\[
\begin{array}{lll}
\tup{q_0,\gamma,j} & \to & \gamma(\tup{q_1,\down_2},\dots,\tup{q_k,\down_2}) \\[0.4mm]
\tup{q_1,\omega,2} & \to & \tup{q_0,\down_1} \\[0.4mm]
\tup{q_i,\omega,2} & \to & \tup{q_{i-1},\down_2}
\end{array}
\]
Note that $\gamma$ and $\omega$ have rank~2 in $\Delta_e$. 
\qed
\end{proof}

We now wish to show that our main results also hold for deterministic \ttt\ forest translations.
Let us first consider the complexity results of Section~\ref{sec:complex}. 
It is easy to see that for every alphabet $\Sigma$, the mappings $\enc_\Sigma$ and $\dec_\Sigma$ 
can be computed by a deterministic Turing machine in linear time and space, 
simulating a one-way pushdown transducer.\footnote{In fact, 
$\enc_\Sigma$ can even be computed without pushdown: for every forest $f\in F_\Sigma$,
$\enc_\Sigma(f)$ can be obtained from $f$ by removing all left-brackets, 
changing each right-bracket into~$e$, and adding one $e$ at the end.
\label{foo:enc}
}
This implies, by Lemma~\ref{lem:fttt}, 
that Theorems~\ref{thm:lintime} and~\ref{thm:linspace} also hold for $\FT^k$.
We define a set of forests $L\subseteq F_\Sigma$ to be a \emph{regular forest language}
if $\enc_\Sigma(L)\in\REGT$, and we denote the class of regular forest languages by $\REGF$.
Then, for every $k\geq 1$, the class $\FT^k(\REGF)$ of output forest languages is included in 
the class $\Dec(\TT^k(\REGT))$ by Lemma~\ref{lem:fttt}.
Let $L\in\REGT$ and $\tau\in\TT^k$ with output alphabet $\Delta_e$. 
Then a forest $f$ over $\Delta$ is in $\dec_\Delta(\tau(L))$ if and only if 
$\enc_\Delta(f)$ is in $\tau(L)$. 
That implies that Theorem~\ref{thm:outspace} also holds for $\FT^k$, in the sense that
$\FT^k(\REGF)\subseteq \family{DSPACE}(n)$. 

Next we consider the results of Section~\ref{sec:lsi},
and extend the class $\LSI$ in the obvious way to forest translations.
Since it is easy to show that for every forest $f\in F_\Sigma$,
we have $|\enc_\Sigma(f)| = \frac{2}{3}|f| +1$ (see footnote~\ref{foo:enc}), a translation 
$\tau' = \enc_\Sigma \circ \tau \circ \dec_\Delta$ is of linear size increase if and only if 
$\tau$ is of linear size increase. 
Thus, since $\FT^k = \Enc\circ\TT^k\circ \Dec$ by Lemma~\ref{lem:fttt}, 
it is decidable for a given composition of deterministic \ttt\ forest transducers whether or not 
it is of linear size increase. And if so, an equivalent deterministic \ttt\ forest transducer 
can be constructed:  
$\FT^k \cap \LSI = \Enc\circ(\TT^k\cap \LSI)\circ \Dec=
\Enc\circ\TTSU\circ\Dec\subseteq \FT$. 
Intuitively, $\Enc\circ\TTSU\circ\Dec$ is the class of translations realized by
``single-use forest-walking forest transducers''. 
Since $\TTSU=\MSOT$ by Proposition~\ref{pro:mso=ttsu}, it is also the class $\Enc\circ\MSOT\circ\Dec$.
Viewing forests as graphs, and hence as logical structures, in the obvious way (just as trees), 
every encoding $\enc_\Sigma$ and every decoding $\dec_\Sigma$ is a deterministic (i.e., parameterless)
\abb{mso} translation, as defined in~\cite[Chapter~7]{thebook}.
Hence, by the closure of \abb{mso} translations under composition \cite[Theorem~7.14]{thebook}, 
$\Enc\circ\MSOT\circ\Dec$ equals the (natural) class of 
deterministic \abb{mso} translations from forests to forests. 

As observed in~\cite{PerSei} for macro tree transducers,
whereas the encoding of forests as binary trees is quite natural for the input forest of a \ttt,
for the output forest it is less natural, because it forces the \ttt\ to 
generate the output forest $f$ in its unique form $f=\sigma[f_1]f_2$. It is more natural 
to additionally allow the \ttt\ to generate $f$ as a concatenation $f_1f_2$ 
of two forests $f_1$ and~$f_2$. To formalize this, as in~\cite[Section~7]{Eng09}
and in accordance with~\cite{PerSei}, 
we associate with an alphabet $\Delta$ the ranked alphabet $\Delta_\at=\Delta\cup\{\at,e\}$
where $\at$ has rank~2, $e$ has rank~0, and every $\delta\in\Delta$ has rank~1.  
The mapping $\fl_\Delta: T_{\Delta_\at}\to F_\Delta$ is a ``flattening'' defined as follows 
(for $t_1,t_2\in T_{\Delta_\at}$ and $\delta\in\Delta$): 
$\fl_\Delta(e)=\epsilon$, $\fl_\Delta(\at(t_1,t_2)) = \fl_\Delta(t_1)\fl_\Delta(t_2)$, 
the concatenation of $\fl_\Delta(t_1)$ and $\fl_\Delta(t_2)$, and 
$\fl_\Delta(\delta(t_1))= \delta[\fl_\Delta(t_1)]$. 
The mapping $\fl_\Delta$ is surjective but, in general, not injective.  
Let $\Flat$ denote the class of flattenings $\fl_\Delta$, for all alphabets $\Delta$. 
We define $\NFT_\at=\Enc\circ\NTT\circ \Flat$ to be the class of \emph{extended \ttt\ forest translations}.
An extended \ttt\ forest tree transducer is a \ttt\ 
with input alphabet $\Sigma_e$ and output alphabet $\Delta_\at$. 
Again, we first restrict attention to deterministic transducers, i.e., to the class
$\FT_\at=\Enc\circ\TT\circ \Flat$.

Let us show that there is an extended \ttt\ forest translation in $\FT_\at$ that is not in $\FT$.
That was shown for macro tree transducers in~\cite[Theorem~8]{PerSei} by a similar argument. 
Let $\Gamma=\{\sigma\}$ and $\Omega=\{\delta\}$ be alphabets, and let us identify 
the forest $\sigma[\;]$ with the symbol $\sigma$, and similarly $\delta[\;]$ with $\delta$.
Then $\Gamma^*\subseteq F_\Gamma$ and $\Omega^*\subseteq F_\Omega$. 
There is a deterministic extended \ttt\ forest transducer that translates the string $\sigma^n$ 
into the string $\delta^{2^{n+1}}$ for every $n\in\nat$. 
In fact, let $M$ be the \dttt\ (with general rules) 
that is obtained from the \dttt\ $M_\mathrm{exp}$ of Example~\ref{exa:exp} by changing 
its output alphabet into $\Omega_\at=\{\at,\delta,e\}$, and changing $\sigma$ into $\at$ 
and $e$ into $\delta(e)$ in the right-hand sides of its rules.
Note that the input alphabet $\Sigma$ of $M_\mathrm{exp}$ and $M$ equals $\Gamma_e$.
The input tree $t_n=\enc_\Gamma(\sigma^n)=\sigma(e,\sigma(e,\dots\sigma(e,e)\cdots))$ 
is translated by~$M_\mathrm{exp}$ into the full binary tree $s_n$ over $\Sigma$ with $2^{n+1}$ leaves. 
Clearly, $M$ translates $t_n$ into the tree $s'_n$ that is obtained from $s_n$ 
by changing every $\sigma$ into $\at$ and every $e$ into $\delta(e)$. 
Thus, $\fl_\Omega(s'_n)=\delta^{2^{n+1}}$. This forest translation is not in $\FT$, because 
$|\enc_\Gamma(\sigma^n)|=|t_n|=2n+1$ but  
the height of $s''_n=\enc_\Omega(\delta^{2^{n+1}})$ is $2^{n+1}$, and so, by Lemma~\ref{lem:sizeheight}, 
there is no \dttt\ that translates $t_n$ into $s''_n$. 

We will show that $\FT\subseteq\FT_\at\subseteq\FT^2$. A similar result was proved for 
macro tree transducers in~\cite[Theorem~8 and Corollary~12]{PerSei}. 
To compare $\FT$ and $\FT_\at$, and their compositions, we establish two relationships between 
$\Dec$ and $\Flat$ in the next lemma. 

\begin{lemma}\label{lem:decflat}
$\Dec\subseteq\TTDL \circ \Flat$ and $\Flat\subseteq\TTLSU\circ\Dec$.
\end{lemma}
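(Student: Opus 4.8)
The plan is to prove the two inclusions separately, both by explicit construction of the required transducers, exploiting the fact that every decoding $\dec_\Sigma$ and every flattening $\fl_\Delta$ operates on the binary-tree encoding of forests and can be simulated by a small local tree-walking tree transducer (together with an auxiliary encoding/decoding that is itself realizable in the tree-walking world).

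For $\Dec\subseteq\TTDL\circ\Flat$, fix an alphabet $\Sigma$ and consider $\dec_\Sigma: T_{\Sigma_e}\to F_\Sigma$. I want to factor $\dec_\Sigma = \tau_M\circ\fl_\Sigma$ where $M$ is a deterministic local top-down \ttt\ with input alphabet $\Sigma_e$ and output alphabet $\Sigma_\at$. Recall that a tree $t\in T_{\Sigma_e}$ is the encoding $\enc_\Sigma(f)$ of the unique forest $f$ with $\dec_\Sigma(t)=f$, and that $f=\dec_\Sigma(\sigma(t_1,t_2))$ equals $\sigma[\dec_\Sigma(t_1)]\dec_\Sigma(t_2)$, while $\dec_\Sigma(e)=\epsilon$. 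So $M$ should, on reading $\sigma(t_1,t_2)$, output $\at(\sigma(\dec_\Sigma(t_1)), \dec_\Sigma(t_2))$, i.e. it emits a $\at$-node, puts a $\sigma$-node above the translation of the left subtree, and recurses on both children; on reading $e$ it outputs $e$. Concretely $M$ has one state $q$ with initial state $q$, and (general) rules $\tup{q,\sigma,j}\to \at(\sigma(\tup{q,\down_1}), \tup{q,\down_2})$ for every $\sigma\in\Sigma$ and every child number $j$, and $\tup{q,e,j}\to e$. General rules are allowed (Section~\ref{sec:trans}); this is a deterministic local top-down \ttt, so $\tau_M\in\TTDL$. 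One checks by induction on $t$ that $\fl_\Sigma(\tau_M(t))=\dec_\Sigma(t)$: the base case $t=e$ gives $\fl_\Sigma(e)=\epsilon$, and for $t=\sigma(t_1,t_2)$ we get $\fl_\Sigma(\at(\sigma(s_1),s_2))=\fl_\Sigma(\sigma(s_1))\,\fl_\Sigma(s_2)=\sigma[\fl_\Sigma(s_1)]\,\fl_\Sigma(s_2)$, which by induction is $\sigma[\dec_\Sigma(t_1)]\,\dec_\Sigma(t_2)=\dec_\Sigma(t)$.

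For $\Flat\subseteq\TTLSU\circ\Dec$, fix $\Delta$ and consider $\fl_\Delta: T_{\Delta_\at}\to F_\Delta$. I want $\fl_\Delta = \tau_N\circ\dec_\Delta$ where $N$ is a local single-use \ttt\ with input alphabet $\Delta_\at$ and output alphabet $\Delta_e$, so that $\tau_N$ produces the $\enc_\Delta$-encoding of the flattened forest. The subtlety — exactly as in the $\epsilon$-rule / concatenation discussion for macro tree transducers in~\cite{PerSei} — is that $\fl_\Delta$ is not injective, because $\at$ and $e$ behave like monoid concatenation and unit on forests, so a tree-walking tree transducer must produce, from a tree $t\in T_{\Delta_\at}$, the binary-tree encoding of the list $\fl_\Delta(t)$, a right-combed structure over $\Delta_e$. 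The natural way is to let $N$ carry in its finite state a "continuation" node of the input tree: having finished emitting the encoding of some sub-forest, $N$ must know where to continue, i.e. which subtree's flattening to append next. Since every node of the input tree has a well-defined "next sibling to flatten after me" — reachable by walking up through $\at$-nodes from a left child and then down into the corresponding right child, or reaching $e$ at top level — this continuation is a bounded amount of navigational information once one stores the relevant child numbers. The single-use constraint is the main point to check: I must ensure $N$ visits each input node at most once in each state. This holds because the recursion is structural — each $\Delta$-node $\delta(t_1)$ is visited once to emit its $\delta$ and its two encoded children (the encoding of $\fl_\Delta(t_1)$ and the encoding of the continuation), each $\at$-node is traversed once on the way into the list it represents, and each $e$-node is visited once to emit the encoding-leaf $e$; the up-moves needed to locate continuations retrace ancestor $\at$-nodes, but a careful state design (storing which child we came from) keeps each (node, state) pair used at most once. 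I expect this to be the hard part of the lemma: getting the continuation-tracking right so that $N$ is genuinely single-use rather than merely deterministic, and verifying $\enc_\Delta(\fl_\Delta(t))=\tau_N(t)$ by induction on $t$ using the identities $\enc_\Delta(\epsilon)=e$, $\enc_\Delta(f_1f_2)$ obtained by appending $\enc_\Delta(f_2)$ at the rightmost $e$ of $\enc_\Delta(f_1)$, and $\enc_\Delta(\delta[f])=\delta(\enc_\Delta(f),\cdot)$ with the continuation in the second component.

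An alternative, technically cleaner route for the second inclusion — which I would adopt if the direct single-use construction becomes unwieldy — is to first produce the encoding of $\fl_\Delta(t)$ up to the concatenation/unit structure using an obviously local deterministic \ttt, and then normalize, appealing to the already-established composition closures. But since $\TTSU$ is not known a priori to absorb arbitrary local \ttt's on the right (only single-use ones compose nicely, cf. Theorem~\ref{thm:ttsuott} which goes the other way), the direct construction of a single-use $N$ is really what is needed here, and its single-use property is exactly what makes the lemma usable in the forest-transducer results that follow (where one wants $\Flat\subseteq\TTLSU\circ\Dec$ precisely so that linear-size-increase and $\MSOT$ arguments transfer). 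So the plan is: (i) write down $M$ and $N$ explicitly as above; (ii) verify the two factorization identities by structural induction; (iii) check $\tau_M\in\TTDL$ (immediate) and $\tau_N\in\TTLSU$ (the single-use verification, which is the crux).
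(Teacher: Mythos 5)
Your proposal takes essentially the same route as the paper: the first inclusion is proved by exactly the tree homomorphism $h$ with $h(e)=e$ and $h(\sigma(t_1,t_2))=\at(\sigma(h(t_1)),h(t_2))$ that the paper uses, verified by the same induction, and your sketch for the second inclusion --- branching at each $\Delta$-node into one copy that descends into the subtree and one copy that walks back up through the $\at$-ancestors to the continuation, emitting $e$ when the traversal of a piece ends --- is precisely the paper's single-use transducer with states $d$, $u_1$, $u_2$ performing a depth-first traversal distributed over the output branches. The only part you leave implicit is the explicit rule table for that second transducer, but the state design you describe (remembering which child the head came from) is exactly what the paper writes down, so there is no gap in the idea.
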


\begin{proof}
To show the first inclusion, 
let $\Delta$ be an alphabet and define the mapping $h\colon T_{\Delta_e}\to T_{\Delta_\at}$ such that 
$h(e)=e$ and if $h(t_1)=t'_1$ and $h(t_2)=t'_2$, then $h(\delta(t_1,t_2))= \at(\delta(t'_1),t'_2)$. 
It is straightforward to prove that $h\circ \fl_\Delta = \dec_\Delta$. 
It is also easy to show that $h\in \TTDL$ (as in the proof of Lemma~\ref{lem:fttt}, 
$h$ is a tree homomorphism, which can be realized by a classical top-down tree transducer). 
Hence $\dec_\Delta \in \TTDL \circ \Flat$.

For the second inclusion, let $\Delta$ be an alphabet. 
The mapping $\fl_\Delta\circ\enc_\Delta$ can be realized by a local single-use \dttt\ 
$M = (\Delta_\at, \Delta_e, Q, q_0, R)$ that performs a depth-first left-to-right tree traversal 
in a special way. Rather than performing this traversal in one branch, it does so 
in all its branches together, each branch performing a separate piece of the traversal. 
When $M$ arrives from above at a node $u$ with label $\delta\in\Delta$, it outputs $\delta$ 
and splits into two branches. The first branch traverses the subtree at $u$, 
and the second branch continues the traversal after that subtree. Each branch outputs $e$ 
when arriving from below at a \mbox{$\Delta$-labeled} node (or at the root, at the end of the traversal).
Formally, $M$ has the state set $Q=\{d,u_1,u_2\}$ with initial state $q_0=d$, 
cf. Examples~\ref{exa:query} and~\ref{exa:exp}.
It has the following (general) rules, where $j'\in[0,\m_\Sigma]$, $j\in[1,\m_\Sigma]$, 
and $\delta\in\Delta$: 
\[
\begin{array}{llllll}
\tup{d,\at,j'} & \to & \tup{d,\down_1} & 
\tup{d,e,j} & \to & \tup{u_j,\up} \\[0.4mm]
\tup{d,\delta,j} & \to & \delta(\tup{d,\down_1},\tup{u_j,\up}) \quad & 
\tup{d,e,0} & \to & e \\[0.4mm]
\tup{d,\delta,0} & \to & \delta(\tup{d,\down_1},e) & & & \\[2mm]
\tup{u_1,\at,j'} & \to & \tup{d,\down_2} & 
\tup{u_1,\delta,j'} & \to & e \\[0.4mm]
\tup{u_2,\at,j} & \to & \tup{u_j,\up} & & &  \\[0.4mm]
\tup{u_2,\at,0} & \to & e &  &  & 
\end{array}
\]
Thus, since $\tau_M=\fl_\Delta\circ\enc_\Delta$, it follows that 
$\fl_\Delta = \tau_M\circ\dec_\Delta \in \TTLSU\circ\Dec$.

We note that the mapping $\fl_\Delta\circ\enc_\Delta$ is denoted `eval' in~\cite[Section~4]{PerSei},
`APP' in~\cite{ManBerPerSei}, and `app' in~\cite[Section~7]{Eng09}. 
For the reader familiar with \abb{mso} translations we observe that it is also easy 
to show that both $\fl_\Delta$ and $\enc_\Delta$ are deterministic \abb{mso} translations,
and hence their composition is one. 
The second inclusion then follows from Proposition~\ref{pro:mso=ttsu}. 
\qed
\end{proof}

It follows from the first inclusion of Lemma~\ref{lem:decflat} that $\FT\subseteq\FT_\at$.
In fact, $\Enc\circ\TT\circ \Dec \subseteq \Enc\circ\TT \circ \TTDL \circ \Flat$, 
which is included in $\Enc\circ\TT\circ \Flat$ by Theorem~\ref{thm:rightcomp}.
It follows from the second inclusion that $\FT_\at^k \subseteq \FT^{k+1}$ 
for every $k\geq 1$.
In fact, $\FT_\at^k =(\Enc\circ\TT\circ \Flat)^k \subseteq 
(\Enc\circ\TT\circ \TTLSU\circ\Dec)^k \subseteq \Enc\circ(\TT\circ \TTLSU)^k\circ\Dec$,
which is included in $\Enc\circ\TT^k\circ\TTLSU\circ\Dec$ by Theorem~\ref{thm:ttsuott}
and hence in $\Enc\circ\TT^{k+1}\circ\Dec$, which equals $\FT^{k+1}$ by Lemma~\ref{lem:fttt}.

\begin{corollary}
$\FT^k\subseteq\FT_\at^k \subseteq \FT^{k+1}$ for every $k\geq 1$. 
\end{corollary}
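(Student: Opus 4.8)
The plan is to prove the two inclusions $\FT^k\subseteq\FT_\at^k$ and $\FT_\at^k\subseteq\FT^{k+1}$ separately, exactly as foreshadowed in the two paragraphs immediately preceding the corollary statement; in fact both inclusions have already been argued there, so the ``proof'' is mostly a matter of collecting those arguments into a single clean chain of class-inclusions, citing Lemmas~\ref{lem:fttt} and~\ref{lem:decflat} together with the composition results Theorems~\ref{thm:rightcomp} and~\ref{thm:ttsuott}. The main point to be careful about is the bookkeeping of where $\Enc$, $\TT$, $\Dec$, $\Flat$, and $\TTLSU$ sit, and making sure every absorption step is justified by a composition-closure result that has actually been established.

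First I would handle $\FT^k\subseteq\FT_\at^k$. By Lemma~\ref{lem:fttt}, $\FT^k=\Enc\circ\TT^k\circ\Dec$, and similarly the definition gives $\FT_\at^k=(\Enc\circ\TT\circ\Flat)^k$; since $\fl_\Delta\circ\enc_\Delta$ is the identity on $F_\Delta$ one gets $\FT_\at^k=\Enc\circ\TT^k\circ\Flat$ just as in the proof of Lemma~\ref{lem:fttt}. Now by the first inclusion of Lemma~\ref{lem:decflat}, $\Dec\subseteq\TTDL\circ\Flat$, so
\[
\FT^k=\Enc\circ\TT^k\circ\Dec\subseteq\Enc\circ\TT^k\circ\TTDL\circ\Flat\subseteq\Enc\circ\TT^k\circ\Flat=\FT_\at^k,
\]
where the last inclusion absorbs the $\TTDL$ into $\TT^k$ using $\TT\circ\TTD\subseteq\TT$ (Theorem~\ref{thm:rightcomp}).

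Next I would handle $\FT_\at^k\subseteq\FT^{k+1}$. Here I use the second inclusion of Lemma~\ref{lem:decflat}, $\Flat\subseteq\TTLSU\circ\Dec$, to rewrite
\[
\FT_\at^k=(\Enc\circ\TT\circ\Flat)^k\subseteq(\Enc\circ\TT\circ\TTLSU\circ\Dec)^k.
\]
Since $\dec_\Delta\circ\enc_\Delta$ is the identity, the interior $\Dec\circ\Enc$ factors cancel and this is contained in $\Enc\circ(\TT\circ\TTLSU)^k\circ\Dec$. Then I absorb the single-use transducers on the left using $\TTSU\circ\TT\subseteq\TT$ (Theorem~\ref{thm:ttsuott}): one checks by an easy induction that $(\TT\circ\TTLSU)^k\subseteq\TT^k\circ\TTLSU$, hence the whole thing sits inside $\Enc\circ\TT^k\circ\TTLSU\circ\Dec\subseteq\Enc\circ\TT^{k+1}\circ\Dec=\FT^{k+1}$, the last equality again by Lemma~\ref{lem:fttt}. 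Combining the two inclusions gives the corollary.

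The only mildly delicate step — and the one I would state carefully rather than wave at — is the cancellation of the internal $\Dec\circ\Enc$ (and $\Flat\circ\Enc$) factors when composing $k$ copies of $\Enc\circ\TT\circ\Flat$: this is legitimate precisely because $\dec_\Sigma\circ\enc_\Sigma=\mathrm{id}$ and $\fl_\Sigma\circ\enc_\Sigma=\mathrm{id}$, but one must match the output alphabet of each stage with the input alphabet of the next, exactly as in the proof of Lemma~\ref{lem:fttt}. Everything else is a routine concatenation of already-proved class inclusions, so I expect no real obstacle; the proof is essentially two displayed chains plus a one-line induction, and can be written in a few sentences.
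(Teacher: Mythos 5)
Your overall route is the paper's: the first inclusion is meant to come from $\Dec\subseteq\TTDL\circ\Flat$ plus Theorem~\ref{thm:rightcomp}, and the second from $\Flat\subseteq\TTLSU\circ\Dec$, cancellation of the interior $\Dec\circ\Enc$ factors, and Theorem~\ref{thm:ttsuott}. Your treatment of the second inclusion is correct and essentially verbatim what the paper does. However, there is a genuine error in your justification of the first inclusion: you assert (twice) that $\fl_\Delta\circ\enc_\Delta$ is the identity. It is not. With the paper's composition order this is the map $T_{\Delta_\at}\to F_\Delta\to T_{\Delta_e}$ that evaluates the concatenation symbol $\at$; it is surjective but not injective, and it is precisely the nontrivial translation realized by the single-use transducer $M$ in the proof of Lemma~\ref{lem:decflat} (the map called ``eval''/``app'' there). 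Only $\dec_\Delta\circ\enc_\Delta$ is an identity. Consequently your claimed equality $\FT_\at^k=\Enc\circ\TT^k\circ\Flat$ does not follow ``just as in the proof of Lemma~\ref{lem:fttt}'': the interior junctions of $(\Enc\circ\TT\circ\Flat)^k$ are $\Flat\circ\Enc$ factors, and these do not cancel. The same false identity reappears in your closing paragraph.

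The gap is easily repaired, and the cheapest repair is the one the paper implicitly uses: you only need the case $k=1$, where $\FT_\at=\Enc\circ\TT\circ\Flat$ holds by definition and no interior cancellation arises, so your displayed chain already gives $\FT\subseteq\FT_\at$; then $\FT^k\subseteq\FT_\at^k$ follows by monotonicity of $\circ$ on classes of translations. (If you insist on the equality $\FT_\at^k=\Enc\circ\TT^k\circ\Flat$, both directions are in fact true, but each requires real work: the inclusion $\subseteq$ needs $\Flat\circ\Enc\subseteq\TTLSU$ together with Theorem~\ref{thm:ttsuott}, and $\supseteq$ needs the insertion of $\TTDL\circ\Dec\circ\Enc\circ\TTLSU$ from the proof of Lemma~\ref{lem:fttt} followed by $\Dec\subseteq\TTDL\circ\Flat$ — neither direction is a cancellation of identities.)
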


From the second inclusion we obtain that our main results also hold for 
deterministic extended \ttt\ forest transducers.
It is decidable whether or not a composition of such transducers is of linear size increase, and 
\[
\FT_\at^k \cap\LSI = \Enc\circ\TTSU\circ\Dec \subseteq \FT\subseteq\FT_\at.
\]
The complexity results of Theorems~\ref{thm:lintime}, \ref{thm:linspace}, and~\ref{thm:outspace}
also hold for $\FT_\at^k$. 

The class of deterministic macro forest translations of~\cite{PerSei} can be defined as 
$\MFT_\at=\Enc\circ\MT\circ\Flat$. Since $\TT\subseteq \MT\subseteq \TT^2$ by 
Lemma~\ref{lem:ttvsmt}, we conclude by similar arguments as for $\FT_\at$ 
that $\MFT_\at^k\subseteq \FT^{2k+1}$ and hence our main results also hold for 
deterministic macro forest transducers. It is decidable whether or not 
a composition of such transducers is of linear size increase, and 
\[
\MFT_\at^k \cap\LSI = \Enc\circ\TTSU\circ\Dec \subseteq \FT\subseteq\FT_\at\subseteq\MFT_\at.
\]
The complexity results of Theorems~\ref{thm:lintime}, \ref{thm:linspace}, and~\ref{thm:outspace}
also hold for $\MFT_\at^k$.

The main results of Sections~\ref{sec:lsi} and~\ref{sec:nondetcomplex}
also hold for nondeterministic forest transducers.
Instead of Lemma~\ref{lem:fttt} we use the obvious fact that 
$\NTT\circ\Dec\circ\Enc\circ\NTT\subseteq \NTT\circ\NTT$.\footnote{It can be shown that
the nondeterministic version of Lemma~\ref{lem:fttt} also holds, but we will not do that here.
}
This implies, together with Lemma~\ref{lem:decflat}, that it suffices to prove that the results 
for $\NTT^k$ also hold for the class $\Enc\circ \NTT^k\circ\Dec$. 
For the nondeterministic version of Theorem~\ref{thm:mainlsi} in Section~\ref{sec:lsi}, we note that
a translation $\enc_\Sigma\circ \tau \circ \dec_\Delta$ is a function 
if and only if $\tau$ is a function. Consequently, 
$(\Enc\circ \NTT^k\circ\Dec) \cap\cF \subseteq \Enc\circ (\NTT^k \cap\cF)\circ\Dec \subseteq
\Enc\circ\TT^{k+1}\circ\Dec=\FT^{k+1}$ by Theorem~\ref{thm:fun} and Lemma~\ref{lem:fttt}.
Hence $(\Enc\circ \NTT^k\circ\Dec) \cap \LSI = \Enc\circ\TTSU\circ\Dec$.
Obviously, the complexity results of Theorems~\ref{thm:nondetoutspace} and~\ref{thm:nondetlinspace} 
in Section~\ref{sec:nondetcomplex} hold for $\Enc\circ \NTT^k\circ\Dec$, 
with the same proof as in the deterministic case. 
The class of nondeterministic macro forest translations of~\cite{PerSei} can be defined as 
$\NMFT_\at=\Enc\circ\NMT\circ\Flat$. From Lemmas~\ref{lem:mtintt} and~\ref{lem:decflat}
we obtain that $\NMFT^k_\at\subseteq \Enc\circ\NTT^{3k}\circ\Dec$, 
and hence all these results also hold for macro forest transducers. 

We finally show that the results of Section~\ref{sec:transcomp} also hold for 
nondeterministic forest transducers. We first consider 
$\Enc\circ \NTT\circ\TT\circ\Dec$ and $\Enc\circ \NTT\circ\TT\circ\Flat$. 
For a forest translation $\tau$ we define the forest language $L_\tau =\{\#[fg]\mid (f,g)\in\tau\}$.
If $\tau =\enc_\Sigma\circ \tau'\circ \dec_\Delta$ with $\tau'\in\NTT\circ\TT$, then 
$\#[fg]\in L_\tau$ if and only if $\#(\enc_\Sigma(f),\enc_\Delta(g))\in L_{\tau'}$. 
Since $\enc_\Sigma(f)$ can be computed by a deterministic finite-state transducer 
(see footnote~\ref{foo:enc}), and similarly for $\enc_\Delta(g)$, 
$L_\tau$ is log-space reducible to $L_{\tau'}$. 
Hence $\Enc\circ \NTT\circ\TT\circ\Dec\subseteq\LOGCF$ by Theorem~\ref{thm:logcf}.
Similarly if $\tau'\in\TT$, then $g\in \tau(L)$ if and only if $\enc_\Delta(g) \in \tau'(\enc_\Sigma(L))$
for every $L\in\REGF$, and hence $\FT(\REGF)\subseteq \LOGCF$ by Corollary~\ref{cor:logcf}.
To show the same results for $\Flat$ instead of $\Dec$, we need the following small lemma.

\begin{lemma}\label{lem:flatyield}
$\Flat \subseteq \TTD\circ\yield$. 
\end{lemma}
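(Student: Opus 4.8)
The statement to prove is $\Flat \subseteq \TTD\circ\yield$, i.e., that every flattening map $\fl_\Delta\colon T_{\Delta_\at}\to F_\Delta$ can be factored as a top-down \ttt\ translation followed by the yield map. The plan is to recall that forests in $F_\Delta$ are, by definition, strings over $\Delta\cup\{[\,,]\}$, and that we identify each string with its monadic-tree encoding $\mon(\cdot)$ (as in the proof of Corollary~\ref{cor:outspace}), so that $\yield$ acts as $y_\Sigma$ on the appropriate ranked alphabet. Thus I want a \dttd\ $M$ with input alphabet $\Delta_\at$ whose output tree, when its yield is taken, spells out the string $\fl_\Delta(t)$ including the square brackets. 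Concretely, $M$ should read the input tree $t\in T_{\Delta_\at}$ top-down and emit, in pre-order along its leaves: for a node labelled $\at$, nothing at that node itself (it just passes control to its two children in order); for a node labelled $\delta\in\Delta$, first a leaf with label $\delta$, then a leaf with label $[$, then recurse into the unique child, then a leaf with label $]$; for the leaf $e$, nothing (the empty forest). Since $\at$ has rank $2$, $\delta$ has rank $1$, and $e$ has rank $0$ in $\Delta_\at$, and since a top-down \ttt\ may branch at output rules and use $\down_i$ instructions, this is directly implementable.

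First I would set up the output ranked alphabet: let $\Omega$ be the ranked alphabet in which every symbol of $\Delta\cup\{[\,,]\}$ has rank $0$, together with a binary symbol, say $\omega$ of rank $2$, and a unary symbol, say $\omega_1$ of rank $1$ (for the four-way concatenation at a $\delta$-node one can nest binary $\omega$'s, so actually only $\omega$ of rank $2$ plus the rank-$0$ symbols and a rank-$0$ ``empty-yield'' marker are needed; I will use the convention, already invoked in the excerpt via Corollary~\ref{cor:ylogcf} and its surrounding text, that a designated rank-$0$ symbol is skipped when taking the yield). Then I would write the rules of the \dttd\ $M=(\Delta_\at,\Omega,Q,q_0,R)$. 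Using a single state $q$ (child numbers are irrelevant and can be omitted from left-hand sides, as done elsewhere in the paper), the rules are, schematically,
\[
\tup{q,\at,j}\to \omega(\tup{q,\down_1},\tup{q,\down_2}),\qquad
\tup{q,e,j}\to \theta,
\]
and for every $\delta\in\Delta$ a general rule
\[
\tup{q,\delta,j}\to \omega(\delta,\;\omega([\,,\;\omega(\tup{q,\down_1},\;])\,)),
\]
where $\theta$ is the designated rank-$0$ symbol skipped by $\yield$, and $\delta$, $[$, $]$ on the right-hand sides denote rank-$0$ output symbols. As noted after Lemma~\ref{lem:mutdis}, such general rules may be replaced by ordinary \ttt\ rules, and the construction preserves the top-down property; $M$ is clearly deterministic. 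A straightforward induction on the structure of $t\in T_{\Delta_\at}$ then shows that $y(\tau_M(t))$, read as a string over $\Delta\cup\{[\,,]\}$ (after deleting the skipped symbol $\theta$), equals $\fl_\Delta(t)$: the base case $t=e$ gives the empty forest $\epsilon$; the case $t=\at(t_1,t_2)$ gives, by the induction hypothesis, $\fl_\Delta(t_1)\fl_\Delta(t_2)$, which is the concatenation; and the case $t=\delta(t_1)$ gives $\delta\,[\,\fl_\Delta(t_1)\,]=\delta[\fl_\Delta(t_1)]$, exactly matching the definition of $\fl_\Delta$. Identifying each output forest with its monadic encoding $\mon(\cdot)$, this says $\fl_\Delta=\tau_M\circ y_\Omega\in\TTD\circ\yield$, as required.

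I do not expect a serious obstacle here; the only mild subtlety is the bookkeeping around how brackets and the empty-forest case interact with the ``skipped symbol'' convention for $\yield$, and making sure the chosen output ranked alphabet and skipped-symbol convention are consistent with the one used earlier in the paper (around Corollaries~\ref{cor:outspace} and~\ref{cor:ylogcf}). One should also double-check that $M$ is genuinely a \dttd: it uses only $\down_1,\down_2$ in right-hand sides and no up-instructions, it has no regular tests, and for each input label there is exactly one rule, so determinism and the top-down property both hold. Hence $\fl_\Delta\in\TTD\circ\yield$ for every alphabet $\Delta$, giving $\Flat\subseteq\TTD\circ\yield$.
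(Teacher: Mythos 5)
Your construction is correct and is essentially the paper's own proof: a one-state deterministic local top-down transducer that copies $\at$-nodes as binary concatenation, maps $e$ to a yield-skipped rank-$0$ symbol, and wraps each $\delta$-node's recursive output in the leaves $\delta$, $[$, $]$, so that the yield spells out $\fl_\Delta(t)$. The only cosmetic difference is that the paper uses a single rank-$4$ output symbol at $\delta$-nodes where you nest binary ones, which changes nothing.
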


\begin{proof}
For an alphabet $\Delta$, let $\Omega$ be the ranked alphabet 
$\Delta\cup\{[\,,]\}\cup\{\lambda,\at,\omega\}$ such that
$\Omega^{(0)}=\Delta\cup\{[\,,],\lambda\}$, $\Omega^{(2)}=\{\at\}$, and $\Omega^{(4)}=\{\omega\}$.
We define the deterministic \ttdl\ $N=(\Delta_\at,\Omega,\{p\},p,R)$ 
with the following (general) rules.
\[
\begin{array}{lll}
\tup{p,j,\at} & \to & \at(\tup{p,\down_1},\tup{p,\down_2}) \\[0.4mm]
\tup{p,j,e} & \to & \lambda \\[0.4mm]
\tup{p,j,\delta} & \to & \omega(\delta,[\,,\tup{p,\down_1},]\,)
\end{array}
\]
for every $\delta\in\Delta$. 
Assuming that the symbol $\lambda$ is skipped when taking yields
(cf. the sentence before Corollary~\ref{cor:ylogcf}),
it should be clear that $\fl_\Delta(t)$ is the yield of $\tau_N(t)$
for every $t\in T_{\Delta_\at}$. 
\qed
\end{proof}

It follows from Lemma~\ref{lem:flatyield} and Theorem~\ref{thm:rightcomp} that 
$\Enc\circ \NTT\circ\TT\circ\Flat\subseteq\Enc\circ \NTT\circ\TT\circ\yield$
and $\FT_\at=\Enc\circ \TT\circ\Flat\subseteq\Enc\circ \TT\circ\yield$.
If $\tau$ is a forest translation such that $\tau=\enc_\Sigma\circ \tau'$ with 
$\tau'\in \NTT\circ\TT\circ\yield$, then 
$\#[fg]\in L_\tau$ if and only if $\#(\enc_\Sigma(f),g)\in L_{\tau'}$.
Hence $\Enc\circ \NTT\circ\TT\circ\Flat\subseteq\LOGCF$ by Corollary~\ref{cor:ylogcf}.
Similarly if $\tau'\in\TT\circ\yield$, then $\tau(L)=\tau'(\enc_\Sigma(L))$ for every $L\in\REGF$, 
and so $\FT_\at(\REGF)\subseteq \LOGCF$ by Corollary~\ref{cor:ylogcf}. 
If we define the class of \abb{io} macro forest translations to be 
$\Enc\circ\NMT_\text{\abb{io}}\circ\Flat$, then that class is included in 
$\Enc\circ \NTT\circ\TT\circ\Flat$ by Lemma~\ref{lem:nmtio} and hence in $\LOGCF$ by the above.
Thus, Corollary~\ref{cor:mtiologcf} also holds for macro forest transducers.

For a forest translation $\tau =\enc_\Sigma\circ \tau'\circ \dec_\Delta$ with $\tau'\in\NTT^k$
it is easy to prove that $L_\tau \in \family{DSPACE}(n)$ and that 
$\tau(L)\in \family{DSPACE}(n)$ for every $L\in\REGF$, 
as we did above for $\tau'\in\NTT\circ\TT$ and $\tau'\in\TT$, respectively,
thus generalizing Theorems~\ref{thm:transcompdspace} and~\ref{thm:outdspace}. 
That also holds for $\fl_\Delta$ instead of $\dec_\Delta$, because 
$\Enc\circ \NTT^k\circ\Flat \subseteq \Enc\circ \NTT^{k+1}\circ\Dec$ by Lemma~\ref{lem:decflat}.

The $\family{NP}$-completeness results of Section~\ref{sec:transcomp} also hold 
for extended forest translations. 
The translation $\tau$ of Theorem~\ref{thm:npcomplete} can be changed into a translation in 
$\Enc\circ\TTD\circ \FTT\circ\Flat$ as follows. First, change $M_1$ in the proof of 
Theorem~\ref{thm:npcomplete} such that it
obtains as input the encodings of the strings $ab^ncd^m$ (viewed as forests). 
Second, change $M_2$ such that it outputs trees over $\Delta_\at$ rather than~$\Delta$
(by changing the rule $\tup{q_{i\vee j},d}\to \vee(\tup{q_i,\alpha},\tup{q_j,\alpha})$ of $M$ 
in the proof of Lemma~\ref{lem:leeuw} into the general rule
$\tup{q_{i\vee j},d}\to \vee(\at(\tup{q_i,\alpha},\tup{q_j,\alpha}))$, and similarly for $\wedge$). 
As a result $\tau$~outputs boolean expressions as forests rather than ranked trees. 
Thus we obtain an $\family{NP}$-complete extended forest translation in $\Enc\circ\TTD\circ \FTT\circ\Flat$,
and hence one in $\NMFT_\at$. In a similar way we also obtain an $\family{NP}$-complete
forest language in $\NFT_\at(\REGF)$. The details are left to the reader. 
It is not clear whether these results hold for $\Dec$ instead of $\Flat$.

\section{Conclusion}\label{sec:conc}

Our main technical result transforms a composition of $k$ \ttt's into a linear-bounded composition 
of $k$ \ttt's, cf. Corollary~\ref{cor:prod}. As observed in Remark~\ref{rem:size}, our proof of this result
can involve a $2(k-1)$-fold blow-up of the sizes of the transducers, which also influences the 
constants of their time and space complexities, cf. the sentence after Theorem~\ref{thm:lintime}.
We do not know whether this transformation can be realized in a more efficient way.  

Our main result on expressivity is that $\TT^k \cap\LSI \subseteq \TT$ for every $k\geq 1$, 
i.e., that every composition of \dttt's that is of linear size increase 
can be realized by one \dttt. Moreover, it is decidable 
whether or not such a composition is of linear size increase. 
Do similar results hold for polynomial size increase? 
For instance, does there exist $m\geq 1$ such that 
every translation in $\bigcup_{k\geq 1}\TT^k$ of quadratic size increase is in $\TT^m$?
The same question can be asked for $\ell$-fold exponential size increase, 
for each fixed $\ell\in\nat$. 

We have shown in Section~\ref{sec:func} that even 
$\NTT^k \cap\LSI\subseteq \TT$ for every $k\geq 1$,
generalizing Theorem~\ref{thm:mainlsi}. Although this result is effective, 
we do not know whether Theorem~\ref{thm:mainlsidec} can also be generalized, 
i.e., whether it is decidable for a nondeterministic \ttt\ $M$ whether or not 
$\tau_M$ is a function of linear size increase. This would be solved if it was decidable 
whether or not $\tau_M$ is a function. But that is also unknown, whereas it has been proved
for classical top-down tree transducers (with regular look-ahead) in~\cite[Theorem~8]{Esik}.
Note that deciding functionality of $\tau_M$ also solves the
equivalence problem for \dttt's, which is already a long standing open problem (cf.~\cite{Eng80,Man15});
in fact, $\tau_1,\tau_2\in\TT$ are the same if and only if they have the same domain 
and $\tau_1\cup\tau_2$ is functional. 
 
Another open question for nondeterministic \ttt's is whether or not 
there exists $m\geq 1$ such that
the inclusion $\NTT^k \cap\LSIR\subseteq \NTT^m$ holds for every $k\geq 1$,
where $\LSIR$ consists of all relations $\tau\subseteq T_\Sigma\times T_\Delta$
of linear size increase, which means that there is a constant $c\in\nat$ such that 
$|s|\leq c\cdot|t|$ for every $(t,s)\in \tau$.
It follows from (the proof of)~\cite[Theorem~3.21]{Ina} (see also~\cite{InaHos,InaHosMan}) 
that $\NTT^2 \cap\LSIR$ is not included in~$\NMT$, 
and hence not in $\NTT$ by the remark following Lemma~\ref{lem:mtintt}.

Similar questions can be asked for macro tree transducers, i.e., for the classes $\MT$ and $\NMT$. 

We have shown in Lemma~\ref{lem:d=ds} that $\TTD = \TTDS$, 
but we do not know whether or not $\TT = \TTS$.
In other words, we do not know whether for every~\ttt\ there is an equivalent sub-testing \ttt, 
in which the regular test of a rule only inspects the subtree of the current node.
Or even more informally, can regular look-around be simulated by regular look-ahead? 

We have shown in Corollary~\ref{cor:oik}
that the string languages in the \mbox{\abb{OI}-hierarchy}, which are generated by high-level grammars,
are in $\family{NSPACE}(n)\wedge\NP$, 
and in Corollary~\ref{cor:oikdet} that they are in $\family{DSPACE}(n)$.
However, the languages 
of the \abb{OI}-hierarchy are generated by so-called ``safe'' high-level grammars,
and it is not known whether the same results hold for unsafe high-level grammars.
It is proved in~\cite{KobInaTsu} that the languages generated by unsafe level-2 grammars,
the unsafe version of $\family{OI}(2)$, are in $\family{NSPACE}(n)$. 

In Section~\ref{sec:transcomp} we have shown that $\TT^k\subseteq \PTIME$,
that $\NTT\circ \TT \subseteq \LOGCF \subseteq \PTIME$, and that $\TT\circ \NTT$ 
contains an $\family{NP}$-complete translation. 
It remains to find out for $k\geq 2$ whether $\NTT\circ \TT^k \subseteq \PTIME$ or whether
it contains an $\family{NP}$-complete translation.

\medskip
{\bf Acknowledgements.}
We are grateful to the reviewers for their constructive comments. 


\end{document}